\newtheorem{open problem}{Open Problem}
\begin{document}
	
	\title{Linear codes with few weights from non-weakly regular plateaued functions %\thanks{Grants or other notes
		%about the article that should go on the front page should be
		%placed here. General acknowledgments should be placed at the end of the article.}
	}
	%\subtitle{Do you have a subtitle?\\ If so, write it here}
	
	%\titlerunning{Short form of title}        % if too long for running head
	
	\author{Yadi Wei  \and
		Jiaxin Wang \and
		Fang-Wei Fu%etc.
	}
	
	%\authorrunning{Short form of author list} % if too long for running head
	
	\institute{Yadi Wei\at
		Chern Institute of Mathematics and LPMC, Nankai University, Tianjin, 300071, China\\
		\email{wydecho@mail.nankai.edu.cn}           %  \\
		%             \emph{Present address:} of F. Author  %  if needed
		\and 
		Jiaxin Wang\at
		Chern Institute of Mathematics and LPMC, Nankai University, Tianjin, 300071, China\\
		\email{wjiaxin@mail.nankai.edu.cn}
		\and
		Fang-Wei Fu\at
		Chern Institute of Mathematics and LPMC, Nankai University, Tianjin, 300071, China\\
		\email{fwfu@nankai.edu.cn}}
	\date{Received: date / Accepted: date}
	% The correct dates will be entered by the editor
	
	\maketitle
	
	\begin{abstract}
		Linear codes with few weights have significant applications in secret sharing schemes, authentication codes, association schemes, and strongly regular graphs. There are a number of  methods to construct linear codes, one of which is based on functions. Furthermore, two generic constructions of linear codes from functions called the first and the second generic constructions, have aroused the research interest of scholars. Recently, in \cite{Nian},  Li and Mesnager proposed two open problems: Based on the first and the second generic constructions, respectively, construct linear codes from non-weakly regular plateaued functions and determine their weight distributions.
		
		Motivated by these two open problems, in this paper, firstly, based on the first generic construction, we construct some three-weight and at most five-weight linear codes from non-weakly regular plateaued functions. When the related functions belong to a special class of non-weakly regular plateaued functions, we determine the weight distributions of the constructed codes.  Next, based on the second generic construction, we construct some three-weight and at most five-weight linear codes from non-weakly regular plateaued functions belonging to $\mathcal{NWRF}$ (defined in this paper). Moreover, when the related functions belong to a special class of $\mathcal{NWRF}$, we determine the weight distributions of the constructed codes. Specially, using non-weakly regular bent functions, that is, non-weakly regular $0$-plateaued functions, we obtain some three-weight, at most four-weight and at most five-weight linear codes. We also give the punctured codes of these codes obtained based on the second generic construction and determine their weight distributions. Meanwhile, we obtain some optimal and almost optimal linear codes. Besides, by the Ashikhmin-Barg condition, we have that the constructed codes are minimal for almost  all cases and obtain some secret sharing schemes with nice access structures based on their dual codes. 
		
		\keywords{Linear code \and Minimal code \and Weight distribution \and Non-weakly regular plateaued function \and Secret sharing scheme}
	\end{abstract}
	
	\section{Introduction}\label{Section1}
	\quad
	Linear codes  with few weights  have a wide range applications in secret sharing schemes \cite{Anderson,Carlet1,Ding1,Yuan}, authentication codes \cite{Ding2}, association schemes \cite{Calderbank1}, and strongly regular graphs \cite{Calderbank2}. As an important class of linear codes, minimal linear codes have applications in secret sharing schemes \cite{Massey} and secure two-party computation \cite{Chabanne}. Therefore, the constructions of linear codes with few weights and minimal linear codes have attracted much attention in coding theory. There are a number of methods to construct linear codes, one of which is based on functions. Two generic constructions of linear codes from functions, called the first and the second generic constructions, have been distinguished from others and widely used by researchers. Due to the properties of cryptographic functions, it's a good way to construct linear codes from cryptographic functions. By the first and the second generic constructions, several linear codes with few weights have been constructed from cryptographic functions such as bent functions \cite{Mesnager1,Ozbudak,Tang1,Zhou}, perfect nonlinear (PN) functions \cite{Carlet1,Yuan1,Li} and plateaued functions \cite{Mesnager2,Mesnager3,Cheng,Sinak,Yang}. 
	
	The research on constructing linear codes from plateaued functions, to the best of our knowledge, first appeared in \cite{Mesnager2}, where Mesnager et al. constructed some three-weight linear codes from weakly regular plateaued functions based on the first generic construction. Later, in \cite{Mesnager3}, based on the second generic construction, Mesnager and Sinak constructed several new classes of linear codes from weakly regular plateaued functions. In \cite{Cheng}, Cheng and Cao defined some defining sets different from those in \cite{Mesnager3} by using weakly regular plateaued functions, and constructed some linear codes with few weights. In \cite{Sinak}, using weakly regular plateaued balanced functions, Sinak constructed several three-weight and four-weight minimal codes using the second generic construction. Recently, in \cite{Yang}, Yang et al. chose the defining sets from two distinct weakly regular plateaued balanced functions and obtained a family of linear codes
	with at most five weights based on the second generic construction. The previously known research works of constructing linear codes from plateaued functions  all utilize weakly regular plateaued functions, and there is no result on constructing linear codes by using non-weakly regular plateaued functions. In \cite{Nian}, Li and Mesnager proposed the following two open problems:
	
	$Problem\ 1$: Based on the first generic construction, construct linear codes from non-weakly regular plateaued functions and determine their weight distributions.
	
	$Problem\ 2$: Based on the second generic construction, construct linear codes from non-weakly regular plateaued functions and determine their weight distributions.
	
	Motivated by the above two open problems, in this paper, we consider constructing linear codes from non-weakly regular plateaued functions and provide answers to the two open problems. The main results are summarized as follows:
	\begin{itemize}
	\item[$\bullet$] Based on the first generic construction, we construct some three-weight and at most five-weight linear codes from non-weakly regular plateaued function $f$. When the related function $f$ belongs to a special class of  non-weakly regular plateaued functions, that is, the dual of $f$ is bent relative to the Walsh support $\mathrm{Supp}(\widehat{\chi_f})$, we determine the weight distributions of the constructed codes.
		\item[$\bullet$]Based on the second generic construction, using non-weakly regular 
		plateau-ed function $f$ belonging to $\mathcal{NWRF}$ (defined in this paper), we construct some three-weight and at most five-weight linear codes by choosing defining sets $D_f=\{x\in \mathbb{F}_{p}^n\setminus\{0\}:\ f(x)=0\},\ D_{f,SQ}=\{x\in \mathbb{F}_{p}^n\setminus\{0\}:\ f(x)\in SQ\}\ \text{and}\ D_{f,NSQ}=\{x\in \mathbb{F}_{p}^n\setminus\{0\}:\ f(x)\in NSQ\}$, where $SQ$ and $NSQ$ denote, respectively, the set of all squares and non-squares in the multiplicative group $\mathbb{F}_{p}^{\times}$ of $\mathbb{F}_p$. When the related function $f$ in $\mathcal{NWRF}$ belongs to a special class of  non-weakly regular plateaued functions, that is, the dual of $f$ is bent relative to the Walsh support $\mathrm{Supp}(\widehat{\chi_f})$, we determine the weight distributions of the constructed codes. Specially, when $f$ is a non-weakly regular bent function, i.e., non-weakly regular $0$-plateaued function, we obtain some three-weight, at most four-weight and at most five-weight linear codes.
		\item[$\bullet$]We give the punctured codes of the constructed linear codes based on the second generic construction and determine their weight distributions. Meanwhile, we determine the parameters of the dual codes of the constructed codes. Besides, we obtain some optimal and almost optimal linear codes.
	\item[$\bullet$]By the Ashikhmin-Barg condition, we prove that the constructed linear codes are minimal for almost all cases and give some results on the access structures of the secret sharing schemes based on the dual codes of the constructed minimal codes.
	\end{itemize}

	The rest of this paper is arranged as follows. In Section 2, we introduce the main notation and give some necessary preliminaries. In Section 3 and Section 4, based on the first and the second generic constructions, we construct some linear codes with few weights from non-weakly regular plateaued functions and determine the weight distributions of the linear codes obtained from two special classes of non-weakly regular plateaued functions. In Sections 5, we make a conclusion. 
	
	\section{Preliminaries}\label{Section2}
	\quad In this section, we set the basic notation which will be used in the following sections and introduce  some basic results on cyclotomic field, exponential sums, non-weakly regular plateaued functions, linear codes and secret sharing schemes.
	\subsection{Notation}\label{Section2.1}
	\quad Let $p$ be an odd prime, $n$ be a positive integer and $\mathbb{F}_{p^n}$ be the finite field of order $p^n$. Since $\mathbb{F}_{p^n}$ is an $n$-dimensional vector space over $\mathbb{F}_p$, we also use the notation $\mathbb{F}_p^n$ consisting of $n$-tuples over the prime field $\mathbb{F}_p$. Throughout this paper, we set the following notation.
	\begin{itemize}
		\item[$\bullet$] $\#D$: The cardinality of a set $D$;
		\item[$\bullet$] $D^{\times}$: The set of non-zero elements in a set $D$;
		\item[$\bullet$] $SQ$ and $NSQ$: The set of squares and nonsquares in $\mathbb{F}_p^{\times}$, respectively;
		\item[$\bullet$] $\eta$: The quadratic character of $\mathbb{F}_p^{\times}$, that is, $\eta(a)=1$ for $a \in SQ$ and $\eta(a)=-1$ for $a \in NSQ$;
		\item[$\bullet$] $p^*$: $p^*=\eta(-1)p=(-1)^{(p-1)/2}p$;
		\item[$\bullet$] $\xi_p=e^{\frac{2\pi i}{p}} $: The $p$-th complex primitive root of unity, where $i=\sqrt{-1}$.
	\end{itemize}
	
	\subsection{Cyclotomic Field\ $\mathbb{Q}(\xi_p)$}\label{Section2.2}
	\quad The cyclotomic field $\mathbb{Q}(\xi_p)$ is obtained from the rational field $\mathbb{Q}$ by adjoining $\xi_p$. The ring of algebraic integers in $\mathbb{Q}(\xi_p)$ is $\mathcal{O}_{ {\mathbb{Q}_{(\xi_p)}} }=\mathbb{Z}[\xi_p]$ and  the set $\{\xi_p^j\ :\ 1\le j\le p-1\}$ is an integral basis of it. The field extension $\mathbb{Q}(\xi_p)/\mathbb{Q}$ is Galois extension of degree $p-1$, and $Gal(\mathbb{Q}(\xi_p)/\mathbb{Q})=\{\sigma_a:\ a\in \mathbb{F}_p^{\times}\}$ is the Galois group, where the automorphism $\sigma_a$ of $\mathbb{Q}(\xi_p)$ is defined by $\sigma_a(\xi_p)=\xi_p^a.$ The cyclotomic field $\mathbb{Q}(\xi_p)$ has a unique quadratic subfield $\mathbb{Q}(\sqrt{p^{*}})$, and so $Gal(\mathbb{Q}(\sqrt{p^{*}})/\mathbb{Q})=\{1,\ \sigma_\gamma\}$, where $\gamma$ is a nonsquare in $\mathbb{F}_p^{\times}.$ Obviously, for $a\in \mathbb{F}_p^{\times}$ and $b \in \mathbb{F}_p$, we have that $\sigma_a(\xi_p^b)=\xi_p^{ab}$ and $\sigma_a(\sqrt{p^*})=\eta(a)\sqrt{p^*}$. For more information on cyclotomic field, the reader is referred to \cite{Ireland}.
	\subsection{Exponential Sums}\label{Section2.3}
	\quad In this subsection, we give some results about exponential sums over the finite field $\mathbb{F}_p$.
	\begin{lemma}\cite{Lidl}With the defined notation, we have
		\begin{itemize}
			\item[$\mathrm{(1)}$] $\sum_{y\in \mathbb{F}_p^{\times}}\eta(y)=0;$
			\item[$\mathrm{(2)}$] $\sum_{y \in \mathbb{F}_p^{\times}}\xi_p^{zy}=-1\ \text{for any}\ z\in \mathbb{F}_p^{\times};$
			\item[$\mathrm{(3)}$] $\sum_{y\in \mathbb{F}_p^{\times}}\eta(y)\xi_p^y=\sqrt{p^*}=\begin{cases}
			\sqrt{p}&\text{if}\ p\equiv 1\ (\mathrm{mod}\ 4),\\
			i\sqrt{p}&\text{if}\ p\equiv3 \ (\mathrm{mod}\ 4);
			\end{cases}$
			\item[$\mathrm(4)$] $\sum_{y\in SQ}\xi_p^{ya}=\frac{\eta(a)\sqrt{p^*}-1}{2}\ \text {for any}\ a \in \mathbb{F}_p^{\times};$
			
			\item[$\mathrm(5)$] $\sum_{y\in NSQ}\xi_p^{ya}=\frac{-\eta(a)\sqrt{p^*}-1}{2}\  \text {for any}\ a \in \mathbb{F}_p^{\times}.$
		\end{itemize}
	\end{lemma}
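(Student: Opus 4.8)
The plan is to establish the five identities in the order listed, deriving each from elementary manipulations of finite character sums and reusing the earlier parts; the only point that is not a routine rearrangement is pinning down the sign of the Gauss sum in part (3).

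For (1), I would use that $\eta$ is a nontrivial multiplicative character of the cyclic group $\mathbb{F}_p^{\times}$: if $g$ generates $\mathbb{F}_p^{\times}$, then $\sum_{y\in\mathbb{F}_p^{\times}}\eta(y)=\sum_{j=0}^{p-2}\eta(g)^{j}$ is a geometric sum with ratio $\eta(g)=-1\neq 1$, hence $0$ (equivalently, $\mathbb{F}_p^{\times}$ contains exactly $(p-1)/2$ squares and $(p-1)/2$ nonsquares). For (2), the map $y\mapsto zy$ permutes $\mathbb{F}_p^{\times}$ when $z\in\mathbb{F}_p^{\times}$, so
\[
\sum_{y\in\mathbb{F}_p^{\times}}\xi_p^{zy}=\sum_{t\in\mathbb{F}_p^{\times}}\xi_p^{t}=\left(\sum_{t\in\mathbb{F}_p}\xi_p^{t}\right)-1=-1,
\]
since $\sum_{t\in\mathbb{F}_p}\xi_p^{t}=\frac{\xi_p^{p}-1}{\xi_p-1}=0$.

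For (3), set $G=\sum_{y\in\mathbb{F}_p^{\times}}\eta(y)\xi_p^{y}$. I would first show $G^{2}=p^{*}$: expanding and substituting $z=yt$ (so $t$ runs over $\mathbb{F}_p^{\times}$, $\eta(yz)=\eta(y^{2}t)=\eta(t)$, $y+z=y(1+t)$),
\[
G^{2}=\sum_{y,z\in\mathbb{F}_p^{\times}}\eta(yz)\xi_p^{y+z}=\sum_{t\in\mathbb{F}_p^{\times}}\eta(t)\sum_{y\in\mathbb{F}_p^{\times}}\xi_p^{y(1+t)},
\]
and the inner sum equals $p-1$ when $t=-1$ and $-1$ otherwise by (2), so $G^{2}=\eta(-1)(p-1)-\sum_{t\neq-1}\eta(t)=\eta(-1)(p-1)+\eta(-1)=\eta(-1)p=p^{*}$ using (1). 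It then remains only to determine the correct square root, i.e. that $G=\sqrt{p^{*}}$ (which is $\sqrt{p}$ or $i\sqrt{p}$ according to $p\bmod 4$) and not its negative; this is precisely Gauss's theorem on the sign of the quadratic Gauss sum, which I would quote rather than reprove. This sign determination is the one genuinely delicate ingredient; everything else is bookkeeping.

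Finally, (4) and (5) follow by linearizing the indicator of $SQ$. For $a\in\mathbb{F}_p^{\times}$, since $\tfrac{1+\eta(y)}{2}$ is the indicator of $y\in SQ$ on $\mathbb{F}_p^{\times}$,
\[
\sum_{y\in SQ}\xi_p^{ya}=\frac12\sum_{y\in\mathbb{F}_p^{\times}}\xi_p^{ya}+\frac12\sum_{y\in\mathbb{F}_p^{\times}}\eta(y)\xi_p^{ya};
\]
the first sum is $-1$ by (2), while in the second, substituting $u=ya$ and using $\eta(y)=\eta(a)\eta(u)$ (as $\eta(a)^{-1}=\eta(a)$) gives $\eta(a)\sum_{u\in\mathbb{F}_p^{\times}}\eta(u)\xi_p^{u}=\eta(a)\sqrt{p^{*}}$ by (3), so $\sum_{y\in SQ}\xi_p^{ya}=\frac{\eta(a)\sqrt{p^{*}}-1}{2}$. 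For (5) I would either repeat the computation with the indicator $\tfrac{1-\eta(y)}{2}$ of $NSQ$, or simply subtract from (2): $\sum_{y\in NSQ}\xi_p^{ya}=-1-\frac{\eta(a)\sqrt{p^{*}}-1}{2}=\frac{-\eta(a)\sqrt{p^{*}}-1}{2}$. Thus the only obstacle is the classical sign of the Gauss sum in (3); the remaining steps are elementary rearrangements of sums over $\mathbb{F}_p$.
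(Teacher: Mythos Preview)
Your proof is correct and follows the standard route; note that the paper does not actually prove this lemma but simply cites it from Lidl--Niederreiter, so there is no in-paper argument to compare against. Your derivation of $G^2=p^*$ and the reductions of (4)--(5) to (2)--(3) via the indicator $\tfrac{1\pm\eta(y)}{2}$ are exactly the textbook computations, and your acknowledgement that the sign determination in (3) is Gauss's theorem (to be quoted rather than reproved) is appropriate.
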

	\subsection{Non-Weakly Regular Plateaued Functions}\label{Section2.4}
	\hspace{0.4cm}
	In this subsection, we give the definition of non-weakly regular plateaued functions and introduce some properties of them. 
	
	Let $f:\mathbb{F}_{p}^n\longrightarrow\mathbb{F}_p$ be a $p$-ary function. Its $\mathit{Walsh\  transform}$ is defined by 
	\[\widehat{\chi_f}(\alpha)=\sum\limits_{x\in \mathbb{F}_{p}^n} \xi_p^{f(x)-\alpha\cdot x},\ \alpha \in \mathbb{F}_{p}^n,\]
	where $\cdot$ is the ordinary inner product in $\mathbb{F}_p^n$. The function $f(x)$ is said to be $\mathit{balanced}$ over $\mathbb{F}_p$ if $\widehat{\chi_f}(0)=0$, otherwise it is $\mathit{unbalanced}.$ The function $f(x)$ is called $\mathit{bent}$, if $|\widehat{\chi_f}(\alpha)|=p^{\frac{n}{2}}$ for any $\alpha \in \mathbb{F}_{p}^n$. The Walsh transform of a bent function $f(x)$ at $\alpha\in\mathbb{F}_p^n$ is given as follows \cite{Kumar}:
	\[\widehat{\chi_f}(\alpha)=
	\begin{cases}
	\pm p^{\frac{n}{2}}\xi_{p}^{f^{*}(\alpha)},& \text{if } p^{n}\equiv 1\ (\mathrm{mod}\ 4);\\ 
	\pm ip^{\frac{n}{2}}\xi_{p}^{f^{*}(\alpha)},& \text{ if } p^{n}\equiv 3\ (\mathrm{mod}\ 4),
	\end{cases}\]
	where $f^*(x)$ is a function from $\mathbb{F}_{p}^n$ to $\mathbb{F}_p$ called the $dual$ of $f(x)$.
	A bent function $f(x)$ is called $weakly\ regular$ if for all $\alpha\in\mathbb{F}_{p}^n$, $p^{-\frac{n}{2}}\widehat{\chi_f}(\alpha)=\epsilon\xi_p^{f^*(\alpha)}$ for some fixed $\epsilon\in\{\pm1,\pm i\}$.  If $\epsilon$ changes with respect to $\alpha\in \mathbb{F}_{p}^n$, then $f(x)$ is called $non$-$weakly\ regular$. Specially, if $\epsilon=1$, $f(x)$ is called $regular$. Note that the dual of a weakly regular bent function is also weakly regular bent and satisfies $f^{**}(x)=f(-x)$, where $f^{**}(x)$ is the dual of $f^*(x)$ given by $\widehat{\chi_{f^*}}(\alpha)=\epsilon p^{\frac{n}{2}}\xi_p^{f^{**}(\alpha)}$ for all $\alpha\in \mathbb{F}_p^n$, where $\epsilon\in\{\pm1,\pm i\}$. However, this does not generally hold for non-weakly regular bent functions \cite{Cesmelioglu1,Cesmelioglu2}. In \cite{Ozbudak1}, Özbudak et al. proved that if $f(x)$ is a non-weakly regular bent function such that its dual $f^*(x)$ is bent, then $f^*(x)$ is non-weakly regular bent and satisfies $f^{**}(x)=f(-x)$, where $f^{**}(x)$ is the dual of $f^*(x)$ given by $\widehat{\chi_{f^*}}(\alpha)=\epsilon_{\alpha} p^{\frac{n}{2}}\xi_p^{f^{**}(\alpha)}$ for all $\alpha\in \mathbb{F}_p^n$, where $\epsilon_{\alpha}\in\{\pm1,\pm i\}$.
	
	As an extension of bent functions, plateaued functions were proposed in \cite{Zheng} and generalized to any characteristic $p$ in \cite{Mesnager4}. A function $f(x)$ from $\mathbb{F}_{p}^n$ to $\mathbb{F}_{p}$ is said to be $s$-$plateaued$ if $|\widehat{\chi_f} (\alpha)|\in\{0,p^{\frac{n+s}{2} }\} $ for any $\alpha\in \mathbb{F}_{p}^n$, where $0\le s\le n$. Clearly, when $s=0$, a $0$-plateaued function is actually a bent function. The $Walsh\ support$ of $f(x)$ is defined by $\mathrm{Supp}(\widehat{\chi_f})=\{\alpha\in \mathbb{F}_{p}^n:\  |\widehat{\chi_f}(\alpha)|=p^{\frac{n+s}{2}} \}$. By the Parseval identity $\sum\limits_{\alpha\in\mathbb{F}_p^n}|\widehat{\chi_f}(\alpha)|^2=p^{2n}$, we have $\#\mathrm{Supp}(\widehat{\chi_f})=p^{n-s}$. The Walsh transform of an $s$-plateaued function $f(x)$ at $\alpha\in\mathbb{F}_p^n$ is given as follows \cite{Hyun}:
	\[\widehat{\chi_f}(\alpha)=
	\begin{cases}
	\pm p^{\frac{n+s}{2}}\xi_{p}^{f^{*}(\alpha)}, 0, & \text{if } p^{n+s}\equiv 1\ (\mathrm{mod}\ 4);\\ 
	\pm ip^{\frac{n+s}{2}}\xi_{p}^{f^{*}(\alpha)}, 0,& \text{ if }  p^{n+s}\equiv 3\ (\mathrm{mod}\ 4),
	\end{cases}\] where $f^*(x)$ is a function from $\mathrm {Supp}(\widehat{\chi_f})$ to $\mathbb{F}_p$ called the $dual$ of $f(x)$.
	Similar to the  classification of bent functions,  $f(x)$ is called $weakly\ regular$, if for all $\alpha \in \mathrm {Supp}(\widehat{\chi_f})$,   $p^{-\frac{n+s}{2}}\widehat{\chi_f}(\alpha)=\epsilon\xi_p^{f^*(x)}$ for some fixed $\epsilon\in \{\pm 1,\pm i\}$. If $\epsilon$ changes with respect to $\alpha$, then $f(x)$ is called $non$-$weakly \ regular$.  Specially, if $\epsilon=1$, $f(x)$ is called $regular$.
	
	The following definition is given in \cite{Ozbudak1}.
	
	\begin{definition}
		Let $S$ be a subset of $\mathbb{F}_{p}^n$ with $\#S=N$ and $f(x)$ be a function from $S$ to $\mathbb{F}_p$. If $|\widehat{\chi_f}(\alpha)|=N^{\frac{1}{2} }$ for all $\alpha\in\mathbb{F}_{p}^n$, then $f(x)$ is called $\mathit{bent\ relative}$ to $S$, where $\widehat{\chi_f}(\alpha)=\sum\limits_{x\in S}\xi_p^{f(x)-\alpha \cdot x}.$
	\end{definition}
	\begin{remark}
		In \cite{Mesnager2}, Measnager et al.  proved that the dual of a weakly regular $s$-plateaued function $f(x)$ is bent relative to $\mathrm{Supp}(\widehat{\chi_f})$. Moreover, \cite{Ozbudak1} proved that if $f(x)$ is a non-weakly regular $s$-plateaued function such that its dual $f^*(x)$ is  bent relative to $\mathrm{Supp}(\widehat{\chi_f})$, then $f^*(x)$ is non-weakly regular bent relative to  $\mathrm{Supp}(\widehat{\chi_f})$ and satisfies $f^{**}(x)=f(-x)$, where $f^{**}(x)$ from $\mathbb{F}_{p}^n$ to $\mathbb{F}_p$ is the dual of $f^*(x)$ given by $\widehat{\chi_{f^*}}(\alpha)=\epsilon _{\alpha}p^{\frac{n-s}{2}}\xi_p^{f^{**}(\alpha)}$ for all $\alpha\in \mathbb{F}_p^n$, where $\epsilon_{\alpha}\in\{\pm1,\pm i\}$.
	\end{remark}
	
	Let $\epsilon=1$ if $p^{n+s}\equiv1$ (mod $4$) and  $\epsilon=i$ if $p^{n+s}\equiv3$ (mod $4$). For an $s$-plateaued function $f(x):\mathbb{F}_{p}^n\longrightarrow\mathbb{F}_p$, we define $B_+(f)\ \text{and}\ B_{-}(f)$ as follows.
	\begin{align*}
	B_+(f)&:=\{\alpha\in \mathrm{Supp}(\widehat{\chi_f}):\ \widehat{\chi_f}(\alpha)=\epsilon p^{\frac{n+s}{2}}\xi_p^{f^*(\alpha)}\},\\
	B_-(f)&:=\{\alpha\in \mathrm{Supp}(\widehat{\chi_f}):\ \widehat{\chi_f}(\alpha)=-\epsilon p^{\frac{n+s}{2}}\xi_p^{f^*(\alpha)}\}.
	\end{align*}
	If $f(x)$ is unbalanced, we define the type of $f(x)$ as 
	$f(x)\  \text{is of}\  type\  (+) \ \text{if\  }\  \widehat{\chi_f}(0)\\=\epsilon p^{\frac{n+s}{2}}\xi_p^{f^*(0)}\ \text{and of}\  type\ (-)\ \text{if\ }\  \widehat{\chi_f}(0)=-\epsilon p^{\frac{n+s}{2}}\xi_p^{f^*(0)}.$ Besides, if $f(x)$ is non-weakly regular and the dual $f^*(x)$ of $f(x)$ is bent relative to $\mathrm{Supp}(\widehat{\chi_f})$, then we define $B_+(f^*)$ and $B_-(f^*)$ as follows.
	\begin{align*}
	B_+(f^*)&:=\{\alpha\in \mathbb{F}_{p}^n:\ \widehat{\chi_{f^*}}(\alpha)=\epsilon p^{\frac{n-s}{2}}\xi_p^{f^{**}(\alpha)}\},\\
	B_-(f^*)&:=\{\alpha\in \mathbb{F}_{p}^n:\ \widehat{\chi_{f^*}}(\alpha)=-\epsilon p^{\frac{n-s}{2}}\xi_p^{f^{**}(\alpha)}\}.
	\end{align*}
	Meanwhile, we define the type of $f^*(x)$ as 
	$f^*(x)\  \text{is of}\  type\  (+)\  \text{if\  }\  \widehat{\chi_{f^*}}(0)=\epsilon p^{\frac{n-s}{2}}\xi_p^{f^{**}(0)}\ \text{and of}\  type\ (-)\ \text{if\ }\  \widehat{\chi_{f^*}}(0)=-\epsilon p^{\frac{n-s}{2}}\xi_p^{f^{**}(0)}.$
	
	The following lemma gives the value distributions of unbalanced $p$-ary $s$-plateaued functions.
	
	\begin{lemma} \cite{Ozbudak1}Let $f(x):\mathbb{F}_{p}^n\longrightarrow\mathbb{F}_p$ be an unbalanced $p$-ary $s$-plateaued function with $f^{*}(0)=j_0$. For any $j\in \mathbb{F}_p$, define $N_j(f)=\#\{x\in \mathbb{F}_p^n:\ f(x)=j\}.$                                                                                                                                                                                                                                                                                                                                                                                                                                                                                                                                                                                                                                                                                                                                                                                                                                                                                                                                                                                                                                                                                                                                                                                                                                                                                                                                                                                                                                                                                                                                                                                                                                                                                                                                                                                                                                                                                                                                                                                                                                                                                                                                                                                                                                                                                                                                                                                                                                                                                                                                                                                                                                                                                                                                                                                                                                                                                                                                                                                                                                                                                                                                                                                                                                                                                                                                                                                                                                                                                                                                                                                                                                                                                                                                                                                                                                                                                                                                                                                                                                                                                                                                                                                                                                                                                                                                                                                                                                                                                                                                                                                                                                                                                                                                                                                                                                                                                                                                                                                                                                                                                                                                                                                                                                                                                                                                                                                                                                                                                                                                                                                                                                                                                                                                                                                                                                                                                                                                                                                                                                                                                                                                                                                                                                                                                                                                                                                                                                                                                                                                                                                                                                                                                                                                                                                                                                                                                                                                                                                                                                                                                                                                                                                                                                                                                                                                                                                                                                                                                                                                                                                                                                                                                                                                                                                                                                                                                                                                                                                                                                                                                                                                                                                                                                                                                                                                                                                                                                                                                                                                                                                                                                                                                                                                                                                                                                                                                                                                                                                                                                                                                                                                                                                                                                                                                                                                                                                                                                            
		\begin{itemize}
			\item[$\bullet$]When $n+s$ is even, we have\\$N_{j_0}(f)=p^{n-1}\pm p^{\frac{n+s}{2}}\mp p^{\frac{n+s}{2}-1},$ $N_j(f)=p^{n-1}\mp p^{\frac{n+s}{2}-1},$ for $j\ne j_0\in \mathbb{F}_p$.
			\item[$\bullet$]When $n+s$ is odd, we have\\
			$N_{j_0}(f)=p^{n-1},\ N_{j_0+j}(f)=p^{n-1}\pm\eta(j)p^{\frac{n+s-1}{2}},$ for $ 1\leq j\leq p-1$.
		\end{itemize}

		Here the sign is $+$ (respectively $-$) if and only if the type of $f$ is $(+)$ (respectively $(-)$).
	\end{lemma}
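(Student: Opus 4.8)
The plan is to evaluate $N_j(f)$ by additive-character orthogonality over $\mathbb{F}_p$ and then to reduce the whole computation to the single quantity $\widehat{\chi_f}(0)$ using the Galois action on $\mathbb{Q}(\xi_p)$. First I would write the indicator of $\{f(x)=j\}$ as $\frac1p\sum_{y\in\mathbb{F}_p}\xi_p^{y(f(x)-j)}$, sum over $x\in\mathbb{F}_p^n$, and isolate the term $y=0$, to get
\[
N_j(f)=p^{n-1}+\frac1p\sum_{y\in\mathbb{F}_p^{\times}}\xi_p^{-yj}\sum_{x\in\mathbb{F}_p^n}\xi_p^{yf(x)}.
\]
The key observation is that, since $f$ takes values in $\mathbb{F}_p$ and $\sigma_y$ is a field automorphism of $\mathbb{Q}(\xi_p)$ with $\sigma_y(\xi_p)=\xi_p^{y}$, we have $\sum_{x}\xi_p^{yf(x)}=\sigma_y\!\big(\sum_x\xi_p^{f(x)}\big)=\sigma_y\!\big(\widehat{\chi_f}(0)\big)$. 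Because $f$ is unbalanced, $\widehat{\chi_f}(0)\ne 0$, so $0\in\mathrm{Supp}(\widehat{\chi_f})$, $f^{*}(0)=j_0$ is well defined, and $\widehat{\chi_f}(0)=\delta\,\epsilon\,p^{\frac{n+s}{2}}\xi_p^{j_0}$, where $\delta=+1$ if $f$ is of type $(+)$ and $\delta=-1$ if $f$ is of type $(-)$.

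Next I would split on the parity of $n+s$. If $n+s$ is even then $\epsilon=1$ and $p^{\frac{n+s}{2}}\in\mathbb{Z}$ is fixed by every $\sigma_y$, so $\sigma_y(\widehat{\chi_f}(0))=\delta p^{\frac{n+s}{2}}\xi_p^{yj_0}$ and
\[
N_j(f)=p^{n-1}+\frac{\delta\,p^{\frac{n+s}{2}}}{p}\sum_{y\in\mathbb{F}_p^{\times}}\xi_p^{y(j_0-j)}.
\]
The remaining sum equals $p-1$ when $j=j_0$ and equals $-1$ when $j\ne j_0$ by the exponential-sum identities recalled in Section~2, which yields the two stated formulas. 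If $n+s$ is odd, one first checks that $\epsilon\,p^{\frac{n+s}{2}}=p^{\frac{n+s-1}{2}}\sqrt{p^{*}}$: indeed $p^{\frac{n+s}{2}}=p^{\frac{n+s-1}{2}}\sqrt{p}$, and comparing the definition of $\epsilon$ with those of $p^{*}$ and $\sqrt{p^{*}}$, using $p^{n+s}\equiv p\pmod 4$, gives $\epsilon\sqrt p=\sqrt{p^{*}}$. Since $\sigma_y(\sqrt{p^{*}})=\eta(y)\sqrt{p^{*}}$, this gives $\sigma_y(\widehat{\chi_f}(0))=\delta\,p^{\frac{n+s-1}{2}}\eta(y)\sqrt{p^{*}}\,\xi_p^{yj_0}$, hence
\[
N_j(f)=p^{n-1}+\frac{\delta\,p^{\frac{n+s-1}{2}}\sqrt{p^{*}}}{p}\sum_{y\in\mathbb{F}_p^{\times}}\eta(y)\xi_p^{y(j_0-j)}.
\]
The Gauss-type sum on the right vanishes when $j=j_0$ (because $\sum_{y\in\mathbb{F}_p^{\times}}\eta(y)=0$) and equals $\eta(j_0-j)\sqrt{p^{*}}$ when $j\ne j_0$; writing $j=j_0+k$ with $k\in\{1,\dots,p-1\}$ and using $(\sqrt{p^{*}})^{2}=p^{*}=\eta(-1)p$ together with $\eta(-k)\eta(-1)=\eta(k)$ collapses the expression to $p^{n-1}+\delta\,\eta(k)\,p^{\frac{n+s-1}{2}}$, as claimed.

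Finally I would record that the sign is governed by $\delta$, i.e.\ it is $+$ exactly when $f$ is of type $(+)$, which is immediate from the definition of the type via $\widehat{\chi_f}(0)$ and matches the $\pm/\mp$ convention in the statement. The one genuinely delicate step is the identity $\sum_x\xi_p^{yf(x)}=\sigma_y(\widehat{\chi_f}(0))$ — it rests on $f$ being $\mathbb{F}_p$-valued, so that each $\xi_p^{f(x)}$ is an honest power of $\xi_p$ and Galois conjugation can be applied termwise — together with, in the odd case, the rewriting $\epsilon p^{\frac{n+s}{2}}=p^{\frac{n+s-1}{2}}\sqrt{p^{*}}$, which is precisely what makes $\sigma_y$ produce the quadratic-character factor $\eta(y)$; everything else is routine bookkeeping with the exponential-sum identities already available.
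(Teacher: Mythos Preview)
Your proof is correct and complete. The paper does not actually prove this lemma (it is quoted from \cite{Ozbudak1}), but your argument is precisely the same one the paper uses for the companion result Lemma~3 about $N_j(f^*)$: express the count via character orthogonality, replace $\sum_x\xi_p^{yf(x)}$ by $\sigma_y(\widehat{\chi_f}(0))$, and evaluate using Lemma~1 according to the parity of $n+s$.
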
	
	
	When the dual $f^*(x)$ of a non-weakly regular $s$-plateaued function $f(x)$ is bent relative to $\mathrm{Supp}(\widehat{\chi_f})$, we give the value distributions of $f^*(x)$ in the following lemma.
	\begin{lemma}
		Let $f(x):\mathbb{F}_{p}^n\longrightarrow\mathbb{F}_p$ be a  non-weakly regular $s$-plateaued function whose dual $f^*(x)$ is bent relative to $\mathrm{Supp}(\widehat{\chi_f})$ and $f(0)=j_0$. For any $j \in \mathbb{F}_p$, define $N_j(f^*)=\#\{x\in \mathrm{Supp}(\widehat{\chi_f}):\ f^*(x)=j\}.$    
		\begin{itemize}
		\item[$\bullet$]When $n+s$ is even, we have\\$N_{j_0}(f^*)=p^{n-s-1}\pm p^{\frac{n-s}{2}}\mp p^{\frac{n-s}{2}-1},$ $N_j(f^*)=p^{n-s-1}\mp p^{\frac{n-s}{2}-1},$ for $j\ne j_0\in \mathbb{F}_p$.
		\item[$\bullet$]When $n+s$ is odd, we have\\
		$N_{j_0}(f^*)=p^{n-s-1},\ N_{j_0+j}(f^*)=p^{n-s-1}\pm\eta(j)p^{\frac{n-s-1}{2}},$ for $ 1\leq j\leq p-1$.
	\end{itemize}

	Here the sign is $+$ (respectively $-$) if and only if the type of $f^*$ is $(+)$ (respectively $(-)$).
\end{lemma}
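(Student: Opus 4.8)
The plan is to run, for the function $f^{*}$ viewed on the set $S:=\mathrm{Supp}(\widehat{\chi_f})$, essentially the same character-sum computation that proves the previous lemma for an unbalanced $s$-plateaued function, only with the ambient space $\mathbb{F}_p^n$ replaced by $S$ and the exponent $\frac{n+s}{2}$ replaced by $\frac{n-s}{2}$. The starting point is the Remark above: because $f$ is non-weakly regular $s$-plateaued and $f^{*}$ is bent relative to $S$, the function $f^{*}\colon S\to\mathbb{F}_p$ is non-weakly regular bent relative to $S$, $\#S=p^{n-s}$, its dual obeys $f^{**}(x)=f(-x)$ so that $f^{**}(0)=f(0)=j_0$, and $\widehat{\chi_{f^{*}}}(\alpha)=\epsilon_{\alpha}p^{\frac{n-s}{2}}\xi_p^{f^{**}(\alpha)}$ for all $\alpha\in\mathbb{F}_p^n$, where $\epsilon_{\alpha}\in\{\pm1,\pm i\}$ and the fixed fourth root $\epsilon$ is the one determined by the residue of $p^{n+s}$ modulo $4$. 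In particular bent-relativity forces $\widehat{\chi_{f^{*}}}(0)\neq 0$, so the question of $f^{*}$ being ``unbalanced'' over $S$ never arises, and $\widehat{\chi_{f^{*}}}(0)=\epsilon_{0}p^{\frac{n-s}{2}}\xi_p^{j_0}$ with $\epsilon_0=\epsilon$ precisely when $f^{*}$ is of type $(+)$.

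Next I would set $W(y):=\sum_{x\in S}\xi_p^{yf^{*}(x)}$ and expand, by orthogonality of the additive characters of $\mathbb{F}_p$,
\[
N_j(f^{*})=\frac1p\sum_{y\in\mathbb{F}_p}\sum_{x\in S}\xi_p^{y(f^{*}(x)-j)}=p^{n-s-1}+\frac1p\sum_{y\in\mathbb{F}_p^{\times}}\xi_p^{-yj}W(y).
\]
The key observation is that $W(1)=\widehat{\chi_{f^{*}}}(0)$ lies in $\mathbb{Z}[\xi_p]$, so applying the Galois automorphism $\sigma_y$ (with $\sigma_y(\xi_p)=\xi_p^{y}$) term by term gives $W(y)=\sigma_y\!\big(\widehat{\chi_{f^{*}}}(0)\big)$ for every $y\in\mathbb{F}_p^{\times}$. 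Since $p$ is odd, $p^{2s}\equiv 1\ (\mathrm{mod}\ 4)$, hence $p^{n-s}\equiv p^{n+s}\ (\mathrm{mod}\ 4)$ and the same root $\epsilon$ controls $f^{*}$; a short case check then shows $\epsilon p^{\frac{n-s}{2}}=p^{\frac{n-s}{2}}$ when $n-s$ is even and $\epsilon p^{\frac{n-s}{2}}=p^{\frac{n-s-1}{2}}\sqrt{p^{*}}$ when $n-s$ is odd. Combined with $\sigma_y(\sqrt{p^{*}})=\eta(y)\sqrt{p^{*}}$, this yields $W(y)=\pm p^{\frac{n-s}{2}}\xi_p^{yj_0}$ in the even case and $W(y)=\pm p^{\frac{n-s-1}{2}}\eta(y)\sqrt{p^{*}}\,\xi_p^{yj_0}$ in the odd case, the sign being $+$ exactly when $f^{*}$ is of type $(+)$.

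Finally I would substitute these into the displayed identity and evaluate the remaining sum over $y\in\mathbb{F}_p^{\times}$ with the exponential-sum lemma of Section~\ref{Section2.3}. When $n-s$ is even the inner sum is $\sum_{y\in\mathbb{F}_p^{\times}}\xi_p^{y(j_0-j)}$, equal to $p-1$ for $j=j_0$ and to $-1$ otherwise, giving $N_{j_0}(f^{*})=p^{n-s-1}\pm p^{\frac{n-s}{2}}\mp p^{\frac{n-s}{2}-1}$ and $N_j(f^{*})=p^{n-s-1}\mp p^{\frac{n-s}{2}-1}$ for $j\neq j_0$. When $n-s$ is odd the inner sum is $\sum_{y\in\mathbb{F}_p^{\times}}\eta(y)\xi_p^{y(j_0-j)}$, which is $0$ for $j=j_0$, so $N_{j_0}(f^{*})=p^{n-s-1}$; and for $j\neq j_0$ the substitution $u=(j_0-j)y$ together with $\sum_{u\in\mathbb{F}_p^{\times}}\eta(u)\xi_p^{u}=\sqrt{p^{*}}$, $\big(\sqrt{p^{*}}\big)^{2}=p^{*}$ and $p^{*}=\eta(-1)p$ collapses all prefactors and yields $N_j(f^{*})=p^{n-s-1}\pm\eta(j-j_0)p^{\frac{n-s-1}{2}}$, i.e.\ the claimed $N_{j_0+j}(f^{*})=p^{n-s-1}\pm\eta(j)p^{\frac{n-s-1}{2}}$ for $1\le j\le p-1$. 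Checking in each case that ``$+$'' corresponds to $f^{*}$ being of type $(+)$ finishes the proof.

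The only delicate part, I expect, is bookkeeping rather than a genuine obstacle: one must use the Remark to pin down $\widehat{\chi_{f^{*}}}(0)=\epsilon_{0}p^{\frac{n-s}{2}}\xi_p^{j_0}$ with $j_0=f^{**}(0)=f(0)$ and note that it is never zero; confirm the arithmetic fact $p^{n-s}\equiv p^{n+s}\ (\mathrm{mod}\ 4)$, which is what lets the same fourth root $\epsilon$ govern both $f$ and $f^{*}$; and keep the sign emitted by the character sums synchronized with the type $(+)/(-)$ of $f^{*}$ throughout. Once this is arranged, the computation is a verbatim rerun of the proof of the previous lemma with $n+s$ replaced by $n-s$ and $\mathbb{F}_p^n$ by $S$.
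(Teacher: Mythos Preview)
Your proposal is correct and follows essentially the same route as the paper: both expand $N_j(f^{*})$ by orthogonality, recognize the inner sum over $S$ as $\sigma_y\big(\widehat{\chi_{f^{*}}}(0)\big)$, invoke Remark~1 to get $f^{**}(0)=f(0)=j_0$ and the explicit form of $\widehat{\chi_{f^{*}}}(0)$, and then evaluate the remaining character sum over $y\in\mathbb{F}_p^{\times}$ via Lemma~1 in the two parity cases. Your extra observations---that $p^{n-s}\equiv p^{n+s}\pmod 4$ pins down the same root $\epsilon$, and that bent-relativity forces $\widehat{\chi_{f^{*}}}(0)\neq 0$---are correct and make the bookkeeping explicit, but do not change the argument.
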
	

\begin{proof} For any $j\in\mathbb{F}_p$, we have
	\begin{align*}
	N_j(f^*)&=p^{-1}\sum\limits_{x\in\mathrm{Supp}(\widehat{\chi_f})}\sum\limits_{y\in\mathbb{F}_p}\xi_p^{y(f^*(x)-j)}\\
	&=p^{-1}\sum\limits_{y\in\mathbb{F}_p^{\times}}\sum\limits_{x\in\mathrm{Supp}(\widehat{\chi_f})}\xi_p^{y(f^*(x)-j)}+p^{n-s-1}\\
	&=p^{-1}\sum\limits_{y\in\mathbb{F}_p^{\times}}\sigma_y(\widehat{\chi_{f^*}}(0))\xi_p^{-yj}+p^{n-s-1}.
	\end{align*}
	Since $f^*(x)$ is bent relative to $\mathrm{Supp}(\widehat{\chi_f})$, then we have $\widehat{\chi_{f^*}}(0)=\epsilon_0\epsilon p^{\frac{n-s}{2}}\xi_p^{f^{**}(0)}$, where $\epsilon\in\{1,i\}$ and $\epsilon_0=1$ (respectively $\epsilon_0=-1$) if $0\in B_+(f^*)$ (respectively $0\in B_-(f^*)$). By Remark 1, we know that $f^{**}(0)=f(0)=j_0$. By Lemma 1, the value of $N_j(f^*)$ can be calculated in the following two cases.
	\begin{itemize}
	\item[$\bullet$]When $n-s$ is even, then we have 
	\begin{align*}
	N_j(f^*)&=\sum\limits_{y\in\mathbb{F}_p^{\times}}\epsilon_0p^{\frac{n-s}{2}-1}\xi_p^{y(j_0-j)}+p^{n-s-1}\\
	&=\begin{cases}
	p^{n-s-1}+\epsilon_0(p-1)p^{\frac{n-s}{2}-1}, &\text{if}\ j=j_0;\\
	p^{n-s-1}-\epsilon_0p^{\frac{n-s}{2}-1},&\text{if}\ j\ne j_0.
	\end{cases}
	\end{align*}
	\item[$\bullet$] When $n-s$ is odd, then we have 
	\begin{align*}
	N_{j_0+j}(f^*)&=\sum\limits_{y\in\mathbb{F}_p^{\times}}\epsilon_0p^{\frac{n-s-3}{2}}\sqrt{p^*}\eta(y)\xi_p^{y(-j)}+p^{n-s-1}\\
	&=\begin{cases}
	p^{n-s-1},&\text{if}\ j=0;\\
	p^{n-s-1}+\epsilon_0\eta(j)p^{\frac{n-s-1}{2}},&\text{if}\ 1\le j\le p-1.
	\end{cases}
	\end{align*}
	\end{itemize}

The proof is now completed.\qed
	\end{proof}
	From the results of Theorem 4.2 in \cite{Ozbudak1}, we have the following lemma.
	
	\begin{lemma} Let $f(x):\mathbb{F}_{p}^n\longrightarrow\mathbb{F}_p$ be a  non-weakly regular $s$-plateaued function whose dual $f^*(x)$ is bent relative to $\mathrm{Supp}(\widehat{\chi_f})$ and $f(0)=j_0$, $\#B_{+}(f)=k\ (k\ne 0\ \text{and}\ k\ne p^{n-s})$. For any $j \in \mathbb{F}_p$, define $c_j(f^*)=\#\{x\in B_+(f):\ f^*(x)=j\}$, $d_j(f^*)=\#\{x\in B_-(f):\ f^*(x)=j\}$ and $e_j(f^*)=c_j(f^*)-d_j(f^*)$.
		\begin{itemize}
			\item[$\bullet$]
			When n+s is even, we have \\
			$c_{j_0}(f^*)=\begin{cases}\frac{k}{p} +(p-1)p^{\frac{n-s}{2}-1},\hspace{1.7cm}         &\text{if}\ 0\in B_+(f^*),\\
			\frac{k}{p},\       &\text{if}\ 0\in B_-(f^*);
			\end{cases}$\\ 
			$d_{j_0}(f^*)=\begin{cases}p^{n-s-1}-\frac{k}{p},\ &\text{if}\ 0\in B_+(f^*),\\
			p^{n-s-1}-(p-1)p^{\frac{n-s}{2}-1}-\frac{k}{p},\ &\text{if}\ 0\in B_-(f^*);\end{cases}$\\
			and for $j\ne j_0\in \mathbb{F}_p$\\
			$c_j(f^*)\ =\begin{cases}
			\frac{k}{p}-p^{\frac{n-s}{2}-1},\hspace{2.8cm} &\text{if}\ 0\in B_+(f^*),\\
			\frac{k}{p},\ &\text{if}\ 0\in B_-(f^*);\end{cases}$\\
			$d_j(f^*)\ =\begin{cases}
			p^{n-s-1}-\frac{k}{p},\hspace{2.8cm} &\text{if}\ 0\in B_+(f^*),\\
			p^{n-s-1}+p^{\frac{n-s}{2}-1}-\frac{k}{p},\ &\text{if}\ 0\in B_-(f^*).\end{cases}$\\
			\item[$\bullet$]When $p\equiv 1$ $\mathrm{(mod\  4)}$ and n+s is odd, we have \\
			$c_{j_0}(f^*)=\frac{k}{p};$\\
			$d_{j_0}(f^*)=p^{n-s-1}-\frac{k}{p};$\\
			and for $1\le j\le p-1$\\
			$c_{j_0+j}(f^*)\ =\begin{cases}
			\frac{k}{p}+\eta(j)p^{\frac{n-s-1}{2}},\hspace{2.3cm} &\text{if}\ 0\in B_+(f^*),\\
			\frac{k}{p},\ &\text{if}\ 0\in B_-(f^*);\end{cases}$\\
			$d_{j_0+j}(f^*)\ =\begin{cases} p^{n-s-1}-\frac{k}{p},\hspace{2.8cm} &\text{if}\ 0\in B_+(f^*),\\
			p^{n-s-1}-\eta(j)p^{\frac{n-s-1}{2}}-\frac{k}{p},\ &\text{if}\ 0\in B_-(f^*).\end{cases}$\\
			\item[$\bullet$]When $p\equiv 3$ $\mathrm{(mod\  4)}$ and n+s is odd, we have\\
			$c_{j_0}(f^*)=\frac{k}{p};$\\
			$d_{j_0}(f^*)=p^{n-s-1}-\frac{k}{p};$\\
			and for $1\le j\le p-1$\\
			$c_{j_0+j}(f^*)\ =\begin{cases}
			\frac{k}{p},\ &\text{if}\ 0\in B_+(f^*),\\
			\frac{k}{p}-\eta(j)p^{\frac{n-s-1}{2}},\hspace{2.3cm} &\text{if}\ 0\in B_-(f^*);\end{cases}$\\
			$d_{j_0+j}(f^*)\ =\begin{cases} p^{n-s-1}+\eta(j)p^{\frac{n-s-1}{2}}-\frac{k}{p},\hspace{0.8cm} &\text{if}\ 0\in B_+(f^*),\\
			p^{n-s-1}-\frac{k}{p},\ &\text{if}\ 0\in B_-(f^*).\end{cases}$\\	
		\end{itemize}
	\end{lemma}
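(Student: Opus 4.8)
The plan is to determine each pair $\bigl(c_j(f^*),d_j(f^*)\bigr)$ from its sum and its difference. Since $B_+(f)$ and $B_-(f)$ partition $\mathrm{Supp}(\widehat{\chi_f})$, we have $c_j(f^*)+d_j(f^*)=N_j(f^*)$, which is already supplied by Lemma~3; the only subtlety is that the ``type'' of $f^*$ occurring there is $(+)$ precisely when $0\in B_+(f^*)$, as is visible from the proof of Lemma~3 (where $\widehat{\chi_{f^*}}(0)=\epsilon_0\epsilon p^{\frac{n-s}{2}}\xi_p^{f^{**}(0)}$ with $\epsilon_0=1$ iff $0\in B_+(f^*)$, and $f^{**}(0)=f(0)=j_0$ by Remark~1). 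So the task reduces to evaluating $e_j(f^*)=c_j(f^*)-d_j(f^*)$; then $c_j(f^*)=\tfrac12(N_j(f^*)+e_j(f^*))$ and $d_j(f^*)=\tfrac12(N_j(f^*)-e_j(f^*))$ yield all three sets of formulas.

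To evaluate $e_j(f^*)$, I would write $\varepsilon_\alpha=\pm1$ according as $\alpha\in B_+(f)$ or $\alpha\in B_-(f)$, so that $\widehat{\chi_f}(\alpha)=\varepsilon_\alpha\epsilon p^{\frac{n+s}{2}}\xi_p^{f^*(\alpha)}$ on $\mathrm{Supp}(\widehat{\chi_f})$ and $e_j(f^*)=\sum_{\alpha\in\mathrm{Supp}(\widehat{\chi_f}),\,f^*(\alpha)=j}\varepsilon_\alpha$. Inserting the additive-character indicator of $f^*(\alpha)=j$ and peeling off the $y=0$ term (which contributes $p^{-1}(\#B_+(f)-\#B_-(f))=p^{-1}(2k-p^{n-s})$, using $\#\mathrm{Supp}(\widehat{\chi_f})=p^{n-s}$ from Parseval), the remaining sum over $y\in\mathbb{F}_p^{\times}$ hinges on evaluating $\sum_{\alpha\in\mathrm{Supp}(\widehat{\chi_f})}\varepsilon_\alpha\xi_p^{yf^*(\alpha)}$. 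For this I would use Fourier inversion $\sum_{\alpha\in\mathbb{F}_p^n}\widehat{\chi_f}(\alpha)=p^n\xi_p^{f(0)}=p^n\xi_p^{j_0}$ (the off-support terms vanish) to get $\sum_{\alpha}\varepsilon_\alpha\xi_p^{f^*(\alpha)}=\epsilon^{-1}p^{\frac{n-s}{2}}\xi_p^{j_0}$, and then apply the Galois automorphism $\sigma_y$ (which fixes $\pm1$ and rational integers and sends $\xi_p\mapsto\xi_p^{y}$, $\sqrt{p^*}\mapsto\eta(y)\sqrt{p^*}$) to obtain $\sum_{\alpha}\varepsilon_\alpha\xi_p^{yf^*(\alpha)}=\sigma_y(\epsilon^{-1}p^{\frac{n-s}{2}})\,\xi_p^{yj_0}$.

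The endgame splits on the parity of $n+s$. When $n+s$ is even, $\epsilon=1$, the inner sum is $p^{\frac{n-s}{2}}\xi_p^{yj_0}$, and Lemma~1(2) gives $e_{j_0}(f^*)=p^{-1}(2k-p^{n-s})+(p-1)p^{\frac{n-s}{2}-1}$ and $e_j(f^*)=p^{-1}(2k-p^{n-s})-p^{\frac{n-s}{2}-1}$ for $j\ne j_0$. When $n+s$ is odd, one checks (separately for $p\equiv1$ and $p\equiv3\pmod4$, against Lemma~1(3)) that $\epsilon^{-1}p^{\frac{n-s}{2}}=\eta(-1)p^{\frac{n-s-1}{2}}\sqrt{p^*}$, so after $\sigma_y$ the inner sum carries a factor $\eta(-1)\eta(y)\sqrt{p^*}$; summing against $\xi_p^{-y(j_0+j)}$ via Lemma~1(1),(3) and using $(\sqrt{p^*})^2=\eta(-1)p$ gives $e_{j_0}(f^*)=p^{-1}(2k-p^{n-s})$ and $e_{j_0+j}(f^*)=p^{-1}(2k-p^{n-s})+\eta(-1)\eta(j)p^{\frac{n-s-1}{2}}$ for $1\le j\le p-1$. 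Substituting these $e_j(f^*)$, together with the values of $N_j(f^*)$ from Lemma~3 (with the sign governed by whether $0\in B_+(f^*)$ or $0\in B_-(f^*)$), into $c_j=\tfrac12(N_j+e_j)$ and $d_j=\tfrac12(N_j-e_j)$, and separating by $p\bmod4$ in the odd case, reproduces exactly the displayed formulas.

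I do not expect a genuine obstacle: once the sum/difference decomposition is set up, everything reduces to the two character-sum evaluations above, both routine consequences of Lemma~1. The two points that do require care are (i) translating the ``type of $f^*$'' hypothesis of Lemma~3 into the explicit alternative $0\in B_+(f^*)$ vs.\ $0\in B_-(f^*)$ used in the statement, and (ii) tracking the $\eta(-1)$ and $\eta(y)$ factors that $\sigma_y$ introduces on $\epsilon^{-1}$ when $n+s$ is odd, which is exactly what forces the odd case to split into $p\equiv1\pmod4$ and $p\equiv3\pmod4$. Since the statement is advertised as a consequence of Theorem~4.2 of \cite{Ozbudak1}, one may alternatively just quote that theorem and restate it in the present notation; the argument above is in essence a self-contained re-derivation of it.
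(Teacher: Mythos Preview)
Your proposal is correct and follows the same approach as the paper: both reduce to the identity $c_j(f^*)+d_j(f^*)=N_j(f^*)$ (supplied by Lemma~3) together with the values of $e_j(f^*)=c_j(f^*)-d_j(f^*)$, and then solve the resulting $2\times2$ system. The only difference is that the paper simply quotes the $e_j(f^*)$ values from Theorem~4.2 of \cite{Ozbudak1}, whereas you re-derive them directly via the Fourier-inversion identity $\sum_{\alpha}\widehat{\chi_f}(\alpha)=p^n\xi_p^{j_0}$ and the Galois action; as you yourself note in your last sentence, your computation is precisely a self-contained reproof of that theorem, and the numerical values you obtain for $e_j(f^*)$ agree with those the paper cites.
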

\begin{proof}
	For any $j \in \mathbb{F}_p$, according to the definitions of $c_j(f^*)$ and  $d_j(f^*)$, we have that $c_j(f^*)+d_j(f^*)=N_j(f^*)$.
	From the Theorem 4.2 in \cite{Ozbudak1}, we know that when $n+s$ is even, $e_{j_0}(f^*)=\frac{2k}{p}-p^{\frac{n-s}{2}-1}(p^{\frac{n-s}{2}}-p+1)$ and $e_j(f^*)=\frac{2k}{p}-p^{\frac{n-s}{2}-1}(p^{\frac{n-s}{2}}+1)$ for $j\ne j_0\in\mathbb{F}_p$; when $p\equiv1\ \mathrm{(mod\  4)}$ and $n+s$ is odd, $e_{j_0}(f^*)=\frac{2k}{p}-p^{n-s-1}$ and $e_{j_0+j}(f^*)=\frac{2k}{p}-p^{n-s-1}+\eta
	(j)p^{\frac{n-s-1}{2}}$ for $1\le j\le p-1$; when  $p\equiv3\ \mathrm{(mod\  4)}$ and $n+s$ is odd, $e_{j_0}(f^*)=\frac{2k}{p}-p^{n-s-1}$ and $e_{j_0+j}(f^*)=\frac{2k}{p}-p^{n-s-1}-\eta
	(j)p^{\frac{n-s-1}{2}}$ for $1\le j\le p-1$. By Lemma 3, the proof can be easily completed.\qed
	\end{proof}
	\subsection{Linear codes}
	\quad For a vector $x=(x_1,\ x_2,\cdots, x_n)\in\mathbb{F}_p^n$, the $\mathit{support}$ of $x$ is defined as $\mathrm{supp}(x)=\{j:1\leq j \leq n, x_j\ne 0\}$ and the $\mathit{Hamming\  weight}$ of $x$ denoted by $\mathrm{wt}(x)$ is the size of $\mathrm{supp}(x)$. For two vectors $x,\ y$ $\in\mathbb{F}_p^n$, the $\mathit{Hamming \ distance}$ $\mathrm{d}(x,y)$ between them is defined to be the number of coordinates in which $x$ and $y$ differ. Let $\mathcal{C}$ be an $[n,\ k]$ linear code, that is,  $\mathcal{C}$ is a $k$-dimensional subspace of $\mathbb{F}_p^n$. An element of $\mathcal{C}$ is said to be a $\mathit{codeword}$. The $\mathit{minimum\ distance}$ of $\mathcal{C}$ is defined as the minimal Hamming distance between two distinct codewords.  If the minimum distance of $\mathcal{C}$  is $d$, we call $\mathcal{C}$ an $[n,\ k,\ d]$ linear code. Let $T$ be a set of $t$ coordinate positions in $\mathcal{C}$. The $\mathit{punctured\ code}$ of $\mathcal{C}$, defined by $\widetilde{\mathcal{C}}$, can be obtained by puncturing $\mathcal{C}$ on $T$. The linear code defined by $\mathcal{C}^{\bot}=\{x\in\mathbb{F}^{n}_{p}:\ x\cdot y=0\ \text{for all}\  y\in \mathcal{C}\}$ is called the $\mathit{dual\ code}$ of $\mathcal{C}$. Note that the dual code $\mathcal{C}^{\bot}$ is an $[n,\ n-k,\ d^{\bot}]$ linear code, where $d^{\bot}$ is the minimum distance of $\mathcal{C}^{\bot}$. For $0\le j\le n$, let $A_j$ be the number of codewords with Hamming weight $j$ in $\mathcal{C}$. Then $(1,\ A_1,\cdots,\ A_n)$ is called the $\mathit{weight\  distribution}$ of $\mathcal{C}$ and $1+A_1z+\cdots+A_nz^n$ is the $\mathit{weight\  enumerator}$ of $\mathcal{C}$. The code $\mathcal{C}$ is said to be a $t$-weight code if the number of nonzero $A_j$ ($1\le j\le n$) is equal to $t$. For the dual code $\mathcal{C}^{\bot}$ of $\mathcal{C}$, let $A_j^{\bot}$ be the number of codewords with Hamming weight $j$ in $\mathcal{C}^{\bot}$, where  $0\le j\le n$. The first five Pless power moments are given as follows \cite{Huffman}.
	\begin{flalign*}
	\sum_{j=0}^{n}A_j&=p^k,&&\\
	\sum_{j=0}^{n}jA_j&=p^{k-1}(pn-n-A_1^{\bot}),&&\\
	\sum_{j=0}^{n}j^2A_j&=p^{k-2}((p-1)n(pn-n+1)-(2pn-p-2n+2)A_1^{\bot}+2A_2^{\bot}),&&\\
	\sum_{j=0}^{n}j^3A_j&=p^{k-3}((p-1)n(p^2n^2-2pn^2+3pn-p+n^2-3n+2)-(3p^2n^2&&\\&-3p^2n-6pn^2+12pn+p^2-6p+3n^2-9n+6)A_1^{\bot}+6(pn-p&&\\&-n+2)A_2^{\bot}-6A_3^{\bot}),&&\\
	\sum_{j=0}^{n}j^4A_j&=p^{k-4}((p-1)n(p^3n^3-3p^2n^3+6p^2n^2-4p^2n+p^2+3pn^3-12pn^2&&\\
	&+15pn-6p-n^3+6n^2-11n+6)-(4p^3n^3-6p^3n^2+4p^3n-p^3
	\end{flalign*}
	\begin{flalign*}
	\hspace{1cm}&-12p^2n^3+36p^2n^2-38p^2n+14p^2+12pn^3-54pn^2+78pn-36p&&\\
	&-4n^3+24n^2-44n+24)A_1^{\bot}+(12p^2n^2-24p^2n+14p^2-24pn^2&&\\
	&+84pn-72p+12n^2-60n+72)A_2^{\bot}-(24pn-36p-24n+72)A_3^{\bot}&&\\&+24A_4^{\bot}).
		\end{flalign*}
	
	In coding theory, it is desirable to construct a linear code with $k/n$ and $d$ being as large as possible. However, there is a tradeoff among $n,\ k$, and $d$. In the following, we introduce two bounds on linear codes.
	\begin{lemma}[Singleton Bound]\cite{Huffman} Let $\mathcal{C}$ be an $[n,\ k,\ d]$ linear code over $\mathbb{F}_p$. Then
		\[d\le n-k+1.\]
	\end{lemma}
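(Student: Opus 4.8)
The plan is to prove the Singleton bound by a coordinate-deletion (puncturing) argument. First I would fix an arbitrary set $T$ of $d-1$ coordinate positions and consider the projection map $\pi:\mathbb{F}_p^n\to\mathbb{F}_p^{\,n-d+1}$ that deletes the coordinates indexed by $T$, restricted to $\mathcal{C}$. The key step is to check that $\pi|_{\mathcal{C}}$ is injective: if $\pi(c)=\pi(c')$ for codewords $c,c'\in\mathcal{C}$, then $c-c'$ is again a codeword (since $\mathcal{C}$ is linear) whose support is contained in $T$, hence of Hamming weight at most $|T|=d-1<d$; as $d$ is the minimum distance of $\mathcal{C}$, this forces $c-c'=0$, i.e. $c=c'$.

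Injectivity of $\pi|_{\mathcal{C}}$ then gives $|\mathcal{C}|\le|\mathbb{F}_p^{\,n-d+1}|$, that is $p^k\le p^{\,n-d+1}$, so $k\le n-d+1$, which rearranges to $d\le n-k+1$. As an alternative (and arguably quicker) route, one can use a generator matrix $G$ of $\mathcal{C}$: after a coordinate permutation we may bring $G$ to the form $[\,I_k\mid A\,]$ with $A$ a $k\times(n-k)$ matrix over $\mathbb{F}_p$, and then any single row of $G$ is a nonzero codeword of weight at most $1+(n-k)$, whence $d\le n-k+1$. I would present the puncturing argument as the main line because it is coordinate-free and sidesteps any need to treat the degenerate case $k=0$ separately (the inequality being trivial there).

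There is no real obstacle in this classical statement; the only point requiring a little care is the injectivity claim, where one must invoke both that $c-c'$ lies in $\mathcal{C}$ — so that the minimum-distance bound applies to it — and that its support is confined to the $d-1$ deleted positions, forcing its weight to be strictly less than $d$ unless it is the zero word.
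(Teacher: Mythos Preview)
Your proof is correct and entirely standard; both the puncturing argument and the systematic-generator-matrix alternative are valid and well presented. Note, however, that the paper does not actually prove this lemma: it is stated as a cited result from \cite{Huffman} with no accompanying argument, so there is no ``paper's approach'' to compare against. Your write-up would serve perfectly well as a self-contained justification if one were desired.
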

	\begin{lemma}
		[Sphere Packing Bound]\cite{Huffman}
		Let $\mathcal{C}$ be an $[n,\ k,\ d]$ linear code over $\mathbb{F}_p$. Then 
		\[p^{n-k}\ge\sum_{j=0}^{\lfloor\frac{d-1}{2}\rfloor}\binom{n}{j}(p-1)^j.\]
		
	\end{lemma}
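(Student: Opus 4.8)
The plan is to prove the Sphere Packing (Hamming) bound by a volume-packing argument: the Hamming balls of radius $t:=\lfloor (d-1)/2\rfloor$ centered at the codewords of $\mathcal{C}$ are pairwise disjoint and all contained in the ambient space $\mathbb{F}_p^n$, so the sum of their sizes cannot exceed $p^n$.

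First I would compute the cardinality of a single closed Hamming ball of radius $t$. For a fixed center $c\in\mathbb{F}_p^n$ and each integer $0\le j\le t$, a vector at Hamming distance exactly $j$ from $c$ is determined by choosing the $j$ coordinate positions in which it differs from $c$, in $\binom{n}{j}$ ways, and then selecting a nonzero value in each of those positions, in $(p-1)^j$ ways. Hence the ball has cardinality $\sum_{j=0}^{t}\binom{n}{j}(p-1)^j$, independent of the center.

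Next I would establish disjointness. Suppose some $x\in\mathbb{F}_p^n$ lies in the radius-$t$ balls about two distinct codewords $c_1,c_2\in\mathcal{C}$. Since the Hamming distance satisfies the triangle inequality (any coordinate at which $c_1$ and $c_2$ disagree is a coordinate at which at least one of them disagrees with $x$), we get $\mathrm{d}(c_1,c_2)\le \mathrm{d}(c_1,x)+\mathrm{d}(x,c_2)\le 2t\le d-1$, contradicting the hypothesis that the minimum distance of $\mathcal{C}$ is $d$ together with $c_1\ne c_2$. Thus the balls centered at distinct codewords are pairwise disjoint.

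Finally, because $\mathcal{C}$ contains $p^k$ codewords and the associated balls are disjoint subsets of $\mathbb{F}_p^n$, adding up their sizes yields $p^k\sum_{j=0}^{t}\binom{n}{j}(p-1)^j\le p^n$; dividing through by $p^k$ gives $p^{n-k}\ge\sum_{j=0}^{\lfloor (d-1)/2\rfloor}\binom{n}{j}(p-1)^j$, which is the asserted inequality. There is no real obstacle here — the only point requiring a moment's care is the verification of the triangle inequality for the Hamming metric, and everything else is elementary counting.
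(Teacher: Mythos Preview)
Your proof is correct and is precisely the standard volume-packing argument for the Hamming bound. The paper does not actually prove this lemma at all; it is merely quoted from \cite{Huffman} as a known result, so there is no proof to compare against, and your argument would serve perfectly well as the textbook justification.
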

	
	A linear code  is called $\mathit{maximum\ distance\ separable}$, abbreviated MDS if it achieves the Singleton bound. A linear code with parameters $[n,\ k,\ n-k]$ is said to be $\mathit{almost\ maximum}$  $\mathit{
		distance\ separable}$(AMDS).
	For a $p$-ary $[n,\ k,\ d]$ linear code    $\mathcal{C}$, it is said to be (distance) $\mathit{optimal}$ if there does not exist an $[n,\ k,\ d']$ linear code such that $d'>d$ and is called $\mathit{almost\ optimal}$ if there exists an $[n,\ k,\ d+1]$ optimal code.
	
	For codewords $x,\ y$ of linear code $\mathcal{C}$, we say that $x$ covers $y$ if $\mathrm{supp}(y)\subseteq\mathrm{supp}(x)$. A nonzero codeword $x$ of $\mathcal{C}$ is called $\mathit{minimal}$ if $x$  covers only codewords $jx$ with $j\in \mathbb{F}_p$. A linear code $\mathcal{C}$ is said to be $\mathit{minimal}$ if every nonzero codeword of $\mathcal{C}$ is minimal. The following lemma presents a sufficient condition for a linear code to be minimal.
	\begin{lemma}[Ashikhmin-Barg]\cite{Ashikhmin}
		Let $\mathcal{C}$ be a linear code over $\mathbb{F}_p$, if 
		\[\frac{p-1}{p}<\frac{wt_{\mathrm{min}}}{wt_{\mathrm{max}}},\]
		then $\mathcal{C}$ is minimal, where $wt_{\mathrm{min}}$ and $wt_{\mathrm{max}}$ are the minimum and maximum Hamming weights of nonzero codewords in $\mathcal{C}$, respectively.
	\end{lemma}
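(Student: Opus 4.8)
The plan is to prove the contrapositive by contradiction: I assume $\mathcal{C}$ is \emph{not} minimal and deduce $\frac{wt_{\mathrm{min}}}{wt_{\mathrm{max}}}\le\frac{p-1}{p}$. So suppose there is a nonzero codeword $x\in\mathcal{C}$ that is not minimal; by definition there exists $y\in\mathcal{C}$ with $\mathrm{supp}(y)\subseteq\mathrm{supp}(x)$ but $y\ne jx$ for every $j\in\mathbb{F}_p$. In particular $y\ne 0$ (take $j=0$), and $y-jx\ne 0$ for all $j\in\mathbb{F}_p$, so each of the $p$ codewords $y-jx$, $j\in\mathbb{F}_p$, has Hamming weight at least $wt_{\mathrm{min}}$.

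The core step is the identity
\[
\sum_{j\in\mathbb{F}_p}\mathrm{wt}(y-jx)=(p-1)\,\mathrm{wt}(x),
\]
which I would prove by counting contributions coordinate by coordinate. If $i\notin\mathrm{supp}(x)$ then $x_i=0$, and since $\mathrm{supp}(y)\subseteq\mathrm{supp}(x)$ also $y_i=0$, so $(y-jx)_i=0$ for every $j$ and this coordinate contributes $0$ to the left-hand sum. If $i\in\mathrm{supp}(x)$ then $x_i\ne 0$, so the linear equation $y_i-jx_i=0$ has a unique solution $j=y_ix_i^{-1}\in\mathbb{F}_p$; hence among the $p$ vectors $y-jx$ exactly one is zero in coordinate $i$ and the remaining $p-1$ are nonzero there, contributing $p-1$. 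Summing over the $\mathrm{wt}(x)$ coordinates lying in $\mathrm{supp}(x)$ gives the claimed identity.

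Combining the two observations, $(p-1)\,\mathrm{wt}(x)=\sum_{j\in\mathbb{F}_p}\mathrm{wt}(y-jx)\ge p\cdot wt_{\mathrm{min}}$, hence $\mathrm{wt}(x)\ge\frac{p}{p-1}wt_{\mathrm{min}}$. Since $\mathrm{wt}(x)\le wt_{\mathrm{max}}$ we obtain $wt_{\mathrm{max}}\ge\frac{p}{p-1}wt_{\mathrm{min}}$, i.e.\ $\frac{wt_{\mathrm{min}}}{wt_{\mathrm{max}}}\le\frac{p-1}{p}$, contradicting the hypothesis $\frac{p-1}{p}<\frac{wt_{\mathrm{min}}}{wt_{\mathrm{max}}}$. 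Therefore no non-minimal nonzero codeword exists and $\mathcal{C}$ is minimal. I do not expect a genuine obstacle here: the argument is short and elementary, and the only point deserving care is the coordinate bookkeeping in the displayed identity — specifically, using $\mathrm{supp}(y)\subseteq\mathrm{supp}(x)$ to discard all coordinates outside $\mathrm{supp}(x)$, and using the uniqueness of the multiplicative inverse in $\mathbb{F}_p$ to count exactly one vanishing translate per coordinate of $\mathrm{supp}(x)$.
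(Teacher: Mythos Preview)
Your proof is correct and is essentially the standard Ashikhmin--Barg argument. The paper does not give its own proof of this lemma; it merely cites it from \cite{Ashikhmin}, so there is nothing to compare against beyond noting that your argument matches the original.
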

	
	\subsection{Secret Sharing Schemes}
	\quad Secret sharing, introduced in 1979 by Blakley et al. \cite{Blakley} and Shamir \cite{Shamir} independently, is an interesting topic of cryptography and has a number of applications in real-word security systems. A secret sharing scheme consists of a dealer, a group $\mathcal{P}=\{P_1,\ P_2,\ \cdots,P_{n-1}\}$ of $n-1$ participants, a secret space $\mathcal{S}$, $n-1$ share spaces $\mathcal{S}_1,\ \mathcal{S}_2,\ \cdots,\mathcal{S}_{n-1}$, a share computing procedure and a secret recovering procedure. The dealer randomly chooses an element $s\in \mathcal{S}$ as the secret, then computes $n-1$ shares $s_j\in \mathcal{S}_j$ and finally distributes $s_j$ to participant $\mathcal{P}_j$ for $1\le j\le n-1$. If a set of participants is able to determine $s$, then it is called an $\mathit{access\  set}$. The set of all access sets is called the $\mathit{access\ structure}$. An access set is minimal if any proper subset of it is not an access set. The set of all minimal access sets is called the $\mathit{minimal\ access\ structure}.$
	
	Now, we introduce a construction of secret sharing schemes from linear codes given by Massey \cite{Massey,Massey1}.
	
	Let $\mathcal{C}$ be an $[n,\ k,\ d]$ linear code over $\mathbb{F}_p$ with generator matrix $G=[g_0,\ g_1,\cdots,\ g_{n-1}]$ with $g_i\in \mathbb{F}_p^k$ for $0\le i\le n-1$. The secret space $\mathcal{S}$ of the secret sharing scheme based on $\mathcal{C}$ is $\mathbb{F}_p$ with the uniform probability distribution. The participant set $\mathcal{P}=\{P_1,\ P_2,\ \cdots,P_{n-1}\}$. The secret sharing procedure goes as follows:
	\begin{itemize}
		\item [$\bullet$] The dealer chooses randomly $u=(u_0,\ u_1,\cdots,\ u_{k-1})$ such that $s=u\cdot g_0$.
		\item [$\bullet$] The dealer then computes $t=(t_0,\ t_1,\cdots,\ t_{n-1})=uG$.
		\item [$\bullet$] The dealer finally gives $t_j$ to participant $P_j$ as the share for $j=1,2,\dots,n-1$.
	\end{itemize}
	Note that a set of shares of $\{t_{j_1},\ t_{j_2},\cdots,\ t_{j_m}\}$ can recover the secret if and only if $g_0$ is a linear combination of $g_{j_1},\ g_{j_2},\cdots,\ g_{j_m}$. 
	Assume that $g_0=\sum\limits_{k=1}^mx_kg_{j_k}$, then the secret $s$ is recovered by computing $s=\sum\limits_{k=1}^mx_kt_{j_k}$.
	
	Clearly, the assess structure of the above secret sharing scheme is 
	\begin{align*}
	\mathcal{T}=\{\{P_{j_1},\ P_{j_2},\cdots,\ P_{j_m}\}:&\ g_0\  \text{is\  a\  linear\  combination\  of}\  g_{j_1},\ g_{j_2},\cdots,\ g_{j_m},\\
	&\text{where}\ 1\le j_1<\cdots<j_m\le n-1\}.
	\end{align*}In \cite{Massey,Massey1}, Massey proved that the assess structure can also be given by 
	\[\mathcal{T}=\{\{P_j:\ j\in\mathrm{supp}((c_1,\cdots,c_{n-1}))\}: (1,\ c_1,\cdots,c_{n-1})\in \mathcal{C}^{\bot}
	\}.\]
	
	The following lemma describes the access structure of the secret sharing scheme based on the dual code of a minimal linear code.
	\begin{lemma}\cite{Yuan}
		Let $\mathcal{C}$ be a minimal  $[n,\ k,\ d]$ linear code over $\mathbb{F}_p$ with generator matrix $G=[{g}_0,\ {g}_1,\cdots,\ {g}_{n-1}]$, then in the secret sharing scheme based on $\mathcal{C}^{\bot}$ with respect to the above construction, there are altogether $p^{k-1}$ minimal access sets.
		\begin{itemize}
			\item [$\bullet$] When $d^{\bot}=2$, the access structure is as follows.\\
			If ${g}_j$ is a scalar multiple of ${g}_0$, $1\le j\le n-1$, then participant $P_j$ must be in every minimal access set.\\
			If ${g}_j$ is not a scalar multiple of ${g}_0$, $1\le j\le n-1$, then participant $P_j$ mus
			t be in $(p-1)p^{k-2}$ out of $p^{k-1}$  minimal access sets.
			\item [$\bullet$]When $d^{\bot}\ge 3$, for any fixed $1\le t\le\min\{k-1,d^{\bot}-2\}$, every set of $t$ participants is involved in $(p-1)^tp^{k-(t+1)}$ out of $p^{k-1}$ minimal access sets.
		\end{itemize} 
	\end{lemma}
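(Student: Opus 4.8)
The plan is to reduce the whole statement to counting solutions of linear systems over $\mathbb{F}_p^k$. First I would invoke Massey's description of the access structure recalled above: since the scheme is built from $\mathcal{C}^{\bot}$ and $(\mathcal{C}^{\bot})^{\bot}=\mathcal{C}$, the access sets are exactly the sets $\{P_j:\ j\in\mathrm{supp}((c_1,\dots,c_{n-1}))\}$, where $(1,c_1,\dots,c_{n-1})$ runs over the codewords of $\mathcal{C}$ whose $0$-th coordinate equals $1$. Writing a generic codeword as $uG=(u\cdot g_0,u\cdot g_1,\dots,u\cdot g_{n-1})$ with $u\in\mathbb{F}_p^k$, these are the codewords attached to the $u$ with $u\cdot g_0=1$. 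Because $d^{\bot}\ge 2$ holds in both cases of the lemma, no column of $G$ can vanish (a zero column of $G$ would give a weight-$1$ word of $\mathcal{C}^{\bot}$), so $g_0\ne 0$ and the affine subspace $\{u:\ u\cdot g_0=1\}$ has exactly $p^{k-1}$ elements; this already bounds the number of minimal access sets by $p^{k-1}$.

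Next I would use that $\mathcal{C}$ is minimal to turn this into an exact bijection. If two codewords $\mathbf{c}=(1,c_1,\dots),\ \mathbf{c}'=(1,c'_1,\dots)$ of $\mathcal{C}$ with $0$-th coordinate $1$ produced the same participant set, then, adding back coordinate $0$ (which lies in the support of both), $\mathrm{supp}(\mathbf{c})=\mathrm{supp}(\mathbf{c}')$, so $\mathbf{c}$ covers $\mathbf{c}'$; minimality of $\mathcal{C}$ forces $\mathbf{c}'=\lambda\mathbf{c}$, and comparing $0$-th coordinates gives $\lambda=1$, i.e.\ $\mathbf{c}=\mathbf{c}'$. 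The same covering argument shows each of these $p^{k-1}$ access sets is minimal in the secret-sharing sense: a proper subset that were still an access set would be cut out by a codeword whose support is strictly contained in $\mathrm{supp}(\mathbf{c})$, again contradicting minimality of $\mathcal{C}$. Since Massey's description exhibits every access set, hence every minimal one, in this form, the minimal access sets are precisely these $p^{k-1}$ sets, which proves the stated count.

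It then remains pure linear algebra to determine in how many of these $p^{k-1}$ sets a prescribed tuple of participants occurs: $P_j$ lies in the minimal access set indexed by $u$ iff $u\cdot g_j\ne 0$, so for $\{P_{j_1},\dots,P_{j_t}\}$ I must count the $u$ with $u\cdot g_0=1$ and $u\cdot g_{j_1}\ne 0,\dots,u\cdot g_{j_t}\ne 0$. For the first bullet ($d^{\bot}=2$, i.e.\ $t=1$): if $g_j=\lambda g_0$ with $\lambda\ne 0$, then $u\cdot g_j=\lambda\ne 0$ for every admissible $u$, so $P_j$ lies in all $p^{k-1}$ minimal access sets; if $g_0,g_j$ are linearly independent, the map $u\mapsto(u\cdot g_0,u\cdot g_j)$ sends $\mathbb{F}_p^k$ onto $\mathbb{F}_p^2$ with fibres of size $p^{k-2}$, so exactly $(p-1)p^{k-2}$ admissible $u$ satisfy $u\cdot g_j\ne 0$. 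For the second bullet ($d^{\bot}\ge 3$): recalling that $G$ is a parity-check matrix of $\mathcal{C}^{\bot}$, any $d^{\bot}-1$ of its columns are linearly independent, so since $t\le\min\{k-1,d^{\bot}-2\}$ the $t+1\le d^{\bot}-1$ columns $g_0,g_{j_1},\dots,g_{j_t}$ are independent and $t+1\le k$, whence $u\mapsto(u\cdot g_0,u\cdot g_{j_1},\dots,u\cdot g_{j_t})$ is onto $\mathbb{F}_p^{t+1}$ with fibres of size $p^{k-(t+1)}$; as $\{1\}\times(\mathbb{F}_p^{\times})^t$ has $(p-1)^t$ elements, there are $(p-1)^tp^{k-(t+1)}$ admissible $u$, a number independent of which $t$ participants were chosen.

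The fibre-size computations are routine. The one spot needing genuine care is the second paragraph, where minimality of $\mathcal{C}$ is converted into the bijection between admissible codewords and minimal access sets while keeping accurate track of the distinguished $0$-th coordinate; the only other non-formal ingredient is the standard fact that any $d^{\bot}-1$ columns of the parity-check matrix $G$ of $\mathcal{C}^{\bot}$ are linearly independent, which is the one place where the precise value of $d^{\bot}$ enters. Beyond this bookkeeping I anticipate no real obstacle.
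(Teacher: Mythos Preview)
The paper does not prove this lemma at all: it is quoted verbatim from \cite{Yuan} and used as a black box, so there is no ``paper's own proof'' to compare against. Your argument is the standard one behind that cited result and is correct in outline and in detail: Massey's description identifies the (minimal) access sets of the scheme on $\mathcal{C}^{\bot}$ with supports of codewords of $(\mathcal{C}^{\bot})^{\bot}=\mathcal{C}$ having $0$-th coordinate $1$; minimality of $\mathcal{C}$ makes the correspondence a bijection onto the $p^{k-1}$ vectors $u$ with $u\cdot g_0=1$; and the remaining counts follow from the surjectivity of $u\mapsto(u\cdot g_0,u\cdot g_{j_1},\dots,u\cdot g_{j_t})$, which in the $d^{\bot}\ge 3$ case is guaranteed because $G$ is a parity-check matrix of $\mathcal{C}^{\bot}$ and hence any $d^{\bot}-1$ of its columns are independent. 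One minor wording point: in the second paragraph you should say that the \emph{minimal} access sets (not all access sets) are in bijection with the codewords of $\mathcal{C}$ having $0$-th coordinate $1$; the access structure itself is upward closed, and Massey's formula really parametrizes its minimal elements. With that adjustment the proof is complete.
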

	\section{Linear codes from non-weakly regular plateaued functions based on the first generic construction}
	\quad In this section, motivated by \cite{Mesnager2}, we construct some linear codes with few weights from non-weakly regular plateaued functions and determine their weight distributions. 
	
	Let $f(x):\mathbb{F}_{p}^n\longrightarrow\mathbb{F}_p$ be a non-weakly regular $s$-plateaued function satisfying $f(0)=0$. For any $a\in\mathrm{supp}(\widehat{\chi_f})$, define $\epsilon_a=1\ (\text{respectively}\ \epsilon_a=-1)$ if $a\in B_+(f)\ (\text{respectively}\ a\in B_-(f)).$ In the following, we consider the linear code over $\mathbb{F}_p$ defined by
	\begin{equation}
	\mathcal{C}_f=\{c(a,\ b)=(af(x)-b\cdot x)_{x\in\mathbb{F}_{p}^{n}\setminus\{0\}}:\ a\in\mathbb{F}_p,\ b\in \mathbb{F}_{p}^n\}.
	\end{equation}
	In the following theorem, we give the weights of codewords of $\mathcal{C}_f$ when $n+s$ is even.
	\begin{theorem}
		Let $n+s$ be an even integer with $0\le s \le n-2$, and $f(x):\mathbb{F}_{p}^n\longrightarrow\mathbb{F}_p$ be a non-weakly regular $s$-plateaued function satisfying $f(0)=0$. Then, $\mathcal{C}_f$ defined by (1) is a $[p^n-1,\ n+1]$ linear code, and the weights of codewords are given as follows.
		\[wt({c}(a,\ b))=
		\begin{cases}
		0,& \text{if}\ a=0,\ b=0;\\
		(p-1)p^{n-1},& \text{if}\  a=0,\ b\ne0\ \text{or}\ a\ne0,\\& a^{-1}b\notin\mathrm{Supp}(\widehat{\chi_f});\\
		(p-1)(p^{n-1}-p^{\frac{n+s}{2}-1}),& \text{if}\ a\ne0,\ a^{-1}b\in B_+(f),\ f^*(a^{-1}b)=0;\\
		(p-1)(p^{n-1}+p^{\frac{n+s}{2}-1}),& \text{if}\ a\ne0,\ a^{-1}b\in B_-(f),\ f^*(a^{-1}b)=0;\\
		(p-1)p^{n-1}+p^{\frac{n+s}{2}-1},& \text{if}\ a\ne0,\ a^{-1}b\in B_+(f),\ f^*(a^{-1}b)\ne0;\\
		(p-1)p^{n-1}-p^{\frac{n+s}{2}-1},& \text{if}\ a\ne0,\ a^{-1}b\in B_-(f),\ f^*(a^{-1}b)\ne0.
		\end{cases}\]
	\end{theorem}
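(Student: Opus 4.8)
The plan is to reduce every weight to a count of zeros of the linear form $x\mapsto af(x)-b\cdot x$ and evaluate that count by additive characters, and then to read off the dimension from the kernel of the evaluation map.

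\textbf{Step 1 (weight as a point count).} Since $f(0)=0$, the coordinate $x=0$ always contributes a zero to $c(a,b)$, so $\mathrm{wt}(c(a,b))=p^{n}-Z(a,b)$ where $Z(a,b)=\#\{x\in\mathbb{F}_p^n:\ af(x)-b\cdot x=0\}$. By the orthogonality relation $\frac1p\sum_{y\in\mathbb{F}_p}\xi_p^{yt}$, which is $1$ when $t=0$ and $0$ otherwise, one writes $Z(a,b)=p^{n-1}+\frac1p\sum_{y\in\mathbb{F}_p^\times}\sum_{x\in\mathbb{F}_p^n}\xi_p^{y(af(x)-b\cdot x)}$. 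When $a=0$, the inner sum over $x$ equals $p^n$ if $b=0$ and $0$ otherwise by Lemma~1(2), giving $Z=p^n$ (weight $0$) resp.\ $Z=p^{n-1}$ (weight $(p-1)p^{n-1}$); this settles the first two lines of the table (together with the $a\neq 0$, $a^{-1}b\notin\mathrm{Supp}(\widehat{\chi_f})$ case handled below).

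\textbf{Step 2 (the case $a\neq 0$).} Here the point is that $y(af(x)-b\cdot x)=ya\bigl(f(x)-(a^{-1}b)\cdot x\bigr)$ with $ya\in\mathbb{F}_p^\times$, so $\sum_{x}\xi_p^{y(af(x)-b\cdot x)}=\sigma_{ya}\bigl(\widehat{\chi_f}(a^{-1}b)\bigr)$ for the Galois automorphism $\sigma_{ya}$ with $\sigma_{ya}(\xi_p)=\xi_p^{ya}$. If $a^{-1}b\notin\mathrm{Supp}(\widehat{\chi_f})$ this is $0$ and $Z=p^{n-1}$. If $a^{-1}b\in\mathrm{Supp}(\widehat{\chi_f})$, then since $n+s$ is even we have $p^{\,n+s}\equiv 1\pmod 4$, hence $\widehat{\chi_f}(a^{-1}b)=\epsilon_{a^{-1}b}\,p^{\frac{n+s}{2}}\xi_p^{f^*(a^{-1}b)}$ with $\epsilon_{a^{-1}b}\in\{\pm1\}$; as $\epsilon_{a^{-1}b}$ is rational it is fixed by $\sigma_{ya}$, so the inner sum is $\epsilon_{a^{-1}b}p^{\frac{n+s}{2}}\xi_p^{ya f^*(a^{-1}b)}$ and $Z(a,b)=p^{n-1}+\tfrac1p\epsilon_{a^{-1}b}p^{\frac{n+s}{2}}\sum_{y\in\mathbb{F}_p^\times}\xi_p^{ya f^*(a^{-1}b)}$. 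Now split on $f^*(a^{-1}b)$: if it is $0$ the character sum is $p-1$; if it is nonzero then $af^*(a^{-1}b)\in\mathbb{F}_p^\times$ and Lemma~1(2) gives $-1$. Inserting $\epsilon_{a^{-1}b}=+1$ for $a^{-1}b\in B_+(f)$ and $\epsilon_{a^{-1}b}=-1$ for $a^{-1}b\in B_-(f)$, and using $\mathrm{wt}=p^n-Z$, yields exactly the four remaining lines.

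\textbf{Step 3 (parameters) and the main obstacle.} The length is $p^n-1$ by construction, and $(a,b)\mapsto c(a,b)$ is $\mathbb{F}_p$-linear from $\mathbb{F}_p^{n+1}$, so it remains to check injectivity. If $c(a,b)=0$ then $af(x)=b\cdot x$ for all $x\neq 0$; for $a=0$ this forces $b=0$, while for $a\neq 0$ it forces $f(x)=(a^{-1}b)\cdot x$ for all $x$ (the case $x=0$ using $f(0)=0$), hence $\widehat{\chi_f}(a^{-1}b)=p^n$, impossible for an $s$-plateaued function with $s\le n-2$ since then $|\widehat{\chi_f}(\alpha)|\le p^{\frac{n+s}{2}}\le p^{n-1}<p^n$. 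Thus $\dim\mathcal{C}_f=n+1$. The only genuinely delicate point in the whole argument is the non-weakly-regular bookkeeping in Step~2: unlike the weakly regular case one cannot factor out a single global sign, so one must observe that each $\epsilon_{a^{-1}b}\in\{\pm1\}$ is rational and therefore untouched by $\sigma_{ya}$, which lets the standard character-sum evaluation proceed argument by argument; everything else is routine.
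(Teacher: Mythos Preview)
Your proof is correct and follows essentially the same route as the paper's: reduce the weight to the zero count $Z(a,b)$, express it via additive characters, recognize the inner sum as a Galois conjugate of $\widehat{\chi_f}(a^{-1}b)$, and split according to whether $a^{-1}b$ lies in the Walsh support and whether $f^*(a^{-1}b)$ vanishes. Your explicit remark that $\epsilon_{a^{-1}b}\in\{\pm1\}$ is rational (hence Galois-fixed) because $n+s$ is even is exactly the point the paper uses silently, and your dimension argument via $\widehat{\chi_f}(a^{-1}b)=p^n$ contradicting $s\le n-2$ is a clean variant of the paper's observation that none of the listed nonzero weights can vanish under that constraint.
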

	\begin{proof}
		Since $f(0)=0$, then $af(0)-b \cdot 0=0$, so we have $wt({c}(a,\ b))=p^n-\#\{x\in\mathbb{F}_{p}^n:\ af(x)-b\cdot x=0\}$. Let $N=\#\{x\in\mathbb{F}_{p}^n:\ af(x)-b\cdot x=0\}$, then
		\begin{align*}
		N&=p^{-1}\sum\limits_{x\in\mathbb{F}_{p}^n}\sum\limits_{y\in\mathbb{F}_p}\xi_p^{y(af(x)-b \cdot x)}\\
		&=p^{-1}\sum\limits_{y\in\mathbb{F}_p^{\times}}\sigma_y(\sum\limits_{x\in\mathbb{F}_{p}^n}\xi_p^{af(x)-b \cdot x})+p^{n-1}.
		\end{align*}
		Next, we discuss in two cases.
		\begin{itemize}
			\item[$\bullet$] $a=0$\\
			If $b=0$, then $N=p^n$, so $wt({c}(a,\ b))=0.$
			
			If $b\ne0$, then $N=p^{n-1}$, so $wt({c}(a,\ b))=(p-1)p^{n-1}.$
			\item[$\bullet$] $a\ne0$\\
			By the definition of Walsh transform, we can get
			\begin{align*}
			N&=p^{-1}\sum\limits_{y\in\mathbb{F}_p^{\times}}\sigma_y\sigma_a(\sum\limits_{x\in\mathbb{F}_{p}^n}\xi_p^{f(x)-a^{-1}b\cdot x})+p^{n-1}\\
			&=p^{-1}\sum\limits_{y\in\mathbb{F}_p^{\times}}\sigma_y\sigma_a(\widehat{\chi_f}(a^{-1}b))+p^{n-1}.
			\end{align*}
			When $a^{-1}b\notin\mathrm{Supp}(\widehat{\chi_f})$, we can get $N=p^{n-1}$, then $wt({c}(a,\ b))=(p-1)p^{n-1}.$ When $a^{-1}b\in\mathrm{Supp}(\widehat{\chi_f})$, we can obtain 
			\begin{align*}
			N&=p^{-1}\sum\limits_{y\in\mathbb{F}_p^{\times}}\sigma_y\sigma_a(\epsilon_{a^{-1}b}p^{\frac{n+s}{2}}\xi_p^{f^*(a^{-1}b)})+p^{n-1}\\
			&=p^{-1}\sum\limits_{y\in\mathbb{F}_p^{\times}}\epsilon_{a^{-1}b}p^{\frac{n+s}{2}}\xi_p^{yaf^*(a^{-1}b)}+p^{n-1}.
			\end{align*}
			If $f^*(a^{-1}b)=0$, then $N=\epsilon_{a^{-1}b}(p-1)p^{\frac{n+s}{2}-1}+p^{n-1}$,  so $wt({c}(a,\ b))=(p-1)(p^{n-1}-\epsilon_{a^{-1}b}p^{\frac{n+s}{2}-1}).$
			If $f^*(a^{-1}b)\ne0$, by Lemma 1, we have $N=p^{n-1}-\epsilon_{a^{-1}b}p^{\frac{n+s}{2}-1}$,  so $wt({c}(a,\ b))=(p-1)p^{n-1}+\epsilon_{a^{-1}b}p^{\frac{n+s}{2}-1}.$
		\end{itemize}
		
		From the above discussion, we can get the weights of codewords of $\mathcal{C}_f$. Note that $wt({c}(a,\ b))=0$ if and only $a=0$ and $b=0$, thus the number of codewords of $\mathcal{C}_f$ is $p^{n+1}$, and the dimension  is $n+1$.\qed
	\end{proof}
	\begin{remark}
		By Lemma 7, we have that the linear code $\mathcal{C}_f$ in Theorem 1 is minimal for $0\le s\le n-4$.
	\end{remark}
	\begin{table}
		\caption{The weight distribution of $\mathcal{C}_f$ in Proposition 1 when $n+s$ is even}
		\begin{tabular}{|l|l|l|}
			\hline
			Weight  & Multiplicity  ($0\in B_+(f^*)$)&Multiplicity ($0\in B_-(f^*)$)\\ \hline
			0 & 1&1 \\ \hline
			$(p-1)p^{n-1}$&$p^{n+1}-(p-1)p^{n-s}-1$&$p^{n+1}-(p-1)p^{n-s}-1$\\ \hline
			$(p-1)(p^{n-1}-p^{\frac{n+s}{2}-1})$&$(p-1)(\frac{k}{p}+(p-1)p^{\frac{n-s}{2}-1})$&$(p-1)\frac{k}{p}$\\ \hline
			$(p-1)(p^{n-1}+p^{\frac{n+s}{2}-1})$&$(p-1)(p^{n-s-1}-\frac{k}{p})$&$(p-1)(p^{n-s-1}-\frac{k}{p}$\\ 
			&&$-(p-1)p^{\frac{n-s}{2}-1})$\\\hline
			$(p-1)p^{n-1}+p^{\frac{n+s}{2}-1}$&$(p-1)^2(\frac{k}{p}-p^{\frac{n-s}{2}-1})$&$(p-1)^2\ \frac{k}{p}$\\ \hline
			$(p-1)p^{n-1}-p^{\frac{n+s}{2}-1}$&$(p-1)^2(p^{n-s-1}-\frac{k}{p})$&$(p-1)^2(p^{n-s-1}-\frac{k}{p}$\\
			&&$
			
			+p^{\frac{n-s}{2}-1})$\\\hline
			\end{tabular}
			\end{table}
		
	For a subclass of non-weakly regular $s$-plateaued functions, we determine the weight distributions of linear codes given by Theorem 1 in the following proposition.
	\begin{proposition}
		Let $n+s$ be an even integer with $0\le s\le n-2$, $f(x):\mathbb{F}_{p}^n\longrightarrow\mathbb{F}_p$ be a non-weakly regular $s$-plateaued function satisfying $f(0)=0$, $\#B_+(f)=k\ (k\ne 0\ \text{and}\ k\ne p^{n-s})$, and the dual $f^*(x)$ of $f(x)$ be bent relative to $\mathrm{Supp}(\widehat{\chi_f})$. Then  the weight distribution of $\mathcal{C}_f$ is given by Table 1. The dual code $\mathcal{C}_f^{\bot}$ of $\mathcal{C}_f$ is a $[p^n-1,\ p^n-n-2]$ linear code, and the minimum distance of $\mathcal{C}_f^{\bot}$, denoted by $d^{\bot}$, is $3$ if $s=0$, $n=2$ and $0\in B_-(f^*)$, otherwise $d^{\bot}=2$.	
	\end{proposition}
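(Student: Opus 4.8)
The plan is to read the weight distribution straight off Theorem~1 combined with Lemma~4, and then to extract the parameters of $\mathcal{C}_f^{\bot}$ from the Pless power moments.

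\emph{Weight distribution.} Theorem~1 already gives the six possible weights of $c(a,b)$ together with the precise condition on $(a,b)$ producing each one, so the only remaining work is a counting argument: for each weight, how many pairs $(a,b)$ fall into the corresponding class. Since $f(0)=0$ we may apply Lemma~4 with $j_0=0$. For fixed $a\in\mathbb{F}_p^{\times}$ the map $b\mapsto\alpha:=a^{-1}b$ is a bijection of $\mathbb{F}_p^n$, so the number of $b$ with $a^{-1}b\in B_+(f)$ and $f^*(a^{-1}b)=0$ is $c_0(f^*)$; with $a^{-1}b\in B_-(f)$ and $f^*(a^{-1}b)=0$ is $d_0(f^*)$; with $a^{-1}b\in B_+(f)$ and $f^*(a^{-1}b)\ne0$ is $\sum_{j\ne0}c_j(f^*)$; and with $a^{-1}b\in B_-(f)$ and $f^*(a^{-1}b)\ne0$ is $\sum_{j\ne0}d_j(f^*)$. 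Multiplying each of these by $p-1$ (the number of admissible $a$) and inserting the values of $c_j(f^*),d_j(f^*)$ from the ``$n+s$ even'' part of Lemma~4 — split according to whether $0\in B_+(f^*)$ or $0\in B_-(f^*)$ — produces the last four rows of Table~1. The row of weight $(p-1)p^{n-1}$ additionally absorbs the $p^n-1$ words with $a=0,\ b\ne0$ and the $(p-1)(p^n-p^{n-s})$ words with $a\ne0$ and $a^{-1}b\notin\mathrm{Supp}(\widehat{\chi_f})$, using $\#\mathrm{Supp}(\widehat{\chi_f})=p^{n-s}$. I would then check the bookkeeping by confirming that the multiplicities add up to $p^{n+1}$.

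\emph{Dual code parameters.} By Theorem~1, $\mathcal{C}_f$ has length $p^n-1$ and dimension $n+1$, so $\mathcal{C}_f^{\bot}$ has length $p^n-1$ and dimension $p^n-n-2$. In a generator matrix of $\mathcal{C}_f$ the column indexed by $x$ is $(f(x),-x_1,\dots,-x_n)$, which is nonzero since $x\ne0$; hence no coordinate is identically zero, $A_1^{\bot}=0$, and $d^{\bot}\ge2$. To determine $d^{\bot}$ I would feed the weight distribution of Table~1 into the first four Pless power moments: the first reproduces $\sum_jA_j=p^{n+1}$, the second confirms $A_1^{\bot}=0$, and the third and fourth then give $A_2^{\bot}$ and $A_3^{\bot}$ as explicit expressions in $p,n,s,k$ and the type of $f^*$. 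As a cross-check, $d^{\bot}=2$ is equivalent to two columns being proportional, i.e. to $f(\lambda x)=\lambda f(x)$ for some $x\ne0$ and $\lambda\in\mathbb{F}_p\setminus\{0,1\}$, and $(p-1)^{-1}A_2^{\bot}$ should count such pairs.

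\emph{The exceptional case, and the main obstacle.} What remains is to decide exactly when $A_2^{\bot}$ vanishes. Using $p\mid k$ and $0<k<p^{n-s}$ (from Lemma~4), together with the observation that the powers $p^{n-1}$, $p^{(n+s)/2-1}$, $p^{(n-s)/2-1}$ governing the weights become small enough to allow cancellation only when $n$ and $s$ are small, a finite case analysis should show $A_2^{\bot}>0$ in every configuration except $s=0$, $n=2$, $0\in B_-(f^*)$, where $A_2^{\bot}=0$. In that remaining case $d^{\bot}\ge3$; computing $A_3^{\bot}$ from the fourth Pless moment (or producing an explicit weight-$3$ codeword of $\mathcal{C}_f^{\bot}$) shows $A_3^{\bot}\ne0$, so $d^{\bot}=3$. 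The delicate step — the main obstacle — is precisely this dichotomy: verifying that the nonnegative integer $A_2^{\bot}$, which a priori may depend on $k$ and on the type of $f^*$, is strictly positive in all but one configuration of $(n,s)$ and type, and that $A_3^{\bot}$ is nonzero there; everything else is routine substitution into the moment identities.
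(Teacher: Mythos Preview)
Your proposal is correct and follows essentially the same route as the paper: Theorem~1 together with the $n+s$ even case of Lemma~4 for the weight distribution, then the first four Pless power moments to pin down $A_1^{\bot}$, $A_2^{\bot}$, $A_3^{\bot}$ and hence $d^{\bot}$. One pleasant simplification you will discover when you actually carry out the moment computation is that $A_2^{\bot}$ turns out \emph{not} to depend on $k$ at all (the paper obtains closed forms $A_2^{\bot}=\frac{(p^2-3p+2)(p^n\pm(p-1)p^{(n+s)/2}-p)}{2p}$ according to the type of $f^*$), so the ``finite case analysis'' you anticipate collapses to checking the sign of a single explicit expression in $p,n,s$; your direct column argument for $A_1^{\bot}=0$ and the proportional-columns interpretation of $A_2^{\bot}$ are nice sanity checks the paper does not include.
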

	\begin{proof}
		By Lemma 4, we can calculate the weight distribution of $\mathcal{C}_f$ in the following six cases.
		\begin{itemize}
			\item[$\bullet$] Obviously $A_0=1$.
			\item[$\bullet$]$wt({c}(a,\ b))=(p-1)p^{n-1}$, that is, $a=0,\ b\ne0\ \text{or}\ a\ne0,\ a^{-1}b\notin\mathrm{Supp}(\widehat{\chi_f})$. \\The number of codewords satisfying $a=0,\ b\ne0$ is $p^n-1$. Since $\#\mathrm{Supp}(\widehat{\chi_f})\\=p^{n-s}$,  and for any fixed $a\in\mathbb{F}_p^{\times}$, $\#\{b:a^{-1}b\notin\mathrm{Supp}(\widehat{\chi_f})\}=\#\{b:b\notin\mathrm{Supp}(\widehat{\chi_f})\}=p^n-p^{n-s}$, then we have that the number of codewords satisfying $a\ne0,\ a^{-1}b\notin\mathrm{Supp}(\widehat{\chi_f})$ is $(p-1)(p^n-p^{n-s})$. Therefore, we obtain  $A_{(p-1)p^{n-1}}=p^{n+1}-(p-1)p^{n-s}-1$.
			\item[$\bullet$]$wt({c}(a,\ b))=(p-1)(p^{n-1}-p^{\frac{n+s}{2}-1})$, that is, $a\ne0,\ a^{-1}b\in B_+(f),\ \text{and}\\  f^*(a^{-1}b)=0.$ \\For any fixed $a\in\mathbb{F}_p^{\times}$, $\#\{b:a^{-1}b\in B_+(f),f^*(a^{-1}b)=0\}=\#\{b:b\in B_+(f),f^*(b)=0\}$, so by Lemma 4, we have : if $0\in B_+(f^*)$, then $A_{(p-1)(p^{n-1}-p^{\frac{n+s}{2}-1})}=(p-1)(\frac{k}{p}+(p-1)p^{\frac{n-s}{2}-1})$; if $0\in B_-(f^*)$, then $A_{(p-1)(p^{n-1}-p^{\frac{n+s}{2}-1})}=(p-1)\frac{k}{p}.$
			\item[$\bullet$]$wt({c}(a,\ b))=(p-1)(p^{n-1}+p^{\frac{n+s}{2}-1})$, that is, $a\ne0,\ a^{-1}b\in B_-(f),\ \text{and}\\  f^*(a^{-1}b)=0.$ \\For any fixed $a\in\mathbb{F}_p^{\times}$, $\#\{b:a^{-1}b\in B_-(f),f^*(a^{-1}b)=0\}=\#\{b:b\in B_-(f),f^*(b)=0\}$, so by Lemma 4, we have : if $0\in B_+(f^*)$,  then $A_{(p-1)(p^{n-1}+p^{\frac{n+s}{2}-1})}=(p-1)(p^{n-s-1}-\frac{k}{p})$; if $0\in B_-(f^*)$, then\\ $A_{(p-1)(p^{n-1}+p^{\frac{n+s}{2}-1})}=(p-1)(p^{n-s-1}-\frac{k}{p}-(p-1)p^{\frac{n-s}{2}-1})$.
			\item[$\bullet$]$wt({c}(a,\ b))=(p-1)p^{n-1}+p^{\frac{n+s}{2}-1}$, that is, $a\ne0,\ a^{-1}b\in B_+(f),\ \text{and}\\  f^*(a^{-1}b)\ne0$. \\For any fixed $a\in\mathbb{F}_p^{\times}$, $\#\{b:a^{-1}b\in B_+(f),f^*(a^{-1}b)\ne0\}=\#\{b:b\in B_+(f),f^*(b)\ne0\}$, so by Lemma 4, we have : if $0\in B_+(f^*)$,  then $A_{(p-1)p^{n-1}+p^{\frac{n+s}{2}-1}}=(p-1)^2(\frac{k}{p}-p^{\frac{n-s}{2}-1})$; if $0\in B_-(f^*)$, then\\ $A_{(p-1)p^{n-1}+p^{\frac{n+s}{2}-1}}=(p-1)^2\ \frac{k}{p}.$
			\item[$\bullet$]$wt({c}(a,\ b))=(p-1)p^{n-1}-p^{\frac{n+s}{2}-1}$, that is, $a\ne0,\ a^{-1}b\in B_-(f),\ \text{and}\\  f^*(a^{-1}b)\ne0$. \\For any fixed $a\in\mathbb{F}_p^{\times}$, $\#\{b:a^{-1}b\in B_-(f), f^*(a^{-1}b)\ne0\}=\#\{b:b\in B_-(f), f^*(b)\ne0\}$, so by Lemma 4, we have : if $0\in B_+(f^*)$,  then $A_{(p-1)p^{n-1}-p^{\frac{n+s}{2}-1}}=(p-1)^2(p^{n-s-1}-\frac{k}{p})$; if $0\in B_-(f^*)$, then\\ $A_{(p-1)p^{n-1}-p^{\frac{n+s}{2}-1}}=(p-1)^2(p^{n-s-1}-\frac{k}{p}+p^{\frac{n-s}{2}-1}).$
		\end{itemize}
		
		By the weight distribution of $\mathcal{C}_f$ and the first four Pless power moments, we have that when $0\in B_+(f^*)$, $A_1^{\bot}=0$ and $A_2^{\bot}=\frac{(p^2-3p+2)(p^n+p^{\frac{n+s}{2}+1}-p^{\frac{n+s}{2}}-p)}{2p}\\>0$, so the minimum distance of $\mathcal{C}_f^{\bot}$ is 2. When $0\in B_-(f^*)$, if $s=0$ and $n=2$, then $A_1^{\bot}=A_2^{\bot}=0$ and $A_3^{\bot}=\frac{(p-1)^2(p-2)(p^3+p^2-3p+2k-3)}{6}>0$, so the minimum distance of $\mathcal{C}_f^{\bot}$ is 3; if $n>2$, then $A_1^{\bot}=0$ and $A_2^{\bot}=\frac{(p^2-3p+2)(p^n+p^{\frac{n+s}{2}}-p-p^{\frac{n+s}{2}+1})}{2p}>0$, so the minimum distance of $\mathcal{C}_f^{\bot}$ is 2.
		Obviously, the dimension of $\mathcal{C}_f^{\bot}$ is $ p^n-n-2$. The proof is now completed.\qed
	\end{proof}
	
			\begin{remark}
			When $s=0,\ n=2$ and $0\in B_-(f^*)$, $\mathcal{C}_f^{\bot}$ is a linear code with parameters $[p^2-1,\ p^2-4,\ 3]$, which is an AMDS code. Specially, according to the Code Table at http://www.codetables.de/, we have that when $p=3,5,7$, the linear codes $\mathcal{C}_f^{\bot}$ with parameters $[8,\ 5,\ 3]$, $[24,\ 21,\ 3]$ and $[48,\ 45,\ 3]$ are optimal, respectively.
			\end{remark}
			
			Next, we verify Proposition 1 by Magma program for the following non-weakly regular plateaued functions.
			\begin{example}
				Consider $f(x): \mathbb{F}_{3}^5\longrightarrow\mathbb{F}_3,\  f(x_1,\ x_2,\ x_3,\ x_4,\ x_5)=2x_1^2x_4^2+2x_1^2+x_2^2+x_3x_4$,  which is a non-weakly regular $1$-plateaued function with $ 0\in B_+(f^*)$, $k=27$, and the dual $f^*(x)$ of $f(x)$ is bent relative to $\mathrm{Supp}(\widehat{\chi_f})$, then $\mathcal{C}_f$ constructed by (1) is a five-weight linear code with parameters $[242,\ 6,\ 144]$, weight enumerator $1+30z^{144}+72z^{153}+566z^{162}+24z^{171}+36z^{180}.$ The dual code $\mathcal{C}_f^{\bot}$ is a $[242,\ 236,\ 2]$ linear code, which is almost optimal according to the Code Table at http://www.codetables.de/.
			\end{example}
			\begin{example}
				Consider $f(x): \mathbb{F}_{3}^5\longrightarrow\mathbb{F}_3,\ f(x_1,\ x_2,\ x_3,\ x_4,\ x_5)=x_1^2x_4^2+x_1^2+x_2^2+x_3x_4$, which is a non-weakly regular $1$-plateaued function with $0\in B_-(f^*)$, $k=54$, and the dual $f^*(x)$ of $f(x)$ is bent relative to $\mathrm{Supp}(\widehat{\chi_f})$, then $\mathcal{C}_f$ constructed by (1) is a five-weight linear code with parameters $[242,\ 6,\ 144]$, weight enumerator $1+36z^{144}+48z^{153}+566z^{162}+72z^{171}+6z^{180}.$ The dual code $\mathcal{C}_f^{\bot}$ is a $[242,\ 236,\ 2]$ linear code, which is almost optimal according to the Code Table at http://www.codetables.de/.
			\end{example}
			
			According to Lemma 8, Remark 2 and Proposition 1, we have the following result on secret sharing scheme.
			\begin{corollary}
				Let $n+s$ be an even integer with $0\le s\le n-4$ and $\mathcal{C}_f$  be the linear code in Proposition 1 with generator matrix $G=[{g}_0,\ {g}_1,\ \cdots, {g}_{m-1}]$, where $m=p^n-1$. Then, in the secret sharing scheme based on $\mathcal{C}_f^{\bot}$ with $d^{\bot}=2$, the number of participants is $m-1$ and there are $p^n$ minimal access sets. Moreover, if ${g}_j$ is a scalar multiple of ${g}_0$, $1\le j\le m-1$, then participant $P_j$ must be in every minimal access set, or else, 
				participant $P_j$ must be in $(p-1)p^{n-1}$ out of $p^{n}$  minimal access sets.
			\end{corollary}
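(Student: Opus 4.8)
The plan is to derive the statement directly from Lemma 8, using Remark 2 and Proposition 1 to verify its hypotheses; the argument is essentially immediate once those hypotheses are in place. First I would observe that the assumption $0\le s\le n-4$ forces $n\ge 4$, hence $n>2$, so the exceptional triple $(s,n)=(0,2)$ appearing in Proposition 1 does not occur. Consequently Proposition 1 applies in its generic case and tells us that $\mathcal{C}_f^{\bot}$ is a $[p^n-1,\ p^n-n-2]$ linear code with $d^{\bot}=2$. Moreover, Remark 2 applies precisely on the range $0\le s\le n-4$, so $\mathcal{C}_f$ is a minimal linear code, which is exactly what is required to invoke Massey's construction together with Lemma 8.

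Next I would read off the parameters of $\mathcal{C}_f$: by Theorem 1 (restated in Proposition 1), $\mathcal{C}_f$ is an $[m,k]$ linear code with $m=p^n-1$ and $k=n+1$. Applying Lemma 8 to the minimal code $\mathcal{C}_f$, the secret sharing scheme based on $\mathcal{C}_f^{\bot}$ has $m-1=p^n-2$ participants and $p^{k-1}=p^{n}$ minimal access sets.

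Finally, since $d^{\bot}=2$, I would invoke the first bullet of Lemma 8 verbatim: for $1\le j\le m-1$, if $g_j$ is a scalar multiple of $g_0$ then $P_j$ lies in every minimal access set, and if $g_j$ is not a scalar multiple of $g_0$ then $P_j$ lies in $(p-1)p^{k-2}$ of the $p^{k-1}$ minimal access sets. Substituting $k=n+1$ yields $(p-1)p^{n-1}$ out of $p^{n}$, which is the claimed count, and the proof is complete. The only point needing a moment's care — not really an obstacle — is the bookkeeping check that the chosen range of $s$ is simultaneously compatible with the minimality criterion of Remark 2 and with the non-exceptional value $d^{\bot}=2$ of Proposition 1; this holds because $n\ge s+4\ge 4>2$.
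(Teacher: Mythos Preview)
Your proposal is correct and follows exactly the approach the paper intends: the paper states the corollary immediately after citing ``According to Lemma 8, Remark 2 and Proposition 1'' without further proof, and your argument makes explicit precisely those three ingredients (minimality from Remark 2, $d^{\bot}=2$ from Proposition 1 since $n\ge 4$, and the access-structure count from Lemma 8 with $k=n+1$).
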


			When $n+s$ is odd, we give the weights of codewords of $\mathcal{C}_f$ in the following theorem.
			\begin{theorem}
				Let $n+s$ be an odd integer with $0\le s\le n-1$, and $f(x):\mathbb{F}_{p}^n\longrightarrow\mathbb{F}_p$ be a non-weakly $s$-plateaued function satisfying $f(0)=0$. Then, $\mathcal{C}_f$ defined by (1) is a $[p^n-1,\ n+1]$ linear code and the weights of codewords  are given as follows.
				\begin{itemize}
					\item[$\bullet$] When $p\equiv1\ (\mathrm{mod}\ 4)$, then
					\[wt({c}(a,\ b))=
					\begin{cases}
					0,& \text{if}\ a=0,\ b=0;\\
					& \text{if}\ a=0,\ b\ne0\ \text{or}\ a\ne0,\\(p-1)p^{n-1},& a^{-1}b\notin\mathrm{Supp}(\widehat{\chi_f})\ \text{or}\ a\ne0,\\& a^{-1}b\in\mathrm{Supp}(\widehat{\chi_f}),\ f^*(a^{-1}b)=0;\\
					&\text{if}\ a\ne0,\ a^{-1}b\in B_+(f),\\(p-1)p^{n-1}-p^{\frac{n+s-1}{2}},& f^*(a^{-1}b)\in SQ\ 
					\text{or}\ a\ne0,\\&a^{-1}b\in B_-(f),\ f^*(a^{-1}b)\in NSQ;\\
					&\text{if}\ a\ne0,\ a^{-1}b\in B_-(f),\\(p-1)p^{n-1}+p^{\frac{n+s-1}{2}},&\ f^*(a^{-1}b)\in SQ\ 
					\text{or}\ a\ne0,\\&a^{-1}b\in B_+(f),\ f^*(a^{-1}b)\in NSQ.
					\end{cases}
					\]
					\item[$\bullet$] When $p\equiv3\ (\mathrm{mod}\ 4)$, then 
					\[wt({c}(a,\ b))=
					\begin{cases}
					0,& \text{if}\ a=0,\ b=0;\\
					& \text{if}\ a=0,\ b\ne0\ \text{or}\ a\ne0,\\(p-1)p^{n-1},& a^{-1}b\notin\mathrm{Supp}(\widehat{\chi_f})\ \text{or}\ a\ne0,\\& a^{-1}b\in\mathrm{Supp}(\widehat{\chi_f}),\ f^*(a^{-1}b)=0;\\
					&\text{if}\ a\ne0,\ a^{-1}b\in B_+(f),\\(p-1)p^{n-1}-p^{\frac{n+s-1}{2}},&\ f^*(a^{-1}b)\in NSQ\ \text{or}\ a\ne0,\\&\ a^{-1}b\in B_-(f),\ f^*(a^{-1}b)\in SQ;\\
					&\text{if}\ a\ne0,\ a^{-1}b\in B_-(f),\\(p-1)p^{n-1}+p^{\frac{n+s-1}{2}},&\ f^*(a^{-1}b)\in NSQ\ \text{or}\ a\ne0,\\&a^{-1}b\in B_+(f),\ f^*(a^{-1}b)\in SQ.
					\end{cases}
					\]
				\end{itemize}
			\end{theorem}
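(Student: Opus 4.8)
The plan is to reuse the computation in the proof of Theorem 1 almost verbatim, feeding it the odd–parity exponential sums instead of the even ones. As there, since $f(0)=0$ we have $wt(c(a,b))=p^{n}-N$ with $N=\#\{x\in\mathbb{F}_{p}^{n}:\ af(x)-b\cdot x=0\}$, and expanding the indicator of $af(x)-b\cdot x=0$ through additive characters gives
\[
N=p^{n-1}+p^{-1}\sum_{y\in\mathbb{F}_{p}^{\times}}\sigma_y\Big(\sum_{x\in\mathbb{F}_{p}^{n}}\xi_p^{af(x)-b\cdot x}\Big).
\]
First I would dispatch $a=0$ (so $N=p^{n}$ if $b=0$ and $N=p^{n-1}$ if $b\ne 0$), and for $a\ne 0$ rewrite the inner sum as $\sigma_a\big(\widehat{\chi_f}(a^{-1}b)\big)$, which vanishes when $a^{-1}b\notin\mathrm{Supp}(\widehat{\chi_f})$, giving $N=p^{n-1}$. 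The one genuinely new point is that for $n+s$ odd one has $p^{n+s}\equiv p\ (\mathrm{mod}\ 4)$, so on $\mathrm{Supp}(\widehat{\chi_f})$ the plateaued Walsh value can be written $\widehat{\chi_f}(a^{-1}b)=\epsilon_{a^{-1}b}\,p^{\frac{n+s-1}{2}}\sqrt{p^{*}}\,\xi_p^{f^{*}(a^{-1}b)}$, i.e. the half–integer power of $p$ is absorbed into $\sqrt{p^{*}}$; this is exactly where the parity of $n+s$ enters, and it routes the computation through the quadratic Gauss sum rather than the geometric sum used in the even case.

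Next I would push $\sigma_y\sigma_a$ through this expression via $\sigma_{c}(\sqrt{p^{*}})=\eta(c)\sqrt{p^{*}}$ and $\sigma_{c}(\xi_p^{t})=\xi_p^{ct}$, obtaining
\[
N=p^{n-1}+\epsilon_{a^{-1}b}\,p^{\frac{n+s-1}{2}-1}\,\eta(a)\sqrt{p^{*}}\sum_{y\in\mathbb{F}_{p}^{\times}}\eta(y)\,\xi_p^{y\,a f^{*}(a^{-1}b)}.
\]
When $f^{*}(a^{-1}b)=0$ the inner sum is $\sum_{y\in\mathbb{F}_{p}^{\times}}\eta(y)=0$, so $N=p^{n-1}$ and $wt(c(a,b))=(p-1)p^{n-1}$. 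When $f^{*}(a^{-1}b)\ne 0$, the substitution $y\mapsto y\big(a f^{*}(a^{-1}b)\big)^{-1}$ together with $\sum_{y\in\mathbb{F}_{p}^{\times}}\eta(y)\xi_p^{y}=\sqrt{p^{*}}$ gives inner sum $=\eta(a)\,\eta\big(f^{*}(a^{-1}b)\big)\,\sqrt{p^{*}}$; plugging this back and using $\eta(a)^{2}=1$ and $(\sqrt{p^{*}})^{2}=p^{*}=(-1)^{(p-1)/2}p$ yields
\[
N=p^{n-1}+(-1)^{(p-1)/2}\,\epsilon_{a^{-1}b}\,\eta\big(f^{*}(a^{-1}b)\big)\,p^{\frac{n+s-1}{2}},
\]
hence $wt(c(a,b))=(p-1)p^{n-1}-(-1)^{(p-1)/2}\,\epsilon_{a^{-1}b}\,\eta\big(f^{*}(a^{-1}b)\big)\,p^{\frac{n+s-1}{2}}$. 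Setting $(-1)^{(p-1)/2}=1$ for $p\equiv 1\ (\mathrm{mod}\ 4)$ and $=-1$ for $p\equiv 3\ (\mathrm{mod}\ 4)$, and running through the four sign combinations of $\epsilon_{a^{-1}b}\in\{\pm 1\}$ (according to $a^{-1}b\in B_{+}(f)$ or $B_{-}(f)$) and $\eta\big(f^{*}(a^{-1}b)\big)\in\{\pm 1\}$ (according to $f^{*}(a^{-1}b)\in SQ$ or $NSQ$) reproduces exactly the two displayed lists of weight values. Finally, $wt(c(a,b))=0$ holds if and only if $a=0$ and $b=0$, so the number of codewords of $\mathcal{C}_f$ is $p^{n+1}$ and its dimension is $n+1$, while the length is $p^{n}-1$.

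I do not expect a real obstacle: the argument is a direct transcription of the proof of Theorem 1 with the Gauss sum $\sum_{y}\eta(y)\xi_p^{y}=\sqrt{p^{*}}$ replacing the sum $\sum_{y}\xi_p^{y}=-1$. The only place that demands care is the bookkeeping of the three independent sign factors $(-1)^{(p-1)/2}$ (from $(\sqrt{p^{*}})^{2}$), $\epsilon_{a^{-1}b}$ (from the Walsh support), and $\eta\big(f^{*}(a^{-1}b)\big)$ (from the Gauss sum), together with the verification that their product sends the nonzero codewords into precisely the two values $(p-1)p^{n-1}\pm p^{\frac{n+s-1}{2}}$; the split of the statement into the $p\equiv 1$ and $p\equiv 3\ (\mathrm{mod}\ 4)$ cases is nothing more than the effect of the sign $(-1)^{(p-1)/2}$.
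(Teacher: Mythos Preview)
Your proposal is correct and follows essentially the same approach as the paper's proof: both expand $N$ via additive characters, rewrite the inner sum as $\sigma_y\sigma_a(\widehat{\chi_f}(a^{-1}b))$, substitute $\widehat{\chi_f}(a^{-1}b)=\epsilon_{a^{-1}b}\sqrt{p^{*}}\,p^{\frac{n+s-1}{2}}\xi_p^{f^{*}(a^{-1}b)}$ on the support, and evaluate the resulting quadratic Gauss sum using Lemma~1. The only cosmetic difference is that the paper keeps the answer in terms of $p^{*}p^{\frac{n+s-1}{2}-1}$ whereas you expand $p^{*}=(-1)^{(p-1)/2}p$ explicitly before splitting into the two residue cases.
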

			\begin{proof}
				Since $f(0)=0$, then $af(0)-b\cdot 0=0$, so we have $wt({c}(a,\ b))=p^n-\#\{x\in\mathbb{F}_{p}^n:\ af(x)-b\cdot x=0\}$. Let $N=\#\{x\in\mathbb{F}_{p}^n:\ af(x)-b\cdot x=0\}$, then 
				\begin{align*}
				N&=p^{-1}\sum\limits_{x\in\mathbb{F}_{p}^n}\sum\limits_{y\in\mathbb{F}_p}\xi_p^{y(af(x)-b\cdot x)}\\
				&=p^{-1}\sum\limits_{y\in\mathbb{F}_p^{\times}}\sum\limits_{x\in\mathbb{F}_{p}^n}\xi_p^{y(af(x)-b\cdot x)}+p^{n-1}\\
				&=p^{-1}\sum\limits_{y\in\mathbb{F}_p^{\times}}\sigma_y(\sum\limits_{x\in\mathbb{F}_{p}^n}\xi_p^{af(x)-b\cdot x})+p^{n-1}.
				\end{align*}
				Next, we discuss in two cases.
				\begin{itemize}
					\item[$\bullet$] $a=0$\\
					If $b=0$, then $N=p^n$, so $wt({c}(a,\ b))=0.$\\
					If $b\ne0$, then $N=p^{n-1}$,  so $wt({c}(a,\ b))=(p-1)p^{n-1}.$
					\item[$\bullet$] $a\ne0$\\
					By the definition of Walsh transform, we can get
					\begin{align*}
					N&=p^{-1}\sum\limits_{y\in\mathbb{F}_p^{\times}}\sigma_y\sigma_a(\sum\limits_{x\in\mathbb{F}_{p}^n}\xi_p^{f(x)-a^{-1}b\cdot x})+p^{n-1}\\
					&=p^{-1}\sum\limits_{y\in\mathbb{F}_p^{\times}}\sigma_y\sigma_a(\widehat{\chi_f}(a^{-1}b))+p^{n-1}.
					\end{align*}
					When $a^{-1}b\notin\mathrm{Supp}(\widehat{\chi_f})$, we can get $N=p^{n-1}$,  then $wt({c}(a,\ b))=\\(p-1)p^{n-1}.$ When $a^{-1}b\in\mathrm{Supp}(\widehat{\chi_f})$, we can obtain 
					\begin{align*}
					N&=p^{-1}\sum\limits_{y\in\mathbb{F}_p^{\times}}\sigma_y\sigma_a(\epsilon_{a^{-1}b}\sqrt{p^*}p^{\frac{n+s-1}{2}}\xi_p^{f^*(a^{-1}b)})+p^{n-1}\\
					&=p^{-1}\sum\limits_{y\in\mathbb{F}_p^{\times}}\epsilon_{a^{-1}b}\eta(ya)\sqrt{p^*}p^{\frac{n+s-1}{2}}\xi_p^{yaf^*(a^{-1}b)}+p^{n-1}.
					\end{align*}
					If $f^*(a^{-1}b)=0$, then by Lemma 1, we have $N=p^{n-1}$, so $wt({c}(a,\ b))$\\$=(p-1)p^{n-1}.$
					If $f^*(a^{-1}b)\ne0$, then by Lemma 1, we have $N=\epsilon_{a^{-1}b}$\\$\eta(f^*(a^{-1}b))p^*p^{\frac{n+s-1}{2}-1}+p^{n-1}$, so $wt({c}(a,\ b))=(p-1)p^{n-1}-\epsilon_{a^{-1}b}$\\$\eta(f^*(a^{-1}b))p^*p^{\frac{n+s-1}{2}-1}.$
				\end{itemize}
				
				From the above discussion, we can get the weights of codewords of $\mathcal{C}_f$. Note that $wt({c}(a,\ b))=0$ if and only if $a=0$ and $b=0$, thus the number of codewords of $\mathcal{C}_f$ is $p^{n+1}$ and the dimension is $n+1$. \qed
			\end{proof}
			
			\begin{remark} By Lemma 7, we have that the linear code $\mathcal{C}_f$ in Theorem 2 is minimal for $0\le s\le n-3$.
			\end{remark}
			\begin{table}
				
				\caption{The weight distribution of $\mathcal{C}_f$ in Proposition 2 when $n+s$ is odd}
				\begin{tabular}{|l|l|}
					\hline
					Weight & Multiplicity\\ \hline
					0&1\\ \hline
					$(p-1)p^{n-1}$&$p^{n+1}-(p-1)^2p^{n-s-1}-1$\\ \hline
					$(p-1)p^{n-1}-p^{\frac{n+s-1}{2}}$&$\frac{(p-1)^2}{2}(p^{\frac{n-s-1}{2}}+p^{n-s-1})$\\ \hline
					$(p-1)p^{n-1}+p^{\frac{n+s-1}{2}}$&$\frac{(p-1)^2}{2}(p^{n-s-1}-p^{\frac{n-s-1}{2}})$\\ \hline
				\end{tabular}
			\end{table}
		
			For a subclass of non-weakly regular $s$-plateaued functions, we determine the weight distributions of linear codes given by Theorem 2 in the following proposition.
			\begin{proposition}
				Let $n+s$ be an odd integer with $0\le s\le n-1$, $f(x):\mathbb{F}_{p}^n\longrightarrow\mathbb{F}_p$ be a non-weakly regular $s$-plateaued function satisfying $f(0)=0$, $\#B_+(f)=k\ (k\ne 0\ \text{and}\ k\ne p^{n-s})$, and the dual $f^*(x)$ of $f(x)$ be bent relative to $\mathrm{Supp}(\widehat{\chi_f})$. Then  the weight distribution of $\mathcal{C}_f$ is given by Table 2. The dual code $\mathcal{C}_f^{\bot}$ of $\mathcal{C}_f$ is a $[p^n-1,\ p^n-n-2]$ linear code, and the minimum distance of $\mathcal{C}_f^{\bot}$, denoted by $d^{\bot}$, is $3$ if $s=0$, $n=1$, and $p\ne 3$, or else $d^{\bot}=2$ if $n>1$.
			\end{proposition}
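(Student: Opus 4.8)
The plan is to mirror the proof of Proposition~1. Theorem~2 already lists the possible weights of $\mathcal{C}_f$; what remains is to count, for each weight, the number of pairs $(a,b)$ realizing it, using Lemma~3 and Lemma~4 with $j_0=f(0)=0$ (note that $n-s$ is odd, since $n+s$ is), and then to read off the parameters of the dual code from the Pless power moments.

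First I would handle the weight $(p-1)p^{n-1}$. By Theorem~2 it occurs for the $p^n-1$ codewords with $a=0,\ b\neq 0$; for the $(p-1)(p^n-p^{n-s})$ codewords with $a\neq 0$ and $a^{-1}b\notin\mathrm{Supp}(\widehat{\chi_f})$; and for the codewords with $a\neq 0,\ a^{-1}b\in\mathrm{Supp}(\widehat{\chi_f}),\ f^*(a^{-1}b)=0$, of which there are $(p-1)N_0(f^*)=(p-1)p^{n-s-1}$ by Lemma~3. Summing gives $p^{n+1}-(p-1)^2p^{n-s-1}-1$, the second entry of Table~2. For the two weights $(p-1)p^{n-1}\mp p^{\frac{n+s-1}{2}}$ I would fix $a\in\mathbb{F}_p^{\times}$ and, using Theorem~2, express the number of admissible $b$ as a sum of the form $\sum_{j\in SQ}c_j(f^*)+\sum_{j\in NSQ}d_j(f^*)$ (or the complementary one), treating $p\equiv 1$ and $p\equiv 3\pmod 4$ separately. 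The crucial point --- the reason Table~2 needs no case distinction --- is that in all four subcases (the two residues of $p$ modulo $4$, times $0\in B_+(f^*)$ versus $0\in B_-(f^*)$) these sums collapse to $\tfrac{p-1}{2}\bigl(p^{n-s-1}\pm p^{\frac{n-s-1}{2}}\bigr)$: the terms involving $k$ cancel, and the type of $f^*$ merely permutes which $c_j$/$d_j$ carry the $\pm p^{(n-s-1)/2}$ correction. Multiplying by $p-1$ gives the remaining two multiplicities, and a check that all multiplicities sum to $p^{n+1}$ confirms both Table~2 and the fact that $c(a,b)=0$ only when $a=b=0$, so $\dim\mathcal{C}_f=n+1$.

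For the dual code, its length is $p^n-1$ and its dimension is $(p^n-1)-(n+1)=p^n-n-2$. To pin down $d^{\bot}$ I would substitute the weight distribution into the first Pless power moments with code length $p^n-1$ and dimension $n+1$: the second moment forces $A_1^{\bot}=0$, and the third moment yields $A_2^{\bot}=\tfrac{(p-1)(p-2)(p^{n-1}-1)}{2}$. Since $p$ is odd, $p-2\ge 1$, so $A_2^{\bot}>0$ exactly when $n>1$, which gives $d^{\bot}=2$ in that range. When $n=1$ (which forces $s=0$) we get $A_2^{\bot}=0$, so I would go one step further to the fourth power moment and obtain $A_3^{\bot}=\tfrac{(p-1)^2(p-2)(p-3)}{6}$; this is positive precisely when $p\neq 3$, giving $d^{\bot}=3$ for $s=0,\ n=1,\ p\neq 3$ (for $p=3$ the dual code has dimension $0$ and is trivial, which is why that case is excluded).

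I expect the main obstacle to be the bookkeeping in the multiplicity computation: one must lay out the four subcases for the two near-$(p-1)p^{n-1}$ weights and verify carefully that both the $k$-dependence and the dependence on the type of $f^*$ disappear. The Pless-moment computations for $A_2^{\bot}$ and $A_3^{\bot}$ are routine but algebraically heavy, and the degenerate case $n=1$ has to be singled out and handled with the fourth moment rather than the third.
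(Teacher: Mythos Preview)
Your proposal is correct and follows essentially the same approach as the paper: Theorem~2 supplies the possible weights, Lemma~4 (or equivalently Lemma~3 for the $f^*=0$ count) gives the multiplicities with the $k$-terms cancelling in each case, and the Pless power moments determine $A_1^\bot,A_2^\bot,A_3^\bot$; your formulas $A_2^\bot=\tfrac{(p-1)(p-2)(p^{n-1}-1)}{2}$ and $A_3^\bot=\tfrac{(p-1)^2(p-2)(p-3)}{6}$ agree with the paper's after factoring. The only cosmetic difference is that the paper computes just the $p\equiv 1\pmod 4$ case explicitly and declares the other similar, whereas you plan to lay out all four subcases to exhibit the cancellation of both $k$ and the type of $f^*$.
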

			\begin{proof}
				When $p\equiv1\ (\mathrm{mod}\ 4)$, by Lemma 4, we can calculate the weight distribution of $\mathcal{C}_f$ in the following four cases.
				\begin{itemize}
					\item[$\bullet$]Obviously $A_0=1$.
					\item[$\bullet$]$wt({c}(a,\ b))=(p-1)p^{n-1}$, that is, $a=0,\ b\ne0\ \text{or}\ a\ne0,\ a^{-1}b\notin\mathrm{Supp}(\widehat{\chi_f})\ \text{or}\ a\ne0,\  a^{-1}b\in\mathrm{Supp}(\widehat{\chi_f}),\ \text{and}\ f^*(a^{-1}b)=0$. \\The number of codewords satisfying $a=0,\ b\ne0$ is $p^n-1$. Since $\#\mathrm{Supp}(\widehat{\chi_f})\\=p^{n-s}$, and for any fixed $a\in\mathbb{F}_p^{\times}$, $\#\{b:a^{-1}b\notin\mathrm{Supp}(\widehat{\chi_f})\}=\#\{b:b\notin\mathrm{Supp}(\widehat{\chi_f})\}=p^n-p^{n-s}$, then we have that the number of codewords satisfying $a\ne0,\ a^{-1}b\notin\mathrm{Supp}(\widehat{\chi_f})$ is $(p-1)(p^n-p^{n-s})$. For any fixed $a\in\mathbb{F}_p^{\times}$, by Lemma 4, $\#\{b:a^{-1}b\in\mathrm{Supp}(\widehat{\chi_f}),f^*(a^{-1}b)=0\}=\#\{b:b\in\mathrm{Supp}(\widehat{\chi_f}),f^*(b)=0\}=p^{n-s-1}$, so the number of codewords satisfying $a\ne0,\ a^{-1}b\in\mathrm{Supp}(\widehat{\chi_f}),f^*(a^{-1}b)=0$ is $(p-1)p^{n-s-1}$. Hence, $A_{(p-1)p^{n-1}}=p^{n+1}-(p-1)^2p^{n-s-1}-1$.
					\item[$\bullet$]$wt({c}(a,\ b))=(p-1)p^{n-1}-p^{\frac{n+s-1}{2}}$, that is, $a\ne0,\ a^{-1}b\in B_+(f),\  f^*(a^{-1}b)\in SQ\ 
					\text{or}\ a\ne0,\ a^{-1}b\in B_-(f),\ f^*(a^{-1}b)\in NSQ$. \\For any fixed $a\in\mathbb{F}_p^{\times}$, $\#\{b:a^{-1}b\in B_+(f),f^*(a^{-1}b)\in SQ\}=\#\{b:b\in B_+(f),f^*(b)\in SQ\}$, and $\#\{b:a^{-1}b\in B_-(f),f^*(a^{-1}b)\in NSQ\}=\#\{b:b\in B_-(f),f^*(b)\in NSQ\}$. By Lemma 4, we have $A_{(p-1)p^{n-1}-p^{\frac{n+s-1}{2}}}=\frac{(p-1)^2}{2}(p^{\frac{n-s-1}{2}}+p^{n-s-1}).$
					\item[$\bullet$]$wt({c}(a,\ b))=(p-1)p^{n-1}+p^{\frac{n+s-1}{2}}$, that is, $a\ne0,\ a^{-1}b\in B_-(f),\  f^*(a^{-1}b)\in SQ\ 
					\text{or}\ a\ne0,\ a^{-1}b\in B_+(f),\ f^*(a^{-1}b)\in NSQ$. \\For any fixed $a\in\mathbb{F}_p^{\times}$, $\#\{b:a^{-1}b\in B_+(f),f^*(a^{-1}b)\in NSQ\}=\#\{b:b\in B_+(f),f^*(b)\in NSQ\}$, and $\#\{b:a^{-1}b\in B_-(f),f^*(a^{-1}b)\in SQ\}=\#\{b:b\in B_-(f),f^*(b)\in SQ\}$. By Lemma 4, we have $A_{(p-1)p^{n-1}+p^{\frac{n+s-1}{2}}}=\frac{(p-1)^2}{2}(p^{n-s-1}-p^{\frac{n-s-1}{2}}).$
				\end{itemize}
				
				When $p\equiv3\  (\mathrm{mod}\ 4)$, similar to the above discussion, the weight distribution of $\mathcal{C}_f$ can be obtained.
				
				By the weight distribution of $\mathcal{C}_f$ and the first four Pless power moments, we have if $s=0$, $n=1$, and $p\ne 3$, then $A_1^{\bot}=A_2^{\bot}=0$ and $A_3^{\bot}=\frac{(p-1)^2(p^2-5p+6)}{6}>0$, so the minimum distance of $\mathcal{C}_f^{\bot}$ is 3. If $n>1$, we have $A_1^{\bot}=0$ and $A_2^{\bot}=\frac{(p^n-p)(p^2-3p+2)}{2p}>0$, so the minimal distance of $\mathcal{C}_f^{\bot}$ is 2. Clearly, the dimension of $\mathcal{C}_f^{\bot}$ is $p^n-n-2$. The proof is now completed.\qed
			\end{proof}

			\begin{remark}
					(\romannumeral 1) When $n+s$ is odd, the weight distribution of $\mathcal{C}_f$ given in Proposition 2 is independent of $k$ and is the same as the weight distribution of the linear code from weakly regular $s$-plateaued functions in \cite{Mesnager2}.\\
				(\romannumeral 2)
				When $s=0,\ n=1$ and $p\ne3$, $\mathcal{C}_f$ is a linear code with parameters $[p-1,\ 2,\ p-2]$, which is an MDS code and $\mathcal{C}_f^{\bot}$ is a $[p-1,\ p-3,\ 3]$ linear code, which is also an MDS code.
			\end{remark}
			
			Next, we verify Proposition 2 by Magma program for the following non-weakly regular plateaued function.
			
			\begin{example}
				Consider $f(x):\ \mathbb{F}_{5}^4\longrightarrow\mathbb{F}_5,\  f(x_1,\ x_2,\ x_3,\ x_4)=x_1^2x_3^4+x_1^2+x_2x_3$, which is a non-weakly regular $1$-plateaued function, and the dual $f^*(x)$ of $f(x)$ is bent relative to $\mathrm{Supp}(\widehat{\chi_f})$, then $\mathcal{C}_f$ constructed by (1) is a three-weight linear code with parameters $[624,\ 5,\ 475]$ , weight enumerator $1+240z^{475}+2724z^{500}+160z^{525}.$ The dual code $\mathcal{C}_f^{\bot}$ is a $[624,\ 618,\ 2]$ linear code.
			\end{example}
			
			According to Lemma 8, Remark 4 and Proposition 2, we have the following result on secret sharing scheme.
			\begin{corollary}
				Let $n+s$ be an odd integer with $0\le s\le n-3$ and $\mathcal{C}_f$  be the linear code in Proposition 2 with generator matrix $G=[{g}_0,\ {g}_1,\ \cdots, {g}_{m-1}]$, where $m=p^n-1$. Then, in the secret sharing scheme based on $\mathcal{C}_f^{\bot}$ with $d^{\bot}=2$, the number of participants is $m-1$ and there are $p^n$ minimal access sets. Moreover, if ${g}_j$ is a scalar multiple of ${g}_0$, $1\le j\le m-1$, then participant $P_j$ must be in every minimal access set, or else, 
				participant $P_j$ must be in $(p-1)p^{n-1}$ out of $p^{n}$  minimal access sets.
			\end{corollary}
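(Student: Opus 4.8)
The plan is to verify that $\mathcal{C}_f$ satisfies the hypotheses of Lemma 8 and then simply read off the conclusion. First I would note that the standing assumption $0\le s\le n-3$ forces $n\ge 3$, in particular $n>1$. On the one hand, this means the exceptional case ``$s=0$, $n=1$'' of Proposition 2 does not occur, so $\mathcal{C}_f$ is a $[p^n-1,\ n+1]$ linear code whose dual $\mathcal{C}_f^{\bot}$ has minimum distance $d^{\bot}=2$. On the other hand, Remark 4 (an application of Lemma 7 to the weight distribution in Proposition 2) tells us that $\mathcal{C}_f$ is a minimal linear code. Since the zero codeword is the unique codeword of weight $0$ in $\mathcal{C}_f$, the dimension of $\mathcal{C}_f$ equals $n+1$ exactly, so $\mathcal{C}_f$ admits a generator matrix of the stated form $G=[g_0,\ g_1,\ \dots,\ g_{m-1}]$ with $m=p^n-1$.

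Next I would specialize Lemma 8 to $\mathcal{C}=\mathcal{C}_f$, keeping track of the substitutions ``length $\mapsto m=p^n-1$'' and ``dimension $\mapsto k=n+1$''. In the Massey-type construction recalled above, a code of length $m$ yields the participant set $\{P_1,\dots,P_{m-1}\}$, which accounts for the claimed number $m-1$ of participants. Because $\mathcal{C}_f$ is minimal, the secret sharing scheme based on $\mathcal{C}_f^{\bot}$ has $p^{k-1}=p^{n}$ minimal access sets. Since $d^{\bot}=2$, the first branch of Lemma 8 applies: if $g_j$ ($1\le j\le m-1$) is a scalar multiple of $g_0$, then $P_j$ lies in every minimal access set, while otherwise $P_j$ lies in $(p-1)p^{k-2}=(p-1)p^{n-1}$ of the $p^{k-1}=p^n$ minimal access sets. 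Rewriting everything in terms of $n$ gives exactly the statement of the corollary.

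There is essentially no genuine obstacle in this argument; it is a direct application. The only points that require care are (i) checking that the hypothesis $0\le s\le n-3$ simultaneously secures both ingredients — minimality of $\mathcal{C}_f$ (valid precisely on this range by Remark 4) and $d^{\bot}=2$ (valid because $n\ge 3>1$, so the $s=0,\,n=1$ exception of Proposition 2 is avoided) — and (ii) performing the index bookkeeping when invoking Lemma 8 correctly, so that $p^{k-1}$ and $(p-1)p^{k-2}$ become $p^{n}$ and $(p-1)p^{n-1}$ and the participant count is $m-1$, not $m$.
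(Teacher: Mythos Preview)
Your proposal is correct and follows essentially the same approach as the paper, which simply cites Lemma 8, Remark 4, and Proposition 2 without spelling out the details. Your write-up is in fact more careful than the paper's, since you explicitly note that $0\le s\le n-3$ forces $n\ge 3$, thereby avoiding the exceptional case of Proposition 2 and securing $d^{\bot}=2$, and you track the substitution $k=n+1$ in Lemma 8 precisely.
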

		
	Let $f(x)$ be a non-weakly regular bent function, i.e., $s=0$, in the following corollary, we give the parameters and weight distributions of $\mathcal{C}_f$ constructed by non-weakly regular bent functions.
		\begin{corollary}
		Let $f(x):\ \mathbb{F}_{p}^n\longrightarrow\mathbb{F}_p$ be a non-weakly regular bent function satisfying $f(0)=0$, $\#B_+(f)=k\ (k\ne0,\text{and}\  k\ne p^{n})$ and the dual $f^*(x)$ of $f(x)$ be bent. Then, when $n$ is even, $\mathcal{C}_f$ defined by (1) is a at most five-weight linear code with parameters $[p^n-1,\ n+1]$ and the weight distribution is given by Table 3; when $n$ is odd, $\mathcal{C}_f$ defined by (1) is a three-weight linear code with parameters $[p^n-1,\ n+1]$ and the weight distribution is given by Table 4.
		\end{corollary}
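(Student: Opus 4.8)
The plan is to obtain this corollary purely by specializing Theorems 1 and 2 and Propositions 1 and 2 to the case $s=0$. First I would observe that a non-weakly regular bent function $f:\mathbb{F}_p^n\to\mathbb{F}_p$ is exactly a non-weakly regular $0$-plateaued function: from the Parseval identity $\#\mathrm{Supp}(\widehat{\chi_f})=p^{n-s}=p^n$, so $\mathrm{Supp}(\widehat{\chi_f})=\mathbb{F}_p^n$. Consequently the hypothesis ``$f^*$ is bent'' coincides with ``$f^*$ is bent relative to $\mathrm{Supp}(\widehat{\chi_f})$'', since Definition 1 with $S=\mathbb{F}_p^n$ and $N=p^n$ is just the ordinary definition of a bent function. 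Thus all hypotheses of Propositions 1 and 2 hold with $s=0$, $k=\#B_+(f)\notin\{0,p^n\}$, and Remark 1 gives $f^{**}(0)=f(0)=0=:j_0$. The parameters $[p^n-1,\,n+1]$ are immediate from Theorems 1 and 2.

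Next I would split on the parity of $n$. When $n$ is even, $n+s=n$ is even, so Theorem 1 with $s=0$ shows the nonzero weights of $\mathcal{C}_f$ lie in
\[
\left\{(p-1)p^{n-1},\ (p-1)\big(p^{n-1}-p^{\frac n2-1}\big),\ (p-1)\big(p^{n-1}+p^{\frac n2-1}\big),\ (p-1)p^{n-1}+p^{\frac n2-1},\ (p-1)p^{n-1}-p^{\frac n2-1}\right\},
\]
and Proposition 1 with $s=0$ (using $p^{n-s}=p^n$, $p^{n-s-1}=p^{n-1}$, $p^{\frac{n-s}{2}-1}=p^{\frac n2-1}$) turns Table 1 into Table 3, with multiplicities depending on $k$ and on whether $0\in B_+(f^*)$ or $0\in B_-(f^*)$; this case needs $n\ge 2$, which is the smallest even value. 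The code is ``at most five-weight'' precisely because for special values of $k$ one of the listed multiplicities can vanish (e.g.\ $\tfrac kp-p^{\frac n2-1}=0$, or $p^{n-1}-\tfrac kp-(p-1)p^{\frac n2-1}=0$), and otherwise all five nonzero weights occur. When $n$ is odd, $n+s=n$ is odd, so Theorem 2 with $s=0$ gives nonzero weights among $(p-1)p^{n-1}$ and $(p-1)p^{n-1}\pm p^{\frac{n-1}{2}}$, and Proposition 2 with $s=0$ turns Table 2 into Table 4; by Remark 4(i) this distribution is independent of $k$, and the three multiplicities $p^{n+1}-(p-1)^2p^{n-1}-1$, $\tfrac{(p-1)^2}{2}(p^{\frac{n-1}{2}}+p^{n-1})$, $\tfrac{(p-1)^2}{2}(p^{n-1}-p^{\frac{n-1}{2}})$ are all strictly positive for odd $n\ge 3$ (the value $n=1$ being vacuous since every bent function on $\mathbb{F}_p$ is weakly regular), so $\mathcal{C}_f$ is genuinely three-weight.

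I do not expect a serious obstacle: the substance has already been proved in Theorems 1--2 and Propositions 1--2, and this is a bookkeeping specialization. The only points requiring a line of argument are the reduction of ``bent relative to $\mathrm{Supp}(\widehat{\chi_f})$'' to ordinary bentness when $s=0$, and the careful case analysis, in the even-$n$ case, of which of the five weights can collapse for a given $k$ and sign of $0\in B_\pm(f^*)$ --- this is exactly why the statement reads ``at most five-weight'' rather than ``five-weight''. The punctured version of Table 3 into Table 4's format and the edge cases $n=2$ (even) are then checked by direct substitution.
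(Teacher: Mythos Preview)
Your proposal is correct and is exactly the approach the paper takes: Corollary 3 is stated without proof because it is the direct specialization $s=0$ of Theorems 1--2 and Propositions 1--2, with Tables 3 and 4 obtained from Tables 1 and 2 by substituting $p^{n-s}=p^n$, $p^{n-s-1}=p^{n-1}$, $p^{\frac{n\pm s}{2}-1}=p^{\frac n2-1}$, etc. Your additional remarks on why the even case reads ``at most five-weight'' and why $n=1$ is excluded in the odd case are more than the paper itself spells out, but they are correct and in the spirit of the intended argument.
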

\begin{table}
	\caption{The weight distribution of $\mathcal{C}_f$ in Corollary 3 when $n$ is even}
	\begin{tabular}{|l|l|l|}
		\hline
		Weight  & Multiplicity  ($0\in B_+(f^*)$)&Multiplicity ($0\in B_-(f^*)$)\\ \hline
		0 & 1&1 \\ \hline
		$(p-1)p^{n-1}$&$p^n-1$&$p^n-1$\\ \hline
		$(p-1)(p^{n-1}-p^{\frac{n}{2}-1})$&$(p-1)(\frac{k}{p}+(p-1)p^{\frac{n}{2}-1})$&$(p-1)\frac{k}{p}$\\ \hline
		$(p-1)(p^{n-1}+p^{\frac{n}{2}-1})$&$(p-1)(p^{n-1}-\frac{k}{p})$&$(p-1)(p^{n-1}-\frac{k}{p}-(p-1)p^{\frac{n}{2}-1})$\\\hline
		$(p-1)p^{n-1}+p^{\frac{n}{2}-1}$&$(p-1)^2(\frac{k}{p}-p^{\frac{n}{2}-1})$&$(p-1)^2\ \frac{k}{p}$\\ \hline
		$(p-1)p^{n-1}-p^{\frac{n}{2}-1}$&$(p-1)^2(p^{n-1}-\frac{k}{p})$&$(p-1)^2(p^{n-1}-\frac{k}{p}
		+p^{\frac{n}{2}-1})$\\\hline
		\end{tabular}
		\end{table}
		
		\begin{table}
			
			\caption{The weight distribution of $\mathcal{C}_f$ in Corollary 3 when $n$ is odd}
			\begin{tabular}{|l|l|}
				\hline
				Weight & Multiplicity\\ \hline
				0&1\\ \hline
				$(p-1)p^{n-1}$&$2p^n-p^{n-1}-1$\\ \hline
				$(p-1)p^{n-1}-p^{\frac{n-1}{2}}$&$\frac{(p-1)^2}{2}(p^{\frac{n-1}{2}}+p^{n-1})$\\ \hline
				$(p-1)p^{n-1}+p^{\frac{n-1}{2}}$&$\frac{(p-1)^2}{2}(p^{n-1}-p^{\frac{n-1}{2}})$\\ \hline
			\end{tabular}
		\end{table}
	
	Next, we verify Corollary 3 by Magma program for the following non-weakly regular bent functions.
	\begin{example}
		Consider $f(x):\ \mathbb{F}_{5}^4\longrightarrow\mathbb{F}_5$, $f(x_1,\ x_2,\ x_3,\ x_4)=4x_1^2x_4^4+4x_1^2x_4^3+3x_1^2x_4+x_1^2+x_2^2+x_3x_4$, which is a non-weakly regular bent function with $0\in B_+(f^*)$, $k=125$, and the dual $f^*(x)$ of $f(x)$ is bent, then $\mathcal{C}_f$ constructed by (1) is a five-weight linear code with parameters $[624,\ 5,\ 480]$, weight enumerator $1+180z^{480}+1600z^{495}+624z^{500}+320z^{505}+400z^{520}$. The dual code $\mathcal{C}_f^{\bot}$ is a $[624,\ 619,\ 2]$ linear code.
		\end{example}
	\begin{example}
		Consider  $f(x):\ \mathbb{F}_{3}^4\longrightarrow\mathbb{F}_3$, $f(x_1,\ x_2,\ x_3,\ x_4)=2x_1^2x_4^2+x_1^2x_4+x_1^2+2x_2^2x_4^2+2x_2^2x_4+x_2^2+x_3x_4$, which is a non-weakly regular bent function with $0\in B_-(f^*)$, $k=54$, and the dual $f^*(x)$ of $f(x)$ is bent, then $\mathcal{C}_f$ constructed by (1) is a five-weight linear code with parameters $[80,\ 5,\ 48]$, weight enumerator $1+36z^{48}+48z^{51}+80z^{54}+72z^{57}+6z^{60}$. The dual code $\mathcal{C}_f^{\bot}$ is a $[80,\ 75,\ 2]$ linear code, which is almost optimal according to the Code Table at http://www.codetables.de/.
		\end{example}
	\begin{example}
		Consider $f(x):\ \mathbb{F}_{5}^5\longrightarrow\mathbb{F}_5$, $f(x_1,\ x_2,\ x_3,\ x_4,\ x_5)=4x_1^2x_5^4+x_1^2x_5^3+2x_1^2x_5+x_1^2+x_2^2+x_3^2+x_4x_5$, which is a non-weakly regular bent function, and the dual $f^*(x)$ of $f(x)$ is bent, then $\mathcal{C}_f$ constructed by (1) is a three-weight linear code with parameters $[3124,\ 6,\ 2475]$, weight enumerator $1+5200z^{2475}+5624z^{2500}+4800z^{2525}$. The dual code $\mathcal{C}_f^{\bot}$ is a $[3124,\ 3118,\ 2]$ linear code.
		\end{example}
			\section{Linear codes from non-weakly regular plateaued functions based on the second generic construction}
			\quad  In this section, motivated by \cite{Mesnager3}, we construct some linear codes with few weights from non-weakly regular $s$-plateaued functions and determine their weight distributions. 
			
			Let $f(x):\mathbb{F}_{p}^n\longrightarrow\mathbb{F}_p$ be a non-weakly regular $s$-plateaued function, for any $a\in\mathrm{Supp}(\widehat{\chi_f})$, define $\epsilon_a=1\ (\text{respectively}\ \epsilon_a=-1)$ if $a\in B_+(f)$  (respectively$\ a\in B_-(f)$).
			Let $\mathcal{NWRF}$ denote the set of non-weakly regular $s$-plateaued functions with  following two properties:\\
			(1) $f(0)$=0;\\
			(2) For any $a\in\mathbb{F}_p^{\times}$, there exists an even positive integer $t$ with $\mathrm{gcd}(t-1,\ p-1)=1$ such that for any $ x\in\mathbb{F}_{p}^n, f(ax)=a^tf(x).$
			
			For $f(x)\in \mathcal{NWRF}$, we have $f(-x)=f(x)$, so $N_0(f)$ is odd, $N_j(f)$ is even ($j\in\mathbb{F}_p,\ j\ne 0$), which implies that $f(x)$ is unbalanced. In the following lemmas, we give some useful results of $f(x)$ which belongs to $\mathcal{NWRF}$.
			\begin{lemma}
				Let $f(x)\in\mathcal{NWRF}$ and  $f^*(x)$ be its dual function. Then there exists an even positive integer $h$ with $\mathrm{gcd}(h-1,\ p-1)=1$ such that $f^*(ax)=a^hf^*(x)$ for any $ a\in\mathbb{F}_p^{\times},\ x\in\mathrm{Supp}(\widehat{\chi_f})$.
				
			\end{lemma}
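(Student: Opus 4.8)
The plan is to transport the homogeneity of $f$ through the Walsh transform to its dual by exploiting the action of $\mathrm{Gal}(\mathbb{Q}(\xi_p)/\mathbb{Q})=\{\sigma_c:c\in\mathbb{F}_p^{\times}\}$ on the Walsh coefficients. Fix $a\in\mathbb{F}_p^{\times}$. Since $\gcd(t-1,p-1)=1$, the map $x\mapsto x^{\,t-1}$ permutes $\mathbb{F}_p^{\times}$, so there is a unique $b\in\mathbb{F}_p^{\times}$ with $b^{\,t-1}=a$; set $c:=ab=b^{\,t}$. For any $\alpha\in\mathrm{Supp}(\widehat{\chi_f})$, the substitution $x=by$ in the defining sum, together with $f(by)=b^{\,t}f(y)$ from property (2) of $\mathcal{NWRF}$, yields
\[
\widehat{\chi_f}(a\alpha)=\sum_{y\in\mathbb{F}_p^n}\xi_p^{\,b^{t}f(y)-ab\,(\alpha\cdot y)}=\sum_{y\in\mathbb{F}_p^n}\xi_p^{\,c(f(y)-\alpha\cdot y)}=\sigma_c\bigl(\widehat{\chi_f}(\alpha)\bigr).
\]

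From this identity everything follows by comparison. Since the Galois group is abelian, $\sigma_c$ commutes with complex conjugation, hence $|\widehat{\chi_f}(a\alpha)|^2=\sigma_c(|\widehat{\chi_f}(\alpha)|^2)=\sigma_c(p^{n+s})=p^{n+s}\neq 0$, so $a\alpha\in\mathrm{Supp}(\widehat{\chi_f})$; in particular the Walsh support is stable under scaling and $f^*(a\alpha)$ is defined. Now write the Walsh values in canonical plateaued form $\widehat{\chi_f}(\beta)=\epsilon_\beta\,\omega\,\xi_p^{f^*(\beta)}$ for $\beta\in\mathrm{Supp}(\widehat{\chi_f})$, where $\epsilon_\beta\in\{\pm1\}$ and $\omega=p^{(n+s)/2}$ if $n+s$ is even, $\omega=p^{(n+s-1)/2}\sqrt{p^*}$ if $n+s$ is odd. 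Applying $\sigma_c$ (which sends $\xi_p^{m}\mapsto\xi_p^{cm}$ and, by $\sigma_c(\sqrt{p^*})=\eta(c)\sqrt{p^*}$, sends $\omega\mapsto\pm\omega$) to $\widehat{\chi_f}(\alpha)$ and equating with the expansion of $\widehat{\chi_f}(a\alpha)$ gives $\pm\,\epsilon_\alpha\,\omega\,\xi_p^{c f^*(\alpha)}=\epsilon_{a\alpha}\,\omega\,\xi_p^{f^*(a\alpha)}$. Cancelling $\omega\neq 0$ leaves $\xi_p^{\,c f^*(\alpha)-f^*(a\alpha)}=\pm 1$, and since $p$ is odd no nontrivial power of $\xi_p$ equals $-1$; therefore $f^*(a\alpha)=c\,f^*(\alpha)$ in $\mathbb{F}_p$ (and as a byproduct $\epsilon_{a\alpha}$ is determined by $\epsilon_\alpha$ and $\eta(c)$).

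It remains to recognize $c$ as a fixed power of $a$. Working in the cyclic group $\mathbb{F}_p^{\times}$, the relations $b^{\,t-1}=a$ and $c=ab$ force $c=a^{h}$ with $h\equiv t(t-1)^{-1}\equiv 1+(t-1)^{-1}\pmod{p-1}$, a residue class that does not depend on $a$; take $h$ to be its least positive representative. Then $h-1\equiv(t-1)^{-1}\pmod{p-1}$ is a unit, so $\gcd(h-1,p-1)=1$; and since $t$ is even, $t-1$ is odd, so its inverse modulo the even number $p-1$ is odd, whence $h\equiv 1+(\text{odd})$ is even. Thus $f^*(a\alpha)=a^{h}f^*(\alpha)$ for all $a\in\mathbb{F}_p^{\times}$ and all $\alpha\in\mathrm{Supp}(\widehat{\chi_f})$, which is the assertion. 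The one place needing care is the unit factor $\omega$: one must track it uniformly across the three regimes ($n+s$ even; $n+s$ odd with $p\equiv1\pmod 4$; $n+s$ odd with $p\equiv3\pmod 4$) and check that in each case $\sigma_c(\omega)=\pm\omega$, so it cancels cleanly. The change of variable in the Walsh sum, the invariance of the modulus of Walsh values on the support, and the parity/coprimality arithmetic for $h$ are all routine.
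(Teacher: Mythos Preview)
Your proof is correct and follows essentially the same route as the paper's. Both arguments make the same substitution in the Walsh sum (your $b$ with $b^{\,t-1}=a$ is exactly the paper's $a^{l}$ with $l\equiv(t-1)^{-1}\pmod{p-1}$), obtain the key identity $\widehat{\chi_f}(a\alpha)=\sigma_{a^{h}}(\widehat{\chi_f}(\alpha))$ with $h\equiv 1+(t-1)^{-1}\pmod{p-1}$, and then read off $f^*(a\alpha)=a^{h}f^*(\alpha)$ from the canonical plateaued form; your version is in fact a bit more explicit, since you spell out why $h$ is even (the paper simply asserts it) and you handle the sign via the observation that no nontrivial $p$-th root of unity equals $-1$, whereas the paper instead invokes $\eta(a^{h})=1$.
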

			\begin{proof}
				Since $\mathrm{gcd}(t-1,\ p-1)=1$,  then there exists a positive integer $l$ such that $l(t-1)\equiv1\ (\mathrm{mod}\ (p-1))$. Let $h=l+1$. For any $a\in\mathbb{F}_p^{\times},\ b\in\mathbb{F}_{p}^n$, we have 
				\begin{align*}
				\widehat{\chi_f}(ab)&=\sum\limits_{x\in\mathbb{F}_{p}^n}\xi_p^{f(x)-ab\cdot x}\\
				&=\sum\limits_{x\in\mathbb{F}_{p}^n}\xi_p^{f(a^lx)-a^{l+1}b \cdot x}\\
				\qquad&=\sum\limits_{x\in\mathbb{F}_{p}^n}\xi_p^{a^hf(x)-a^hb\cdot x}\\
				\qquad&=\sigma_{a^h}(\sum\limits_{x\in\mathbb{F}_{p}^n}\xi_p^{f(x)-b\cdot x})\\
				\qquad&=\sigma_{a^h}(\widehat{\chi_f}(b)).
				\end{align*}
				Thus, for any $ a\in\mathbb{F}_p^{\times},\ b\in\mathbb{F}_{p}^n$, we have  that  $b\in\mathrm{Supp}(\widehat{\chi_f})$ if and only if $ab\in\mathrm{Supp}(\widehat{\chi_f})$. When $b\in\mathrm{Supp}(\widehat{\chi_f})$, we can obtain: \\
				If $n+s$ is even, then  $\epsilon_{ab}p^{\frac{n+s}{2}}\xi_p^{f^*(ab)}=\epsilon_{b}p^{\frac{n+s}{2}}\xi_p^{a^hf^*(b)}$, so $f^*(ab)=a^hf^*(b)$;\\
				If $n+s$ is odd, then $\epsilon_{ab}\sqrt{p^*}p^{\frac{n+s-1}{2}}\xi_p^{f^*(ab)}=\epsilon_{b}\eta(a^h)\sqrt{p^*}p^{\frac{n+s-1}{2}}\xi_p^{a^hf^*(b)}$. Since $h$ is even, then $\eta(a^h)=1$,  so $f^*(ab)=a^hf^*(b)$.\qed
			\end{proof}
			
			\begin{remark}
				By the proof of Lemma 9, for $b\in\mathrm{Supp}(\widehat{\chi_f})$, we have that $b\in B_+(f)$ (respectively$\ B_-(f)$) if and only if $ab\in B_+(f)$ (respectively$\ B_-(f)$) for any $a\in\mathbb{F}_p^{\times}$.
			\end{remark}
			\begin{lemma}
				Let $f(x)\in\mathcal{NWRF}$ and  $f^*(x)$ be its dual function, then $f^*(0)=0$.
			\end{lemma}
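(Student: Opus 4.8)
The plan is to recover the value $f^*(0)$ purely from the parity of the value distribution of $f$, which is already tabulated in Lemma~2.

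The starting point is the remark made just before the statement: since $f\in\mathcal{NWRF}$ satisfies $f(-x)=f(x)$, the quantity $N_0(f)$ is odd and $N_j(f)$ is even for every $j\in\mathbb{F}_p^{\times}$, and $f$ is unbalanced. Unbalancedness gives $\widehat{\chi_f}(0)\ne 0$, hence $0\in\mathrm{Supp}(\widehat{\chi_f})$, so $f^*(0)$ is well defined; write $j_0=f^*(0)$.

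Next I would feed $f$ into Lemma~2 as an unbalanced $s$-plateaued function. Every entry appearing there is a $\mathbb{Z}$-linear combination of powers of the odd prime $p$, and all exponents that occur are nonnegative (using $n\ge 1$ and $n+s\ge 1$, with the boundary cases contributing $p^0=1$). A one-line count of the odd summands then shows, both for $n+s$ even and for $n+s$ odd, that $N_{j_0}(f)$ is odd whereas $N_j(f)$ is even for all $j\ne j_0$. Thus $j_0$ is the \emph{unique} residue in $\mathbb{F}_p$ for which $N_{j_0}(f)$ is odd. Comparing with the first step, which says that $0$ is the unique residue with $N_0(f)$ odd, forces $f^*(0)=j_0=0$.

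I do not anticipate a real difficulty: one only has to make sure that Lemma~2 is applicable (it needs $f$ unbalanced, which is given) and that the parity bookkeeping goes through uniformly when some exponents degenerate to $0$. A more indirect route would be to put $x=0$ in Lemma~9 to get $f^*(0)=a^h f^*(0)$ for every $a\in\mathbb{F}_p^{\times}$, and then to show $a^h\ne 1$ for some $a$, i.e.\ $h\not\equiv 0\pmod{p-1}$; but this forces one to unwind $h=l+1$ with $l(t-1)\equiv 1\pmod{p-1}$ into a statement about $t$ and to rule out $f$ being constant on lines through the origin, which is less clean than the parity argument, so I would avoid it.
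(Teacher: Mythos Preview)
Your proposal is correct and is essentially the paper's own argument: both compare the parity of $N_j(f)$ coming from the symmetry $f(-x)=f(x)$ (odd only at $j=0$) with the parity coming from Lemma~2 (odd only at $j=f^*(0)$) to force $f^*(0)=0$. The paper merely phrases this as a proof by contradiction rather than a direct identification, which is an immaterial difference.
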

			\begin{proof}
				Assume that $f^*(0)=j,\ j\ne 0$. By Lemma 2, we know that $N_j(f)$ is odd. Since $f(x)\in \mathcal{NWRF}$, then $f(x)=f(-x)$ and $f(0)=0$, so $N_j(f)$ should be even, which gives a contradiction. This proof is now completed.\qed
			\end{proof}
			\begin{lemma}
				Let $f(x)\in\mathcal{NWRF}$ with $\#B_+(f)=k\ (k\ne 0\ \text{and}\ k\ne p^{n-s})$, and the dual $f^*(x)$ be bent relative to $\mathrm{Supp}(\widehat{\chi_f})$. Then   the types of $f(x)$ and $f^*(x)$ are the same if $p^{n+s}\equiv1\ (\mathrm{mod}\ 4)$, and different if $p^{n+s}\equiv3\ (\mathrm{mod}\ 4)$.
			\end{lemma}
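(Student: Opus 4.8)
The plan is to reduce the whole statement to one numerical identity, namely $\widehat{\chi_f}(0)\cdot\widehat{\chi_{f^{*}}}(0)=p^{n}$, and then to obtain that identity from the value–distribution machinery already set up. First I would record the consequences of $f\in\mathcal{NWRF}$: in the homogeneity relation $f(ax)=a^{t}f(x)$ the exponent $t$ is even, so $f(-x)=f(x)$; since $f^{*}$ is bent relative to $\mathrm{Supp}(\widehat{\chi_f})$, Remark~1 then gives $f^{**}(x)=f(-x)=f(x)$ for all $x$, whence $f^{**}(0)=f(0)=0$, and $f^{*}(0)=0$ by Lemma~10. Because $f$ is unbalanced, $0\in\mathrm{Supp}(\widehat{\chi_f})$, so writing $\epsilon=1$ if $p^{n+s}\equiv1\pmod{4}$ and $\epsilon=i$ otherwise, the Walsh formulas give $\widehat{\chi_f}(0)=\delta\,\epsilon\,p^{(n+s)/2}$ and $\widehat{\chi_{f^{*}}}(0)=\delta^{*}\epsilon\,p^{(n-s)/2}$ with $\delta,\delta^{*}\in\{1,-1\}$, where $\delta=1$ (resp. $\delta^{*}=1$) exactly when the type of $f$ (resp. of $f^{*}$) is $(+)$. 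Since $\epsilon^{4}=1$ and $\epsilon^{2}=1\iff p^{n+s}\equiv1\pmod{4}$, the asserted equivalence is precisely $\delta\delta^{*}\epsilon^{2}=1$, i.e. $\widehat{\chi_f}(0)\widehat{\chi_{f^{*}}}(0)=p^{n}$.

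To prove the identity I would compute both factors from the value distributions. By Lemma~2 applied to $f$ with $f^{*}(0)=0$, a direct summation of $\sum_{j}N_{j}(f)\xi_{p}^{j}$ (collapsing $\sum_{j\in\mathbb{F}_{p}}\xi_{p}^{j}=0$, and, when $n+s$ is odd, invoking $\sum_{y\in\mathbb{F}_{p}^{\times}}\eta(y)\xi_{p}^{y}=\sqrt{p^{*}}$ from Lemma~1) gives back $\widehat{\chi_f}(0)=\delta\,\epsilon\,p^{(n+s)/2}$, and likewise Lemma~3 with $f(0)=0$ yields $\widehat{\chi_{f^{*}}}(0)=\delta^{*}\epsilon\,p^{(n-s)/2}$; so the product equals $\delta\delta^{*}\epsilon^{2}p^{n}$, and it remains to show $\delta\delta^{*}\epsilon^{2}=1$. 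Here I would bring in the finer information: using $\epsilon\,p^{(n+s)/2}\xi_{p}^{f^{*}(x)}=\epsilon_{x}\widehat{\chi_f}(x)$ on $\mathrm{Supp}(\widehat{\chi_f})$ together with the inversion $\sum_{x\in\mathrm{Supp}(\widehat{\chi_f})}\widehat{\chi_f}(x)=p^{n}\xi_{p}^{f(0)}=p^{n}$, one reduces to showing that $\sum_{x\in B_{+}(f)}\widehat{\chi_f}(x)=\epsilon\,p^{(n+s)/2}\sum_{j}c_{j}(f^{*})\xi_{p}^{j}$ and its $B_{-}(f)$ analogue take the values $p^{n}$ and $0$ in the order dictated by which of $B_{\pm}(f)$ contains $0$; substituting the explicit $c_{j}(f^{*}),d_{j}(f^{*})$ of Lemma~4 (which come from \cite{Ozbudak1}) and splitting into the four cases $n+s$ even/odd versus $0\in B_{+}(f^{*})$/$0\in B_{-}(f^{*})$ makes each sum collapse to the required value, and reading off the signs gives $\delta\delta^{*}\epsilon^{2}=1$, i.e. the types agree iff $p^{n+s}\equiv1\pmod{4}$.

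The main obstacle is the last part: a naive symmetry argument (expanding $\widehat{\chi_{f^{*}}}(0)$ over $B_{\pm}(f)$ and plugging in Lemma~4) only re‑derives $\widehat{\chi_f}(0)\widehat{\chi_{f^{*}}}(0)=\delta\delta^{*}\epsilon^{2}p^{n}$ and does not by itself pin down the signs, so the real work is the precise description of $B_{+}(f^{*})$, $B_{-}(f^{*})$ and the position of $0$ relative to the data $k=\#B_{+}(f)$, which is exactly what \cite{Ozbudak1} (the source behind Lemma~4) supplies; the remaining effort is bookkeeping of the factors $\epsilon$, $\sqrt{p^{*}}$ and $\eta$ through the cases. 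The structural facts that make this bookkeeping go through are the scaling invariance of $B_{\pm}(f)$ (Remark~6) together with $f^{*}(0)=0$, which force $\widehat{\chi_f}(0)$ onto the "large" partial sum, and the evenness of $h$ in Lemma~9, which keeps $\sigma_{a^{h}}$ fixing $\sqrt{p^{*}}$ in every orbit sum.
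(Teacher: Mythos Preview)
Your reduction of the statement to the single identity $\widehat{\chi_f}(0)\,\widehat{\chi_{f^{*}}}(0)=p^{n}$ is correct and clean: writing $\widehat{\chi_f}(0)=\delta\epsilon p^{(n+s)/2}$ and $\widehat{\chi_{f^{*}}}(0)=\delta^{*}\epsilon p^{(n-s)/2}$, the identity is exactly $\delta\delta^{*}\epsilon^{2}=1$, which says the types agree iff $\epsilon^{2}=1$ iff $p^{n+s}\equiv 1\pmod 4$.

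The gap is in your proof of the identity. You split $\sum_{\alpha\in\mathrm{Supp}}\widehat{\chi_f}(\alpha)=p^{n}$ into the $B_{+}(f)$ and $B_{-}(f)$ parts, which equal $\epsilon p^{(n+s)/2}\sum_{j}c_{j}(f^{*})\xi_{p}^{j}$ and $-\epsilon p^{(n+s)/2}\sum_{j}d_{j}(f^{*})\xi_{p}^{j}$, and then plug in Lemma~4. But Lemma~4 gives \emph{two} sets of formulas for $c_{j},d_{j}$, one under the hypothesis $0\in B_{+}(f^{*})$ and one under $0\in B_{-}(f^{*})$; under either hypothesis the two partial sums evaluate to $p^{n}$ and $0$, only in opposite orders. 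For instance when $n+s$ is even and $0\in B_{+}(f^{*})$ one gets $\sum_{j}c_{j}\xi_{p}^{j}=p^{(n-s)/2}$ and $\sum_{j}d_{j}\xi_{p}^{j}=0$, whereas when $0\in B_{-}(f^{*})$ these values are swapped. So the computation only tells you that $\sum_{B_{+}(f)}\widehat{\chi_f}=p^{n}$ iff $0\in B_{+}(f^{*})$, which is a tautology equivalent to the definition of the type of $f^{*}$ and does not link it to the type of $f$. Your claim that the split is ``dictated by which of $B_{\pm}(f)$ contains $0$'' is exactly the conclusion to be proved; nothing in Lemma~4, Fourier inversion, or the scaling invariance of Remark~6 forces it.

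What is missing is the parity argument the paper actually uses. Since $B_{+}(f)$ is stable under $x\mapsto -x$ (Remark~6 with $a=-1$) and contains $0$ iff $f$ is of type $(+)$, the integer $k=\#B_{+}(f)$ is odd precisely when $f$ is of type $(+)$. Independently, since $f^{*}(-x)=f^{*}(x)$ (Lemma~9) and $f^{*}(0)=0$ (Lemma~10), for every $j\ne 0$ the set $\{x\in B_{+}(f):f^{*}(x)=j\}$ is fixed-point-free under $x\mapsto -x$, so $c_{j}(f^{*})$ must be even. Comparing this forced parity with the two candidate formulas in Lemma~4 (e.g.\ $c_{j}=k/p-p^{(n-s)/2-1}$ versus $c_{j}=k/p$ in the even case), exactly one of them is compatible with the parity of $k/p$, and that selects the type of $f^{*}$ in terms of the type of $f$. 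This is precisely the paper's proof. Your product-identity reformulation is a tidy way to package the conclusion, but to prove it you still need this parity step; the Lemma~4 case split by itself is circular.
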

			\begin{proof}
				Let $f(x)$ be of type $(+)$, then we have $0\in B_+(f)$. Since for any $x\in\mathbb{F}_{p}^n$, if $x\in B_+(f)\ (\text{respectively}\ B_-(f))$, then $-x\in B_+(f)\ (\text{respectively}\ B_-(f))$, thus $k$ is odd. Now, we discuss in three cases.
				\begin{itemize}
					\item[$\bullet$] $n+s$ is even\\
					Assume that $f^*(x)$ is of type $(-)$. By Lemma 4, we have that for any $0\ne j\in\mathbb{F}_p$, $c_j(f^*)=\frac{k}{p},\ d_j(f^*)=p^{n-s-1}+p^{\frac{n-s}{2}-1}-\frac{k}{p}$, hence, $c_j(f^*)$ and $d_j(f^*)$ are odd. By Lemma 9, we can get that for any $x\in\mathrm{Supp}(\widehat{\chi_f})$, $f^*(x)=f^*(-x)$. By Lemma 10, we have $f^*(0)=0$, so $c_j(f^*)$ and $d_j(f^*)$ should be even, which gives a contradiction. Hence, $f^*(x)$ is of type $(+)$.
					\item[$\bullet$] $p\equiv1\ (\mathrm{mod}\ 4)$ and $n+s$ is odd\\
					Assume that $f^*(x)$ is of type $(-)$. By Lemma 4, we have that for any $0\ne j\in\mathbb{F}_p$, $c_j(f^*)=\frac{k}{p},\ d_j(f^*)=p^{n-s-1}-\eta(j)p^{\frac{n-s}{2}-1}-\frac{k}{p}$, hence, $c_j(f^*)$ and $d_j(f^*)$ are odd.  By Lemma 9, we can get that for any $x\in\mathrm{Supp}(\widehat{\chi_f})$, $f^*(x)=f^*(-x)$. By Lemma 10, we have $f^*(0)=0$,  so $c_j(f^*)$ and $d_j(f^*)$ should be even, which gives a contradiction. Hence, $f^*(x)$ is of type $(+)$.
					\item[$\bullet$] $p\equiv3\ (\mathrm{mod}\ 4)$ and $n+s$ is odd\\
					Assume that $f^*(x)$ is of type $(+)$. By Lemma 4, we have that for any $0\ne j\in\mathbb{F}_p$, $c_j(f^*)=\frac{k}{p},\ d_j(f^*)=p^{n-s-1}+\eta(j)p^{\frac{n-s}{2}-1}-\frac{k}{p}$, hence, $c_j(f^*)$ and $d_j(f^*)$ should be odd.  By Lemma 9, we can get that for any $x\in\mathrm{Supp}(\widehat{\chi_f})$, $f^*(x)=f^*(-x)$. By Lemma 10, we have $f^*(0)=0$, so $c_j(f^*)$ and $d_j(f^*)$ should be even, which gives a contradiction. Hence, $f^*(x)$ is of type $(-)$.
					
					When $f(x)$ is of type $(-)$, the proof is similar, so we omit it here.\qed
				\end{itemize}
			\end{proof}
			\subsection{Linear codes with few weights based on $D_f$}
			\quad Let $f(x):\ \mathbb{F}_p^n\longrightarrow\mathbb{F}_p$ be a non-weakly regular $s$-plateaued function belonging to $\mathcal{NWRF}$. Define set $D_f=\{x\in\mathbb{F}_{p}^{n}\setminus\{{0}\}:\ f(x)=0\}=\{x_1,\ x_2,\cdots,\ x_m\}$. In this subsection, we consider the linear code over $\mathbb{F}_p$ defined by  
			\begin{equation} \mathcal{C}_{D_f}=\{{c}(a)=(a\cdot x_1,\ a\cdot x_2,\cdots,\ a\cdot x_m):\ a\in\mathbb{F}_{p}^n\}.
			\end{equation}
			In the following theorem, we give the weights of codewords of $\mathcal{C}_{D_f}$ when $n+s$ is even.
			\begin{theorem}
				Let $n+s$ be an even integer with $0\le s\le n-4$, and $f(x):\ \mathbb{F}_p^n\longrightarrow\mathbb{F}_p$ be a non-weakly regular $s$-plateaued function belonging to $\mathcal{NWRF}$. Then, $\mathcal{C}_{D_f}$ defined by (2) is a $[p^{n-1}+\epsilon_0(p-1)p^{\frac{n+s}{2}-1}-1,\ n]$ linear code, and the weights of codewords are given as follows.
				\[wt({c}(a))=
				\begin{cases}
				0,& \text{if}\ a=0;\\
				(p-1)(p^{n-2}+\epsilon_0(p-1)p^{\frac{n+s}{2}-2}),&\text{if}\  a\notin\mathrm{Supp}(\widehat{\chi_f});\\
				(p-1)(p^{n-2}+(\epsilon_0-\epsilon_a)(p-1)p^{\frac{n+s}{2}-2}),& \text{if}\ a\in \mathrm{Supp}(\widehat{\chi_f}),\\&f^*(a)=0;\\
				(p-1)(p^{n-2}+(\epsilon_0(p-1)+\epsilon_a)p^{\frac{n+s}{2}-2}),& \text{if}\ a\in \mathrm{Supp}(\widehat{\chi_f}),\\&f^*(a)\ne0.
				\end{cases}\]
				The dual code $\mathcal{C}_{D_f}^{\bot}$ is a $[p^{n-1}+\epsilon_0(p-1)p^{\frac{n+s}{2}-1}-1,\ p^{n-1}+\epsilon_0(p-1)p^{\frac{n+s}{2}-1}-1-n,\ 2]$ linear code. 
			\end{theorem}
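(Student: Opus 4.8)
The plan is to get the weight of each codeword by counting zeros of the associated linear form on $D_f$, and then to read off the length, the dimension and the dual distance. Since $f(0)=0$ and $0\notin D_f$, for $a\in\mathbb{F}_p^n$ we have $wt(c(a))=\#D_f-\#\{x\in D_f:\ a\cdot x=0\}=N_0(f)-N(a)$, where $N(a):=\#\{x\in\mathbb{F}_p^n:\ f(x)=0,\ a\cdot x=0\}$ and $\#D_f=N_0(f)-1$. First I would pin down $N_0(f)$: as already observed, $f\in\mathcal{NWRF}$ is unbalanced, so it has a well-defined type; setting $\epsilon_0=1$ if $f$ is of type $(+)$ and $\epsilon_0=-1$ if of type $(-)$, and using $f^*(0)=0$ (Lemma 10) together with $\epsilon=1$ (because $n+s$ even forces $p^{n+s}\equiv1\pmod{4}$), one gets $\widehat{\chi_f}(0)=\epsilon_0 p^{(n+s)/2}$, whence $N_0(f)=p^{-1}(p^n+\sum_{y\in\mathbb{F}_p^{\times}}\sigma_y(\widehat{\chi_f}(0)))=p^{n-1}+\epsilon_0(p-1)p^{(n+s)/2-1}$, which is exactly the stated length plus one.

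The case $a=0$ gives the zero codeword. For $a\neq0$ I would expand $N(a)$ as a double exponential sum over $\mathbb{F}_p\times\mathbb{F}_p$, split off the frequency $y=0$ (which contributes nothing for $z\neq0$, since $\sum_x\xi_p^{z(a\cdot x)}=0$ there), rewrite the remaining inner sums as Walsh values, and substitute $v=-z/y$, reaching
\[
N(a)=\frac{N_0(f)}{p}+\frac{1}{p^{2}}\,\Sigma(a),\qquad \Sigma(a):=\sum_{y\in\mathbb{F}_p^{\times}}\sum_{v\in\mathbb{F}_p^{\times}}\sigma_y(\widehat{\chi_f}(va)),
\]
so that $wt(c(a))=\tfrac{p-1}{p}N_0(f)-\tfrac{1}{p^{2}}\Sigma(a)$ and the whole computation reduces to evaluating $\Sigma(a)$. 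This is where the three regimes of the theorem emerge. If $a\notin\mathrm{Supp}(\widehat{\chi_f})$, then by Lemma 9 every $va$ with $v\neq0$ is also off the Walsh support, so $\Sigma(a)=0$. If $a\in\mathrm{Supp}(\widehat{\chi_f})$, then by Lemma 9 and the Remark following it, $va\in\mathrm{Supp}(\widehat{\chi_f})$, $\epsilon_{va}=\epsilon_a$ and $f^*(va)=v^{h}f^*(a)$ for the even integer $h$ of Lemma 9, so $\widehat{\chi_f}(va)=\epsilon_a p^{(n+s)/2}\xi_p^{v^{h}f^*(a)}$ and $\Sigma(a)=\epsilon_a p^{(n+s)/2}\sum_{v\in\mathbb{F}_p^{\times}}\sum_{y\in\mathbb{F}_p^{\times}}\xi_p^{y v^{h}f^*(a)}$.

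The step I expect to require the most care is this last double sum: because $h$ is even, $v\mapsto v^{h}$ is \emph{not} a permutation of $\mathbb{F}_p^{\times}$, so one should resist changing variables in $v$ and instead perform the $y$-summation first, for each fixed $v$. For fixed $v\neq0$ the exponent $y v^{h}f^*(a)$ is identically $0$ in $y$ when $f^*(a)=0$ and runs over all nonzero residues when $f^*(a)\neq0$, so by Lemma 1(2) the inner sum is $p-1$ in the first case and $-1$ in the second, uniformly in $v$. Hence $\Sigma(a)=(p-1)^{2}\epsilon_a p^{(n+s)/2}$ when $a\in\mathrm{Supp}(\widehat{\chi_f})$ and $f^*(a)=0$, and $\Sigma(a)=-(p-1)\epsilon_a p^{(n+s)/2}$ when $a\in\mathrm{Supp}(\widehat{\chi_f})$ and $f^*(a)\neq0$. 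Substituting these three values of $\Sigma(a)$ into $wt(c(a))=\tfrac{p-1}{p}N_0(f)-\tfrac{1}{p^{2}}\Sigma(a)$ with $N_0(f)=p^{n-1}+\epsilon_0(p-1)p^{(n+s)/2-1}$ and simplifying yields precisely the four displayed weights.

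It then remains to read off the parameters. The length is $\#D_f=N_0(f)-1=p^{n-1}+\epsilon_0(p-1)p^{(n+s)/2-1}-1$. For the dimension I would verify that $wt(c(a))>0$ for every $a\neq0$: each nonzero weight equals $(p-1)p^{(n+s)/2-2}$ times $p^{(n-s)/2}$ plus a fixed term of absolute value at most $2(p-1)$, and the hypothesis $0\le s\le n-4$ gives $(n-s)/2\ge2$, so $p^{(n-s)/2}\ge p^{2}>2(p-1)$ dominates; therefore $a\mapsto c(a)$ is injective and $\dim\mathcal{C}_{D_f}=n$ (the bound $s\le n-4$ is essential here, since the last weight can vanish when $s=n-2$). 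Finally $\mathcal{C}_{D_f}^{\bot}$ has length $\#D_f$ and dimension $\#D_f-n$; it has no identically zero coordinate because $0\notin D_f$, hence $d^{\bot}\ge2$, and the homogeneity of $\mathcal{NWRF}$ shows that $x\in D_f$ implies $\lambda x\in D_f$ for all $\lambda\in\mathbb{F}_p^{\times}$ (as $f(\lambda x)=\lambda^{t}f(x)=0$), so $D_f$ contains two proportional elements and $d^{\bot}=2$.
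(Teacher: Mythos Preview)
Your proof is correct and follows essentially the same approach as the paper: both compute $N(a)=\#\{x:f(x)=0,\ a\cdot x=0\}$ via a double character sum, rewrite the inner sum as a Walsh value at a scalar multiple of $a$, and then invoke the $\mathcal{NWRF}$ properties (Lemma~9 and Remark~6) to evaluate the resulting sum over $\mathbb{F}_p^{\times}\times\mathbb{F}_p^{\times}$. Your choice to sum over $y$ first (so that only the nonvanishing of $v^{h}f^{*}(a)$ is needed, rather than the bijection $y\mapsto y^{-(h-1)}$ coming from $\gcd(h-1,p-1)=1$) is a mild simplification, and your explicit positivity check for the nonzero weights under $s\le n-4$ is a bit more careful than the paper's bare assertion, but the overall argument is the same.
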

			\begin{proof}
				By Lemma 10, we have $f^*(0)=0$. Thus, by Lemma 2, we get $\#D_f=p^{n-1}+\epsilon_0(p-1)p^{\frac{n+s}{2}-1}-1$, then $wt({c}(a))=\#D_f+1-\#\{x\in\mathbb{F}_{p}^n:\ f(x)=0,\ a \cdot x=0\}$. Let $N=\#\{x\in\mathbb{F}_{p}^n:\ f(x)=0,\ a\cdot x=0\}$, then
				\begin{flalign*}
				N&=p^{-2}\sum\limits_{x\in\mathbb{F}_{p}^n}\sum\limits_{y\in\mathbb{F}_p}\sum\limits_{z\in\mathbb{F}_p}\xi_p^{yf(x)-za\cdot x}&&\\
				&=p^{-2}(\sum\limits_{z\in\mathbb{F}_p^{\times}}\sum\limits_{y\in\mathbb{F}_p^{\times}}\sum\limits_{x\in\mathbb{F}_{p}^n}\xi_p^{yf(x)-za\cdot x}+\sum\limits_{y\in\mathbb{F}_p^{\times}}\sum\limits_{x\in\mathbb{F}_{p}^n}\xi_p^{yf(x)}+\sum\limits_{z\in\mathbb{F}_p^{\times}}\sum\limits_{x\in\mathbb{F}_{p}^n}\xi_p^{-za\cdot x})&&\\&+p^{n-2}
				&&\\&=p^{-2}(N_1+N_2+N_3)+p^{n-2},
				\end{flalign*}
				where
				\begin{flalign*}
				N_1&=\sum\limits_{z\in\mathbb{F}_p^{\times}}\sum\limits_{y\in\mathbb{F}_p^{\times}}\sum\limits_{x\in\mathbb{F}_{p}^n}\xi_p^{yf(x)-za\cdot x}&&\\
				&=\sum\limits_{z\in\mathbb{F}_p^{\times}}\sum\limits_{y\in\mathbb{F}_p^{\times}}\sigma_y(\sum\limits_{x\in\mathbb{F}_{p}^n}\xi_p^{f(x)-y^{-1}za\cdot x})&&\\
				&=\sum\limits_{z\in\mathbb{F}_p^{\times}}\sum\limits_{y\in\mathbb{F}_p^{\times}}\sigma_y(\widehat{\chi_f}(y^{-1}za)),&&\\
				N_2&=\sum\limits_{y\in\mathbb{F}_p^{\times}}\sum\limits_{x\in\mathbb{F}_{p}^n}\xi_p^{yf(x)}&&\\
				&=\sum\limits_{y\in\mathbb{F}_p^{\times}}\sigma_y(\epsilon_0p^{\frac{n+s}{2}}\xi_p^{f^*(0)})
				=\epsilon_0(p-1)p^{\frac{n+s}{2}},&&\\
				 N_3&=\sum\limits_{z\in\mathbb{F}_p^{\times}}\sum\limits_{x\in\mathbb{F}_{p}^n}\xi_p^{-za\cdot x}=
				\begin{cases}
				(p-1)p^n,&\text{if}\ a=0;\\
				0,&\text{if}\ a\ne0.
				\end{cases}
				\end{flalign*}
				
				Clearly, when $a=0$, $wt({c}(a))=0$. When $a\ne0$, we discuss in two cases.
				\begin{itemize}
					\item[$\bullet$] When $a\notin\mathrm{Supp}(\widehat{\chi_f})$, then for any $y,\ z\in \mathbb{F}_p^{\times}$, we have $ y^{-1}za\notin \mathrm{Supp}(\widehat{\chi_f})$, and  $N_1=0$, so  $wt({c}(a))=(p-1)(p^{n-2}+\epsilon_0(p-1)p^{\frac{n+s}{2}-2})$.
					\item[$\bullet$] When $a\in\mathrm{Supp}(\widehat{\chi_f})$, then for any $y,\ z\in \mathbb{F}_p^{\times}$, we have $y^{-1}za\in \mathrm{Supp}(\widehat{\chi_f})$, and  $N_1=\sum\limits_{z\in\mathbb{F}_p^{\times}}\sum\limits_{y\in\mathbb{F}_p^{\times}}\sigma_y(\epsilon_{y^{-1}za}p^{\frac{n+s}{2}}\xi_p^{f^*(y^{-1}za)}).$
					By Lemma 9 and Remark 6, we have $\epsilon_{y^{-1}za}=\epsilon_a$ and $f^*(y^{-1}za)=y^{-h}z^hf^*(a)$, so $N_1=\sum\limits_{z\in\mathbb{F}_p^{\times}}\sum\limits_{y\in\mathbb{F}_p^{\times}}\epsilon_a\\p^{\frac{n+s}{2}}\xi_p^{y^{-(h-1)}z^hf^*(a)}$. If $f^*(a)=0$,  then $N_1=\epsilon_a(p-1)^2p^{\frac{n+s}{2}}$,  so $wt({c}(a))=(p-1)(p^{n-2}+(\epsilon_0-\epsilon_a)(p-1)p^{\frac{n+s}{2}-2}).$ Note that $\mathrm{gcd}(h-1,\ p-1)=1$, so when $y$ runs through $\mathbb{F}_p^{\times}$, then $y^{-(h-1)}$ runs through $\mathbb{F}_p^{\times}$. By Lemma 1, if $f^*(a)\ne0$, then we have $N_1=-\epsilon_a(p-1)p^{\frac{n+s}{2}}$, so $wt({c}(a))=(p-1)(p^{n-2}+(\epsilon_0(p-1)+\epsilon_a)p^{\frac{n+s}{2}-2}).$
				\end{itemize}
				
				From the above discussion, we can get the weights of codewords of $\mathcal{C}_{D_f}$. Note that  $wt({c}(a))=0$ if and only if $a=0$, thus the number of codewords  of $\mathcal{C}_{D_f}$ is $p^n$, and the dimension of  $\mathcal{C}_{D_f}$ is $n$. Clearly, the dimension of dual code $\mathcal{C}_{D_f}^{\bot}$ is $p^{n-1}+\epsilon_0(p-1)p^{\frac{n+s}{2}-1}-1-n$. Assume that the minimum distance $d^{\bot}$ of $\mathcal{C}_{D_f}^{\bot}$ is 1, then there exist $u\in \mathbb{F}_p^{\times}$ and $x\in D_f$ such that $ux\cdot a=0$ for any $a\in\mathbb{F}_{p}^n$. Since $x\ne 0$, then we arrive at a contradiction. Thus, $d^{\bot}\ne 1$. Due to $f(x)\in\mathcal{NWRF}$, we have $f(x)=f(-x)$, so if $x\in D_f$, then $-x\in D_f$. Since $x\ne-x$ for $x\in D_f$, then we have $x\cdot a+ (-x)\cdot a=0$ for any $a\in\mathbb{F}_{p}^n$, which implies that $d^{\bot}=2$.\qed
			\end{proof}
			\begin{remark}
				By Lemma 7, we have that the linear code $\mathcal{C}_{D_f}$ in Theorem 3 is minimal for $0\le s\le n-6$.
			\end{remark}
			
			For a subclass of non-weakly regular $s$-plateaued functions belonging to $\mathcal{NWRF}$, we determine the weight distributions of linear codes given by Theorem 3 in the following proposition.
			\begin{proposition}
				Let $n+s$ be an even integer with $0\le s\le n-4$,  $f(x):\ \mathbb{F}_p^n\longrightarrow\mathbb{F}_p$ be a non-weakly regular $s$-plateaued function belonging to $\mathcal{NWRF}$ with $\#B_+(f)=k\ (k\ne 0\ \text{and}\ k\ne p^{n-s})$, and the dual $f^*(x)$ of $f(x)$ be  bent relative to $\mathrm{Supp}(\widehat{\chi_f})$. Then the weight distribution of $\mathcal{C}_{D_f}$ is given by Tables 5 and 6. 
			\end{proposition}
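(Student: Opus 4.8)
The strategy is to take the codeword weights of $\mathcal{C}_{D_f}$ verbatim from Theorem~3 and to count, for each weight value, the number of $a\in\mathbb{F}_p^n$ that realises it, feeding in the distribution of $f^*$ on $B_+(f)$ and $B_-(f)$ provided by Lemma~4. Two preliminary remarks set things up. First, since $n+s$ is even, $p^{n+s}$ is an odd perfect square, so $p^{n+s}\equiv 1\pmod 4$; by Lemma~11 the types of $f$ and $f^*$ then agree, so $0\in B_+(f^*)$ exactly when $f$ is of type $(+)$, i.e.\ when the sign $\epsilon_0$ occurring in Theorem~3 equals $1$, and $0\in B_-(f^*)$ exactly when $\epsilon_0=-1$; this is the dichotomy separating Tables~5 and~6. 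Second, Lemma~10 gives $f^*(0)=0$, and since $f\in\mathcal{NWRF}$ is unbalanced we have $0\in\mathrm{Supp}(\widehat{\chi_f})$, so with $\#\mathrm{Supp}(\widehat{\chi_f})=p^{n-s}$ (Parseval) the number of nonzero $a$ lying outside $\mathrm{Supp}(\widehat{\chi_f})$ is $p^n-p^{n-s}$.

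I would then refine the third and fourth cases of Theorem~3 according to whether $a\in B_+(f)$ ($\epsilon_a=1$) or $a\in B_-(f)$ ($\epsilon_a=-1$). Together with the case $a\notin\mathrm{Supp}(\widehat{\chi_f})$ this yields at most five nonzero weights, each of the form $(p-1)\bigl(p^{n-2}+c\,p^{\frac{n+s}{2}-2}\bigr)$ with $c$ one of $\epsilon_0(p-1)$, $0$, $2\epsilon_0(p-1)$, $\epsilon_0(p-1)+1$, $\epsilon_0(p-1)-1$; since $p\ge 3$ and $s\le n-4$ makes $p^{\frac{n+s}{2}-2}$ a positive integer, these coefficients are pairwise distinct, so the weights listed in Tables~5 and~6 are genuinely distinct (a row may still carry multiplicity $0$ for special values of $k$).

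For the multiplicities, the weight coming from $a\in B_+(f)$ with $f^*(a)=0$ is realised by $\#\{x\in B_+(f):f^*(x)=0\}=c_{j_0}(f^*)$ vectors, where $j_0=f(0)=0$, from which I subtract $1$ in the sub-case $\epsilon_0=1$, because then $a=0\in B_+(f)$ lies in this set and has already been counted in $A_0=1$; symmetrically the weight from $a\in B_-(f)$ with $f^*(a)=0$ has multiplicity $d_{j_0}(f^*)$, less $1$ when $\epsilon_0=-1$. The weight from $a\in B_+(f)$ (resp.\ $B_-(f)$) with $f^*(a)\ne0$ has multiplicity $\sum_{0\ne j\in\mathbb{F}_p}c_j(f^*)=(p-1)c_j(f^*)$ (resp.\ $(p-1)d_j(f^*)$) for any fixed $j\ne j_0$, with no correction since $f^*(0)=0$. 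Substituting the explicit values of $c_{j_0}$, $d_{j_0}$, $c_j$, $d_j$ from Lemma~4 in the two cases $0\in B_+(f^*)$ and $0\in B_-(f^*)$ then produces Tables~5 and~6 respectively. As a consistency check I would confirm $\sum_w A_w=p^n$: this reduces to $c_{j_0}(f^*)+d_{j_0}(f^*)+\sum_{0\ne j}\bigl(c_j(f^*)+d_j(f^*)\bigr)=\#\mathrm{Supp}(\widehat{\chi_f})=p^{n-s}$ together with the $p^n-p^{n-s}$ vectors outside the Walsh support, the two $-1$ corrections cancelling against the single $a=0$ accounted for in $A_0$.

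The work here is bookkeeping rather than estimation, and that is also where the main pitfalls lie: one must pin down which of Lemma~4's two cases, $0\in B_+(f^*)$ or $0\in B_-(f^*)$, corresponds to $\epsilon_0=\pm1$ in Theorem~3 (this is exactly where the hypothesis that $n+s$ is even enters, via Lemma~11), decide for each of the five weight classes whether the trivial vector $a=0$ falls into it so as to remove it, and make sure the five weights are distinct so their multiplicities are not conflated. A sign error in the type identification or an off-by-one in the $a=0$ accounting would corrupt the entire table, so the total-count verification $\sum_w A_w=p^n$ is worth carrying out explicitly.
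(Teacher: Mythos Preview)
Your proposal is correct and follows precisely the route the paper intends: the paper's own proof is the single sentence ``The proof is quite straightforward from Lemma 4 and Theorem 3, so we omit it,'' and you have carried out exactly that computation. In fact you are more careful than the paper, since you make explicit the use of Lemma~11 (types of $f$ and $f^*$ coincide when $n+s$ is even) to match the $\epsilon_0=\pm1$ dichotomy of Theorem~3 with the $0\in B_\pm(f^*)$ dichotomy of Lemma~4, and you correctly handle the $a=0$ correction; the only minor slip is the phrase ``the two $-1$ corrections,'' since in each table exactly one such correction occurs.
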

			\begin{proof}
				The proof is quite straightforward from Lemma 4 and Theorem 3, so we omit it.\qed
			\end{proof}
			\begin{table}
				\caption{The weight distribution of $\mathcal{C}_{D_f}$ in Proposition 3  when $n+s$ is even and $0\in B_+(f)$}
				\begin{tabular}{|l|l|}
					\hline
					Weight &Multiplicity\\\hline
					0&1\\
					\hline
					$(p-1)(p^{n-2}+(p-1)p^{\frac{n+s}{2}-2})$&$p^n-p^{n-s}$\\
					\hline
					$(p-1)p^{n-2}$&$\frac{k}{p}+(p-1)p^{\frac{n-s}{2}-1}-1$\qquad\qquad\qquad\\
					\hline
					$(p-1)(p^{n-2}+2(p-1)p^{\frac{n+s}{2}-2})$&$p^{n-s-1}-\frac{k}{p}$\\
					\hline
					$(p-1)(p^{n-2}+p^{\frac{n+s}{2}-1})$&$(p-1)(\frac{k}{p}-p^{\frac{n-s}{2}-1})$\\
					\hline
					$(p-1)(p^{n-2}+(p-2)p^{\frac{n+s}{2}-2})$&$ (p-1)(p^{n-s-1}-\frac{k}{p})$\\
					\hline
				\end{tabular}
			\end{table}
			\begin{table}
				\caption{The weight distribution of $\mathcal{C}_{D_f}$ in Proposition 3  when $n+s$ is even and $0\in B_-(f)$}
				\begin{tabular}{|l|l|}
					\hline
					Weight &Multiplicity\\\hline
					0&1\\\hline
					$(p-1)(p^{n-2}-(p-1)p^{\frac{n+s}{2}-2})$&$p^n-p^{n-s}$\\
					\hline
					$(p-1)(p^{n-2}-2(p-1)p^{\frac{n+s}{2}-2})$&$\frac{k}{p}$\\
					\hline
					$(p-1)p^{n-2}$&$p^{n-s-1}-(p-1)p^{\frac{n-s}{2}-1}-\frac{k}{p}-1$\\
					\hline
					$(p-1)(p^{n-2}-(p-2)p^{\frac{n+s}{2}-2})$&$(p-1)\frac{k}{p}$\\
					\hline
					$(p-1)(p^{n-2}-p^{\frac{n+s}{2}-1})$&$(p-1)(p^{n-s-1}+p^{\frac{n-s}{2}-1}-\frac{k}{p})$\\
					\hline
				\end{tabular}
			\end{table}
			
			Let $f(x)\in\mathcal{NWRF}$. Note that for any $x\in \mathbb{F}_{p}^n$, $f(x)=0$ if and only if $f(ax)=f(x)=0$ for any $a\in\mathbb{F}_p^{\times}$. Then we can select a subset $\widetilde{D_f}$ of $D_f$ such that $\bigcup\limits_{a\in \mathbb{F}_p^{\times}}a\widetilde{D_f}$ is just a partition of $D_f$. Let $\widetilde{D}_f=\{\widetilde{x_1},\ \widetilde{x_2},\cdots,\ \widetilde{x_l}\}$.
			Now, we consider the punctured code $\widetilde{\mathcal{C}}_{\widetilde{D_f}}$ defined by
			\begin{equation}
			\widetilde{\mathcal{C}}_{\widetilde{D_f}}=\{\widetilde{{c}}(a)=(a\cdot \widetilde{x_1},\ a\cdot \widetilde{x_2},\cdots,\ a\cdot \widetilde{x_l}):\ a\in\mathbb{F}_{p}^n\}.
			\end{equation}
			 Note that the Hamming weights and length of the punctured code $\widetilde{\mathcal{C}}_{\widetilde{D_f}}$ are given directly from those of $\mathcal{C}_{D_f}$ by dividing them with $(p-1)$. Hence, we have the following corollary.
			\begin{corollary}
				The punctured code $\widetilde{\mathcal{C}}_{\widetilde{D_f}}$ defined by (3)
				of $\mathcal{C}_{D_f}$ in Proposition 3 is a $[\frac{p^{n-1}+\epsilon_0(p-1)p^{\frac{n+s}{2}-1}-1}{p-1},\ n]$ linear code and the weight distribution of $\widetilde{\mathcal{C}}_{\widetilde{D_f}}$ is given by Tables 7 and 8. The dual code $\widetilde{\mathcal{C}}_{\widetilde{D_f}}^{\bot}$ of $\widetilde{\mathcal{C}}_{\widetilde{D_f}}$ is a $[\frac{p^{n-1}+\epsilon_0(p-1)p^{\frac{n+s}{2}-1}-1}{p-1},\\ \frac{p^{n-1}+\epsilon_0(p-1)p^{\frac{n+s}{2}-1}-1}{p-1}-n,\ 3]$ linear code, which is almost optimal according to the sphere packing bound.
			\end{corollary}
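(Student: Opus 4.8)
The plan is to read off every claim from the single structural remark already stated just before the corollary, combined with Proposition~3. First I would make that remark precise: because $\bigcup_{a\in\mathbb{F}_p^{\times}}a\widetilde{D_f}$ is a partition of $D_f$, each coordinate $\widetilde{x_i}$ of $\widetilde{\mathcal{C}}_{\widetilde{D_f}}$ corresponds to the block of $p-1$ coordinates $\{b\widetilde{x_i}:b\in\mathbb{F}_p^{\times}\}$ of $\mathcal{C}_{D_f}$, and the entry of $c(a)$ at position $b\widetilde{x_i}$ is $b(a\cdot\widetilde{x_i})$, so these $p-1$ entries are all zero or all nonzero together. Hence $\mathrm{wt}(c(a))=(p-1)\mathrm{wt}(\widetilde{c}(a))$ for all $a$, and the length of $\widetilde{\mathcal{C}}_{\widetilde{D_f}}$ is $\#\widetilde{D_f}=\#D_f/(p-1)=\frac{p^{n-1}+\epsilon_0(p-1)p^{(n+s)/2-1}-1}{p-1}$. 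Dividing every weight in Tables~5 and~6 by $p-1$ and leaving the multiplicities untouched yields Tables~7 and~8, and since $\widetilde{c}(a)=0\iff c(a)=0\iff a=0$ by Theorem~3, the code has $p^n$ codewords, so its dimension is $n$.

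Next I would treat the dual code. As in the proof of Theorem~3, a vector $v$ lies in $\widetilde{\mathcal{C}}_{\widetilde{D_f}}^{\bot}$ iff $a\cdot\big(\sum_i v_i\widetilde{x_i}\big)=0$ for all $a\in\mathbb{F}_p^n$, i.e. iff $\sum_i v_i\widetilde{x_i}=0$; thus weight-$w$ dual codewords are the linear dependences among $w$ of the vectors $\widetilde{x_i}$, and the dimension of $\widetilde{\mathcal{C}}_{\widetilde{D_f}}^{\bot}$ is $\#\widetilde{D_f}-n$. No $\widetilde{x_i}$ is $0$, so $A_1^{\bot}=0$; and since $\widetilde{D_f}$ picks exactly one representative from each line of $D_f$, no two $\widetilde{x_i}$ are proportional, so $A_2^{\bot}=0$; hence $d^{\bot}\ge3$. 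To upgrade this to $d^{\bot}=3$ I would show $A_3^{\bot}>0$ by substituting $A_1^{\bot}=A_2^{\bot}=0$ together with the (now explicit) weight distribution of $\widetilde{\mathcal{C}}_{\widetilde{D_f}}$ into the third Pless power moment; this expresses $A_3^{\bot}$ as an explicit function of $p,n,s,k$, which one then checks is strictly positive, separately for $\epsilon_0=1$ and $\epsilon_0=-1$ (equivalently: some three lines of $D_f$ lie in a common $2$-dimensional subspace).

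For the ``almost optimal'' assertion I would apply the sphere packing bound (Lemma~6) to a hypothetical $[\#\widetilde{D_f},\#\widetilde{D_f}-n,d']$ code: for $d'=5$ it would force $p^{n}\ge1+\#\widetilde{D_f}\,(p-1)+\binom{\#\widetilde{D_f}}{2}(p-1)^2$. Since $0\le s\le n-4$ forces $n\ge4$, we have $\#\widetilde{D_f}\,(p-1)=p^{n-1}+\epsilon_0(p-1)p^{(n+s)/2-1}-1$ of size about $p^{n-1}$, so the right-hand side is of size about $p^{2n-2}$ and exceeds $p^{n}$; thus no $[\#\widetilde{D_f},\#\widetilde{D_f}-n]$ code of minimum distance $\ge5$ exists, and our code, having minimum distance $3$, is almost optimal.

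The step I expect to be the real work is $d^{\bot}=3$. The bound $d^{\bot}\ge3$ is immediate from the one-representative-per-line construction, but $d^{\bot}\le3$ is not purely combinatorial: the naive count of $2$-dimensional subspaces against the number of lines of $D_f$ only closes the case $\epsilon_0=1$, so one genuinely has to run the Pless-power-moment computation and verify the positivity of $A_3^{\bot}$ uniformly in $p,n,s,k$. That computation is routine but needs to be carried out with some care.
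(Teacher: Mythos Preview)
Your proposal is correct and follows essentially the same route as the paper. The paper obtains the weight distribution of $\widetilde{\mathcal{C}}_{\widetilde{D_f}}$ directly from Proposition~3 (via the division-by-$(p-1)$ observation you spell out), then computes $A_1^{\bot}=A_2^{\bot}=0$ and an explicit positive expression for $A_3^{\bot}$ from the first four Pless power moments, treating the cases $0\in B_+(f)$ and $0\in B_-(f)$ separately (the latter further splitting into $s=0,\ n=4$ versus $n>4$), and finally invokes the sphere packing bound for the almost-optimality claim. The only difference is cosmetic: you obtain $A_1^{\bot}=A_2^{\bot}=0$ combinatorially from the one-representative-per-line structure of $\widetilde{D_f}$ rather than from the Pless identities, which is a cleaner shortcut but leads to the same Pless computation for $A_3^{\bot}>0$ that you correctly flag as the only substantive step.
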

			\begin{proof} The weight distribution of $\widetilde{\mathcal{C}}_{\widetilde{D_f}}$ can be easily obtained by Proposition 3 and the dimension of $\widetilde{\mathcal{C}}_{\widetilde{D_f}}^{\bot}$ is clearly $\frac{p^{n-1}+\epsilon_0(p-1)p^{\frac{n+s}{2}-1}-1}{p-1}-n$. By the first four Pless power moments, we have that 
				when $0\in B_+(f)$, $A_1^{\bot}=A_2^{\bot}=0$ and $A_3^{\bot}=\frac{1}{6p^3}(p^{2n}-p^{n+2}(p+1)+p^4-p^{\frac{3n+s}{2}}(p^2-6p+5)+2kp^{\frac{n+3s}{2}}(p^2-3p+2)-p^{\frac{n+s}{2}+2}(p^2-1)+p^{n+s+2}(p-1))>0$, so the minimum distance of $\widetilde{\mathcal{C}}_{\widetilde{D_f}}^{\bot}$ is 3. When $0\in B_-(f)$, note that $k\ne 0$, thus if $s=0$ and $n=4$, then  $A_1^{\bot}=A_2^{\bot}=0$ and $A_3^{\bot}=\frac{2kp^2(p^2-3p+2)}{6p^3}>0$,  so the minimum distance of $\widetilde{\mathcal{C}}_{\widetilde{D_f}}^{\bot}$ is 3; if  $n>4$, then $A_1^{\bot}=A_2^{\bot}=0$ and $A_3^{\bot}=\frac{1}{6p^3}(p^{2n}-p^{n+2}(p+1)+p^4
				-p^{\frac{3n+s}{2}}(p^2-1)+2kp^{\frac{n+3s}{2}}(p^2-3p+2)+p^{\frac{n+s}{2}}(p^4-p^2)+p^{n+s+2}(p-1))>0$, so the minimum distance of $\widetilde{\mathcal{C}}_{\widetilde{D_f}}^{\bot}$ is 3. According to the sphere packing bound, we have that $\widetilde{\mathcal{C}}_{\widetilde{D_f}}^{\bot}$ is almost optimal.\qed
			\end{proof}
			\begin{table}
				\caption{The weight distribution of $\widetilde{\mathcal{C}}_{\widetilde{D_f}}$ in  Corollary 4 when $n+s$ is even and $0\in B_+(f)$}
				\begin{tabular}{|l|l|}
					\hline
					Weight &Multiplicity\\\hline
					0&1\\
					\hline
					$p^{n-2}+(p-1)p^{\frac{n+s}{2}-2}$&$p^n-p^{n-s}$\\
					\hline
					$p^{n-2}$&$\frac{k}{p}+(p-1)p^{\frac{n-s}{2}-1}-1$\qquad\qquad\qquad\\
					\hline
					$p^{n-2}+2(p-1)p^{\frac{n+s}{2}-2}$&$p^{n-s-1}-\frac{k}{p}$\\
					\hline
					$p^{n-2}+p^{\frac{n+s}{2}-1}$&$(p-1)(\frac{k}{p}-p^{\frac{n-s}{2}-1})$\\
					\hline
					$p^{n-2}+(p-2)p^{\frac{n+s}{2}-2}$&$ (p-1)(p^{n-s-1}-\frac{k}{p})$\\
					\hline
				\end{tabular}
			\end{table}
			\begin{table}
				\caption{The weight distribution of $\widetilde{\mathcal{C}}_{\widetilde{D_f}}$  in Corollary 4 when $n+s$ is even and $0\in B_-(f)$}
				\begin{tabular}{|l|l|}
					\hline
					Weight &Multiplicity\\\hline
					0&1\\\hline
					$p^{n-2}-(p-1)p^{\frac{n+s}{2}-2}$&$p^n-p^{n-s}$\\
					\hline
					$p^{n-2}-2(p-1)p^{\frac{n+s}{2}-2}$&$\frac{k}{p}$\\
					\hline
					$p^{n-2}$&$p^{n-s-1}-(p-1)p^{\frac{n-s}{2}-1}-\frac{k}{p}-1$\\
					\hline
					$p^{n-2}-(p-2)p^{\frac{n+s}{2}-2}$&$(p-1)\frac{k}{p}$\\
					\hline
					$p^{n-2}-p^{\frac{n+s}{2}-1}$&$(p-1)(p^{n-s-1}+p^{\frac{n-s}{2}-1}-\frac{k}{p})$\\
					\hline
				\end{tabular}
			\end{table}
			\begin{remark}
				 By Remark 7, we easily have that the linear code $\widetilde{\mathcal{C}}_{\widetilde{D_f}}$ in Corollary 4 is minimal for $0\le s\le n-6$.
			\end{remark}
			
			Next, we verify Proposition 3 and Corollary 4 by Magma program for the following non-weakly regular plateaued functions.
			\begin{example}
				Consider $f(x):\mathbb{F}_{3}^5\longrightarrow\mathbb{F}_3,\ f(x_1,\ x_2,\ x_3,\ x_4,\ x_5)=2x_1^2x_4^2+2x_1^2+x_2^2+x_3x_4$, which is a non-weakly regular $1$-plateaued function belonging to $\mathcal{NWRF}$ with $0\in B_+(f)$, $k=27$, and the dual $f^*(x)$ of $f(x)$ is bent relative to $\mathrm{Supp}(\widehat{\chi_f})$. Then, $\mathcal{C}_{D_f}$ constructed by (2) is a five-weight linear code with parameters $[98,\ 5,\ 54]$, weight enumerator $1+14z^{54}+36z^{60}+162z^{66}+12z^{72}+18z^{78}$. The dual code $\mathcal{C}_{D_f}^{\bot}$ of $\mathcal{C}_{D_f}$ is a $[98,\ 93,\ 2]$ linear code, which is almost optimal according to the Code Table at http://www.codetables.de/. Moreover, the punctured code $\widetilde{\mathcal{C}}_{\widetilde{D_f}}$ constructed by (3) of $\mathcal{C}_{D_f}$ is a $[49,\ 5,\ 27]$ linear code with weight enumerator $1+14z^{27}+36z^{30}+162z^{33}+12z^{36}+18z^{39}$ and the dual code $\widetilde{\mathcal{C}}_{\widetilde{D_f}}^{\bot}$ is a $[49,\ 44,\ 3]$ linear code which is optimal according to the Code Table at http://www.codetables.de/.
			\end{example}
			\begin{example}
				Consider $f(x):\mathbb{F}_{3}^5\longrightarrow\mathbb{F}_3,\ f(x_1,\ x_2,\ x_3,\ x_4,\ x_5)=x_1^2x_4^2+x_1^2+x_2^2+x_3x_4$, which is a non-weakly regular $1$-plateaued function belonging to $\mathcal{NWRF}$ with $0\in B_-(f)$, $k=54$, and the dual $f^*(x)$ of $f(x)$ is bent relative to $\mathrm{Supp}(\widehat{\chi_f})$. Then, $\mathcal{C}_{D_f}$ constructed by (2) is a five-weight linear code with parameters $[62,\ 5,\ 30]$, weight enumerator $1+18z^{30}+24z^{36}+162z^{42}+36z^{48}+2z^{54}$. The dual code  $\mathcal{C}_{D_f}^{\bot}$ of $\mathcal{C}_{D_f}$ is a $[62,\ 57,\ 2]$ linear code, which is almost optimal according to the Code Table at http://www.codetables.de/. Moreover, the punctured code $\widetilde{\mathcal{C}}_{\widetilde{D_f}}$ constructed by (3) of $\mathcal{C}_{D_f}$ is a $[31,\ 5,\ 15]$ linear code with weight enumerator $1+18z^{15}+24z^{18}+162z^{21}+36z^{24}+2z^{27}$ and the dual code $\widetilde{\mathcal{C}}_{\widetilde{D_f}}^{\bot}$ is a $[31,\ 26,\ 3]$ linear code which is optimal according to the Code Table at http://www.codetables.de/.
			\end{example}
			
			According to Lemma 8, Theorem 3, Remark 7 and Corollary 4, we have the following results on secret sharing schemes.
			\begin{corollary}
				Let $n+s$ be an even integer with $0\le s\le n-6$ and $\mathcal{C}_{D_f}$  be the linear code in Theorem 3 with generator matrix $G=[{g}_0,\ {g}_1,\ \cdots, {g}_{m-1}]$, where $m=p^{n-1}+\epsilon_0(p-1)p^{\frac{n+s}{2}-1}-1$. Then, in the secret sharing scheme based on $\mathcal{C}_{D_f}^{\bot}$ with $d^{\bot}=2$, the number of participants is $m-1$ and there are $p^{n-1}$ minimal access sets. Moreover, if ${g}_j$ is a scalar multiple of ${g}_0$, $1\le j\le m-1$, then participant $P_j$ must be in every minimal access set, or else, 
				participant $P_j$ must be in $(p-1)p^{n-2}$ out of $p^{n-1}$  minimal access sets.
			\end{corollary}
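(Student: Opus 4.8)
The plan is to obtain Corollary 5 as a direct specialization of Lemma 8 to the code $\mathcal{C}_{D_f}$, using the structural facts already proved in Theorem 3 and Remark 7; there is essentially no new computation, only bookkeeping of parameters.

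First I would verify the hypothesis of Lemma 8, namely that $\mathcal{C}_{D_f}$ is minimal. By Remark 7 this holds precisely when $0\le s\le n-6$: in that range the weights listed in Theorem 3, all of the shape $(p-1)p^{n-2}$ perturbed by quantities of order $p^{\frac{n+s}{2}-2}$, satisfy $\tfrac{wt_{\mathrm{min}}}{wt_{\mathrm{max}}}>\tfrac{p-1}{p}$, so the Ashikhmin--Barg criterion of Lemma 7 applies. This is the only place the restriction $s\le n-6$ (rather than the weaker $s\le n-4$ of Theorem 3) is used, and it is exactly what licenses invoking Lemma 8.

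Next I would collect the parameters: by Theorem 3, $\mathcal{C}_{D_f}$ is an $[m,\,n]$ linear code with $m=p^{n-1}+\epsilon_0(p-1)p^{\frac{n+s}{2}-1}-1$, and its dual $\mathcal{C}_{D_f}^{\bot}$ has minimum distance $d^{\bot}=2$. Feeding a code of length $m$ and dimension $n$ into the Massey construction recalled above gives a secret sharing scheme on $m-1$ participants, and the first clause of Lemma 8, with its ``$k$'' equal to $n$, yields $p^{n-1}$ minimal access sets.

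Finally I would read off the $d^{\bot}=2$ case of Lemma 8 verbatim with $k=n$: a generator-matrix column $g_j$ that is a scalar multiple of $g_0$ forces $P_j$ into every minimal access set, whereas a column $g_j$ not a scalar multiple of $g_0$ places $P_j$ in $(p-1)p^{k-2}=(p-1)p^{n-2}$ of the $p^{n-1}$ minimal access sets. The only potential obstacle is checking the minimality hypothesis carefully, i.e. confirming that the Ashikhmin--Barg inequality indeed holds throughout $0\le s\le n-6$; once that is in place, everything else is substitution into Lemma 8.
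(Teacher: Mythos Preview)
Your proposal is correct and mirrors the paper's own argument: the paper simply notes that Corollary 5 follows from Lemma 8, Theorem 3, and Remark 7, and your write-up is exactly the unpacking of that citation with the parameters $k=n$, $d^{\bot}=2$ substituted in. There is nothing to add; the only content is the bookkeeping you describe.
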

			\begin{corollary}
				Let $n+s$ be an even integer with $0\le s\le n-6$ and $\widetilde{\mathcal{C}}_{\widetilde{D_f}}$  be the linear code in Corollary 4 with generator matrix $G=[{g}_0,\ {g}_1,\ \cdots, \ {g}_{m-1}]$, where $m=\frac{p^{n-1}+\epsilon_0(p-1)p^{\frac{n+s}{2}-1}-1}{p-1}$. Then, in the secret sharing scheme based on $\widetilde{\mathcal{C}}_{\widetilde{D_f}}^{\bot}$ with $d^{\bot}=3$, the number of participants is $m-1$ and there are $p^{n-1}$ minimal access sets. Moreover, every participant $P_j$ is involved in $(p-1)p^{n-2}$ out of $p^{n-1}$  minimal access sets.
			\end{corollary}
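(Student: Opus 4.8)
\emph{Proof idea.} The plan is to obtain the corollary as a direct specialization of the general secret-sharing principle of Lemma 8 applied to the minimal code $\widetilde{\mathcal{C}}_{\widetilde{D_f}}$ produced in Corollary 4. First I would assemble the three facts needed to invoke Lemma 8. By Corollary 4, $\widetilde{\mathcal{C}}_{\widetilde{D_f}}$ is an $[m,\ n]$ linear code with $m=\frac{p^{n-1}+\epsilon_0(p-1)p^{\frac{n+s}{2}-1}-1}{p-1}$, and its dual $\widetilde{\mathcal{C}}_{\widetilde{D_f}}^{\bot}$ has minimum distance exactly $d^{\bot}=3$ (this is where the computation of $A_1^{\bot}=A_2^{\bot}=0$ and $A_3^{\bot}>0$ from the first four Pless power moments carried out in the proof of Corollary 4 is used). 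Moreover, the remark following Corollary 4, which is an application of the Ashikhmin-Barg bound (Lemma 7) to the weight distributions in Tables 7 and 8, shows that the standing hypothesis $0\le s\le n-6$ forces $\widetilde{\mathcal{C}}_{\widetilde{D_f}}$ to be minimal. Thus all the hypotheses of Lemma 8 hold, with dimension $k=n$ and $d^{\bot}=3$.

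With these facts in hand I would read off the conclusion. In the Massey construction of the scheme built on $\widetilde{\mathcal{C}}_{\widetilde{D_f}}^{\bot}$, the generator matrix $G=[g_0,\ g_1,\dots,\ g_{m-1}]$ of the minimal code $\widetilde{\mathcal{C}}_{\widetilde{D_f}}$ has $m$ columns; the column $g_0$ carries the secret and the participant set is $\{P_1,\dots,P_{m-1}\}$, so there are $m-1$ participants. Lemma 8 gives $p^{k-1}=p^{n-1}$ minimal access sets. Since $d^{\bot}=3\ge 3$, the second bullet of Lemma 8 applies: for every fixed $t$ with $1\le t\le\min\{k-1,\ d^{\bot}-2\}$, each set of $t$ participants is involved in $(p-1)^tp^{k-(t+1)}$ of the $p^{k-1}$ minimal access sets. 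Here $\min\{k-1,\ d^{\bot}-2\}=\min\{n-1,\ 1\}$, and since $0\le s\le n-6$ forces $n\ge 6$ we have $n-1\ge 5>1$, so this minimum equals $1$. Taking $t=1$ then yields that every participant $P_j$ is involved in $(p-1)p^{n-2}$ of the $p^{n-1}$ minimal access sets, which is exactly the assertion.

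There is essentially no analytic obstacle: all the substantive work --- the weight distribution, the verification $d^{\bot}=3$, and the minimality --- was already done for Corollary 4 and in the remark that follows it, so this corollary is purely a matter of specializing Lemma 8. The only points that deserve a moment's care are the bookkeeping that $\dim\widetilde{\mathcal{C}}_{\widetilde{D_f}}=n$ and that the participant set of the scheme based on $\widetilde{\mathcal{C}}_{\widetilde{D_f}}^{\bot}$ has size $m-1$, together with the elementary observation that the index range $1\le t\le\min\{k-1,\ d^{\bot}-2\}$ collapses to the single value $t=1$ under the hypothesis $0\le s\le n-6$, which is what makes the statement ``every participant'' (rather than ``every $t$-subset of participants'') the sharpest one available.
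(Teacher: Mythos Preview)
Your proposal is correct and follows exactly the same approach as the paper: the paper simply states that the corollary follows from Lemma 8, Corollary 4, and the remark on minimality (Remark 8), without giving any further detail. Your write-up is in fact more explicit than the paper's one-line justification, spelling out why $\min\{k-1,\ d^{\bot}-2\}=1$ and how the conclusion is read off from the second bullet of Lemma 8 with $t=1$.
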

			
			When $n+s$ is odd, we give the weights of codewords of $\mathcal{C}_{D_f}$ in the following theorem.
			\begin{theorem}
				Let $n+s$ be an odd integer with $0\le s\le n-3$, and $f(x):\ \mathbb{F}_p^n\longrightarrow\mathbb{F}_p$ be a non-weakly regular $s$-plateaued function belonging to $\mathcal{NWRF}$. Then, $\mathcal{C}_{D_f}$ defined by (2) is a $[p^{n-1}-1,\ n]$ linear code, and the weights of codewords are given as follows.  
				\[wt({c}(a))=
				\begin{cases}
				0,&\text{if}\ a=0;\\
				(p-1)p^{n-2},&\text{if}\ a\notin\mathrm{Supp}(\widehat{\chi_f})\ \text{or}\\&a\in\mathrm{Supp}(\widehat{\chi_f}),\ f^*(a)=0;\\
				(p-1)(p^{n-2}-\epsilon_ap^{\frac{n+s-5}{2}}p^*\eta(f^*(a))),&\text{if}\ a\in\mathrm{Supp}(\widehat{\chi_f}),\ f^*(a)\ne 0 .
				\end{cases}\]
				The dual code $\mathcal{C}_{D_f}^{\bot}$ is a $[p^{n-1}-1,\ p^{n-1}-1-n,\ 2]$ linear code.
			\end{theorem}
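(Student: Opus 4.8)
The plan is to follow the template of the proof of Theorem~3, changing only the exponential-sum evaluation to account for the odd parity of $n+s$. First I would invoke Lemma~10 to get $f^*(0)=0$; together with $f(0)=0$ this puts $0$ in the role of $j_0$ in Lemma~2, and the $n+s$ odd branch of Lemma~2 gives $N_0(f)=p^{n-1}$, hence $\#D_f=p^{n-1}-1$, which already fixes the length (unlike the even case, no $\epsilon_0$ term survives here, for the reason explained below). Writing $wt(c(a))=\#D_f+1-N$ with $N=\#\{x\in\mathbb{F}_p^n:\ f(x)=0,\ a\cdot x=0\}$, I would expand $N$ by additive characters exactly as in Theorem~3 to get $N=p^{-2}(N_1+N_2+N_3)+p^{n-2}$, where $N_1=\sum_{z,y\in\mathbb{F}_p^\times}\sigma_y(\widehat{\chi_f}(y^{-1}za))$, $N_2=\sum_{y\in\mathbb{F}_p^\times}\sigma_y(\widehat{\chi_f}(0))$, and $N_3=\sum_{z\in\mathbb{F}_p^\times}\sum_{x}\xi_p^{-za\cdot x}$.

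The two subsidiary sums are immediate. Since $f^*(0)=0$ we have $\widehat{\chi_f}(0)=\epsilon_0\sqrt{p^*}\,p^{(n+s-1)/2}$, so $N_2=\epsilon_0\sqrt{p^*}\,p^{(n+s-1)/2}\sum_{y\in\mathbb{F}_p^\times}\eta(y)=0$ by Lemma~1(1); this vanishing (in contrast with the even case, where $N_2=\epsilon_0(p-1)p^{(n+s)/2}$) is precisely why the length and the weight formula lose their $\epsilon_0$-dependence. Also $N_3=(p-1)p^n$ if $a=0$ and $N_3=0$ otherwise, so the case $a=0$ yields $wt(c(0))=0$. For $a\ne0$, using that $\mathrm{Supp}(\widehat{\chi_f})$ is stable under nonzero scalars (established in the proof of Lemma~9): if $a\notin\mathrm{Supp}(\widehat{\chi_f})$ then $N_1=0$ and $wt(c(a))=(p-1)p^{n-2}$; and if $a\in\mathrm{Supp}(\widehat{\chi_f})$ I would substitute $\widehat{\chi_f}(y^{-1}za)=\epsilon_a\sqrt{p^*}\,p^{(n+s-1)/2}\xi_p^{(y^{-1}z)^hf^*(a)}$ (using Lemma~9 and Remark~6 to replace $\epsilon_{y^{-1}za}$ by $\epsilon_a$ and $f^*(y^{-1}za)$ by $(y^{-1}z)^hf^*(a)$), apply $\sigma_y$, and obtain $N_1=\epsilon_a\sqrt{p^*}\,p^{(n+s-1)/2}\sum_{z,y\in\mathbb{F}_p^\times}\eta(y)\xi_p^{y^{1-h}z^hf^*(a)}$. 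When $f^*(a)=0$ the character sum collapses to $(p-1)\sum_{y}\eta(y)=0$, giving $wt(c(a))=(p-1)p^{n-2}$ again.

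The main obstacle, and the only genuinely new computation, is the double character sum $T:=\sum_{z\in\mathbb{F}_p^\times}\sum_{y\in\mathbb{F}_p^\times}\eta(y)\xi_p^{y^{1-h}z^hf^*(a)}$ in the remaining case $f^*(a)\ne0$. Here I would exploit the arithmetic forced by membership in $\mathcal{NWRF}$: $h$ is even and $\gcd(h-1,\ p-1)=1$, so $1-h$ is odd and coprime to $p-1$, whence $y\mapsto y^{1-h}$ is a bijection of $\mathbb{F}_p^\times$ with $\eta(y^{1-h})=\eta(y)$. Substituting $w=y^{1-h}$ in the inner sum turns it into $\sum_{w\in\mathbb{F}_p^\times}\eta(w)\xi_p^{wz^hf^*(a)}=\eta(z^hf^*(a))\sqrt{p^*}=\eta(f^*(a))\sqrt{p^*}$ by Lemma~1(3) together with $\eta(z^h)=1$ ($h$ even), independently of $z$; summing over $z$ gives $T=(p-1)\eta(f^*(a))\sqrt{p^*}$. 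Hence $N_1=\epsilon_a(p-1)\eta(f^*(a))\,p^*\,p^{(n+s-1)/2}$ (using $\sqrt{p^*}\cdot\sqrt{p^*}=p^*$), so $N=p^{n-2}+\epsilon_a(p-1)\eta(f^*(a))\,p^*\,p^{(n+s-5)/2}$ and $wt(c(a))=(p-1)\bigl(p^{n-2}-\epsilon_a\eta(f^*(a))\,p^*\,p^{(n+s-5)/2}\bigr)$, matching the claimed list, with all listed values actually occurring since $\epsilon_a$ and $\eta(f^*(a))$ are independent signs. Finally, the weight formula shows $wt(c(a))=0\iff a=0$, so $a\mapsto c(a)$ is injective and $\mathcal{C}_{D_f}$ has dimension $n$; for the dual, $d^\bot\ge2$ since a weight-one codeword would force a nonzero $x\in D_f$ with $x\cdot a=0$ for all $a$, while $d^\bot\le2$ because $t$ even gives $f(-x)=f(x)$, so $x$ and $-x$ are distinct elements of $D_f$ ($p$ odd) producing a weight-two dual codeword; thus $d^\bot=2$ and $\mathcal{C}_{D_f}^\bot$ is a $[p^{n-1}-1,\ p^{n-1}-1-n,\ 2]$ code.
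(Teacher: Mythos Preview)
Your proposal is correct and follows essentially the same approach as the paper's own proof: the same character-sum decomposition $N=p^{-2}(N_1+N_2+N_3)+p^{n-2}$, the same use of Lemma~9/Remark~6 to reduce $N_1$ to a sum over $y,z\in\mathbb{F}_p^\times$, and the same bijection $y\mapsto y^{1-h}$ (with $1-h$ odd and coprime to $p-1$) combined with Lemma~1 to evaluate the remaining Gauss-type sum. The paper's argument for the dual minimum distance is likewise identical to yours, simply referring back to the discussion in Theorem~3.
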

			\begin{proof}
				By Lemma 10 we have $f^*(0)=0$. Thus, by Lemma 2, we get $\#D_f=p^{n-1}-1$, then $wt({c}(a))=\#D_f+1-\#\{x\in\mathbb{F}_{p}^n:f(x)=0,\ a\cdot x=0\}$. Let $N=\#\{x\in\mathbb{F}_{p}^n:f(x)=0,\ a\cdot x=0\}$, then
				\begin{flalign*}
				N&=p^{-2}\sum\limits_{x\in\mathbb{F}_{p}^n}\sum\limits_{y\in\mathbb{F}_p}\sum\limits_{z\in\mathbb{F}_p}\xi_p^{yf(x)-za\cdot x}&&\\
				&=p^{-2}(\sum\limits_{z\in\mathbb{F}_p^{\times}}\sum\limits_{y\in\mathbb{F}_p^{\times}}\sum\limits_{x\in\mathbb{F}_{p}^n}\xi_p^{yf(x)-za\cdot x}+\sum\limits_{y\in\mathbb{F}_p^{\times}}\sum\limits_{x\in\mathbb{F}_{p}^n}\xi_p^{yf(x)}+\sum\limits_{z\in\mathbb{F}_p^{\times}}\sum\limits_{x\in\mathbb{F}_{p}^n}\xi_p^{-za\cdot x})&&\\&+p^{n-2}&&\\
				&=p^{-2}(N_1+N_2+N_3)+p^{n-2},&&\\
				&\text{where}&&\\
				N_1&=\sum\limits_{z\in\mathbb{F}_p^{\times}}\sum\limits_{y\in\mathbb{F}_p^{\times}}\sum\limits_{x\in\mathbb{F}_{p}^n}\xi_p^{yf(x)-za\cdot x}&&\\
				&=\sum\limits_{z\in\mathbb{F}_p^{\times}}\sum\limits_{y\in\mathbb{F}_p^{\times}}\sigma_y(\sum\limits_{x\in\mathbb{F}_{p}^n}\xi_p^{f(x)-y^{-1}za\cdot x})&&\\
				&=\sum\limits_{z\in\mathbb{F}_p^{\times}}\sum\limits_{y\in\mathbb{F}_p^{\times}}\sigma_y(\widehat{\chi_f}(y^{-1}za)),\end{flalign*}
				\begin{flalign*}
				N_2&=\sum\limits_{y\in\mathbb{F}_p^{\times}}\sum\limits_{x\in\mathbb{F}_{p}^n}\xi_p^{yf(x)}&&\\
				&=\sum\limits_{y\in\mathbb{F}_p^{\times}}\sigma_y(\epsilon_0\sqrt{p^*}p^{\frac{n+s-1}{2}}\xi_p^{f^*(0)})&&\\
				&=\sum\limits_{y\in\mathbb{F}_p^{\times}}\epsilon_0\sqrt{p^*}p^{\frac{n+s-1}{2}}\xi_p^{f^*(0)}\eta(y)=0
				&&\\
				N_3&=\sum\limits_{z\in\mathbb{F}_p^{\times}}\sum\limits_{x\in\mathbb{F}_{p}^n}\xi_p^{-za\cdot x}=
				\begin{cases}
				(p-1)p^n,&\text{if}\ a=0;\\
				0,&\text{if}\ a\ne0.
				\end{cases}
				\end{flalign*}
				
				Clearly, when $a=0$, $wt({c}(a))=0$. When $a\ne0$, we discuss in two cases.
				\begin{itemize}
					\item[$\bullet$] When $a\notin\mathrm{Supp}(\widehat{\chi_f})$, then for any $y,\ z\in \mathbb{F}_p^{\times}$, we have $y^{-1}za\notin \mathrm{Supp}(\widehat{\chi_f})$, $N_1=0$, so $wt({c}(a))=(p-1)p^{n-2}$.
					\item[$\bullet$] When $a\in\mathrm{Supp}(\widehat{\chi_f})$, then for any $y,\ z\in \mathbb{F}_p^{\times}$, we have $y^{-1}za\in \mathrm{Supp}(\widehat{\chi_f})$, $N_1=\sum\limits_{z\in\mathbb{F}_p^{\times}}\sum\limits_{y\in\mathbb{F}_p^{\times}}\sigma_y(\epsilon_{y^{-1}za}\sqrt{p^*}p^{\frac{n+s-1}{2}}\xi_p^{f^*(y^{-1}za)}).$
					By Lemma 9 and Remark 6, we have $\epsilon_{y^{-1}za}=\epsilon_a$, and $f^*(y^{-1}za)=y^{-h}z^hf^*(a)$, so $N_1=\sum\limits_{z\in\mathbb{F}_p^{\times}}\sum\limits_{y\in\mathbb{F}_p^{\times}}\epsilon_a\\\sqrt{p^*}p^{\frac{n+s-1}{2}}\eta(y)\xi_p^{y^{-(h-1)}z^hf^*(a)}$. If $f^*(a)=0$, then $N_1=0$, so $wt({c}(a))=(p-1)p^{n-2}.$ Note that $\mathrm{gcd}(h-1,\ p-1)=1$, so when $y$ runs through $\mathbb{F}_p^{\times}$, $y^{-(h-1)}$ runs through $\mathbb{F}_p^{\times}$. Since $h$ is even, then $\eta(y)=\eta(y^{-(h-1)})$ and $ \eta(z^h)=1$. By Lemma 1, if $f^*(a)\ne0$, we have $N_1=\epsilon_a(p-1)p^{\frac{n+s-1}{2}}p^*\eta(f^*(a))$, so $wt({c}(a))=(p-1)(p^{n-2}-\epsilon_ap^{\frac{n+s-5}{2}}p^*\eta(f^*(a))).$
				\end{itemize}
				
				From the above discussion, we can get the weights of codewords of  $\mathcal{C}_{D_f}$. Note that $wt({c}(a))=0$ if and only if $a=0$, thus the number of codewords of $\mathcal{C}_{D_f}$ is $p^n$ and the dimension of $\mathcal{C}_{D_f}$ is $n$. Clearly, the dimension of dual code $\mathcal{C}_{D_f}^{\bot}$ is $p^{n-1}-1-n$. By the discussion in the proof of Theorem 3, we can easily get that the minimum distance of $\mathcal{C}_{D_f}^{\bot}$ is 2.\qed
			\end{proof}
			\begin{remark}
				By Lemma 7, we have that the linear code $\mathcal{C}_{D_f}$ in Theorem 4 is minimal for $0\le s\le n-5$.
			\end{remark}
			
			For a subclass of non-weakly regular $s$-plateaued functions belonging to $\mathcal{NWRF}$, we determine the weight distributions of linear codes given by Theorem 4 in the following proposition.
			\begin{proposition}
				Let $n+s$ be an odd integer with $0\le s\le n-3$,  $f(x):\ \mathbb{F}_p^n\longrightarrow\mathbb{F}_p$ be a non-weakly regular $s$-plateaued function belonging to $\mathcal{NWRF}$ with $\#B_+(f)=k\ (k\ne 0\ \text{and}\ k\ne p^{n-s})
				$, and the dual $f^*(x)$ of $f(x)$ be bent relative to $\mathrm{Supp}(\widehat{\chi_f})$. Then the weight distribution of $\mathcal{C}_{D_f}$ is given by Table 9. 
			\end{proposition}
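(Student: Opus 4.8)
The proof follows the same template as that of Proposition~3: Theorem~4 already lists the nonzero values taken by $wt(c(a))$ together with the sets of $a\in\mathbb{F}_p^n$ realizing each of them, so only a counting step remains, carried out with the value‑distribution lemmas for $f^*$. The plan is to split by the cases of the weight formula in Theorem~4 and in each case count the contributing $a$.

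First, $wt(c(a))=0$ occurs only for $a=0$, so $A_0=1$. Next, by Theorem~4 the value $(p-1)p^{n-2}$ is attained precisely by the $a\notin\mathrm{Supp}(\widehat{\chi_f})$ and by the $a\in\mathrm{Supp}(\widehat{\chi_f})$ with $f^*(a)=0$, apart from $a=0$. Since $\#\mathrm{Supp}(\widehat{\chi_f})=p^{n-s}$, there are $p^n-p^{n-s}$ values of $a$ of the first kind (none of which is $0$, because $f$ is unbalanced and hence $0\in\mathrm{Supp}(\widehat{\chi_f})$). For the second kind, $f(0)=0$ since $f\in\mathcal{NWRF}$, so $0$ is the relevant base value $j_0$, and Lemma~3 applied to $f^*$ (here $n-s$ is odd) gives $N_0(f^*)=p^{n-s-1}$; by Lemma~10 the excluded point $a=0$ is one of these, so it contributes $p^{n-s-1}-1$ further values of $a$. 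Hence the multiplicity of the weight $(p-1)p^{n-2}$ should be $p^n-p^{n-s}+p^{n-s-1}-1$.

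For the remaining two weights I would examine the $a\in\mathrm{Supp}(\widehat{\chi_f})$ with $f^*(a)\ne0$. By Theorem~4 the corresponding codeword has weight $(p-1)\bigl(p^{n-2}-\epsilon_a p^{\frac{n+s-5}{2}}p^*\eta(f^*(a))\bigr)$; writing $p^*=\eta(-1)p$ turns the deviation into $\eta(-1)\,\epsilon_a\,\eta(f^*(a))\,p^{\frac{n+s-3}{2}}$, so only the two values $(p-1)\bigl(p^{n-2}\pm p^{\frac{n+s-3}{2}}\bigr)$ occur and which one is hit is decided by the sign $\epsilon_a\eta(f^*(a))$ together with the residue of $p$ modulo $4$ (through $\eta(-1)$). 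I would then partition these $a$ according to the pair $\bigl(\epsilon_a,\eta(f^*(a))\bigr)$, equivalently according to membership in $B_+(f)$ or $B_-(f)$ and to $f^*(a)\in SQ$ or $f^*(a)\in NSQ$, and read off the sizes of the four classes from Lemma~4 with $j_0=0$: one sums $c_j(f^*)$ and $d_j(f^*)$ over $j\in SQ$ and over $j\in NSQ$, using $\sum_{j\in SQ}\eta(j)=\frac{p-1}{2}$ and $\sum_{j\in NSQ}\eta(j)=-\frac{p-1}{2}$. On adding the two classes that produce the same weight, the terms involving $k$ cancel, so the two multiplicities should collapse to $\frac{p-1}{2}\bigl(p^{n-s-1}+p^{\frac{n-s-1}{2}}\bigr)$ and $\frac{p-1}{2}\bigl(p^{n-s-1}-p^{\frac{n-s-1}{2}}\bigr)$, independently of $k$ and of the type of $f$; a consistency check that these, together with $A_0$ and the multiplicity above, sum to $p^n$ then confirms Table~9.

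The only delicate point, and it is mechanical rather than conceptual, is keeping the sign conventions aligned while running this count: Lemma~4 breaks into the cases $p\equiv1$ and $p\equiv3\pmod 4$ and, within each, into $0\in B_+(f^*)$ and $0\in B_-(f^*)$; the type of $f^*$ is itself pinned down by the type of $f$ through Lemma~11; and the factor $p^*$ produced by Theorem~4 has to be reconciled against the quadratic characters in Lemma~4. I would simply recompute the two multiplicities in each of these cases and verify that the outcome is the same, which is exactly what makes the cancellation of $k$ legitimate rather than coincidental. It is worth recording, finally, that since $s\le n-3$ one has $p^{n-2}>p^{\frac{n+s-3}{2}}$, so the smallest weight $(p-1)\bigl(p^{n-2}-p^{\frac{n+s-3}{2}}\bigr)$ is strictly positive and $\mathcal{C}_{D_f}$ is a genuine three‑weight code, consistent with the header of Table~9.
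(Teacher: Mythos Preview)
Your proposal is correct and follows exactly the approach the paper has in mind: the paper's own proof is simply ``quite straightforward from Lemma~4 and Theorem~4, so we omit it,'' and you have spelled out precisely that computation, including the correct identification $A_{(p-1)p^{n-2}}=p^n-(p-1)p^{n-s-1}-1$ and the cancellation of $k$ in the remaining two multiplicities. The invocation of Lemma~11 and the case split on $p\bmod 4$ are the right bookkeeping for matching which of the two nontrivial weights gets the larger multiplicity.
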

			\begin{proof}
				The proof is quite straightforward from Lemma 4 and Theorem 4, so we omit it.\qed
			\end{proof}
			\begin{table}
				\caption{The weight distribution of $\mathcal{C}_{D_f}$ in Proposition 4 when $n+s$ is odd}
				\begin{tabular}{|l|l|}
					\hline
					Weight&Multiplicity\\ \hline
					0&1\\ \hline
					$(p-1)p^{n-2}$&$p^n-(p-1)p^{n-s-1}-1$\\ \hline
					$(p-1)(p^{n-2}-p^{\frac{n+s-3}{2}})$&$\frac{(p-1)}{2}(p^{n-s-1}+p^{\frac{n-s-1}{2}})$\\ \hline
					$(p-1)(p^{n-2}+p^{\frac{n+s-3}{2}})$&$\frac{(p-1)}{2}(p^{n-s-1}-p^{\frac{n-s-1}{2}})$\\ \hline
				\end{tabular}
			\end{table}
			
			Considering the punctured code $\widetilde{\mathcal{C}}_{\widetilde{D_f}}$defined by (3) of $\mathcal{C}_{D_f}$ in Proposition 4, we have the following corollary.
			\begin{corollary}
				The punctured code $\widetilde{\mathcal{C}}_{\widetilde{D_f}}$ defined by (3)
				of $\mathcal{C}_{D_f}$ in Proposition 4 is a $[\frac{p^{n-1}-1}{p-1},\ n]$ linear code and the weight distribution of $\widetilde{\mathcal{C}}_{\widetilde{D_f}}$ is given by Table 10. The dual code $\widetilde{\mathcal{C}}_{\widetilde{D_f}}^{\bot}$ of $\widetilde{\mathcal{C}}_{\widetilde{D_f}}$ is a $[\frac{p^{n-1}-1}{p-1}, \frac{p^{n-1}-1}{p-1}-n]$ linear code and the minimum distance of $\widetilde{\mathcal{C}}_{\widetilde{D_f}}^{\bot}$, denoted by $d^{\bot}$, is 4 if $s=0$ and $n=3$, or else $d^{\bot}=3$ if $n>3$.
			\end{corollary}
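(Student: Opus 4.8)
The strategy is to transfer everything from Proposition~4 by the same scalar‑quotient argument used for Corollary~4, and then to read off $d^{\bot}$ from the Pless power moments.

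First I would record why the data of $\widetilde{\mathcal{C}}_{\widetilde{D_f}}$ are exactly those of Table~9 divided by $(p-1)$. Since $f\in\mathcal{NWRF}$ satisfies $f(\lambda x)=f(x)$ for all $\lambda\in\mathbb{F}_p^{\times}$, the set $D_f$ decomposes as the disjoint union $\bigcup_{\lambda\in\mathbb{F}_p^{\times}}\lambda\widetilde{D_f}$, and for any $a\in\mathbb{F}_p^n$ and any $x$ we have $a\cdot(\lambda x)=\lambda(a\cdot x)$, which vanishes iff $a\cdot x$ does. Hence the coordinates of a codeword $c(a)\in\mathcal{C}_{D_f}$ break into $\#D_f/(p-1)$ blocks of size $p-1$, each block a scalar rescaling of the block indexed by $\widetilde{D_f}$; therefore $\mathrm{wt}(c(a))=(p-1)\,\mathrm{wt}(\widetilde{c}(a))$ and the length of $\widetilde{\mathcal{C}}_{\widetilde{D_f}}$ equals $\#D_f/(p-1)=\frac{p^{n-1}-1}{p-1}$. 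Moreover $a\mapsto\widetilde{c}(a)$ is injective: if $\widetilde{c}(a)=0$ then $a\cdot x=0$ for all $x\in D_f$, and since $\dim\mathcal{C}_{D_f}=n$ by Theorem~4 the set $D_f$ spans $\mathbb{F}_p^n$, forcing $a=0$. So $\dim\widetilde{\mathcal{C}}_{\widetilde{D_f}}=n$, and dividing the weights of Table~9 by $(p-1)$ while keeping the multiplicities yields Table~10.

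For the dual code the length $\frac{p^{n-1}-1}{p-1}$ and dimension $\frac{p^{n-1}-1}{p-1}-n$ are immediate. Because $\widetilde{D_f}$ is a system of representatives for the $\mathbb{F}_p^{\times}$-orbits on $D_f\subseteq\mathbb{F}_p^n\setminus\{0\}$, its elements are nonzero and pairwise non-proportional, so the columns of a generator matrix of $\widetilde{\mathcal{C}}_{\widetilde{D_f}}$ are nonzero and pairwise non-proportional; equivalently $A_1^{\bot}=A_2^{\bot}=0$ and $d^{\bot}\ge 3$. To decide between $d^{\bot}=3$ and $d^{\bot}\ge 4$, I would substitute the weight distribution of Table~10 into the third Pless power moment and solve for $A_3^{\bot}$. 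I expect the resulting closed form in $p,n,s$ to be strictly positive for all admissible $(n,s)$ with $n>3$ (giving $d^{\bot}=3$ there) and to vanish precisely when $s=0$ and $n=3$; in that single exceptional case one then feeds Table~10 into the fourth Pless power moment, now with $A_1^{\bot}=A_2^{\bot}=A_3^{\bot}=0$, obtains $A_4^{\bot}$, and checks $A_4^{\bot}>0$, so that $d^{\bot}=4$. Note that $0\le s\le n-3$ forces $n\ge 3$, and that $n=3$ together with $n+s$ odd forces $s=0$, so the two cases $n=3$ and $n>3$ are exhaustive.

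The only delicate point is this last step: one must carry the Pless-moment arithmetic through carefully—tracking the half-integer exponents $\frac{n+s-3}{2}$ and $\frac{n-s-1}{2}$ together with the $+/-$ split of the multiplicities—to confirm that the cancellation making $A_3^{\bot}$ vanish occurs exactly at $(s,n)=(0,3)$ and nowhere else, and that the corresponding $A_4^{\bot}$ is positive there. Everything else is bookkeeping mirroring the even case treated in Corollary~4.\qed
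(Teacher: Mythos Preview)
Your approach is essentially the paper's: divide the data of Proposition~4 by $(p-1)$ to get Table~10, then feed the resulting weight distribution into the Pless power moments to pin down $d^{\bot}$ (the paper records the explicit values $A_3^{\bot}=\frac{p^{2n}-p^{n+2}-p^{n+3}+p^{n+s+2}-p^{n+s+1}+p^4}{6p^3}$ for $n>3$ and $A_4^{\bot}=\frac{p(p-1)^2(p^2-p-2)}{24}$ for $s=0,\ n=3$). One small slip: $f\in\mathcal{NWRF}$ does \emph{not} satisfy $f(\lambda x)=f(x)$ but rather $f(\lambda x)=\lambda^{t}f(x)$; this is harmless here since you only need $f(x)=0\iff f(\lambda x)=0$, but you should state it correctly.
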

			\begin{proof} The weight distribution of $\widetilde{\mathcal{C}}_{\widetilde{D_f}}$ can be easily obtained by Proposition 4 and the dimension of $\widetilde{\mathcal{C}}_{\widetilde{D_f}}^{\bot}$ is clearly $\frac{p^{n-1}-1}{p-1}-n$. By the first five Pless power moments, we have that if $s=0$ and $n=3$, then $A_1^{\bot}=A_2^{\bot}=
				A_3^{\bot}=0$ and $A_4^{\bot}=\frac{p(p-1)^2(p^2-p-2)}{24}>0$, so the minimum distance of $\widetilde{\mathcal{C}}_{\widetilde{D_f}}^{\bot}$ is 4, or else if $n>3$, we have $A_1^{\bot}=A_2^{\bot}=0$ and $A_3^{\bot}=\frac{p^{2n}-p^{n+2}-p^{n+3}+p^{n+s+2}-p^{n+s+1}+p^4}{6p^3}>0$, so the minimum distance of $\widetilde{\mathcal{C}}_{\widetilde{D_f}}^{\bot}$ is 3.\qed
			\end{proof}
			\begin{table}
				\caption{The weight distribution of $\widetilde{\mathcal{C}}_{\widetilde{D_f}}$  in Corollary 7 when $n+s$ is odd}
				\begin{tabular}{|l|l|}
					\hline
					Weight&Multiplicity\\ \hline
					0&1\\ \hline
					$p^{n-2}$&$p^n-(p-1)p^{n-s-1}-1$\\ \hline
					$p^{n-2}-p^{\frac{n+s-3}{2}}$&$\frac{(p-1)}{2}(p^{n-s-1}+p^{\frac{n-s-1}{2}})$\\ \hline
					$p^{n-2}+p^{\frac{n+s-3}{2}}$&$\frac{(p-1)}{2}(p^{n-s-1}-p^{\frac{n-s-1}{2}})$\\ \hline
				\end{tabular}
			\end{table}
			\begin{remark}
				(\romannumeral 1) When $n+s$ is odd, the weight distributions of $\mathcal{C}_{D_f}$ and $\widetilde{\mathcal{C}}_{\widetilde{D_f}}$ in Proposition 4 and Corollary 7 are independent of $k$ and are the same as the weight distributions of linear codes from weakly regular $s$-plateaued functions in \cite{Mesnager3}.\\
				(\romannumeral 2) By Remark 9, we easily have that the linear code $\widetilde{\mathcal{C}}_{\widetilde{D_f}}$ in Corollary 7 is minimal for $0\le s\le n-5$.\\
				(\romannumeral 3) When $s=0$ and $n=3$, $\widetilde{\mathcal{C}}_{\widetilde{D_f}}$ is a linear code with parameters $[p+1,\ 3,\ p-1]$ which is an MDS code and $\widetilde{\mathcal{C}}_{\widetilde{D_f}}^{\bot}$ is a $[p+1,\ p-2,\ 4]$ linear code which is also an MDS code. When $n>3$, $\widetilde{\mathcal{C}}_{\widetilde{D_f}}^{\bot}$ is a $[\frac{p^{n-1}-1}{p-1}, \ \frac{p^{n-1}-1}{p-1}-n,\ 3]$ linear code which is almost optimal according to the sphere packing bound.
			\end{remark}
			
			Next, we verify Proposition 4 and Corollary 7 by Magma program for the following non-weakly regular plateaued function.
			\begin{example}
				Consider $f(x):\mathbb{F}_{5}^4\longrightarrow\mathbb{F}_5,\  f(x_1,\ x_2,\ x_3,\ x_4)=x_1^2x_3^4+x_1^2+x_2x_3$, which is a non-weakly regular $1$-plateaued function belonging to $\mathcal{NWRF}$, and the dual $f^*(x)$ of $f(x)$ is bent relative to $\mathrm{Supp}(\widehat{\chi_f})$. Then, $\mathcal{C}_{D_f}$ constructed by (2) is a three-weight linear code with parameters $[124,\ 4,\ 80]$, weight enumerator $1+60z^{80}+524z^{100}+40z^{120}$. The dual code $\mathcal{C}_{D_f}^{\bot}$ of $\mathcal{C}_{D_f}$ is a $[124,\ 120,\ 2]$ linear code, which is almost optimal according to the Code Table at http://www.codetables.de/. Moreover, the punctured code $\widetilde{\mathcal{C}}_{\widetilde{D_f}}$  constructed by (3) of $\mathcal{C}_{D_f}$ is a $[31,\ 4,\ 20]$ linear code with weight enumerator $1+60z^{20}+524z^{25}+40z^{30}$ and the dual code $\widetilde{\mathcal{C}}_{\widetilde{D_f}}^{\bot}$ is a $[31,\ 27,\ 3]$ linear code which is optimal according to the Code Table at http://www.codetables.de/.
			\end{example}
		
			According to Lemma 8, Theorem 4, Remark 9 and Corollary 7, we have the following results on secret sharing schemes.
			\begin{corollary}
				Let $n+s$ be an odd integer with $0\le s\le n-5$ and $\mathcal{C}_{D_f}$  be the linear code in Theorem 4 with generator matrix $G=[{g}_0,\ {g}_1,\ \cdots, {g}_{m-1}]$, where $m=p^{n-1}-1$. Then, in the secret sharing scheme based on $\mathcal{C}_{D_f}^{\bot}$ with $d^{\bot}=2$, the number of participants is $m-1$ and there are $p^{n-1}$ minimal access sets. Moreover, if ${g}_j$ is a scalar multiple of ${g}_0$, $1\le j\le m-1$, then participant $P_j$ must be in every minimal access set, or else, 
				participant $P_j$ must be in $(p-1)p^{n-2}$ out of $p^{n-1}$  minimal access sets.
			\end{corollary}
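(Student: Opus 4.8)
The plan is to obtain the result as a direct consequence of Lemma 8 applied to the dual code $\mathcal{C}_{D_f}^{\bot}$, once minimality of $\mathcal{C}_{D_f}$ has been secured. First I would record the parameters already in hand: by Theorem 4, $\mathcal{C}_{D_f}$ is a $[p^{n-1}-1,\ n]$ linear code and its dual $\mathcal{C}_{D_f}^{\bot}$ is a $[p^{n-1}-1,\ p^{n-1}-1-n,\ 2]$ linear code, so in the notation of Lemma 8 we have dimension $k=n$, length $m=p^{n-1}-1$, and $d^{\bot}=2$. In the Massey secret sharing scheme built from $\mathcal{C}_{D_f}^{\bot}$ with generator matrix $G=[g_0,\dots,g_{m-1}]$, the participant set has size $m-1$, and Lemma 8 guarantees $p^{k-1}=p^{n-1}$ minimal access sets, provided $\mathcal{C}_{D_f}$ is minimal.

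The second step is to verify that minimality hypothesis. By Remark 9, which applies the Ashikhmin--Barg criterion (Lemma 7) to the three nonzero weights listed in Table 9, the code $\mathcal{C}_{D_f}$ of Theorem 4 is minimal exactly when $0\le s\le n-5$; this is precisely the range assumed in the corollary, so every nonzero codeword of $\mathcal{C}_{D_f}$ is minimal and Lemma 8 is legitimately applicable. The third step is then purely bookkeeping: reading off the $d^{\bot}=2$ branch of Lemma 8 and substituting $k=n$, we conclude that if $g_j$ is a scalar multiple of $g_0$ then $P_j$ belongs to every one of the $p^{n-1}$ minimal access sets, and otherwise $P_j$ belongs to $(p-1)p^{k-2}=(p-1)p^{n-2}$ of them, which is exactly the claimed statement.

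I do not expect a genuine obstacle here. The only quantitative input is the Ashikhmin--Barg inequality $\frac{p-1}{p}<\frac{wt_{\mathrm{min}}}{wt_{\mathrm{max}}}$ for the weights $(p-1)(p^{n-2}-p^{\frac{n+s-3}{2}})$, $(p-1)p^{n-2}$ and $(p-1)(p^{n-2}+p^{\frac{n+s-3}{2}})$ appearing in Table 9, and that verification — which forces the restriction $s\le n-5$ — is already carried out in Remark 9 (via Remark 9's reference to Lemma 7); the rest is a routine transcription of the $k$ and $m$ values into the conclusion of Lemma 8. If one wished to be fully self-contained, the single nontrivial estimate to display would be $\frac{(p-1)(p^{n-2}-p^{(n+s-3)/2})}{(p-1)(p^{n-2}+p^{(n+s-3)/2})}>\frac{p-1}{p}$, equivalently $p^{n-2}>(2p-1)p^{(n+s-3)/2}$, which holds for $s\le n-5$; but since this is exactly Remark 9, I would simply cite it.
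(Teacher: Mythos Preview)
Your proposal is correct and matches the paper's own approach essentially line for line: the paper simply states that the corollary follows ``According to Lemma 8, Theorem 4, Remark 9 and Corollary 7,'' which is exactly your combination of Theorem 4 for the parameters $[p^{n-1}-1,n]$ and $d^{\bot}=2$, Remark 9 for minimality in the range $0\le s\le n-5$, and Lemma 8 with $k=n$ to read off the access-structure counts. One very minor remark: the weights you cite from Table 9 are in fact already the weights appearing in Theorem 4 itself (Table 9 only refines the multiplicities under the extra bentness hypothesis of Proposition 4), so your Ashikhmin--Barg verification via Remark 9 applies to the general code of Theorem 4 as stated in the corollary, not just to the special subclass.
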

			\begin{corollary}
				Let $n+s$ be an odd integer with $0\le s\le n-5$ and $\widetilde{\mathcal{C}}_{\widetilde{D_f}}$  be the linear code in Corollary 7 with generator matrix $G=[{g}_0,\  {g}_1, \cdots, \ {g}_{m-1}]$, where $m=\frac{p^{n-1}-1}{p-1}$. Then, in the secret sharing scheme based on $\widetilde{\mathcal{C}}_{\widetilde{D_f}}^{\bot}$ with $d^{\bot}=3$, the number of participants is $m-1$ and there are $p^{n-1}$ minimal access sets. Moreover, every participant $P_j$ is involved in $(p-1)p^{n-2}$ out of $p^{n-1}$  minimal access sets.
			\end{corollary}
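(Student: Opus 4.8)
The plan is to derive the corollary directly from Lemma~8, so the proof amounts to verifying that its hypotheses hold and then transcribing the parameters furnished by Corollary~7. First I would confirm that $\widetilde{\mathcal{C}}_{\widetilde{D_f}}$ is a minimal linear code. This is recorded in the remark following Corollary~7 and follows from the Ashikhmin--Barg condition (Lemma~7) applied to the weight distribution in Table~10: with $wt_{\mathrm{min}}=p^{n-2}-p^{\frac{n+s-3}{2}}$ and $wt_{\mathrm{max}}=p^{n-2}+p^{\frac{n+s-3}{2}}$, the inequality $\frac{p-1}{p}<\frac{wt_{\mathrm{min}}}{wt_{\mathrm{max}}}$ reduces to $2p-1<p^{\frac{n-s-1}{2}}$, which holds whenever $n-s\ge 5$, i.e. exactly under the hypothesis $0\le s\le n-5$ assumed here. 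Hence Lemma~8 applies to $\mathcal{C}=\widetilde{\mathcal{C}}_{\widetilde{D_f}}$.

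Next I would read off the numerical data. By Corollary~7, $\widetilde{\mathcal{C}}_{\widetilde{D_f}}$ is an $[m,\,n]$ linear code with $m=\frac{p^{n-1}-1}{p-1}$, so its dimension is $k=n$ and its generator matrix $G=[g_0,g_1,\dots,g_{m-1}]$ has $m$ columns; in Massey's construction this produces the participant set $\{P_1,\dots,P_{m-1}\}$, i.e. $m-1$ participants. Since Lemma~8 gives $p^{k-1}$ minimal access sets for a minimal $[n,k,d]$ code, there are $p^{n-1}$ minimal access sets here.

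Finally, for the claim about individual participants I would pin down the dual minimum distance. Since $n+s$ is odd and $s\le n-5$, we have $n-s\ge 5$ and hence $n\ge 5>3$, so the exceptional case $s=0,\ n=3$ of Corollary~7 cannot occur and $d^{\bot}=3$. Applying the second bullet of Lemma~8 with $d^{\bot}\ge 3$, the admissible range for $t$ is $1\le t\le\min\{k-1,\,d^{\bot}-2\}=\min\{n-1,\,1\}=1$, so $t=1$ is the only value, and every single participant $P_j$ is involved in $(p-1)^{1}p^{k-(1+1)}=(p-1)p^{n-2}$ out of the $p^{n-1}$ minimal access sets, as claimed. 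The argument is entirely routine; the only point needing care is ensuring the hypotheses of Lemma~8 are genuinely satisfied — that the code is minimal and that we are in the generic regime $d^{\bot}=3$ rather than the sporadic MDS case $d^{\bot}=4$ — both of which the range $0\le s\le n-5$ guarantees.
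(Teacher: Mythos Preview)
Your proposal is correct and follows essentially the same approach as the paper, which simply cites Lemma~8, Theorem~4, Remark~9 and Corollary~7 without spelling out the details. Your explicit verification of the Ashikhmin--Barg inequality and the exclusion of the sporadic $n=3$ case are exactly the checks needed to make the cited ingredients fit together, and the paper leaves these implicit.
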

			\begin{table}
				\caption{The weight distribution of $\mathcal{C}_{D_f}$ in Corollary 10  when $n$ is even and $0\in B_+(f)$}
				\begin{tabular}{|l|l|}
					\hline
					Weight &Multiplicity\\\hline
					0&1\\
					\hline
					$(p-1)p^{n-2}$&$\frac{k}{p}+(p-1)p^{\frac{n}{2}-1}-1$\qquad\qquad\qquad\\
					\hline
					$(p-1)(p^{n-2}+2(p-1)p^{\frac{n}{2}-2})$&$p^{n-1}-\frac{k}{p}$\\
					\hline
					$(p-1)(p^{n-2}+p^{\frac{n}{2}-1})$&$(p-1)(\frac{k}{p}-p^{\frac{n}{2}-1})$\\
					\hline
					$(p-1)(p^{n-2}+(p-2)p^{\frac{n}{2}-2})$&$ (p-1)(p^{n-1}-\frac{k}{p})$\\
					\hline
				\end{tabular}
			\end{table}
			
			Let $f(x)$ be a non-weakly regular bent function, i.e., $s=0$, in the following corollary, we give the parameters and weight distributions of $\mathcal{C}_{D_f}$ and $\widetilde{\mathcal{C}}_{\widetilde{D_f}}$ constructed by non-weakly regular bent functions.
			\begin{corollary}
				Let $f(x):\mathbb{F}_{p}^n\longrightarrow\mathbb{F}_p$ be a non-weakly regular bent function belonging to $\mathcal{NWRF}$ with $\#B_+(f)=k\ (k\ne0,\text{and}\  k\ne p^{n})$ and the dual $f^*(x)$ of $f(x)$ be bent. Then, when $n$ is even, $\mathcal{C}_{D_f}$ defined by (2) is a at most four-weight linear code with parameters $[p^{n-1}+\epsilon_0(p-1)p^{\frac{n}{2}-1}-1,\ n]$ and the weight distribution of $\mathcal{C}_{D_f}$ is given by Tables 11 and 12,  $\widetilde{\mathcal{C}}_{\widetilde{D_f}}$ defined by (3) is a at most four-weight linear code with parameters $[\frac{p^{n-1}+\epsilon_0(p-1)p^{\frac{n}{2}-1}-1}{p-1},\ n]$ and the weight distribution of $\widetilde{\mathcal{C}}_{\widetilde{D_f}}$ is given by Tables 13 and 14; when $n$ is odd, $\mathcal{C}_{D_f}$ defined by (2) is a three-weight linear code with parameters $[p^{n-1}-1,\ n]$ and the weight distribution of $\mathcal{C}_{D_f}$ is given by Table 15,  $\widetilde{\mathcal{C}}_{\widetilde{D_f}}$ defined by (3) is a three-weight linear code with parameters $[\frac{p^{n-1}-1}{p-1},\ n]$ and the weight distribution of $\widetilde{\mathcal{C}}_{\widetilde{D_f}}$ is given by Table 16.
			\end{corollary}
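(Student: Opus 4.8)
The plan is to obtain Corollary 10 purely by specializing Theorems 3 and 4, Propositions 3 and 4, and Corollaries 4 and 7 to the case $s=0$, and by observing that the bent (i.e. $0$-plateaued) hypothesis annihilates one of the weight classes. Since $f(x)$ is a non-weakly regular bent function it is $0$-plateaued, so $s=0$, and by the Parseval identity $\#\mathrm{Supp}(\widehat{\chi_f})=p^{n-s}=p^n$, i.e. $\mathrm{Supp}(\widehat{\chi_f})=\mathbb{F}_p^n$. Hence in the weight formulas of Theorems 3 and 4 the branch ``$a\notin\mathrm{Supp}(\widehat{\chi_f})$'' is vacuous; equivalently, in the weight distributions of Propositions 3 and 4 the associated multiplicity $p^n-p^{n-s}$ becomes $0$, so that row is removed. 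This is exactly what turns the five-weight codes of Proposition 3 into (at most) four-weight codes and leaves the codes of Proposition 4 three-weight. Moreover, ``bent relative to $S$'' with $S=\mathbb{F}_p^n$ and $N=p^n$ is just ordinary bentness, so the hypothesis that $f^*$ is bent coincides with ``$f^*$ bent relative to $\mathrm{Supp}(\widehat{\chi_f})$'', and Lemmas 9--11, Propositions 3--4, and Corollaries 4 and 7 apply verbatim with $s=0$.

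For $n$ even I would substitute $s=0$ into Theorem 3 and Proposition 3: the length becomes $p^{n-1}+\epsilon_0(p-1)p^{\frac{n}{2}-1}-1$, the dimension stays $n$, and the surviving codewords fall into the four weights $(p-1)(p^{n-2}+(\epsilon_0-\epsilon_a)(p-1)p^{\frac{n}{2}-2})$ and $(p-1)(p^{n-2}+(\epsilon_0(p-1)+\epsilon_a)p^{\frac{n}{2}-2})$ with $\epsilon_a\in\{\pm1\}$; splitting according to $\epsilon_0=1$ (that is, $0\in B_+(f)$) or $\epsilon_0=-1$ (that is, $0\in B_-(f)$) and reading off the multiplicities from the $s=0$ rows of Tables 5 and 6 gives Tables 11 and 12. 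Dividing every length and every weight by $p-1$, as in Corollary 4, yields the parameters and Tables 13 and 14 for the punctured code $\widetilde{\mathcal{C}}_{\widetilde{D_f}}$. For $n$ odd I would argue identically from Theorem 4, Proposition 4 and Corollary 7: the length is $p^{n-1}-1$, the weight $(p-1)p^{n-2}$ now carries the combined multiplicity $p^{n-1}-1$, and the weights $(p-1)(p^{n-2}\pm p^{\frac{n-3}{2}})$ (obtained from $(p-1)(p^{n-2}-\epsilon_ap^{\frac{n-5}{2}}p^*\eta(f^*(a)))$ by absorbing $\eta(-1)$ into $p^*$ and letting $\epsilon_a\eta(f^*(a))$ range over $\{\pm1\}$) carry multiplicities $\frac{p-1}{2}(p^{n-1}\pm p^{\frac{n-1}{2}})$, which is Table 15; dividing by $p-1$ gives Table 16.

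The argument is almost entirely bookkeeping, so the only genuine point to verify is that the three (resp. four) listed weights are pairwise distinct for every odd prime $p$ --- for instance $(p-1)p^{n-2}$, $(p-1)(p^{n-2}+(p-2)p^{\frac{n}{2}-2})$, $(p-1)(p^{n-2}+p^{\frac{n}{2}-1})$ and $(p-1)(p^{n-2}+2(p-1)p^{\frac{n}{2}-2})$ can collide only if $p=2$ --- and that the range conditions $0\le s\le n-4$ of Theorem 3 and $0\le s\le n-3$ of Theorem 4 are satisfied, which forces $n\ge 4$ in the even case and $n\ge 3$ in the odd case; below these ranges the codes degenerate and the statement is vacuous. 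I do not expect any obstacle beyond this routine checking, since no new exponential-sum estimate is needed.
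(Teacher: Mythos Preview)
Your proposal is correct and takes essentially the same approach as the paper: Corollary 10 is stated there without proof, being a direct specialization of Theorems 3--4, Propositions 3--4, and Corollaries 4 and 7 to $s=0$, and your observation that $\mathrm{Supp}(\widehat{\chi_f})=\mathbb{F}_p^n$ kills the row with multiplicity $p^n-p^{n-s}$ is exactly the mechanism that reduces five weights to four. Your additional remarks on pairwise distinctness of the weights and on the range constraints $n\ge 4$ (even case) and $n\ge 3$ (odd case) are not made explicit in the paper but are the appropriate sanity checks.
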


Next, we verify Corollary 10 by Magma program for the following non-weakly regular bent functions.
\begin{table}
	\caption{The weight distribution of $\mathcal{C}_{D_f}$ in Corollary 10  when $n$ is even and $0\in B_-(f)$}
	\begin{tabular}{|l|l|}
		\hline
		Weight &Multiplicity\\\hline
		0&1\\\hline
		$(p-1)(p^{n-2}-2(p-1)p^{\frac{n}{2}-2})$&$\frac{k}{p}$\\
		\hline
		$(p-1)p^{n-2}$&$p^{n-1}-(p-1)p^{\frac{n}{2}-1}-\frac{k}{p}-1$\\
		\hline
		$(p-1)(p^{n-2}-(p-2)p^{\frac{n}{2}-2})$&$(p-1)\frac{k}{p}$\\
		\hline
		$(p-1)(p^{n-2}-p^{\frac{n}{2}-1})$&$(p-1)(p^{n-1}+p^{\frac{n}{2}-1}-\frac{k}{p})$\\
		\hline
	\end{tabular}
\end{table}
\begin{table}
	\caption{The weight distribution of $\widetilde{\mathcal{C}}_{\widetilde{D_f}}$ in  Corollary 10 when $n$ is even and $0\in B_+(f)$}
	\begin{tabular}{|l|l|}
		\hline
		Weight &Multiplicity\\\hline
		0&1\\\hline
		$p^{n-2}$&$\frac{k}{p}+(p-1)p^{\frac{n}{2}-1}-1$\qquad\qquad\qquad\\
		\hline
		$p^{n-2}+2(p-1)p^{\frac{n}{2}-2}$&$p^{n-1}-\frac{k}{p}$\\
		\hline
		$p^{n-2}+p^{\frac{n}{2}-1}$&$(p-1)(\frac{k}{p}-p^{\frac{n}{2}-1})$\\
		\hline
		$p^{n-2}+(p-2)p^{\frac{n}{2}-2}$&$ (p-1)(p^{n-1}-\frac{k}{p})$\\
		\hline
	\end{tabular}
\end{table}
\begin{table}
	\caption{The weight distribution of $\widetilde{\mathcal{C}}_{\widetilde{D_f}}$  in Corollary 10 when $n$ is even and $0\in B_-(f)$}
	\begin{tabular}{|l|l|}
		\hline
		Weight &Multiplicity\\\hline
		0&1\\\hline
		$p^{n-2}-2(p-1)p^{\frac{n}{2}-2}$&$\frac{k}{p}$\\
		\hline
		$p^{n-2}$&$p^{n-1}-(p-1)p^{\frac{n}{2}-1}-\frac{k}{p}-1$\\
		\hline
		$p^{n-2}-(p-2)p^{\frac{n}{2}-2}$&$(p-1)\frac{k}{p}$\\
		\hline
		$p^{n-2}-p^{\frac{n}{2}-1}$&$(p-1)(p^{n-1}+p^{\frac{n}{2}-1}-\frac{k}{p})$\\
		\hline
	\end{tabular}
\end{table}

\begin{table}
	\caption{The weight distribution of $\mathcal{C}_{D_f}$ in Crollary 10 when $n$ is odd}
	\begin{tabular}{|l|l|}
		\hline
		Weight&Multiplicity\\ \hline
		0&1\\ \hline
		$(p-1)p^{n-2}$&$p^{n-1}-1$\\ \hline
		$(p-1)(p^{n-2}-p^{\frac{n-3}{2}})$&$\frac{(p-1)}{2}(p^{n-1}+p^{\frac{n-1}{2}})$\\ \hline
		$(p-1)(p^{n-2}+p^{\frac{n-3}{2}})$&$\frac{(p-1)}{2}(p^{n-1}-p^{\frac{n-1}{2}})$\\ \hline
	\end{tabular}
\end{table}

\begin{table}
	\caption{The weight distribution of $\widetilde{\mathcal{C}}_{\widetilde{D_f}}$ in Corollary 10 when $n$ is odd}
	\begin{tabular}{|l|l|}
		\hline
		Weight&Multiplicity\\ \hline
		0&1\\ \hline
		$p^{n-2}$&$p^{n-1}-1$\\ \hline
		$p^{n-2}-p^{\frac{n-3}{2}}$&$\frac{(p-1)}{2}(p^{n-1}+p^{\frac{n-1}{2}})$\\ \hline
		$p^{n-2}+p^{\frac{n-3}{2}}$&$\frac{(p-1)}{2}(p^{n-1}-p^{\frac{n-1}{2}})$\\ \hline
	\end{tabular}
\end{table}

\begin{example}
	Consider $f(x):\mathbb{F}_{3}^4\longrightarrow\mathbb{F}_3$, $f(x_1,\ x_2,\ x_3,\ x_4)=2x_1^2x_4^2+2x_1^2+x_2^2+x_3x_4$, which is a non-weakly regular bent function belonging to $\mathcal{NWRF}$ with $0\in B_+(f)$, $k=27$, and the dual $f^*(x)$ of $f(x)$ is bent. Then $\mathcal{C}_{D_f}$ constructed by (2) is a four-weight linear code with parameters $[32,\ 4,\ 18]$, weight enumerator $1+14z^{18}+36z^{20}+12z^{24}+18z^{26}$. The dual code $\mathcal{C}_{D_f}^{\bot}$ of $\mathcal{C}_{D_f}$ is a $[32,\ 28,\ 2]$ linear code, which is almost optimal according to the Code Table at http://www.codetables.de/. Moreover, the punctured code $\widetilde{\mathcal{C}}_{\widetilde{D_f}}$ constructed by (3) of $\mathcal{C}_{D_f}$ is a $[16,\ 4,\ 9]$ linear code with weight enumerator $1+14z^{9}+36z^{10}+12z^{12}+18z^{13}$ which is optimal according to the Code Table at http://www.codetables.de/ and the dual code $\widetilde{\mathcal{C}}_{\widetilde{D_f}}^{\bot}$ of $\widetilde{\mathcal{C}}_{\widetilde{D_f}}$ is a $[16,\ 12,\ 3]$ linear code which is optimal according to the Code Table at http://www.codetables.de/.
	\end{example}	
\begin{example}
	Consider $f(x):\mathbb{F}_{3}^4\longrightarrow\mathbb{F}_3$, $f(x_1,\ x_2,\ x_3,\ x_4)=x_1^2x_4^2+x_1^2+x_2^2+x_3x_4$, which is a non-weakly bent function belonging to $\mathcal{NWRF}$ with $0\in B_-(f)$, $k=54$, and the dual $f^*(x)$ of $f(x)$ is bent. Then, $\mathcal{C}_{D_f}$ constructed by (2) is a four-weight linear code with parameters $[20,\ 4, 10]$, weight enumerator $1+18z^{10}+24z^{12}+36z^{16}+2z^{18}$. The dual code $\mathcal{C}_{D_f}^{\bot}$ of $\mathcal{C}_{D_f}$ is a $[20,\ 16,\ 2]$ linear code, which is almost optimal according to the Code Table at http://www.codetables.de/. Moreover, the punctured code $\widetilde{\mathcal{C}}_{\widetilde{D_f}}$ constructed by (3) of $\mathcal{C}_{D_f}$ is a $[10,\ 4,\ 5]$ linear code with weight enumerator $1+18z^5+24z^6+36z^8+2z^9$ which is almost optimal according to the Code Table at http://www.codetables.de/ and the dual code $\widetilde{\mathcal{C}}_{\widetilde{D_f}}^{\bot}$ of $\widetilde{\mathcal{C}}_{\widetilde{D_f}}$ is a $[10,\ 6,\ 3]$ linear code which is almost optimal according to the Code Table at http://www.codetables.de/.
\end{example}
\begin{example}
	Consider $f(x):\mathbb{F}_{3}^5\longrightarrow\mathbb{F}_3$, $f(x_1,\ x_2,\ x_3,\ x_4,\ x_5)=x_1^2x_5^2+x_1^2+x_2^2+x_3^2+x_4x_5$, which is a non-weakly bent function belonging to $\mathcal{NWRF}$ and the dual $f^*(x)$ of $f(x)$ is bent. Then, $\mathcal{C}_{D_f}$ constructed by (2) is a three-weight linear code with parameters $[80,\ 5,\ 48 ]$, weight enumerator $1+90z^{48}+80z^{54}+72z^{60}$. The dual code $\mathcal{C}_{D_f}^{\bot}$ of $\mathcal{C}_{D_f}$ is a $[80,\ 75,\ 2]$ linear code, which is almost optimal according to the Code Table at http://www.codetables.de/. Moreover, the punctured code $\widetilde{\mathcal{C}}_{\widetilde{D_f}}$ constructed by (3) of $\mathcal{C}_{D_f}$ is a $[40,\ 5,\ 24]$ linear code with weight enumerator $1+90z^{24}+80z^{27}+72z^{30}$ and the dual code $\widetilde{\mathcal{C}}_{\widetilde{D_f}}^{\bot}$ of $\widetilde{\mathcal{C}}_{\widetilde{D_f}}$ is a $[40,\ 35,\ 3]$ linear code which is  optimal according to the Code Table at http://www.codetables.de/.
	\end{example}

	\subsection{Linear codes with few weights based on $D_{f,SQ}$ and $D_{f,NSQ}$}
			\quad Let $f(x):\ \mathbb{F}_p^n\longrightarrow\mathbb{F}_p$ be a non-weakly regular $s$-plateaued function belonging to $\mathcal{NWRF}$, and $D_{f,SQ}=\{x\in\mathbb{F}_{p}^n\setminus\{0\}:f(x)\in SQ\}=\{x_1,\ x_2,\cdots,\ x_m\}$, $D_{f,NSQ}=\{y\in\mathbb{F}_{p}^{n}\setminus\{0\}:f(y)\in NSQ\}=\{y_1,\ y_2,\cdots,\ y_t\}$. In this subsection, we consider the linear codes over $\mathbb{F}_p$ defined by
			\begin{equation}
			\mathcal{C}_{D_{f,SQ}}=\{{c}(a)=(a\cdot x_1,\ a\cdot x_2,\cdots,\ a\cdot x_m):\ a\in\mathbb{F}_{p}^n\},
			\end{equation}
			\begin{equation}
			\mathcal{C}_{D_{f,NSQ}}=\{{c}(b)=(b\cdot y_1,\ b\cdot y_2,\cdots,\ b\cdot y_t):\ b\in\mathbb{F}_{p}^n\}.
			\end{equation}
			In the following theorem, we give the weights of codewords of $\mathcal{C}_{D_{f,SQ}}$ and $\mathcal{C}_{D_{f,NSQ}}$ when $n+s$ is even.
			
			\begin{theorem}
				Let $n+s$ be an even integer with $0\le s\le n-4$, and $f(x):\ \mathbb{F}_p^n\longrightarrow\mathbb{F}_p$ be a non-weakly regular $s$-plateaued function belonging to $\mathcal{NWRF}$. Then, $\mathcal{C}_{D_{f,SQ}}$ and $\mathcal{C}_{D_{f,NSQ}}$ defined by (4) and (5) are $[\frac{(p-1)}{2}(p^{n-1}-\epsilon_0p^{\frac{n+s}{2}-1}),\ n]$ linear codes, and the weights of codewords are given as follows. 
				\begin{itemize}
					\item[$\bullet$] When the defining set is $D_{f,SQ}$,\\
					$wt({c}(a))=
					\begin{cases}
					0,& \text{if}\ a=0;\\
					\frac{(p-1)^2}{2}(p^{n-2}-\epsilon_0p^{\frac{n+s}{2}-2}),&\text{if}\  a\notin\mathrm{Supp}(\widehat{\chi_f});\\
					\frac{(p-1)^2}{2}(p^{n-2}+(\epsilon_a-\epsilon_0)p^{\frac{n+s}{2}-2}),& \text{if}\ a\in \mathrm{Supp}(\widehat{\chi_f}),\ f^*(a)=0;\\
					\frac{(p-1)^2}{2}(p^{n-2}-\epsilon_0p^{\frac{n+s}{2}-2})-\\\epsilon_a\frac{(p-1)}{2}p^{\frac{n+s}{2}-2}(p\eta(f^*(a))+1)),& \text{if}\ a\in \mathrm{Supp}(\widehat{\chi_f}),\ f^*(a)\ne0.
					\end{cases}$
					
					\item[$\bullet$] When the defining set is $D_{f,NSQ}$,\\
					$wt({c}(b))=
					\begin{cases}
					0,& \text{if}\ b=0;\\
					\frac{(p-1)^2}{2}(p^{n-2}-\epsilon_0p^{\frac{n+s}{2}-2}),&\text{if}\  b\notin\mathrm{Supp}(\widehat{\chi_f});\\
					\frac{(p-1)^2}{2}(p^{n-2}+(\epsilon_b-\epsilon_0)p^{\frac{n+s}{2}-2}),& \text{if}\ b\in \mathrm{Supp}(\widehat{\chi_f}),\ f^*(b)=0;\\
					\frac{(p-1)^2}{2}(p^{n-2}-\epsilon_0p^{\frac{n+s}{2}-2})+\\\epsilon_b\frac{(p-1)}{2}p^{\frac{n+s}{2}-2}(p\eta(f^*(b))-1)),& \text{if}\ b\in \mathrm{Supp}(\widehat{\chi_f}),\ f^*(b)\ne0.
					\end{cases}$
				\end{itemize}
				The dual codes $\mathcal{C}_{D_{f,SQ}}^{\bot}$ and $\mathcal{C}_{D_{f,NSQ}}^{\bot}$ are $[\frac{(p-1)}{2}(p^{n-1}-\epsilon_0p^{\frac{n+s}{2}-1}),\ \frac{(p-1)}{2}(p^{n-1}-\epsilon_0p^{\frac{n+s}{2}-1})-n,\ 2]$ linear codes.
			\end{theorem}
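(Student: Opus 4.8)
The plan is to follow the template of the proof of Theorem 3, replacing the single linear condition $f(x)=0$ by $f(x)\in SQ$ (resp. $f(x)\in NSQ$); the $SQ$/$NSQ$ restriction is detected by a sum over $j\in SQ$ (resp. $j\in NSQ$) of additive characters $\xi_p^{y(f(x)-j)}$, so the cyclotomic identities Lemma 1(4)--(5) enter the computation. First I would fix the length. By Lemma 10, $f^*(0)=0$, and since $f(0)=0$, Lemma 2 with $n+s$ even and $j_0=0$ gives $N_j(f)=p^{n-1}-\epsilon_0 p^{\frac{n+s}{2}-1}$ for every $j\ne 0$, where $\epsilon_0$ records the type of $f$. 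Summing over the $\frac{p-1}{2}$ elements of $SQ$ (resp. $NSQ$) yields $\#D_{f,SQ}=\#D_{f,NSQ}=\frac{p-1}{2}(p^{n-1}-\epsilon_0 p^{\frac{n+s}{2}-1})$, the claimed length; note $0\notin D_{f,SQ}$ since $f(0)=0$.

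Next, for $a\in\mathbb{F}_p^n$ write $wt(c(a))=\#D_{f,SQ}-N$ with $N=\#\{x\in\mathbb{F}_p^n:\ f(x)\in SQ,\ a\cdot x=0\}$, and expand
\[
N=\frac{1}{p^2}\sum_{j\in SQ}\sum_{y\in\mathbb{F}_p}\sum_{z\in\mathbb{F}_p}\xi_p^{-yj}\sum_{x\in\mathbb{F}_p^n}\xi_p^{yf(x)-za\cdot x}.
\]
Splitting by whether $y$ and $z$ vanish: the $y=z=0$ term contributes $\frac{p-1}{2}p^{n-2}$; the $y=0,\,z\ne 0$ term vanishes for $a\ne0$; the $y\ne0,\,z=0$ term is $\epsilon_0 p^{\frac{n+s}{2}}\sum_{j\in SQ}\sum_{y\ne0}\xi_p^{-yj}=-\epsilon_0\frac{p-1}{2}p^{\frac{n+s}{2}}$, contributing $-\epsilon_0\frac{p-1}{2}p^{\frac{n+s}{2}-2}$ after division by $p^2$ (using $\widehat{\chi_f}(0)=\epsilon_0 p^{\frac{n+s}{2}}\in\mathbb{Z}$, fixed by all $\sigma_y$, and $\sum_{y\ne0}\xi_p^{-yj}=-1$); and the main term is $T=\sum_{j\in SQ}\sum_{y\ne0}\sum_{z\ne0}\xi_p^{-yj}\,\sigma_y(\widehat{\chi_f}(y^{-1}za))$. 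When $a\notin\mathrm{Supp}(\widehat{\chi_f})$, scaling preserves Supp-membership so $T=0$, giving the first nonzero weight. When $a\in\mathrm{Supp}(\widehat{\chi_f})$, Remark 6 gives $\epsilon_{y^{-1}za}=\epsilon_a$ and Lemma 9 gives $f^*(y^{-1}za)=(y^{-1}z)^h f^*(a)$ with $h$ even and $\gcd(h-1,p-1)=1$, so $\sigma_y(\widehat{\chi_f}(y^{-1}za))=\epsilon_a p^{\frac{n+s}{2}}\xi_p^{y^{1-h}z^h f^*(a)}$ and, by Lemma 1(4),
\[
T=\epsilon_a p^{\frac{n+s}{2}}\sum_{y\ne0}\sum_{z\ne0}\xi_p^{y^{1-h}z^h f^*(a)}\,\frac{\eta(-y)\sqrt{p^*}-1}{2}.
\]
If $f^*(a)=0$, this collapses by Lemma 1(1) to $T=-\frac12\epsilon_a(p-1)^2 p^{\frac{n+s}{2}}$. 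If $f^*(a)\ne0$, I would substitute $u=y^{1-h}$ (a permutation of $\mathbb{F}_p^\times$), use that $1-h$ is odd so $\eta(y^{1-h})=\eta(y)$ and $h$ is even so $\eta(z^h)=1$, and evaluate the inner $u$-sums by Lemma 1(2),(3); combined with $\eta(-1)p^*=p$ this yields $T=\frac12\epsilon_a(p-1)p^{\frac{n+s}{2}}(p\eta(f^*(a))+1)$. Inserting each value of $T$ into $wt(c(a))=\#D_{f,SQ}-N$ and simplifying produces the four listed weights for the $SQ$ code; the $NSQ$ code is handled identically, the only change being that Lemma 1(5) flips the sign of the $\sqrt{p^*}$-term, which turns $p\eta(f^*)+1$ into $-(p\eta(f^*)-1)$.

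For the remaining assertions: since $\frac{n+s}{2}-1\le n-3$ when $s\le n-4$, each of the four weights is strictly positive for $a\ne0$, so $wt(c(a))=0$ iff $a=0$; hence $\mathcal{C}_{D_{f,SQ}}$ has $p^n$ codewords, dimension $n$, and $\mathcal{C}_{D_{f,SQ}}^{\bot}$ has the stated dimension. For $d^{\bot}$: a weight-$1$ dual word would force some $x\in D_{f,SQ}$ to be the zero vector, which is impossible; and since $f\in\mathcal{NWRF}$ gives $f(-x)=f(x)$, both $x$ and $-x$ lie in $D_{f,SQ}$ and are scalar multiples of each other, producing a weight-$2$ dual word, so $d^{\bot}=2$. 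The same argument applies verbatim to $\mathcal{C}_{D_{f,NSQ}}$.

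The hard part will be the case $a\in\mathrm{Supp}(\widehat{\chi_f})$ with $f^*(a)\ne0$: one must carefully manage the coupling between the quadratic weight $\eta(-y)$ coming from the $SQ/NSQ$ restriction and the variable change $u=y^{1-h}$ forced by the homogeneity of $f^*$, and the reduction $\eta(-1)p^*=p$ is what makes the final expression collapse to the stated closed form. All other cases are either immediate or parallel to the proof of Theorem 3.
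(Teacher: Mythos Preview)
Your proposal is correct and follows essentially the same route as the paper's proof: the same character expansion of $N$, the same $(N_1,N_2,N_3)$ splitting, the same use of Lemma 9 and Remark 6 to reduce $\widehat{\chi_f}(y^{-1}za)$ to $\epsilon_a p^{\frac{n+s}{2}}\xi_p^{y^{1-h}z^h f^*(a)}$, and the same dual-distance argument via $f(-x)=f(x)$. Your handling of the key case $f^*(a)\ne 0$---the substitution $u=y^{1-h}$, the observations $\eta(y)=\eta(u)$ and $\eta(z^h)=1$, and the collapse $\eta(-1)p^*=p$---matches the paper's computation exactly, and your brief positivity check for the weights (using $\frac{n+s}{2}-1\le n-3$) is a small extra justification the paper leaves implicit.
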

			\begin{proof}
				We only prove the case of $\mathcal{C}_{D_{f,SQ}}$, and the case of $\mathcal{C}_{D_{f,NSQ}}$ is similar.
				
				When the defining set is $D_{f,SQ}$, by Lemma 10, we have $f^*(0)=0$. Thus, by Lemma 2, we get $\#D_{f,SQ}=\frac{(p-1)}{2}(p^{n-1}-\epsilon_0p^{\frac{n+s}{2}-1})$, then $wt({c}(a))=\#D_{f,SQ}-\#\{x\in\mathbb{F}_{p}^n:f(x)\in SQ,\ a\cdot x=0\}$. Let $N=\#\{x\in\mathbb{F}_{p}^n:f(x)\in SQ,\ a\cdot x=0\}$, then
				\begin{flalign*}
				N&=p^{-2}\sum\limits_{j\in SQ}\sum\limits_{x\in\mathbb{F}_{p}^n}\sum\limits_{y\in\mathbb{F}_p}\sum\limits_{z\in\mathbb{F}_p}\xi_p^{y(f(x)-j)-za\cdot x}&&\\
				&=p^{-2}(\sum\limits_{z\in\mathbb{F}_p^{\times}}\sum\limits_{y\in\mathbb{F}_p^{\times}}\sum\limits_{x\in\mathbb{F}_{p}^n}\xi_p^{yf(x)-za\cdot x}\sum\limits_{j\in SQ}\xi_p^{-yj}+\sum\limits_{y\in\mathbb{F}_p^{\times}}\sum\limits_{x\in\mathbb{F}_{p}^n}\xi_p^{yf(x)}\sum\limits_{j\in SQ}\xi_p^{-yj}&&\\
				&+\quad\sum\limits_{j\in SQ}\sum\limits_{z\in\mathbb{F}_p^{\times}}\sum\limits_{x\in\mathbb{F}_{p}^n}\xi_p^{-za\cdot x})+\frac{(p-1)}{2}p^{n-2}&&\\
				&=p^{-2}(N_1+N_2+N_3)+\frac{(p-1)}{2}p^{n-2},
				\end{flalign*}
				where
				\begin{flalign*}
				N_1&=\sum\limits_{z\in\mathbb{F}_p^{\times}}\sum\limits_{y\in\mathbb{F}_p^{\times}}\sum\limits_{x\in\mathbb{F}_{p}^n}\xi_p^{yf(x)-za\cdot x}\sum\limits_{j\in SQ}\xi_p^{-yj}&&
				\end{flalign*}
				\begin{flalign*}
				&=\sum\limits_{z\in\mathbb{F}_p^{\times}}\sum\limits_{y\in\mathbb{F}_p^{\times}}\sigma_y(\sum\limits_{x\in\mathbb{F}_{p}^n}\xi_p^{f(x)-y^{-1}za\cdot x})\sum\limits_{j\in SQ}\xi_p^{-yj}&&\\
				&=\sum\limits_{z\in\mathbb{F}_p^{\times}}\sum\limits_{y\in\mathbb{F}_p^{\times}}\sigma_y(\widehat{\chi_f}(y^{-1}za))\frac{\eta(-y)\sqrt{p^*}-1}{2},&&\\
				N_2&=\sum\limits_{y\in\mathbb{F}_p^{\times}}\sum\limits_{x\in\mathbb{F}_{p}^n}\xi_p^{yf(x)}\sum\limits_{j\in SQ}\xi_p^{-yj}&&\\
				&=\sum\limits_{y\in\mathbb{F}_p^{\times}}\sigma_y(\epsilon_0p^{\frac{n+s}{2}}\xi_p^{f^*(0)})\frac{\eta(-y)\sqrt{p^*}-1}{2}&&\\
				&=-\epsilon_0\frac{(p-1)}{2}p^{\frac{n+s}{2}},&&\\
				N_3&=\sum\limits_{j\in SQ}\sum\limits_{z\in\mathbb{F}_p^{\times}}\sum\limits_{x\in\mathbb{F}_{p}^n}\xi_p^{-za\cdot x}=
				\begin{cases}
				\frac{(p-1)^2}{2}p^n,&\text{if}\ a=0;\\
				0,&\text{if}\ a\ne0.
				\end{cases}
				\end{flalign*}
				Clearly, when $a=0$, $wt({c}(a))=0$. When $a\ne0$, we discuss in two cases.
				\begin{itemize}
					\item[$\bullet$] When $a\notin\mathrm{Supp}(\widehat{\chi_f})$, then for any $y,\ z\in \mathbb{F}_p^{\times}$, we have $ y^{-1}za\notin \mathrm{Supp}(\widehat{\chi_f})$, $N_1=0$, so $wt({c}(a))=\frac{(p-1)^2}{2}(p^{n-2}-\epsilon_0p^{\frac{n+s}{2}-2})$.
					\item[$\bullet$] When $a\in\mathrm{Supp}(\widehat{\chi_f})$, then for any $y,\ z\in \mathbb{F}_p^{\times}$, we have $ y^{-1}za\in \mathrm{Supp}(\widehat{\chi_f})$, $N_1=\sum\limits_{z\in\mathbb{F}_p^{\times}}\sum\limits_{y\in\mathbb{F}_p^{\times}}\sigma_y(\epsilon_{y^{-1}za}p^{\frac{n+s}{2}}\xi_p^{f^*(y^{-1}za)})\frac{\eta(-y)\sqrt{p^*}-1}{2}.$
					By Lemma 9 and Remark 6, we have $\epsilon_{y^{-1}za}=\epsilon_a$,  and $f^*(y^{-1}za)=y^{-h}z^hf^*(a)$, so $N_1=\sum\limits_{z\in\mathbb{F}_p^{\times}}\sum\limits_{y\in\mathbb{F}_p^{\times}}\epsilon_ap^{\frac{n+s}{2}}\xi_p^{y^{-(h-1)}z^hf^*(a)}\frac{\eta(-y)\sqrt{p^*}-1}{2}$. If $f^*(a)=0$, then $N_1=-\epsilon_a\\\frac{(p-1)^2}{2}p^{\frac{n+s}{2}}$, so $wt({c}(a))=\frac{(p-1)^2}{2}(p^{n-2}+(\epsilon_a-\epsilon_0)p^{\frac{n+s}{2}-2}).$ Note that $\mathrm{gcd}(h-1,\ p-1)=1$, so when $y$ runs through $\mathbb{F}_p^{\times}$, $y^{-(h-1)}$ runs through $\mathbb{F}_p^{\times}$. Since $h$ is even, then $\eta(-y)=\eta(-1)\eta(y^{-(h-1)})$, $\eta(z^h)=1$. By Lemma 1, if $f^*(a)\ne0$, we have $N_1=\epsilon_a\frac{(p-1)}{2}p^{\frac{n+s}{2}}(p\eta(f^*(a))+1)$, so $wt({c}(a))=\frac{(p-1)^2}{2}(p^{n-2}-\epsilon_0p^{\frac{n+s}{2}-2})-\frac{(p-1)}{2}\epsilon_ap^{\frac{n+s}{2}-2}(p\eta(f^*(a))+1).$
				\end{itemize}
				
				From the above discussion, we can get the weights of codewords of $\mathcal{C}_{D_{f,SQ}}$. Note that $wt({c}(a))=0$ if and only if $a=0$, thus the number of codewords  of $\mathcal{C}_{D_{f,SQ}}$ is $p^n$ and the dimension  of $\mathcal{C}_{D_{f,SQ}}$ is $n$. Clearly, the dimension of dual code $\mathcal{C}_{D_{f,SQ}}^{\bot}$ is $\frac{(p-1)}{2}(p^{n-1}-\epsilon_0p^{\frac{n+s}{2}-1})-n$. Similar to the discussion in the proof of Theorem 3, we have that the minimum distance  of $\mathcal{C}_{D_{f,SQ}}$ is not 1. Due to $f(x)\in \mathcal{NWRF}$, we have $f(x)=f(-x)$, so if $x\in D_{f,SQ}$, then $-x\in D_{f,SQ}$. Since $x\ne -x$ for $x\in D_{f,SQ}$, then we have $x\cdot a+(-x)\cdot a=0$ for any $a\in\mathbb{F}_{p}^n$ which implies that $d^{\bot}=2.$ \qed
				
			\end{proof}
			\begin{remark}
				By Lemma 7, we have that linear codes $\mathcal{C}_{D_{f,SQ}}$ and $\mathcal{C}_{D_{f,NSQ}}$ are minimal for $0\le s\le n-6$.
			\end{remark}
			\begin{table}
				\caption{The weight distributions of $\mathcal{C}_{D_{f,SQ}}$ and $\mathcal{C}_{D_{f,NSQ}}$  in Proposition 5 when $n+s$ is even and $0\in B_+(f)$}
				\begin{tabular}{|l|l|}
					\hline
					Weight &Multiplicity\\\hline
					0&1\\
					\hline
					$\frac{(p-1)^2}{2}(p^{n-2}-p^{\frac{n+s}{2}-2})$&$p^n-p^{n-s}$\\
					\hline
					$\frac{(p-1)^2}{2}p^{n-2}$&$\frac{(p+1)}{2}\frac{k}{p}+\frac{(p-1)}{2}p^{\frac{n-s}{2}-1}-1$\qquad\qquad\quad\hspace{2cm}\\
					\hline
					$\frac{(p-1)^2}{2}(p^{n-2}-2p^{\frac{n+s}{2}-2} )$&$\frac{(p+1)}{2}(p^{n-s-1}-\frac{k}{p})$\\
					\hline
					$\frac{(p-1)}{2}(p^{n-1}-p^{n-2}-2p^{\frac{n+s}{2}-1})$&$\frac{(p-1)}{2}(\frac{k}{p}-p^{\frac{n-s}{2}-1})$\\
					\hline
					$\frac{(p-1)}{2}(p^{n-1}-p^{n-2}+2p^{\frac{n+s}{2}-2})$&$ \frac{(p-1)}{2}(p^{n-s-1}-\frac{k}{p})$\\
					\hline
				\end{tabular}
			\end{table}

			\begin{table}
				\caption{The weight distributions of $\mathcal{C}_{D_{f,SQ}}$ and $\mathcal{C}_{D_{f,NSQ}}$ in Proposition 5 when $n+s$ is even and $0\in B_-(f)$}
				\begin{tabular}{|l|l|}
					\hline
					Weight &Multiplicity\\\hline
					0&1\\\hline
					$\frac{(p-1)^2}{2}(p^{n-2}+p^{\frac{n+s}{2}-2})$&$p^n-p^{n-s}$\\
					\hline
					$\frac{(p-1)^2}{2}(p^{n-2}+2p^{\frac{n+s}{2}-2})$&$\frac{(p+1)}{2}\frac{k}{p}$\\
					\hline
					$\frac{(p-1)^2}{2}p^{n-2}$&$\frac{(p+1)}{2}(p^{n-s-1}-\frac{k}{p})-\frac{(p-1)}{2}p^{\frac{n-s}{2}-1}-1$\\
					\hline
					$\frac{(p-1)}{2}(p^{n-1}-p^{n-2}-2p^{\frac{n+s}{2}-2})$&$\frac{(p-1)}{2}\frac{k}{p}$\\
					\hline
					$\frac{(p-1)}{2}(p^{n-1}-p^{n-2}+2p^{\frac{n+s}{2}-1})$&$\frac{(p-1)}{2}(p^{n-s-1}-\frac{k}{p}+p^{\frac{n-s}{2}-1})$\\
					\hline
				\end{tabular}
			\end{table}
		
			For a subclass of non-weakly regular $s$-plateaued functions belonging to $\mathcal{NWRF}$, we determine the weight distributions of linear codes given by Theorem 5 in the following proposition.
			
			\begin{proposition}
				Let $n+s$ be an even integer with $0\le s\le n-4$, $f(x):\ \mathbb{F}_p^n\longrightarrow\mathbb{F}_p$ be a non-weakly regular $s$-plateaued function belonging to $\mathcal{NWRF}$ with $\#B_+(f)=k\ (k\ne 0\ \text{and}\ k\ne p^{n-s})
				$, and the dual $f^*(x)$ of $f(x)$ be bent relative to $\mathrm{Supp}(\widehat{\chi_f})$. Then the weight distributions of $\mathcal{C}_{D_{f,SQ}}$ and $\mathcal{C}_{D_{f,NSQ}}$ are given by Tables 17 and 18, respectively. 
			\end{proposition}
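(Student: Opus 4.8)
The plan is to combine the codeword-weight formulas of Theorem 5 with the counting data of Lemma 4. First I would settle the bookkeeping. Since $f\in\mathcal{NWRF}$ we have $f(0)=0$, and Lemma 10 gives $f^*(0)=0$; hence in the notation of Lemma 4 the distinguished value is $j_0=0$, so that $\#\{a\in B_+(f):f^*(a)=j\}=c_j(f^*)$ and $\#\{a\in B_-(f):f^*(a)=j\}=d_j(f^*)$ are precisely the quantities that enter the count. Because $n+s$ is even we have $p^{n+s}\equiv 1\ (\mathrm{mod}\ 4)$, so Lemma 11 tells us that $f$ and $f^*$ have the same type; equivalently $0\in B_+(f)$ (that is, $\epsilon_0=1$) if and only if $0\in B_+(f^*)$, and likewise for $B_-$. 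This is what selects the correct branch of Lemma 4 for the values $c_j(f^*)$ and $d_j(f^*)$. Finally, by the Parseval identity $\#\mathrm{Supp}(\widehat{\chi_f})=p^{n-s}$, so exactly $p^n-p^{n-s}$ of the $a\in\mathbb{F}_p^n$ lie outside $\mathrm{Supp}(\widehat{\chi_f})$.

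Next I would partition the $p^n$ vectors $a$ according to the cases of Theorem 5: $a=0$; $a\notin\mathrm{Supp}(\widehat{\chi_f})$; and $a\in\mathrm{Supp}(\widehat{\chi_f})$, the last one subdivided into six subcases indexed by the sign $\epsilon_a$ (that is, $a\in B_+(f)$ or $a\in B_-(f)$) and by whether $f^*(a)=0$, $f^*(a)\in SQ$, or $f^*(a)\in NSQ$. Simplifying the weight expressions of Theorem 5 in each of these six subcases, one finds that for $\mathcal{C}_{D_{f,SQ}}$ the two subcases with $f^*(a)=0$ each coincide in weight with exactly one $f^*(a)\neq0$ subcase of the same sign $\epsilon_a$ (namely the corresponding $f^*(a)\in NSQ$ subcase), so those multiplicities must be summed, while the remaining two $f^*(a)\neq0$ subcases keep pairwise distinct weights; this is why only six distinct weights (including $0$) appear rather than the naive eight, matching Tables 17 and 18. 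For the counts, since $n+s$ is even the numbers $c_j(f^*)$ and $d_j(f^*)$ of Lemma 4 do not depend on $j\neq 0$, whence $\#\{a\in B_+(f):f^*(a)\in SQ\}=\frac{p-1}{2}c_j(f^*)$ and $\#\{a\in B_-(f):f^*(a)\in SQ\}=\frac{p-1}{2}d_j(f^*)$, with $SQ$ replaced by $NSQ$ giving the other two combinations, while $c_0(f^*)$ and $d_0(f^*)$ take care of the subcase $f^*(a)=0$.

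Assembling these multiplicities — and subtracting $1$ from the multiplicity of the weight $\frac{(p-1)^2}{2}p^{n-2}$ to remove the zero codeword $c(0)$, which otherwise falls in the ``$f^*(a)=0$, $\epsilon_a=\epsilon_0$'' box — reproduces the entries of Table 17 when $\epsilon_0=1$ and of Table 18 when $\epsilon_0=-1$. To finish, I would observe that $\mathcal{C}_{D_{f,SQ}}$ and $\mathcal{C}_{D_{f,NSQ}}$ have the same weight distribution: in Theorem 5 the formulas for $D_{f,NSQ}$ are obtained from those for $D_{f,SQ}$ by interchanging the roles of $\eta(f^*(a))=1$ and $\eta(f^*(a))=-1$, which merely permutes the six $\mathrm{Supp}$-subcases among themselves (now the $f^*(a)\in SQ$ subcases are the ones merging with the $f^*(a)=0$ subcases) and hence leaves the merged multiplicities, and thus the whole table, unchanged. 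The main obstacle here is not conceptual but organizational: one has to keep track of exactly which subcases merge to a common weight, note that this pairing depends on the sign $\epsilon_0$ and on which defining set is used, and make the single correction for $c(0)$; once that is arranged the arithmetic is entirely routine, which is presumably why the authors call the proof straightforward and omit it.
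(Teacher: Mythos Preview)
Your proposal is correct and follows exactly the approach the paper indicates: the authors write only that ``the proof is quite straightforward from Lemma~4 and Theorem~5, so we omit it,'' and your argument is precisely the unpacking of that sentence, with the necessary preliminary observations (Lemmas~10 and~11 to pin down $j_0=0$ and the type of $f^*$) made explicit. Your identification of which subcases merge, the $-1$ correction for $c(0)$, and the symmetry argument showing $\mathcal{C}_{D_{f,SQ}}$ and $\mathcal{C}_{D_{f,NSQ}}$ share the same table are all accurate.
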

			\begin{proof}
				The proof is quite straightforward from Lemma 4 and Theorem 5, so we omit it.\qed
			\end{proof}
			
			Note that when $f(x)\in\mathcal{NWRF}$, since $t$ is even, then for any $x\in\mathbb{F}_{p}^n$, 
			$f(x)\in SQ\ $(respectively $f(x)\in NSQ$) if and only if $f(ax)\in SQ\ $(respectively $f(ax)\in NSQ$) for any $a\in\mathbb{F}_{p}^{\times}$. Then we can select a subset $\widetilde{D}_{f,SQ}$ of $D_{f,SQ}$ such that $\bigcup\limits_{a\in \mathbb{F}_p^{\times}}a\widetilde{D}_{f,SQ}$ is just a partition of $D_{f,SQ}$ and a subset $\widetilde{D}_{f,NSQ}$ of $D_{f,NSQ}$ such that $\bigcup\limits_{a\in \mathbb{F}_p^{\times}}a\widetilde{D}_{f,NSQ}$ is just a partition of $D_{f,NSQ}$. Let $\widetilde{D}_{f,SQ}=\{\widetilde{x_1},\ \widetilde{x_2},\cdots,\ \widetilde{x_l}\}$ and $\widetilde{D}_{f,NSQ}=\{\widetilde{y_1},\ \widetilde{y_2},\cdots,\ \widetilde{y_m}\}$.
			Now, we consider the punctured codes defined by 
			\begin{align}
			\widetilde{{c}}(a)=\{(a\cdot\widetilde{x_1},\ a\cdot\widetilde{x_2},\cdots,\ a\cdot\widetilde{x_l}):a\in \mathbb{F}_{p}^n\};\\
			\widetilde{{c}}(b)=\{(b\cdot \widetilde{y_1},\ b\cdot \widetilde{y_2},\cdots,\ b\cdot \widetilde{y_m}):b\in \mathbb{F}_{p}^n\}.
			\end{align}
			 Note that the Hamming weights and length of the punctured codes $\widetilde{\mathcal{C}}_{\widetilde{D}_{f,SQ}}$, $\widetilde{\mathcal{C}}_{\widetilde{D}_{f,NSQ}}$ are given directly from those of $\mathcal{C}_{D_{f,SQ}}$ and $\mathcal{C}_{D_{f,NSQ}}$ by dividing them with $(p-1)$. Hence, we have the following corollary.
			\begin{corollary}
				The punctured codes $\widetilde{\mathcal{C}}_{\widetilde{D}_{f,SQ}}$ and $\widetilde{\mathcal{C}}_{\widetilde{D}_{f,NSQ}}$ defined by (6)
				and (7) of $\mathcal{C}_{D_{f,SQ}}$ and $\mathcal{C}_{D_{f,NSQ}}$ in Proposition 5 are $[\frac{(p^{n-1}-\epsilon_0p^{\frac{n+s}{2}-1})}{2},\ n]$ linear codes and the weight distributions of $\widetilde{\mathcal{C}}_{\widetilde{D}_{f,SQ}}$ and $\widetilde{\mathcal{C}}_{\widetilde{D}_{f,NSQ}}$ are given by Tables 19 and 20, respectively. The dual codes $\widetilde{\mathcal{C}}_{\widetilde{D}_{f,SQ}}^{\bot}$ and  $\widetilde{\mathcal{C}}_{\widetilde{D}_{f,NSQ}}^{\bot}$ are $[\frac{(p^{n-1}-\epsilon_0p^{\frac{n+s}{2}-1})}{2},\\ \frac{(p^{n-1}-\epsilon_0p^{\frac{n+s}{2}-1})}{2}-n,\ 3]$ linear codes which are almost optimal according to the sphere packing bound.
			\end{corollary}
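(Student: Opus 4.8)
The plan is to deduce Corollary~8 almost entirely from Proposition~5 (hence from Theorem~5) by exploiting the orbit structure of $D_{f,SQ}$ and $D_{f,NSQ}$ under scalar multiplication. First I would record that, as already noted before (6), for $f\in\mathcal{NWRF}$ with $t$ even we have $f(x)\in SQ$ iff $f(ax)\in SQ$ for every $a\in\mathbb{F}_p^{\times}$, so $D_{f,SQ}$ (and likewise $D_{f,NSQ}$) is a union of $\mathbb{F}_p^{\times}$-orbits; since any nonzero vector has trivial stabilizer in $\mathbb{F}_p^{\times}$, every such orbit has size exactly $p-1$. Therefore $\widetilde{D}_{f,SQ}$ is a genuine transversal, $\#\widetilde{D}_{f,SQ}=\#D_{f,SQ}/(p-1)=\frac{1}{2}\bigl(p^{n-1}-\epsilon_0 p^{(n+s)/2-1}\bigr)$, which is the claimed length, and the same holds for $\widetilde{D}_{f,NSQ}$. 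For fixed $a$, the coordinate $a\cdot(c\widetilde{x_i})=c(a\cdot\widetilde{x_i})$ is zero for one $c\in\mathbb{F}_p^{\times}$ iff it is zero for all of them, so each orbit contributes either $0$ or $p-1$ to $\mathrm{wt}(c(a))$ in $\mathcal{C}_{D_{f,SQ}}$; hence $\mathrm{wt}(c(a))=(p-1)\,\mathrm{wt}(\widetilde{c}(a))$. Since the maps $a\mapsto c(a)$ and $a\mapsto\widetilde{c}(a)$ have the same kernel, the punctured codes have dimension $n$ by Theorem~5, and dividing every weight in Tables~17 and 18 by $p-1$ with unchanged multiplicities yields Tables~19 and 20. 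The dual codes then have dimension $\frac{1}{2}\bigl(p^{n-1}-\epsilon_0 p^{(n+s)/2-1}\bigr)-n$.

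For the minimum distance of the dual codes I would argue structurally that $A_1^{\bot}=A_2^{\bot}=0$. A weight-one dual codeword would force some column $\widetilde{x_i}$ to equal $0$, impossible as $0\notin D_{f,SQ}$; a weight-two dual codeword would force two distinct columns $\widetilde{x_i},\widetilde{x_j}$ to be proportional, which is excluded precisely because $\widetilde{D}_{f,SQ}$ is a transversal of the scalar orbits (this is exactly the gain over the original $d^{\bot}=2$ in Theorem~5). It then remains to show $A_3^{\bot}>0$: substituting the multiplicities of Tables~19 and 20 into the first four Pless power moments and using $A_1^{\bot}=A_2^{\bot}=0$, the relevant moment collapses to an explicit polynomial in $p,n,s,k$ times a nonzero multiple of $A_3^{\bot}$, and one checks that this quantity is strictly positive in the admissible range $0\le s\le n-4$, $k\ne 0,\,p^{n-s}$ (treating the $0\in B_+(f)$ and $0\in B_-(f)$ cases separately, and, if necessary, isolating small $n$ or $s$ as was done in Corollaries~4 and 7). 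Hence $d^{\bot}=3$.

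Finally, to obtain almost-optimality I would plug the parameters $N=\frac{1}{2}\bigl(p^{n-1}-\epsilon_0 p^{(n+s)/2-1}\bigr)$, dimension $N-n$, and $d^{\bot}=3$ into the sphere packing bound (Lemma~6): because $N$ is of order $p^{n-1}$, the term $\binom{N}{2}(p-1)^2$ already exceeds $p^{n}$, so no $[N,\,N-n,\,d']$ code with $d'\ge 5$ exists, i.e.\ the optimal minimum distance for this length and dimension is at most $4$. Thus a code with $d^{\bot}=3$ is at worst one away from optimal, which is what "almost optimal" means here; this mirrors the conclusion drawn for $\widetilde{\mathcal{C}}_{\widetilde{D_f}}^{\bot}$ in Corollary~4.

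The main obstacle I anticipate is the positivity verification for $A_3^{\bot}$: after inserting the Tables~19/20 multiplicities into the Pless moments one is left with a fairly long expression whose sign must be confirmed uniformly over the parameter range (and over the two sign cases), so some care is needed with the low-dimensional corner cases. Everything else — the length/weight scaling, the dimension count, and the vanishing of $A_1^{\bot}$ and $A_2^{\bot}$ — is bookkeeping that runs in exact parallel with the proofs of Theorem~5 and Corollary~4.
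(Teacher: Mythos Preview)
Your proposal is correct and follows essentially the same route as the paper: derive the punctured code's length and weight distribution from Proposition~5 by dividing by $p-1$, then determine $d^{\bot}=3$ via the Pless power moments, and finally invoke the sphere packing bound for almost-optimality. The one minor difference is that you establish $A_1^{\bot}=A_2^{\bot}=0$ structurally (no zero column, and the transversal property rules out proportional columns), whereas the paper simply reads all of $A_1^{\bot},A_2^{\bot},A_3^{\bot}$ off the first four Pless moments; your argument here is cleaner and saves a little computation, but the paper's approach has the advantage of producing the explicit closed-form expressions for $A_3^{\bot}$ in the two sign cases in one pass, which is exactly what is needed to verify positivity without a separate case analysis.
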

			\begin{proof}
				The weight distributions of $\widetilde{\mathcal{C}}_{\widetilde{D}_{f,SQ}}$ and $\widetilde{\mathcal{C}}_{\widetilde{D}_{f,NSQ}}$ can be easily obtained by Proposition 5 and the dimensions of $\widetilde{\mathcal{C}}_{\widetilde{D}_{f,SQ}}^{\bot}$ and $\widetilde{\mathcal{C}}_{\widetilde{D}_{f,NSQ}}^{\bot}$ are clearly\\ $\frac{(p^{n-1}-\epsilon_0p^{\frac{n+s}{2}-1})}{2}-n$. By the first four Pless power moments, we have $A_1^{\bot}=A_2^{\bot}=0$, $A_3^{\bot}=\frac{1}{48p^3} (p-1)(4kp^{\frac{3s+n}{2}}(p+1)-(3p^2-4p+5)p^{\frac{n+s}{2}+n}+(p-1)^2p^{2n}-(4p^3-8p^2)(p^n-p^{\frac{n+s}{2}})+(2p^2-6p)p^{n+s})>0$ for $0\in B_+(f)$ and $A_3^{\bot}=\frac{1}{48p^3}(p-1)(p^{\frac{3n+s}{2}}(3p^2-8p+1)+4kp^{\frac{3s+n}{2}}(p+1)+(p-1)^2p^{2n}-(4p^3-8p^2)(p^n+p^{\frac{n+s}{2}})+(2p^2-6p)p^{s+n})>0$ for $0\in B_-(f)$, so the minimum distances of $\widetilde{\mathcal{C}}_{\widetilde{D}_{f,SQ}}^{\bot}$ and $\widetilde{\mathcal{C}}_{\widetilde{D}_{f,NSQ}}^{\bot }$ are 3. According to the sphere packing bound, we have that  $\widetilde{\mathcal{C}}_{\widetilde{D}_{f,SQ}}^{\bot}$ and $\widetilde{\mathcal{C}}_{\widetilde{D}_{f,NSQ}}^{\bot}$ are almost optimal.\qed
			\end{proof}
			\begin{table}
				\caption{The weight distributions of $\widetilde{\mathcal{C}}_{\widetilde{D}_{f,SQ}}$ and $\widetilde{\mathcal{C}}_{\widetilde{D}_{f,NSQ}}$ in Corollary 11  when $n+s$ is even and $0\in B_+(f)$}
				\begin{tabular}{|l|l|}
					\hline
					Weight &Multiplicity\\\hline
					0&1\\
					\hline
					$\frac{(p-1)}{2}(p^{n-2}-p^{\frac{n+s}{2}-2})$&$p^n-p^{n-s}$\\
					\hline
					$\frac{(p-1)}{2}p^{n-2}$&$\frac{(p+1)}{2}\frac{k}{p}+\frac{(p-1)}{2}p^{\frac{n-s}{2}-1}-1$\qquad\qquad\quad\hspace{2cm}\\
					\hline
					$\frac{(p-1)}{2}(p^{n-2}-2p^{\frac{n+s}{2}-2} )$&$\frac{(p+1)}{2}(p^{n-s-1}-\frac{k}{p})$\\
					\hline
					$\frac{p^{n-1}-p^{n-2}-2p^{\frac{n+s}{2}-1}}{2}$&$\frac{(p-1)}{2}(\frac{k}{p}-p^{\frac{n-s}{2}-1})$\\
					\hline
					$\frac{p^{n-1}-p^{n-2}+2p^{\frac{n+s}{2}-2}}{2}$&$ \frac{(p-1)}{2}(p^{n-s-1}-\frac{k}{p})$\\
					\hline
				\end{tabular}
			\end{table}
			\begin{table}
				\caption{The weight distributions of $\widetilde{\mathcal{C}}_{\widetilde{D}_{f,SQ}}$ and $\widetilde{\mathcal{C}}_{\widetilde{D}_{f,NSQ}}$  in Corollary 11 when $n+s$ is even and $0\in B_-(f)$}
				\begin{tabular}{|l|l|}
					\hline
					Weight &Multiplicity\\\hline
					0&1\\\hline
					$\frac{(p-1)}{2}(p^{n-2}+p^{\frac{n+s}{2}-2})$&$p^n-p^{n-s}$\\
					\hline
					$\frac{(p-1)}{2}(p^{n-2}+2p^{\frac{n+s}{2}-2})$&$\frac{(p+1)}{2}\frac{k}{p}$\\
					\hline
					$\frac{(p-1)}{2}p^{n-2}$&$\frac{(p+1)}{2}(p^{n-s-1}-\frac{k}{p})-\frac{(p-1)}{2}p^{\frac{n-s}{2}-1}-1$\\
					\hline
					$\frac{p^{n-1}-p^{n-2}-2p^{\frac{n+s}{2}-2}}{2}$&$\frac{(p-1)}{2}\frac{k}{p}$\\
					\hline
					$\frac{p^{n-1}-p^{n-2}+2p^{\frac{n+s}{2}-1}}{2}$&$\frac{(p-1)}{2}(p^{n-s-1}-\frac{k}{p}+p^{\frac{n-s}{2}-1})$\\
					\hline
				\end{tabular}
			\end{table}
			\begin{remark}
				By Remark 11, we easily have that linear codes $\widetilde{\mathcal{C}}_{\widetilde{D}_{f,SQ}}$ and $\widetilde{\mathcal{C}}_{\widetilde{D}_{f,NSQ}}$ are minimal for $0\le s\le n-6$.
			\end{remark}
			
			Next, we verify Proposition 5 and Corollary 11 by Magma program for the following non-weakly regular plateaued functions.
			\begin{example}
				Consider $f(x)\ :\ \mathbb{F}_{3}^5\longrightarrow\mathbb{F}_3,\ f(x_1,\ x_2,\ x_3,\ x_4,\ x_5)=2x_1^2x_4^2+2x_1^2+x_2^2+x_3x_4$, which is a non-weakly regular $1$-plateaued function belonging to $\mathcal{NWRF}$ with $0\in B_+(f)$, $k=27$, and the dual $f^*(x)$ of $f(x)$ is bent relative to $\mathrm{Supp}(\widehat{\chi_f})$. Then, $\mathcal{C}_{D_{f,SQ}}$ constructed by (4) and $\mathcal{C}_{D_{f,NSQ}}$ constructed by (5) are five-weight linear codes with parameters $[72,\ 5,\ 36]$, weight enumerator $1+6z^{36}+36z^{42}+162z^{48}+20z^{54}+18z^{60}.$ The dual code $\mathcal{C}_{D_{f,SQ}}^{\bot}$ and $\mathcal{C}_{D_{f,NSQ}}^{\bot}$ are $[72,\ 67,\ 2]$ linear codes which are almost optimal according to the Code Table at http://www.codetables.de/. Moreover, the punctured codes $\widetilde{\mathcal{C}}_{\widetilde{D}_{f,SQ}}$ constructed by (6) and  $\widetilde{\mathcal{C}}_{\widetilde{D}_{f,NSQ}}$ constructed by (7) are $[36,\ 5,\ 18]$ linear codes with weight enumerator $1+6z^{18}+36z^{21}+162z^{24}+20z^{27}+18z^{30}$. The dual codes $\widetilde{\mathcal{C}}_{\widetilde{D}_{f,SQ}}^{\bot}$ and $\widetilde{\mathcal{C}}_{\widetilde{D}_{f,NSQ}}^{\bot}$ are $[36,\ 31,\ 3]$ linear codes which are optimal according to the Code Table at http://www.codetables.de/.
			\end{example}
			\begin{example}
				Consider $f(x)\ :\ \mathbb{F}_{3}^5\longrightarrow\mathbb{F}_3,\ f(x_1,\ x_2,\ x_3,\ x_4,\ x_5)=x_1^2x_4^2+x_1^2+x_2^2+x_3x_4$, which is a non-weakly regular $1$-plateaued function belonging to $\mathcal{NWRF}$ with $0\in B_-(f)$, $k=54$, and the dual $f^*(x)$ of $f(x)$ is bent relative to $\mathrm{Supp}(\widehat{\chi_f})$. Then,  $\mathcal{C}_{D_{f,SQ}}$ constructed by (4) and $\mathcal{C}_{D_{f,NSQ}}$ constructed by (5) are five-weight linear codes with parameters $[90,\ 5,\ 48]$, weight enumerator $1+18z^{48}+14z^{54}+162z^{60}+36z^{66}+12z^{72}.$ The dual codes $\mathcal{C}_{D_{f,SQ}}^{\bot}$ and $\mathcal{C}_{D_{f,NSQ}}^{\bot}$ are $[90,\ 85,\ 2]$ linear codes, which are almost optimal according to the Code Table at http://www.codetables.de/. Moreover, the punctured codes  $\widetilde{\mathcal{C}}_{\widetilde{D}_{f,SQ}}$ constructed by (6) and $\widetilde{\mathcal{C}}_{\widetilde{D}_{f,NSQ}}$ constructed by (7) are $[45,\ 5,\ 24]$ linear codes with weight enumerator $1+18z^{24}+14z^{27}+162z^{30}+36z^{33}+12z^{36}$. The dual codes $\widetilde{\mathcal{C}}_{\widetilde{D}_{f,SQ}}^{\bot}$ and $\widetilde{\mathcal{C}}_{\widetilde{D}_{f,NSQ}}^{\bot}$ are $[45,\ 40,\ 3]$ linear codes which are optimal according to the Code Table at http://www.codetables.de/.
			\end{example}
			
			According to Lemma 8, Theorem 5, Remark 11 and Corollary 11, we have the following results on secret sharing schemes. 
			\begin{corollary}
				Let $n+s$ be an even integer with $0\le s\le n-6$ and $\mathcal{C}_{D_{f,SQ}}$ and $\mathcal{C}_{D_{f,NSQ}}$ be  linear codes in Theorem 5 with generator matrix $G=[{g}_0,\ {g}_1,\ \cdots, \\{g}_{m-1}]$, where $m=\frac{(p-1)}{2}(p^{n-1}-\epsilon_0p^{\frac{n+s}{2}-1})$. Then, in the secret sharing schemes based on $\mathcal{C}_{D_{f,SQ}}^{\bot}$ and $\mathcal{C}_{D_{f,NSQ}}^{\bot}$ with $d^{\bot}=2$, the number of participants is $m-1$ and there are $p^{n-1}$ minimal access sets. Moreover, if ${g}_j$ is a scalar multiple of ${g}_0$, $1\le j\le m-1$, then participant $P_j$ must be in every minimal access set, or else, 
				participant $P_j$ must be in $(p-1)p^{n-2}$ out of $p^{n-1}$  minimal access sets.
			\end{corollary}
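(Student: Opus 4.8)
The plan is to derive this corollary by feeding the structural facts already established for $\mathcal{C}_{D_{f,SQ}}$ and $\mathcal{C}_{D_{f,NSQ}}$ into Lemma 8, the general description of secret sharing schemes built from the dual of a minimal code. By Theorem 5, for $0\le s\le n-4$ both $\mathcal{C}_{D_{f,SQ}}$ and $\mathcal{C}_{D_{f,NSQ}}$ are $[m,\ n]$ linear codes with $m=\frac{p-1}{2}(p^{n-1}-\epsilon_0 p^{\frac{n+s}{2}-1})$, and their dual codes have minimum distance $d^{\bot}=2$; in particular the dimension $k$ of each code equals $n$. These are precisely the two parameters ($k$ and $d^{\bot}$) that Lemma 8 needs as input.

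The first step is to verify that these codes are minimal, so that Lemma 8 is applicable. This is exactly the content of Remark 11: under the stricter range $0\le s\le n-6$ one checks the Ashikhmin--Barg inequality $\frac{p-1}{p}<\frac{wt_{\mathrm{min}}}{wt_{\mathrm{max}}}$ using the weights listed in Theorem 5, the point being that when $s$ is small relative to $n$ the correction terms of size $p^{\frac{n+s}{2}-2}$ are negligible next to the leading term $\frac{(p-1)^2}{2}p^{n-2}$, so the ratio of smallest to largest nonzero weight exceeds $\frac{p-1}{p}$. Since this is already recorded in Remark 11 (via Lemma 7), I would simply invoke it; the weight distributions of $\mathcal{C}_{D_{f,SQ}}$ and $\mathcal{C}_{D_{f,NSQ}}$ coincide by Proposition 5, so the same bound $0\le s\le n-6$ works uniformly for both families and no separate case analysis is required.

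With minimality in hand I would apply Lemma 8 to $\mathcal{C}=\mathcal{C}_{D_{f,SQ}}$ (and identically to $\mathcal{C}_{D_{f,NSQ}}$), whose generator matrix $G=[g_0,g_1,\dots,g_{m-1}]$ has $m$ columns. In Massey's construction the column $g_0$ is used to encode the secret and the remaining $m-1$ columns correspond to the participants $P_1,\dots,P_{m-1}$, so the number of participants is $m-1$. Since $k=n$, Lemma 8 yields exactly $p^{k-1}=p^{n-1}$ minimal access sets. Because $d^{\bot}=2$, the first bullet of Lemma 8 governs the access structure: if $g_j$ is a scalar multiple of $g_0$ then $P_j$ belongs to every minimal access set, and otherwise $P_j$ belongs to $(p-1)p^{k-2}=(p-1)p^{n-2}$ out of the $p^{n-1}$ minimal access sets. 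This is precisely the asserted conclusion.

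The only genuinely non-routine ingredient is the minimality check that licenses the use of Lemma 8; once it is granted, the rest is a direct substitution of $k=n$ and $d^{\bot}=2$ into Lemma 8, so I expect no real obstacle beyond keeping track of which range of $s$ each cited result requires ($0\le s\le n-4$ for Theorem 5, $0\le s\le n-6$ for Remark 11).
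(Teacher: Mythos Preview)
Your proposal is correct and follows essentially the same route as the paper, which simply introduces the corollary with the remark that it follows from Lemma 8, Theorem 5, and Remark 11. You have merely spelled out in detail what those citations mean: Theorem 5 supplies $k=n$ and $d^{\bot}=2$, Remark 11 furnishes minimality for $0\le s\le n-6$, and Lemma 8 converts these into the stated access-structure conclusions.
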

			\begin{corollary}
				Let $n+s$ be an even integer with $0\le s\le n-6$ and $\widetilde{\mathcal{C}}_{\widetilde{D}_{f,SQ}}$ and $\widetilde{\mathcal{C}}_{\widetilde{D}_{f,NSQ}}$ be  linear codes in Corollary 11 with generator matrix $G=[{g}_0,\ {g}_1,\ \cdots, \\{g}_{m-1}]$, where $m=\frac{(p^{n-1}-\epsilon_0p^{\frac{n+s}{2}-1})}{2}$. Then, in the secret sharing schemes based on $\widetilde{\mathcal{C}}_{\widetilde{D}_{f,SQ}}^{\bot}$ and $\widetilde{\mathcal{C}}_{\widetilde{D}_{f,NSQ}}^{\bot}$ with $d^{\bot}=3$, the number of participants is $m-1$ and there are $p^{n-1}$ minimal access sets. Moreover, every participant $P_j$ is involved in $(p-1)p^{n-2}$ out of $p^{n-1}$  minimal access sets.
			\end{corollary}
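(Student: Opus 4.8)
The plan is to obtain the statement as a direct application of Lemma 8 to the punctured codes of Corollary 11. First I would recall from Corollary 11 that, for $n+s$ even with $0\le s\le n-6$, both $\widetilde{\mathcal{C}}_{\widetilde{D}_{f,SQ}}$ and $\widetilde{\mathcal{C}}_{\widetilde{D}_{f,NSQ}}$ are $[m,n]$ linear codes with $m=\frac{p^{n-1}-\epsilon_0p^{\frac{n+s}{2}-1}}{2}$, and that their dual codes have minimum distance $d^{\bot}=3$. Next, by Remark 12 (which itself rests on the Ashikhmin--Barg condition of Lemma 7), these two codes are minimal under exactly the same constraint $0\le s\le n-6$. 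Hence the hypotheses of Lemma 8 are satisfied: in each case we have a minimal $[m,n]$ linear code whose dual satisfies $d^{\bot}\ge 3$.

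Then I would invoke the second bullet of Lemma 8 with $k=n$. It yields $p^{k-1}=p^{n-1}$ minimal access sets in the secret sharing scheme built on $\widetilde{\mathcal{C}}_{\widetilde{D}_{f,SQ}}^{\bot}$ (respectively $\widetilde{\mathcal{C}}_{\widetilde{D}_{f,NSQ}}^{\bot}$), and that for any fixed $t$ with $1\le t\le\min\{k-1,d^{\bot}-2\}$, every set of $t$ participants is involved in $(p-1)^tp^{k-(t+1)}$ of those $p^{n-1}$ minimal access sets. Since $d^{\bot}=3$, we have $d^{\bot}-2=1$, and since $0\le s\le n-6$ forces $n\ge 6$, we get $\min\{k-1,d^{\bot}-2\}=\min\{n-1,1\}=1$. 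Taking $t=1$ gives that every single participant $P_j$ is involved in $(p-1)p^{n-2}$ out of $p^{n-1}$ minimal access sets. Finally, the number of participants equals $m-1$ because, in the Massey construction recalled in Section 2, the column $g_0$ of the generator matrix $G=[g_0,g_1,\dots,g_{m-1}]$ is reserved to recover the secret while $g_1,\dots,g_{m-1}$ are assigned to the $m-1$ participants $P_1,\dots,P_{m-1}$.

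There is essentially no obstacle: once minimality and $d^{\bot}=3$ are in hand from Corollary 11 and Remark 12, the corollary is a purely bookkeeping consequence of Lemma 8. The only point deserving a moment of care is checking that $\min\{k-1,d^{\bot}-2\}$ genuinely equals $1$ and does not collapse to $0$; this is precisely why the range $0\le s\le n-6$ (so $k-1=n-1\ge 5\ge 1$) is imposed, and it is also the range in which minimality is available. A final sanity check against Examples 23 and 24 (where one indeed finds $[36,31,3]$ and $[45,40,3]$ dual codes with $p^{n-1}$ matching the predicted count) would confirm the statement.
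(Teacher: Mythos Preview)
Your proposal is correct and follows essentially the same route as the paper: the paper presents this corollary without an explicit proof, deriving it (together with the preceding corollary) directly from Lemma 8, Corollary 11, and the minimality established in Remark 12, which is precisely what you do. One small caveat on your closing sanity check: the $[36,31,3]$ and $[45,40,3]$ dual codes appear in Examples 13 and 14, and in those examples $n=5$, $s=1$, so $s=n-4$ rather than $s\le n-6$; they confirm $d^{\bot}=3$ but do not actually lie in the range where this corollary applies.
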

		
			When $n+s$ is odd, we give the weights of codewords of $\mathcal{C}_{D_{f,SQ}}$ and $\mathcal{C}_{D_{f,NSQ}}$ in the following theorem.
			\begin{theorem}
				Let $n+s$ be an odd integer with $0\le s\le n-3$, and $f(x):\ \mathbb{F}_p^n\longrightarrow\mathbb{F}_p$ be a non-weakly regular $s$-plateaued function belonging to $\mathcal{NWRF}$. Then, $\mathcal{C}_{D_{f,SQ}}$ defined by (4) is a $[\frac{(p-1)}{2}(p^{n-1}+\epsilon_0p^{\frac{n+s-1}{2}}),\ n]$ linear code, $\mathcal{C}_{D_{f,NSQ}}$ defined by (5) is a $[\frac{(p-1)}{2}(p^{n-1}-\epsilon_0p^{\frac{n+s-1}{2}}),\ n]$ linear code, and the weights of codewords are given as follows. 
				\begin{itemize}
					\item[$\bullet$] 
					When the defining set is $D_{f,SQ}$,\\
					$wt({c}(a))=
					\begin{cases}
					0,&\text{if}\ a=0;\\
					\frac{(p-1)^2}{2}(p^{n-2}+\epsilon_0p^{\frac{n+s-3}{2}}),&\text{if}\ a\notin\mathrm{Supp}(\widehat{\chi_f});\\
					\frac{(p-1)^2}{2}(p^{n-2}+(\epsilon_0-\epsilon_a)p^{\frac{n+s-3}{2}}),&\text{if}\ a\in\mathrm{Supp}(\widehat{\chi_f}),\ f^*(a)=0;\\
					\frac{(p-1)^2}{2}(p^{n-2}+\epsilon_0p^{\frac{n+s-3}{2}})+\\\epsilon_a\frac{(p-1)}{2}p^{\frac{n+s-5}{2}}(p+p^*\eta(f^*(a)),&\text{if}\ a\in\mathrm{Supp}(\widehat{\chi_f}),\ f^*(a)\ne0.
					\end{cases}$
					
					\item[$\bullet$] 
					When the defining set is $D_{f,NSQ}$,\\
					$wt({c}(b))=
					\begin{cases}
					0,&\text{if}\ b=0;\\
					\frac{(p-1)^2}{2}(p^{n-2}-\epsilon_0p^{\frac{n+s-3}{2}}),&\text{if}\ b\notin\mathrm{Supp}(\widehat{\chi_f});\\
					\frac{(p-1)^2}{2}(p^{n-2}-(\epsilon_0-\epsilon_b)p^{\frac{n+s-3}{2}}),&\text{if}\ b\in\mathrm{Supp}(\widehat{\chi_f}),\ f^*(b)=0;\\
					\frac{(p-1)^2}{2}(p^{n-2}-\epsilon_0p^{\frac{n+s-3}{2}})-\\\epsilon_b\frac{(p-1)}{2}p^{\frac{n+s-5}{2}}(p-p^*\eta(f^*(b)),&\text{if}\ b\in\mathrm{Supp}(\widehat{\chi_f}),\ f^*(b)\ne0.
					\end{cases}$
					
				\end{itemize}
				The dual code $\mathcal{C}_{D_{f,SQ}}^{\bot}$ is a $[\frac{(p-1)}{2}(p^{n-1}+\epsilon_0p^{\frac{n+s-1}{2}}),\ \frac{(p-1)}{2}(p^{n-1}+\epsilon_0p^{\frac{n+s-1}{2}})-n,\ 2]$ linear code and $\mathcal{C}_{D_{f,NSQ}}^{\bot}$ is a $[\frac{(p-1)}{2}(p^{n-1}-\epsilon_0p^{\frac{n+s-1}{2}}),\ \frac{(p-1)}{2}(p^{n-1}-\epsilon_0p^{\frac{n+s-1}{2}})-n,\ 2]$ linear code.
			\end{theorem}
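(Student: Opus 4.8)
The plan is to adapt the proofs of Theorem~4 (the $n+s$ odd case of the $D_f$-construction) and Theorem~5 (the $n+s$ even case of the present construction), treating $\mathcal{C}_{D_{f,SQ}}$ in detail; $\mathcal{C}_{D_{f,NSQ}}$ follows line by line with Lemma~1(5) in place of Lemma~1(4), which only reverses the sign of the term involving $\eta(-y)\sqrt{p^*}$ and thereby produces all the sign changes in the stated formulas. First I would establish the lengths. By Lemma~10, $f^*(0)=0$; since $f\in\mathcal{NWRF}$ is unbalanced and $n+s$ is odd, Lemma~2 gives $N_0(f)=p^{n-1}$ and $N_j(f)=p^{n-1}+\epsilon_0\eta(j)p^{\frac{n+s-1}{2}}$ for $1\le j\le p-1$. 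Summing over $j\in SQ$ with $\sum_{j\in SQ}\eta(j)=\frac{p-1}{2}$ gives $\#D_{f,SQ}=\frac{p-1}{2}(p^{n-1}+\epsilon_0 p^{\frac{n+s-1}{2}})$, and over $j\in NSQ$ gives $\#D_{f,NSQ}=\frac{p-1}{2}(p^{n-1}-\epsilon_0 p^{\frac{n+s-1}{2}})$, which are the asserted code lengths.

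Next I would write $wt(c(a))=\#D_{f,SQ}-N$ with $N=\#\{x\in\mathbb{F}_p^n:\ f(x)\in SQ,\ a\cdot x=0\}$ and expand $N$ by additive characters, obtaining $N=p^{-2}(N_1+N_2+N_3)+\frac{p-1}{2}p^{n-2}$, where $N_3=0$ for $a\ne 0$ (and $a=0$ gives the zero codeword), $N_2=\sum_{y\in\mathbb{F}_p^{\times}}\sigma_y(\widehat{\chi_f}(0))\sum_{j\in SQ}\xi_p^{-yj}$, and $N_1=\sum_{z\in\mathbb{F}_p^{\times}}\sum_{y\in\mathbb{F}_p^{\times}}\sigma_y(\widehat{\chi_f}(y^{-1}za))\frac{\eta(-y)\sqrt{p^*}-1}{2}$. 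Using $\widehat{\chi_f}(0)=\epsilon_0\sqrt{p^*}p^{\frac{n+s-1}{2}}$ (as $f^*(0)=0$), Lemma~1(1),(4), and the identities $\eta(y)\eta(-y)=\eta(-1)$ and $(\sqrt{p^*})^2=\eta(-1)p$, one gets $N_2=\epsilon_0\frac{p-1}{2}p^{\frac{n+s+1}{2}}$.

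The core step is $N_1$. If $a\notin\mathrm{Supp}(\widehat{\chi_f})$, then $y^{-1}za\notin\mathrm{Supp}(\widehat{\chi_f})$ for all $y,z\in\mathbb{F}_p^{\times}$, so $N_1=0$, giving the weight $\frac{(p-1)^2}{2}(p^{n-2}+\epsilon_0 p^{\frac{n+s-3}{2}})$. If $a\in\mathrm{Supp}(\widehat{\chi_f})$, then Lemma~9 and Remark~6 give $\widehat{\chi_f}(y^{-1}za)=\epsilon_a\sqrt{p^*}p^{\frac{n+s-1}{2}}\xi_p^{y^{-h}z^{h}f^*(a)}$, so after applying $\sigma_y$, $N_1=\epsilon_a\sqrt{p^*}p^{\frac{n+s-1}{2}}\sum_{z\in\mathbb{F}_p^{\times}}\sum_{y\in\mathbb{F}_p^{\times}}\eta(y)\xi_p^{y^{1-h}z^{h}f^*(a)}\frac{\eta(-y)\sqrt{p^*}-1}{2}$. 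Here the defining properties of $\mathcal{NWRF}$ are used: $\mathrm{gcd}(h-1,p-1)=1$ makes $y\mapsto y^{1-h}$ a bijection of $\mathbb{F}_p^{\times}$ and, since $1-h$ is odd, $\eta(y^{1-h})=\eta(y)$, while $h$ even gives $\eta(z^{h})=1$. If $f^*(a)=0$ the exponential is trivial and $N_1=\epsilon_a\frac{(p-1)^2}{2}p^{\frac{n+s+1}{2}}$; if $f^*(a)\ne 0$, the substitution $u=y^{1-h}$ reduces the two constituent double sums to $\sum_{z,u\in\mathbb{F}_p^{\times}}\xi_p^{uz^{h}f^*(a)}=-(p-1)$ (Lemma~1(2)) and $\sum_{z,u\in\mathbb{F}_p^{\times}}\eta(u)\xi_p^{uz^{h}f^*(a)}=(p-1)\eta(f^*(a))\sqrt{p^*}$ (the Gauss-sum form of Lemma~1(3)), giving $N_1=-\epsilon_a\frac{p-1}{2}p^{\frac{n+s+1}{2}}(1+\eta(-f^*(a)))$. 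Substituting into $wt(c(a))=\#D_{f,SQ}-N$, simplifying, and rewriting $p+p^*\eta(f^*(a))=p(1+\eta(-f^*(a)))$ produces exactly the four listed weights; the $\mathcal{C}_{D_{f,NSQ}}$ formulas come out the same way. Finally, $wt(c(a))=0$ iff $a=0$ gives dimension $n$, so $\mathcal{C}_{D_{f,SQ}}^{\bot}$ has dimension $\#D_{f,SQ}-n$; its minimum distance is not $1$ because each column of a generator matrix is a nonzero vector, and it equals $2$ since $f(-x)=f(x)$ forces $x,-x\in D_{f,SQ}$ with $x\ne -x$, so $x\cdot a+(-x)\cdot a=0$ for all $a$ exhibits a weight-$2$ dual codeword (and likewise for $\mathcal{C}_{D_{f,NSQ}}^{\bot}$).

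The only real obstacle is the bookkeeping in the $N_1$ evaluation when $a\in\mathrm{Supp}(\widehat{\chi_f})$ and $f^*(a)\ne 0$: one must use that $1-h$ is odd so $\eta$ is preserved under $y\mapsto y^{1-h}$, that $\eta(z^{h})=1$ since $h$ is even, and carefully track the sign of $\sqrt{p^*}$ entering both through $\eta(-y)$ and through the Gauss sum. Once the two character-sum evaluations above are in place, the remainder is routine algebra identical in spirit to the $n+s$ even case.
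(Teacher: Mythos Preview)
Your proposal is correct and follows essentially the same route as the paper's own proof: the same character-sum decomposition $N=p^{-2}(N_1+N_2+N_3)+\tfrac{p-1}{2}p^{n-2}$, the same use of Lemmas~9--10 and Remark~6 to reduce $N_1$ to a sum over $y^{1-h}z^{h}$, the same bijection and parity facts about $h$, and the same $\pm x$ pairing for $d^{\bot}=2$. The only cosmetic difference is that the paper multiplies $\eta(y)\sqrt{p^*}$ into $\frac{\eta(-y)\sqrt{p^*}-1}{2}$ to get $\frac{p-\eta(y)\sqrt{p^*}}{2}$ before evaluating, whereas you keep the factors separate and recombine at the end via $p+p^*\eta(f^*(a))=p(1+\eta(-f^*(a)))$; the computations are equivalent.
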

			\begin{proof}
				We only prove the case of $\mathcal{C}_{D_{f,SQ}}$, and the case of  $\mathcal{C}_{D_{f,NSQ}}$ is similar.
				
				When the defining set is $D_{f,SQ}$, by Lemma 10, we have $f^*(0)=0$. Thus, by Lemma 2, we get $\#D_{f,SQ}=\frac{(p-1)}{2}(p^{n-1}+\epsilon_0p^{\frac{n+s-1}{2}})$, then $wt({c}(a))=\#D_{f,SQ}-\#\{x\in\mathbb{F}_{p}^n:f(x)\in SQ,\ a\cdot x=0\}$. Let $N=\#\{x\in\mathbb{F}_{p}^n:f(x)\in SQ,\ a\cdot x=0\}$, then
				\begin{flalign*}
				N&=p^{-2}\sum\limits_{j\in SQ}\sum\limits_{x\in\mathbb{F}_{p}^n}\sum\limits_{y\in\mathbb{F}_p}\sum\limits_{z\in\mathbb{F}_p}\xi_p^{y(f(x)-j)-za\cdot x}&&\\
				&=p^{-2}(\sum\limits_{z\in\mathbb{F}_p^{\times}}\sum\limits_{y\in\mathbb{F}_p^{\times}}\sum\limits_{x\in\mathbb{F}_{p}^n}\xi_p^{yf(x)-za\cdot x}\sum\limits_{j\in SQ}\xi_p^{-yj}+\sum\limits_{y\in\mathbb{F}_p^{\times}}\sum\limits_{x\in\mathbb{F}_{p}^n}\xi_p^{yf(x)}\sum\limits_{j\in SQ}\xi_p^{-yj}&&\\
				&+\quad\sum\limits_{j\in SQ}\sum\limits_{z\in\mathbb{F}_p^{\times}}\sum\limits_{x\in\mathbb{F}_{p}^n}\xi_p^{-za\cdot x})+\frac{(p-1)}{2}p^{n-2}&&\\
				&=p^{-2}(N_1+N_2+N_3)+\frac{(p-1)}{2}p^{n-2},
				\end{flalign*}
				where
				\begin{flalign*}
				N_1&=\sum\limits_{z\in\mathbb{F}_p^{\times}}\sum\limits_{y\in\mathbb{F}_p^{\times}}\sum\limits_{x\in\mathbb{F}_{p}^n}\xi_p^{yf(x)-za\cdot x}\sum\limits_{j\in SQ}\xi_p^{-yj}&&\\
				&=\sum\limits_{z\in\mathbb{F}_p^{\times}}\sum\limits_{y\in\mathbb{F}_p^{\times}}\sigma_y(\sum\limits_{x\in\mathbb{F}_{p}^n}\xi_p^{f(x)-y^{-1}za\cdot x})\sum\limits_{j\in SQ}\xi_p^{-yj}&&\\
				&=\sum\limits_{z\in\mathbb{F}_p^{\times}}\sum\limits_{y\in\mathbb{F}_p^{\times}}\sigma_y(\widehat{\chi_f}(y^{-1}za))\frac{\eta(-y)\sqrt{p^*}-1}{2},&&\\
				N_2&=\sum\limits_{y\in\mathbb{F}_p^{\times}}\sum\limits_{x\in\mathbb{F}_{p}^n}\xi_p^{yf(x)}\sum\limits_{j\in SQ}\xi_p^{-yj}&&\\
				&=\sum\limits_{y\in\mathbb{F}_p^{\times}}\sigma_y(\epsilon_0\sqrt{p^*}p^{\frac{n+s-1}{2}}\xi_p^{f^*(0)})\frac{\eta(-y)\sqrt{p^*}-1}{2}&&\\
				&=\epsilon_0\frac{(p-1)}{2}p^{\frac{n+s+1}{2}}\end{flalign*}
				\begin{flalign*}
				N_3&=\sum\limits_{j\in SQ}\sum\limits_{z\in\mathbb{F}_p^{\times}}\sum\limits_{x\in\mathbb{F}_{p}^n}\xi_p^{-za\cdot x}=
				\begin{cases}
				\frac{(p-1)^2}{2}p^n,&\text{if}\ a=0;\\
				0,&\text{if}\ a\ne0.
				\end{cases}&&
				\end{flalign*}
				
				Clearly, when $a=0$, $wt({c}(a))=0$. When $a\ne0$, we discuss in two cases.
				\begin{itemize}
					\item[$\bullet$] When $a\notin\mathrm{Supp}(\widehat{\chi_f})$, then for any $y,\ z\in \mathbb{F}_p^{\times}$, we have $ y^{-1}za\notin \mathrm{Supp}(\widehat{\chi_f})$, $N_1=0$, so $wt({c}(a))=\frac{(p-1)^2}{2}(p^{n-2}+\epsilon_0p^{\frac{n+s-3}{2}})$.
					\item[$\bullet$] When $a\in\mathrm{Supp}(\widehat{\chi_f})$, then for any $y,\ z\in \mathbb{F}_p^{\times}$, we have $y^{-1}za\in \mathrm{Supp}(\widehat{\chi_f})$, $N_1=\sum\limits_{z\in\mathbb{F}_p^{\times}}\sum\limits_{y\in\mathbb{F}_p^{\times}}\sigma_y(\epsilon_{y^{-1}za}\sqrt{p^*}p^{\frac{n+s-1}{2}}\xi_p^{f^*(y^{-1}za)})\frac{\eta(-y)\sqrt{p^*}-1}{2}.$
					By Lemma 9 and Remark 6, we have $\epsilon_{y^{-1}za}=\epsilon_a$, and $f^*(y^{-1}za)=y^{-h}z^hf^*(a)$, so $N_1=\sum\limits_{z\in\mathbb{F}_p^{\times}}\sum\limits_{y\in\mathbb{F}_p^{\times}}\epsilon_ap^{\frac{n+s-1}{2}}\xi_p^{y^{-(h-1)}z^hf^*(a)}\frac{p-\eta(y)\sqrt{p^*}}{2}$. If $f^*(a)=0$, then $N_1=\epsilon_a\frac{(p-1)^2}{2}p^{\frac{n+s+1}{2}}$, so $wt({c}(a))=\frac{(p-1)^2}{2}(p^{n-2}+(\epsilon_0-\epsilon_a)p^{\frac{n+s-3}{2}}).$ Note that $\mathrm{gcd}(h-1,\ p-1)=1$, so when $y$ runs through $\mathbb{F}_p^{\times}$, $y^{-(h-1)}$ runs through $\mathbb{F}_p^{\times}$. Since $h$ is even, then $\eta(y)=\eta(y^{-(h-1)}),\ \eta(z^h)=1$. By Lemma 1, if $f^*(a)\ne 0$, we have $N_1=-\epsilon_a\frac{(p-1)}{2}p^{\frac{n+s-1}{2}}(p+p^*\eta(f^*(a)))$, so $wt({c}(a))=\frac{(p-1)^2}{2}(p^{n-2}+\epsilon_0p^{\frac{n+s-3}{2}})+\epsilon_a\frac{(p-1)}{2}p^{\frac{n+s-5}{2}}(p+p^*\eta(f^*(a)).$
				\end{itemize}
				
				From the above discussion, we can get the weights of codewords of $\mathcal{C}_{D_{f,SQ}}$. Note that  $wt({c}(a))=0$ if and only if $a=0$, thus the number of codewords of $\mathcal{C}_{D_{f,SQ}}$ is $p^n$, the dimension of $\mathcal{C}_{D_{f,SQ}}$ is $n$. By the discussion in the proof of Theorem 5, we can easily get that the minimum distance  of $\mathcal{C}_{D_{f,SQ}}$ is 2.\qed
			\end{proof}
			\begin{remark}
				By Lemma 7, we have that linear codes $\mathcal{C}_{D_{f,SQ}}$ and $\mathcal{C}_{D_{f,NSQ}}$ are minimal for $0\le s\le n-5$.
			\end{remark}
			\begin{table}
				\caption{The weight distribution of $\mathcal{C}_{D_{f,SQ}}$ in Proposition 6 when $n+s$ is odd and $0\in B_+(f)$ and the weight distribution of $\mathcal{C}_{D_{f,NSQ}}$ in Proposition 6 when $n+s$ is odd and $0\in B_-(f)$}
				\begin{tabular}{|l|l|l|}
					\hline
					Weight & Multiplicity of $\mathcal{C}_{D_{f,SQ}}$ &Multiplicity of $\mathcal{C}_{D_{f,NSQ}}$ \\ 
					&($0\in B_+(f))$&($0\in B_-(f))$\\\hline
					0&1&1\\ \hline
					$\frac{(p-1)^2}{2}(p^{n-2}+p^{\frac{n+s-3}{2}})$&$p^n-p^{n-s}+\frac{(p-1)}{2}$&$p^n-p^{n-s}+\frac{(p-1)}{2}$\\ 
					&$(p^{n-s-1}-p^{\frac{n-s-1}{2}})$&$(p^{n-s-1}-p^{\frac{n-s-1}{2}})$\\\hline
					$\frac{(p-1)^2}{2}p^{n-2}$&$\frac{k}{p}-1$&$p^{n-s-1}-\frac{k}{p}-1$\\ \hline
					$\frac{(p-1)^2}{2}(p^{n-2}+2p^{\frac{n+s-3}{2}})$&$p^{n-s-1}-\frac{k}{p}$&$\frac{k}{p}$\\ \hline
					$\frac{(p-1)^2}{2}p^{n-2}+\frac{(p^2-1)}{2}p^{\frac{n+s-3}{2}}$&$\frac{(p-1)}{2}(\frac{k}{p}+p^{\frac{n-s-1}{2}})$&$\frac{(p-1)}{2}(p^{n-s-1}-\frac{k}{p}$\\&&$+p^{\frac{n-s-1}{2}})$\\\hline
					$\frac{(p-1)^2}{2}p^{n-2}+\frac{(p-1)(p-3)}{2}p^{\frac{n+s-3}{2}}$&$\frac{(p-1)}{2}(p^{n-s-1}-\frac{k}{p})$&$\frac{(p-1)}{2}\frac{k}{p}$\\\hline
				\end{tabular}
			\end{table}
			
			\begin{table}
				\caption{The weight distribution of $\mathcal{C}_{D_{f,SQ}}$ in Proposition 6 when $n+s$ is odd and $0\in B_-(f)$ and the weight distribution of $\mathcal{C}_{D_{f,NSQ}}$ in Proposition 6 when $n+s$ is odd and $0\in B_+(f)$}
				
				\begin{tabular}{|l|l|l|}
					\hline
					Weight &Multiplicity of  $\mathcal{C}_{D_{f,SQ}}$ &Multiplicity of $\mathcal{C}_{D_{f,NSQ}}$\\
					&($0\in B_-(f)$)&($0\in B_+(f)$)\\\hline
					0&1&1\\ \hline
					$\frac{(p-1)^2}{2}(p^{n-2}-p^{\frac{n+s-3}{2}})$&$p^n-p^{n-s}+\frac{(p-1)}{2}$&$p^n-p^{n-s}+\frac{(p-1)}{2}$\\
					&$(p^{n-s-1}+p^{\frac{n-s-1}{2}})$&$(p^{n-s-1}+p^{\frac{n-s-1}{2}})$\\ \hline
					$\frac{(p-1)^2}{2}(p^{n-2}-2p^{\frac{n+s-3}{2}})$&$\frac{k}{p}$&$p^{n-s-1}-\frac{k}{p}$\\ \hline
					$\frac{(p-1)^2}{2}p^{n-2}$&$p^{n-s-1}-\frac{k}{p}-1$&$\frac{k}{p}-1$\\ \hline
					$\frac{(p-1)^2}{2}p^{n-2}-\frac{(p-1)(p-3)}{2}p^{\frac{n+s-3}{2}}$&$\frac{(p-1)k}{2p}$&$\frac{(p-1)}{2}(p^{n-s-1}-\frac{k}{p})$\\\hline
					$\frac{(p-1)^2}{2}p^{n-2}-\frac{(p^2-1)}{2}p^{\frac{n+s-3}{2}}$&$\frac{(p-1)}{2}(p^{n-s-1}-p^{\frac{n-s-1}{2}}$&$\frac{(p-1)}{2}(\frac{k}{p}-p^{\frac{n-s-1}{2}})$\\
					&$-\frac{k}{p})$&\\\hline
				\end{tabular}
			\end{table}
			
			For a subclass of non-weakly regular $s$-plateaued functions belonging to $\mathcal{NWRF}$, we determine the weight distributions of linear codes given by Theorem 6 in the following proposition.
			\begin{proposition}
				Let $n+s$ be an odd integer with $0\le s\le n-3$, $f(x):\ \mathbb{F}_p^n\longrightarrow\mathbb{F}_p$ be a non-weakly regular $s$-plateaued function belonging to $\mathcal{NWRF}$ with $\#B_+(f)=k\ (k\ne 0\ \text{and}\ k\ne p^{n-s})
				$, and the dual $f^*(x)$ of $f(x)$ be bent relative to $\mathrm{Supp}(\widehat{\chi_f})$. Then the weight distributions of $\mathcal{C}_{D_{f,SQ}}$ and $\mathcal{C}_{D_{f,NSQ}}$ are given by Tables 21 and 22, respectively.
			\end{proposition}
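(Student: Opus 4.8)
The plan is to obtain Tables~21 and~22 by substituting the counting data of Lemma~4 into the codeword-weight formulas of Theorem~6, exactly as in the proof of Proposition~5 for the even case. I would carry this out in detail for $\mathcal{C}_{D_{f,SQ}}$ and note that the argument for $\mathcal{C}_{D_{f,NSQ}}$ is entirely analogous, starting from the second displayed formula of Theorem~6 rather than the first; it is precisely the fact that the roles of the two types of $f$ get interchanged between the $SQ$ and $NSQ$ cases that makes a single pair of tables enough to cover all four situations.

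First, Theorem~6 sorts the nonzero codewords $c(a)$ of $\mathcal{C}_{D_{f,SQ}}$ into three classes: $a\notin\mathrm{Supp}(\widehat{\chi_f})$, of size $p^n-p^{n-s}$ because $\#\mathrm{Supp}(\widehat{\chi_f})=p^{n-s}$ and $0\in\mathrm{Supp}(\widehat{\chi_f})$; $a\in\mathrm{Supp}(\widehat{\chi_f})$ with $f^*(a)=0$; and $a\in\mathrm{Supp}(\widehat{\chi_f})$ with $f^*(a)\ne0$, the last splitting according to $\epsilon_a=\pm1$ and $\eta(f^*(a))=\pm1$. For the multiplicities of the last two classes one needs, for each $j\in\mathbb{F}_p$, the numbers $\#\{a\in B_+(f):f^*(a)=j\}$ and $\#\{a\in B_-(f):f^*(a)=j\}$; by Lemma~10 one has $f^*(0)=0=f(0)=j_0$, so these are exactly $c_j(f^*)$ and $d_j(f^*)$ from Lemma~4. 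Summing the relevant formulas of Lemma~4 over $j\in SQ$ and over $j\in NSQ$ — using $\#SQ=\#NSQ=\frac{p-1}{2}$ and $\sum_{j\in SQ}\eta(j)=-\sum_{j\in NSQ}\eta(j)=\frac{p-1}{2}$ — and subtracting $1$ from whichever of $c_0(f^*),d_0(f^*)$ contains the all-zero codeword $a=0$ (that is the $f^*(a)=0$, $\epsilon_a=\epsilon_0$ class, already counted separately with weight $0$) gives all the multiplicities. The branch of Lemma~4 to be used is fixed by the type of $f^*$, which Lemma~11 determines from that of $f$: for $n+s$ odd, $0\in B_+(f^*)\Leftrightarrow 0\in B_+(f)$ if $p\equiv1\ (\mathrm{mod}\ 4)$ and $0\in B_-(f^*)\Leftrightarrow 0\in B_+(f)$ if $p\equiv3\ (\mathrm{mod}\ 4)$. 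Finally, resolving $p^*=\eta(-1)p$ in the weight formulas of Theorem~6 both makes the $p\equiv1$ and $p\equiv3$ cases genuinely distinct and forces coincidences among the weights: whenever $p^*\eta(f^*(a))=-p$ the $\epsilon_a$-term in Theorem~6 vanishes, so the two $f^*(a)\ne0$ classes sharing that value of $\eta(f^*(a))$ merge with each other and with the $a\notin\mathrm{Supp}(\widehat{\chi_f})$ class; adding the corresponding multiplicities produces the enlarged top-row count $p^n-p^{n-s}+\frac{p-1}{2}\bigl(p^{n-s-1}\pm p^{\frac{n-s-1}{2}}\bigr)$ appearing in Tables~21 and~22. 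Collecting the surviving distinct weights then yields the five rows stated.

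Everything here is mechanical, so the only genuine obstacle is the sheer volume of bookkeeping: two defining sets, times the two residue classes of $p$ modulo $4$, times the two types of $f$, in each of which one must group the coinciding weights correctly, apply the $-1$ correction for $a=0$, and keep the $p^*=\eta(-1)p$ resolution consistent with the branch of Lemma~4 selected via Lemma~11. A convenient sanity check, which I would run in all four combinations, is that the listed multiplicities sum to $p^n$.
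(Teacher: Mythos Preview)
Your proposal is correct and follows exactly the approach the paper takes: the paper's own proof reads in full ``The proof is quite straightforward from Lemma~4 and Theorem~6, so we omit it.'' You have simply written out the bookkeeping that the authors leave implicit --- including the use of Lemma~10 to get $j_0=0$, Lemma~11 to pin down the type of $f^*$, and the weight-merging forced by $p+p^*\eta(f^*(a))=0$ --- and your sanity check of summing multiplicities to $p^n$ is a sensible safeguard.
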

			\begin{proof}
				The proof is quite straightforward from Lemma 4 and Theorem 6, so we omit it.\qed
			\end{proof}
			\begin{table}
				\caption{The weight distribution of $\widetilde{\mathcal{C}}_{\widetilde{D}_{f,SQ}}$ in Corollary 14 when $n+s$ is odd and $0\in B_+(f)$  and the weight distribution of $\widetilde{\mathcal{C}}_{\widetilde{D}_{f,NSQ}}$ in Corollary 14 when $n+s$ is odd and $0\in B_-(f)$}
				
				\begin{tabular}{|l|l|l|}
					\hline
					Weight & Multiplicity of $\widetilde{\mathcal{C}}_{\widetilde{D}_{f,SQ}}$ &Multiplicity of $\widetilde{\mathcal{C}}_{\widetilde{D}_{f,NSQ}}$ \\ 
					&($0\in B_+(f)$)&($0\in B_-(f)$)\\\hline
					0&1&1\\ \hline
					$\frac{(p-1)}{2}(p^{n-2}+p^{\frac{n+s-3}{2}})$&$p^n-p^{n-s}+\frac{(p-1)}{2}$&$p^n-p^{n-s}+\frac{(p-1)}{2}$\\
					&$(p^{n-s-1}-p^{\frac{n-s-1}{2}})$&$(p^{n-s-1}-p^{\frac{n-s-1}{2}})$\\ \hline
					$\frac{(p-1)}{2}p^{n-2}$&$\frac{k}{p}-1$&$p^{n-s-1}-\frac{k}{p}-1$\\ \hline
					$\frac{(p-1)}{2}(p^{n-2}+2p^{\frac{n+s-3}{2}})$&$p^{n-s-1}-\frac{k}{p}$&$\frac{k}{p}$\\ \hline
					$\frac{(p-1)}{2}p^{n-2}+\frac{(p+1)}{2}p^{\frac{n+s-3}{2}}$&$\frac{(p-1)}{2}(\frac{k}{p}+p^{\frac{n-s-1}{2}})$&$\frac{(p-1)}{2}(p^{n-s-1}-\frac{k}{p}$\\
					&&$+p^{\frac{n-s-1}{2}})$\\\hline
					$\frac{(p-1)}{2}p^{n-2}+\frac{(p-3)}{2}p^{\frac{n+s-3}{2}}$&$\frac{(p-1)}{2}(p^{n-s-1}-\frac{k}{p})$&$\frac{(p-1)}{2}\frac{k}{p}$\\\hline
				\end{tabular}
			\end{table}
			
			\begin{table}
				\caption{The weight distribution of $\widetilde{\mathcal{C}}_{\widetilde{D}_{f,SQ}}$ in Corollary 14 when $n+s$ is odd and $0\in B_-(f)$ and the weight distribution of $\widetilde{\mathcal{C}}_{\widetilde{D}_{f,NSQ}}$ in Corollary 14 when $n+s$ is odd and $0\in B_+(f)$}
				
				\begin{tabular}{|l|l|l|}
					\hline
					Weight &Multiplicity of  $\widetilde{\mathcal{C}}_{\widetilde{D}_{f,SQ}}$ &Multiplicity of $\widetilde{\mathcal{C}}_{\widetilde{D}_{f,NSQ}}$\\ 
					&($0\in B_-(f)$)&($0\in B_+(f)$)\\\hline
					0&1&1\\ \hline
					$\frac{(p-1)}{2}(p^{n-2}-p^{\frac{n+s-3}{2}})$&$p^n-p^{n-s}+\frac{(p-1)}{2}$&$p^n-p^{n-s}+\frac{(p-1)}{2}$\\
					&$(p^{n-s-1}+p^{\frac{n-s-1}{2}})$&$(p^{n-s-1}+p^{\frac{n-s-1}{2}})$\\ \hline
					$\frac{(p-1)}{2}(p^{n-2}-2p^{\frac{n+s-3}{2}})$&$\frac{k}{p}$&$p^{n-s-1}-\frac{k}{p}$\\ \hline
					$\frac{(p-1)}{2}p^{n-2}$&$p^{n-s-1}-\frac{k}{p}-1$&$\frac{k}{p}-1$\\ \hline
					$\frac{(p-1)}{2}p^{n-2}-\frac{(p-3)}{2}p^{\frac{n+s-3}{2}}$&$\frac{(p-1)k}{2p}$&$\frac{(p-1)}{2}(p^{n-s-1}-\frac{k}{p})$\\\hline
					$\frac{(p-1)}{2}p^{n-2}-\frac{(p+1)}{2}p^{\frac{n+s-3}{2}}$&$\frac{(p-1)}{2}(p^{n-s-1}-p^{\frac{n-s-1}{2}}$&$\frac{(p-1)}{2}(\frac{k}{p}-p^{\frac{n-s-1}{2}})$\\
					&$-\frac{k}{p})$&\\
					\hline
				\end{tabular}
			\end{table}
			Considering the punctured code $\widetilde{\mathcal{C}}_{\widetilde{D}_{f,SQ}}$ and $\widetilde{\mathcal{C}}_{\widetilde{D}_{f,NSQ}}$ defined by (6) and (7) of $\mathcal{C}_{D_{f,SQ}}$ and $\mathcal{C}_{D_{f,NSQ}}$ in Proposition 6, we have the following corollary.
			\begin{corollary}
				The punctured code $\widetilde{\mathcal{C}}_{\widetilde{D}_{f,SQ}}$ defined by (6) of $\mathcal{C}_{D_{f,SQ}}$ in Proposition 6 is a $[\frac{p^{n-1}+\epsilon_0p^{\frac{n+s-1}{2}} }{2},\ n]$ linear code and the punctured code $\widetilde{\mathcal{C}}_{\widetilde{D}_{f,NSQ}}$ defined by (7) of $\mathcal{C}_{D_{f,NSQ}}$ in Proposition 6 is a $[\frac{p^{n-1}-\epsilon_0p^{\frac{n+s-1}{2}} }{2},\ n]$ linear code. The weight distributions of $\widetilde{\mathcal{C}}_{\widetilde{D}_{f,SQ}}$ and $\widetilde{\mathcal{C}}_{\widetilde{D}_{f,NSQ}}$ are given by Tables 23 and 24, respectively. The dual code $\widetilde{\mathcal{C}}_{\widetilde{D}_{f,SQ}}^{\bot}$ is a $[\frac{p^{n-1}+\epsilon_0p^{\frac{n+s-1}{2}} }{2},\ \frac{p^{n-1}+\epsilon_0p^{\frac{n+s-1}{2}} }{2}-n,\ 3]$ linear code except for $p=3$, $n=3$ and $0\in B_-(f)$ and the dual code $\widetilde{\mathcal{C}}_{\widetilde{D}_{f,NSQ}}^{\bot}$ is a $[\frac{p^{n-1}-\epsilon_0p^{\frac{n+s-1}{2}}}{2},\ \frac{p^{n-1}-\epsilon_0p^{\frac{n+s-1}{2}}}{2}-n,\ 3]$ linear code except for $p= 3$, $n=3$ and $0\in B_+(f)$, which are almost optimal according to the sphere packing bound.
			\end{corollary}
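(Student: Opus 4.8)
The strategy is to deduce the corollary from Proposition 6 in the same way that Corollary 11 is deduced from Proposition 5, and then to pin down the minimum distances of the two dual codes by the Pless power moments. First I would record how the punctured codes arise: since $f\in\mathcal{NWRF}$ and the exponent $t$ attached to it is even, for every $a\in\mathbb{F}_p^{\times}$ one has $f(x)\in SQ$ (respectively $f(x)\in NSQ$) if and only if $f(ax)\in SQ$ (respectively $f(ax)\in NSQ$); hence $D_{f,SQ}$ and $D_{f,NSQ}$ are unions of $\mathbb{F}_p^{\times}$-orbits of size $p-1$, and so $\#\widetilde{D}_{f,SQ}=\frac{1}{p-1}\#D_{f,SQ}=\frac{p^{n-1}+\epsilon_0p^{\frac{n+s-1}{2}}}{2}$ and $\#\widetilde{D}_{f,NSQ}=\frac{p^{n-1}-\epsilon_0p^{\frac{n+s-1}{2}}}{2}$, giving the claimed lengths. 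Moreover, for a fixed $a\in\mathbb{F}_p^n$ the condition $a\cdot x\ne 0$ is invariant under $x\mapsto bx$ with $b\in\mathbb{F}_p^{\times}$, so the support of $c(a)$ in $\mathcal{C}_{D_{f,SQ}}$ is a union of orbits and $\mathrm{wt}(c(a))=(p-1)\,\mathrm{wt}(\widetilde{c}(a))$; thus every nonzero weight of $\widetilde{\mathcal{C}}_{\widetilde{D}_{f,SQ}}$ and $\widetilde{\mathcal{C}}_{\widetilde{D}_{f,NSQ}}$ is the corresponding weight of $\mathcal{C}_{D_{f,SQ}}$, $\mathcal{C}_{D_{f,NSQ}}$ divided by $p-1$, with unchanged multiplicities. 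Dividing the entries of Tables 21 and 22 by $p-1$ then gives Tables 23 and 24, and the dual dimensions are immediate.

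The substantive step is the dual minimum distance $d^{\bot}$. I would first note that the columns of a generator matrix of $\widetilde{\mathcal{C}}_{\widetilde{D}_{f,SQ}}$ are precisely the vectors of $\widetilde{D}_{f,SQ}$, a transversal of the $\mathbb{F}_p^{\times}$-orbits in $D_{f,SQ}$; no one of them is zero and no two are proportional, so the code is projective and $d^{\bot}\ge 3$ (and similarly for the $NSQ$-code). Next I would substitute the weight distribution of Table 23 (respectively Table 24) into the first four Pless power moments with the table parameter $k$ replaced by $n$ and with $A_1^{\bot}=A_2^{\bot}=0$, solve for $A_3^{\bot}$, and check that the resulting polynomial in $p,n,s,k$ is strictly positive over the admissible range $0\le s\le n-3$, $n+s$ odd, $k\ne 0$, $k\ne p^{n-s}$ — with the sole exception of the degenerate situation in which the length of $\widetilde{\mathcal{C}}$ equals $n$, i.e.\ the code is the whole space $\mathbb{F}_p^n$ and its dual is the zero code. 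A short computation shows that length $=n$ happens only for $p=3$, $n=3$, $s=0$; since the $SQ$-length carries $+\epsilon_0p^{\frac{n+s-1}{2}}$ while the $NSQ$-length carries $-\epsilon_0p^{\frac{n+s-1}{2}}$, this forces $\epsilon_0=-1$ (that is, $0\in B_-(f)$) in the $SQ$ case and $\epsilon_0=1$ (that is, $0\in B_+(f)$) in the $NSQ$ case, which are exactly the excluded triples; in every other case $A_3^{\bot}>0$, so $d^{\bot}=3$.

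For the almost-optimality claim I would apply the Sphere Packing Bound (Lemma 6): for an $[N,N-n]$ code over $\mathbb{F}_p$ with $N=\frac{p^{n-1}\pm\epsilon_0p^{\frac{n+s-1}{2}}}{2}$ one has $p^n<1+N(p-1)+\binom{N}{2}(p-1)^2$ since $N$ has order $p^{n-1}$; hence no such code can have minimum distance $\ge 5$, so a distance-$4$ code with these parameters is optimal and our distance-$3$ dual codes are almost optimal. The step I expect to be the real obstacle is the positivity check for $A_3^{\bot}$: after the power-moment substitution it is a bulky expression in $p,n,s,k$ that must be shown strictly positive over a two-parameter family of exponents and over all admissible $k$ (invoking $k\ne 0$, $k\ne p^{n-s}$ where needed, exactly as in the proof of Corollary 4), separately for $0\in B_+(f)$ and $0\in B_-(f)$; the rest of the corollary follows at once from Proposition 6, the sphere packing bound, and the elementary orbit observations above.
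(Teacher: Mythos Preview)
Your proposal is correct and follows essentially the same route as the paper: derive the punctured weight distributions from Proposition 6 by dividing all weights by $p-1$, then use the Pless power moments to show $A_3^{\bot}>0$ and hence $d^{\bot}=3$, and finally invoke the sphere packing bound for almost-optimality. The paper's proof does exactly this, writing out the explicit polynomial expressions for $A_3^{\bot}$ in the four cases (two defining sets times two signs of $\epsilon_0$) and checking positivity, with the exception at $p=3$, $n=3$ arising from the vanishing of those expressions.

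There are two small but genuine improvements in your version over the paper's. First, you obtain $A_1^{\bot}=A_2^{\bot}=0$ by the direct projectivity argument (the columns of the generator matrix are a transversal of the $\mathbb{F}_p^{\times}$-orbits, hence nonzero and pairwise non-proportional), whereas the paper extracts this from the first three Pless identities; your argument is shorter and more conceptual. Second, you explain the exceptional triple $p=3$, $n=3$, $s=0$ as precisely the case where the code length equals $n$ and the dual degenerates to the zero code, which is a cleaner reason than the paper's purely computational observation that $A_3^{\bot}$ happens to vanish there. One caveat: in the positivity step you should not expect to need the hypotheses $k\ne 0$, $k\ne p^{n-s}$, since the paper's explicit formula for $A_3^{\bot}$ in each case is linear in $k$ with a dominant $k$-free term, and positivity holds for all admissible $k$ once $(p,n)\ne(3,3)$; the analogy with Corollary 4 is slightly misleading on that point.
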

			
			\begin{proof}
				The weight distributions of $\widetilde{\mathcal{C}}_{\widetilde{D}_{f,SQ}}$ and $\widetilde{\mathcal{C}}_{\widetilde{D}_{f,NSQ}}$ can be easily obtained by Proposition 6. The dimensions of $\widetilde{\mathcal{C}}_{\widetilde{D}_{f,SQ}}^{\bot}$ and $\widetilde{\mathcal{C}}_{\widetilde{D}_{f,NSQ}}^{\bot}$ are clearly $\frac{p^{n-1}+\epsilon_0p^{\frac{n+s-1}{2}}}{2}\\-n$ and $\frac{p^{n-1}-\epsilon_0p^{\frac{n+s-1}{2}}}{2}-n$, respectively. By the first four Pless power moments, for linear code $\widetilde{\mathcal{C}}_{\widetilde{D}_{f,SQ}}^{\bot}$, when $0\in B_+(f)$, we have $A_1^{\bot}=A_2^{\bot}=0$ and $A_3^{\bot}=\frac{1}{48}(p-1)((3p^2-6p-1)p^{n+s-2}+(4p-8)(p^{\frac{3n+s-3}{2}}-p^{\frac{n+s-1}{2}}-p^{n-1})+(p-1)^2p^{2n-3}-2(p^{n-s}-k)p^{\frac{n+3s-5}{2}}(p^2-2p-3))>0$; when $0\in B_-(f)$, we have $A_1^{\bot}=A_2^{\bot}=0$ and $A_3^{\bot}=\frac{1}{48}(p-1)((3p^2-6p-1)p^{n+s-2}-(4p-8)(p^{\frac{3n+s-3}{2}}+p^{n-1}-p^{\frac{n+s-1}{2} })+(p-1)^2p^{2n-3}+2k(p^2-2p-3)p^{\frac{n+3s-5}{2}})>0$  except for $p=3$ and $n=3$, so we can get that the minimum distance of $\widetilde{\mathcal{C}}_{\widetilde{D}_{f,SQ}}^{\bot}$ is $3$. For linear code $\widetilde{\mathcal{C}}_{\widetilde{D}_{f,NSQ}}^{\bot}$, when $0\in B_+(f)$, we have $A_1^{\bot}=A_2^{\bot}=0$ and $A_3^{\bot}=\frac{1}{48}(p-1)((3p^2-6p-1)p^{n+s-2}+(p-1)^2p^{2n-3}-(4p-8)(p^{n-1}-p^{\frac{n+s-1}{2}})-2k(p^2-2p-3)p^{\frac{n+3s-5}{2}}-(2p^2-4p+6)p^{\frac{3n+s-5}{2}})>0$ except for $p=3$ and $n=3$; when $0\in B_-(f)$, we have $A_1^{\bot}=A_2^{\bot}=0$ and $A_3^{\bot}=\frac{1}{48}(p-1)((3p^2-6p-1)p^{n+s-2}+(4p-8)(p^{\frac{3n+s-3}{2}}-p^{\frac{n+s-1}{2}}-p^{n-1})+(p-1)^2p^{2n-3}-2kp^{\frac{n+3s-5}{2}}(p^2-2p-3))>0$, so we can get that the minimum distance of $\widetilde{\mathcal{C}}_{\widetilde{D}_{f,NSQ}}^{\bot}$is $3$. According to the sphere packing bound, we have that  $\widetilde{\mathcal{C}}_{\widetilde{D}_{f,SQ}}^{\bot}$ and $\widetilde{\mathcal{C}}_{\widetilde{D}_{f,NSQ}}^{\bot}$ are almost optimal.\qed
			\end{proof}
		
			\begin{remark}
				By Remark 13, we easily have that linear codes $\widetilde{\mathcal{C}}_{\widetilde{D}_{f,SQ}}$ and $\widetilde{\mathcal{C}}_{\widetilde{D}_{f,NSQ}}$ are minimal for $0\le s\le n-5.$
			\end{remark}	
			
			Next, we verify Proposition 6 and Corollary 14 by Magma program for the following non-weakly regular plateaued functions.
		
			\begin{example}
				Consider $f(x):\ \mathbb{F}_{5}^4\longrightarrow\mathbb{F}_5,\  f(x_1,\ x_2,\ x_3,\ x_4)=x_1^2x_3^4+x_1^2+x_2x_3$, which is a non-weakly regular $1$-plateaued function belonging to $\mathcal{NWRF}$ with $0\in B_+(f)$, $k=25$, and the dual $f^*(x)$ of $f(x)$ is bent relative to $\mathrm{Supp}(\widehat{\chi_f})$. Then,  $\mathcal{C}_{D_{f,SQ}}$ constructed by (4) is a five-weight linear code with weight parameters $[300,\ 4,\ 200]$, weight enumerator $1+4z^{200}+40z^{220}+540z^{240}+20z^{260}+20z^{280}$ and the dual code $\mathcal{C}_{D_{f,SQ}}^{\bot}$ is a $[300,\ 296,\ 2]$ linear code. The punctured code $\widetilde{\mathcal{C}}_{\widetilde{D}_{f,SQ}}$ constructed by (6) is a $[75,\ 4,\ 50]$ linear code with weight enumerator $1+4z^{50}+40z^{55}+540z^{60}+20z^{65}+20z^{70}$ and the dual code $\widetilde{\mathcal{C}}_{\widetilde{D}_{f,SQ}}^{\bot}$ is a $[75,\ 71,\ 3]$ linear code, which is optimal according to the Code Table at http://www.codetables.de/.
			\end{example}
			\begin{example}
				Consider $f(x):\ \mathbb{F}_{5}^6\longrightarrow\mathbb{F}_5,\ f(x_1,\ x_2,\ x_3,\ x_4,\ x_5,\ x_6)=4x_1^2x_5^4+2x_1^2+x_2^2+x_3^2+x_4x_5$, which is a non-weakly regular $1$-plateaued function belonging to $\mathcal{NWRF}$ with $0\in B_-(f)$, $k=2500$, and the dual $f^*(x)$ of $f(x)$ is bent relative to $\mathrm{Supp}(\widehat{\chi_f})$. Then, $\mathcal{C}_{D_{f,SQ}}$ constructed by (4) is a five-weight linear code with parameters $[6000,\ 6,\ 4600]$, weight enumerator $1+500z^{4600}+200z^{4700}+13800z^{4800}+1000z^{4900}+124z^{5000}$ and the dual code $\mathcal{C}_{D_{f,SQ}}^{\bot}$ is a $[6000,\ 5994,\ 2]$ linear code. The punctured code $\widetilde{\mathcal{C}}_{\widetilde{D}_{f,SQ}}$ constructed by (6) is a $[1500,\ 6,\ 1150]$ linear code with weight enumerator $1+500z^{1150}+200z^{1175}+13800z^{1200}+1000z^{1225}+124z^{1250}$ and the dual code $\widetilde{\mathcal{C}}_{\widetilde{D}_{f,SQ}}^{\bot}$ is a $[1500,\ 1494, 3]$ linear code.
			\end{example}
			\begin{example}
				Consider $f(x):\ \mathbb{F}_{5}^6\longrightarrow\mathbb{F}_5,\  f(x_1,\ x_2,\ x_3,\ x_4,\ x_5,\ x_6)=x_1^2x_5^4+x_1^2+x_2^2+x_3^2+x_4x_5$, which is a non-weakly regular $1$-plateaued function belonging to $\mathcal{NWRF}$ with $0\in B_+(f)$, $k=625$, and the dual $f^*(x)$ of $f(x)$ is bent relative to $\mathrm{Supp}(\widehat{\chi_f})$. Then, $\mathcal{C}_{D_{f,NSQ}}$ constructed by (5) is a five-weight linear code with parameters $[6000,\ 6,\ 4600]$, weight enumerator $1+500z^{4600}+200z^{4700}+13800z^{4800}+1000z^{4900}+124z^{5000}$ and the dual code $\mathcal{C}_{D_{f,NSQ}}^{\bot}$ is a $[6000,\ 5994,\ 2]$ linear code. The punctured code $\widetilde{\mathcal{C}}_{\widetilde{D}_{f,NSQ}}$ constructed by (7) is a $[1500,\ 6,\ 1150]$ linear code with weight enumerator $1+500z^{1150}+200z^{1175}+13800z^{1200}+1000z^{1225}+124z^{1250}$ and the dual code $\widetilde{\mathcal{C}}_{\widetilde{D}_{f,NSQ}}^{\bot}$ is a $[1500,\ 1494,\ 3]$ linear code.
			\end{example}
			\begin{example}
				Consider $f(x):\ \mathbb{F}_{5}^4\longrightarrow\mathbb{F}_5,\  f(x_1,\ x_2,\ x_3,\ x_4)=4x_1^2x_3^4+2x_1^2+x_2x_3$,  which is a non-weakly regular $1$-plateaued function belonging to $\mathcal{NWRF}$ with $0\in B_-(f)$, $k=100$, and the dual $f^*(x)$ of $f(x)$ is bent relative to $\mathrm{Supp}(\widehat{\chi_f})$. Then, $\mathcal{C}_{D_{f,NSQ}}$ constructed by (5) is a five-weight linear code with parameters $[300,\ 4,\ 200]$, weight enumerator $1+4z^{200}+40z^{220}+540z^{240}+20z^{260}+20z^{280}$ and the dual code $\mathcal{C}_{D_{f,NSQ}}^{\bot}$ is a $[300,\ 296,\ 2]$ linear code. The punctured code $\widetilde{\mathcal{C}}_{\widetilde{D}_{f,NSQ}}$ constructed by (7) is a $[75,\ 4,\ 50]$ linear code with weight enumerator $1+4z^{50}+40z^{55}+540z^{60}+20z^{65}+20z^{70}$ and the dual code $\widetilde{\mathcal{C}}_{\widetilde{D}_{f,NSQ}}^{\bot}$ is a $[75,\ 71,\ 3]$ linear code, which is optimal according to the Code Table at http://www.codetables.de/.
			\end{example}

			According to Lemma 8, Theorem 6, Remark 13 and Corollary 14, we have the following results on  secret sharing schemes.
			\begin{corollary}
				Let $n+s$ be an odd integer with $0\le s\le n-5$ and $\mathcal{C}_{D_{f,SQ}}$, $\mathcal{C}_{D_{f,NSQ}}$ be the linear codes in Theorem 6 with generator matrix $G=[{g}_0,{g}_1,\cdots,\\\ {g}_{m-1}]$, where $m=\frac{(p-1)}{2}(p^{n-1}+\epsilon_0p^{\frac{n+s-1}{2}})$ for $\mathcal{C}_{D_{f,SQ}}$ and $m=\frac{(p-1)}{2}(p^{n-1}-\epsilon_0p^{\frac{n+s-1}{2}})$ for $\mathcal{C}_{D_{f,NSQ}}$. Then, in the secret sharing schemes based on $\mathcal{C}_{D_{f,SQ}}^{\bot}$ and $\mathcal{C}_{D_{f,NSQ}}^{\bot}$  with $d^{\bot}=2$, the number of participants is $m-1$ and there are $p^{n-1}$ minimal access sets. Moreover, if ${g}_j$ is a scalar multiple of ${g}_0$, $1\le j\le m-1$, then participant $P_j$ must be in every minimal access set, or else, 
				participant $P_j$ must be in $(p-1)p^{n-2}$ out of $p^{n-1}$  minimal access sets.
			\end{corollary}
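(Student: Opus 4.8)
The plan is to deduce this statement directly from Theorem 6, Remark 13 and Lemma 8, following the same route used for the analogous secret sharing corollaries above.

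First, I would record from Theorem 6 that, under the standing hypotheses, $\mathcal{C}_{D_{f,SQ}}$ (respectively $\mathcal{C}_{D_{f,NSQ}}$) is an $[m,\,n]$ linear code with $m=\frac{(p-1)}{2}(p^{n-1}+\epsilon_0 p^{\frac{n+s-1}{2}})$ (respectively $m=\frac{(p-1)}{2}(p^{n-1}-\epsilon_0 p^{\frac{n+s-1}{2}})$) and that its dual code has minimum distance $d^{\bot}=2$; note that the range $0\le s\le n-5$ assumed here lies inside the range $0\le s\le n-3$ on which Theorem 6 is valid. Next, I would invoke Remark 13, which guarantees via the Ashikhmin--Barg condition (Lemma 7) that $\mathcal{C}_{D_{f,SQ}}$ and $\mathcal{C}_{D_{f,NSQ}}$ are minimal exactly when $0\le s\le n-5$; this is precisely the reason the bound on $s$ is tightened in the hypothesis.

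Finally, I would apply Lemma 8 to the minimal code $\mathcal{C}_{D_{f,SQ}}$ (respectively $\mathcal{C}_{D_{f,NSQ}}$), with the triple $[n,k,d]$ appearing there instantiated by $[m,\,n,\,d]$ here. This yields at once that the secret sharing scheme built on $\mathcal{C}_{D_{f,SQ}}^{\bot}$ (respectively $\mathcal{C}_{D_{f,NSQ}}^{\bot}$) has $m-1$ participants and $p^{n-1}$ minimal access sets, and, since $d^{\bot}=2$, that the first item of Lemma 8 applies: a participant $P_j$ belongs to every minimal access set when $g_j$ is a scalar multiple of $g_0$, and to exactly $(p-1)p^{n-2}$ of the $p^{n-1}$ minimal access sets otherwise. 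No real difficulty arises here; the only points requiring care are the correct identification of which code plays the role of $\mathcal{C}$ in Lemma 8 and the verification that its minimality holds on the asserted range of $s$, both of which are supplied by Theorem 6 and Remark 13.\qed
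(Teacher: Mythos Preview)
Your proposal is correct and follows exactly the same approach as the paper, which simply cites Lemma 8, Theorem 6 and Remark 13 (together with Corollary 14 for the companion punctured-code corollary) without further argument. Your write-up is in fact more detailed than what the paper provides.
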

			\begin{corollary}
				Let $n+s$ be an odd integer with $0\le s\le n-5$ and $\widetilde{\mathcal{C}}_{\widetilde{D}_{f,SQ}}$,  $\widetilde{\mathcal{C}}_{\widetilde{D}_{f,NSQ}}$ be the linear codes in Corollary 14 with generator matrix $G=[{g}_0,{g}_1,\cdots\\,{g}_{m-1}]$, where $m=\frac{p^{n-1}+\epsilon_0p^{\frac{n+s-1}{2}}}{2}$ for $\widetilde{\mathcal{C}}_{\widetilde{D}_{f,SQ}}$ and $m=\frac{p^{n-1}-\epsilon_0p^{\frac{n+s-1}{2}}}{2}$ for $\widetilde{\mathcal{C}}_{\widetilde{D}_{f,NSQ}}$. Then, in the secret sharing schemes based on $\widetilde{\mathcal{C}}_{\widetilde{D}_{f,SQ}}^{\bot}$ and $\widetilde{\mathcal{C}}_{\widetilde{D}_{f,NSQ}}^{\bot}$ with $d^{\bot}=3$, the number of participants is $m-1$ and there are $p^{n-1}$ minimal access sets. Moreover, every participant $P_j$ is involved in $(p-1)p^{n-2}$ out of $p^{n-1}$  minimal access sets.
			\end{corollary}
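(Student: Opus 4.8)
The plan is to obtain the statement as a direct consequence of Lemma 8, fed by the structural data about $\widetilde{\mathcal{C}}_{\widetilde{D}_{f,SQ}}$ and $\widetilde{\mathcal{C}}_{\widetilde{D}_{f,NSQ}}$ already proved in Theorem 6, Corollary 14 and Remark 13. No new computation is needed; the work consists in checking that the hypotheses of Lemma 8 hold and then reading off the numerical conclusions.

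First I would assemble the hypotheses. By Theorem 6 and Corollary 14, each of $\widetilde{\mathcal{C}}_{\widetilde{D}_{f,SQ}}$ and $\widetilde{\mathcal{C}}_{\widetilde{D}_{f,NSQ}}$ is an $[m,\,n]$ linear code over $\mathbb{F}_p$, where $m=\frac{1}{2}(p^{n-1}+\epsilon_0 p^{(n+s-1)/2})$ in the first case and $m=\frac{1}{2}(p^{n-1}-\epsilon_0 p^{(n+s-1)/2})$ in the second, and in both cases the dual code has minimum distance $d^{\bot}=3$. Since $0\le s\le n-5$ with $n+s$ odd (in particular $n\ge 5$, so $k-1=n-1\ge 1$), Remark 13 tells us that both codes are minimal. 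Thus Lemma 8 is applicable, its "length" parameter being $m$, its "dimension" parameter being $n$, and $d^{\bot}\ge 3$.

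Next I would read off the conclusions from the $d^{\bot}\ge 3$ case of Lemma 8 applied to $\mathcal{C}=\widetilde{\mathcal{C}}_{\widetilde{D}_{f,SQ}}$, and identically to $\widetilde{\mathcal{C}}_{\widetilde{D}_{f,NSQ}}$. In the Massey scheme built on $\mathcal{C}^{\bot}$ one coordinate is reserved for the secret, so the number of participants equals the length minus one, namely $m-1$, and the total number of minimal access sets equals $p^{k-1}=p^{n-1}$. Taking $t=1$, which is admissible because $1\le\min\{\,k-1,\ d^{\bot}-2\,\}=\min\{n-1,\,1\}=1$, Lemma 8 yields that every participant $P_j$ is contained in exactly $(p-1)^{1}p^{\,n-(1+1)}=(p-1)p^{n-2}$ of these $p^{n-1}$ minimal access sets, which is precisely the asserted count.

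Finally I would remark that the argument is uniform in the two cases: the defining sets $D_{f,SQ}$ and $D_{f,NSQ}$ influence only the value of $m$, whereas the quantities in the conclusion depend solely on $k=n$ and on the inequality $d^{\bot}\ge 3$, which are the same for both codes. There is no genuine obstacle here; the only point requiring care is the verification that $t=1$ lies in the admissible range $1\le t\le\min\{k-1,d^{\bot}-2\}$ of Lemma 8, which is automatic since $d^{\bot}=3$ forces that range to be exactly $\{1\}$ (so, in contrast to codes with larger dual distance, only the single-participant count is available).
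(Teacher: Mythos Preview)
Your proposal is correct and matches the paper's approach exactly: the paper presents this corollary (together with the preceding one) as an immediate consequence of Lemma~8, Theorem~6, Remark~13 and Corollary~14, with no further argument given. Your explicit check that $t=1$ lies in the admissible range $1\le t\le\min\{k-1,d^{\bot}-2\}$ is, if anything, more detailed than the paper's own treatment; the only cosmetic point is that minimality of the \emph{punctured} codes is stated in Remark~14 (which itself follows from Remark~13), so you could cite either.
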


		Let $f(x)$ be a non-weakly regular bent function, i.e., $s=0$, in the following corollary, we give the parameters and weight distributions of $\mathcal{C}_{D_{f,SQ}}$, $\widetilde{\mathcal{C}}_{\widetilde{D}_{f,SQ}}$, $\mathcal{C}_{D_{f,NSQ}}$ and $\widetilde{\mathcal{C}}_{\widetilde{D}_{f,NSQ}}$ constructed by non-weakly regular bent functions.
		
		\begin{corollary}
			Let $f(x):\ \mathbb{F}_{p}^n\longrightarrow\mathbb{F}_p$ be a non-weakly regular bent function belonging to $\mathcal{NWRF}$ with $\#B_+(f)=k\ (k\ne0,\text{and}\  k\ne p^{n})$ and the dual $f^*(x)$ of $f(x)$ be bent. Then, when $n$ is even, $\mathcal{C}_{D_{f,SQ}}$ defined by (4) and $\mathcal{C}_{D_{f,NSQ}}$ defined by (5) are at most four-weight linear codes with parameters $[\frac{(p-1)}{2}(p^{n-1}-\epsilon_0p^{\frac{n}{2}-1}),\ n]$ and the weight distributions of $\mathcal{C}_{D_{f,SQ}}$ and $\mathcal{C}_{D_{f,NSQ}}$ are given by Tables 25 and 26, $\widetilde{\mathcal{C}}_{\widetilde{D}_{f,SQ}}$ defined by (6) and   $\widetilde{\mathcal{C}}_{\widetilde{D}_{f,NSQ}}$ defined by (7) are at most four-weight linear codes with parameters $[\frac{(p^{n-1}-\epsilon_0p^{\frac{n}{2}-1})}{2},\ n]$ and the weight distributions of $\widetilde{\mathcal{C}}_{\widetilde{D}_{f,SQ}}$ and   $\widetilde{\mathcal{C}}_{\widetilde{D}_{f,NSQ}}$ are given by Tables 27 and 28; when $n$ is odd, $\mathcal{C}_{D_{f,SQ}}$ defined by (4) is a  at most five-weight linear code with parameters
			$[\frac{(p-1)}{2}(p^{n-1}+\epsilon_0p^{\frac{n-1}{2}}),\ n]$, $\mathcal{C}_{D_{f,NSQ}}$ defined by (5) is a at most five-weight linear code with parameters 	$[\frac{(p-1)}{2}(p^{n-1}-\epsilon_0p^{\frac{n-1}{2}}),\ n]$, and the weight distributions of $\mathcal{C}_{D_{f,SQ}}$ and $\mathcal{C}_{D_{f,NSQ}}$ are given by Table 29 and 30, $\widetilde{\mathcal{C}}_{\widetilde{D}_{f,SQ}}$ defined by (6) is a  at most five-weight linear code with parameters 	$[\frac{(p^{n-1}+\epsilon_0p^{\frac{n-1}{2}})}{2},\ n]$,  $\widetilde{\mathcal{C}}_{\widetilde{D}_{f,NSQ}}$ defined by (7) is a  at most five-weight linear code with parameters 	$[\frac{(p^{n-1}-\epsilon_0p^{\frac{n-1}{2}})}{2},\ n]$, and the weight distributions of $\widetilde{\mathcal{C}}_{\widetilde{D}_{f,SQ}}$ and   $\widetilde{\mathcal{C}}_{\widetilde{D}_{f,NSQ}}$ are given by Tables 31 and 32. 
		\end{corollary}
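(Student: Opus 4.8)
The plan is to obtain the statement as a direct specialization to $s=0$ of Theorems 5 and 6, Propositions 5 and 6, and Corollaries 11 and 14. The starting observation is that a $0$-plateaued function is precisely a bent function, and that $\#\mathrm{Supp}(\widehat{\chi_f})=p^{n-s}=p^n$ forces $\mathrm{Supp}(\widehat{\chi_f})=\mathbb{F}_p^n$. Hence ``$f^*$ is bent'' is literally ``$f^*$ is bent relative to $\mathrm{Supp}(\widehat{\chi_f})$'', so the hypotheses of the quoted results hold (for $n\ge 4$ when $n$ is even, and $n\ge 3$ when $n$ is odd), and Lemmas 9, 10 and 11 apply to $f^*$ on all of $\mathbb{F}_p^n$.

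For even $n$ (so $n+s=n$ is even) I would note that every nonzero $a$ now lies in $\mathrm{Supp}(\widehat{\chi_f})$, so the case ``$a\notin\mathrm{Supp}(\widehat{\chi_f})$'' in the weight formulas of Theorem 5 is vacuous; this already yields the ``at most four weights'' claim. Plugging $s=0$ into the length and weight expressions of Theorem 5 gives the parameters $[\tfrac{p-1}{2}(p^{n-1}-\epsilon_0p^{n/2-1}),\,n]$ for both $\mathcal{C}_{D_{f,SQ}}$ and $\mathcal{C}_{D_{f,NSQ}}$, and plugging $s=0$ into Proposition 5 --- where the row of multiplicity $p^n-p^{n-s}$ becomes $0$ and disappears --- gives Tables 25 and 26. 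Dividing the length and all weights by $p-1$, exactly as in Corollary 11, produces the punctured codes $\widetilde{\mathcal{C}}_{\widetilde{D}_{f,SQ}}$ and $\widetilde{\mathcal{C}}_{\widetilde{D}_{f,NSQ}}$ with parameters $[\tfrac{p^{n-1}-\epsilon_0p^{n/2-1}}{2},\,n]$ and Tables 27 and 28.

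For odd $n$ (so $n+s=n$ is odd) the first step is to pin down the lengths: by Lemma 10, $f^*(0)=0$, so $j_0=0$, and then Lemma 2 gives $\#D_{f,SQ}=\sum_{j\in SQ}N_j(f)=\tfrac{p-1}{2}(p^{n-1}+\epsilon_0p^{(n-1)/2})$ and $\#D_{f,NSQ}=\sum_{j\in NSQ}N_j(f)=\tfrac{p-1}{2}(p^{n-1}-\epsilon_0p^{(n-1)/2})$, using that $\eta\equiv 1$ on $SQ$ and $\eta\equiv -1$ on $NSQ$; these are the stated lengths, and they are exactly what Theorem 6 specializes to. The ``$a\notin\mathrm{Supp}(\widehat{\chi_f})$'' branch of Theorem 6 is again vacuous, which gives ``at most five weights'', and substituting $s=0$ into Proposition 6 and Corollary 14 gives the weight distributions of Tables 29--32 for the two codes and their punctured versions.

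The part that needs care --- though it is only bookkeeping rather than a genuine difficulty --- is checking the tables: one must verify that at $s=0$ it is exactly the line of multiplicity $p^n-p^{n-s}$ that collapses (so the weight count drops by precisely one, whence ``at most four''/``at most five''), that no multiplicity in Tables 25--32 turns negative under the standing hypothesis $k\ne 0,\,p^{n}$, and that the type correspondence of Lemma 11, which fixes $\epsilon_0$ and $\epsilon_a$ coherently, survives the substitution unchanged. All of these are immediate once $s=0$ is inserted, so no new idea is needed beyond the specialization itself.
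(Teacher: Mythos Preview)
Your approach is correct and coincides with the paper's: Corollary 17 is stated without proof precisely because it is the direct specialization $s=0$ of Theorems 5--6, Propositions 5--6, and Corollaries 11 and 14, with the row of multiplicity $p^n-p^{n-s}$ collapsing to zero. One small imprecision: in the odd-$n$ case, the vacuousness of the branch $a\notin\mathrm{Supp}(\widehat{\chi_f})$ does not itself reduce the weight count (Proposition 6 already has five weights, since that branch shares its weight with part of the $f^*(a)\ne 0$ case), it only shrinks the multiplicity of the first row in Tables 21--22 to $\tfrac{p-1}{2}(p^{n-1}\mp p^{(n-1)/2})$; the ``at most five weights'' claim is simply inherited unchanged.
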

		
		Next, we verify Corollary 17 by Magama program for the following non-weakly regular bent functions.
		\begin{table}
			\caption{The weight distributions of $\mathcal{C}_{D_{f,SQ}}$ and $\mathcal{C}_{D_{f,NSQ}}$  in Corollary 17 when $n$ is even and $0\in B_+(f)$}
			\begin{tabular}{|l|l|}
				\hline
				Weight &Multiplicity\\\hline
				0&1\\
				\hline
				$\frac{(p-1)^2}{2}p^{n-2}$&$\frac{(p+1)}{2}\frac{k}{p}+\frac{(p-1)}{2}p^{\frac{n}{2}-1}-1$\qquad\qquad\quad\hspace{2cm}\\
				\hline
				$\frac{(p-1)^2}{2}(p^{n-2}-2p^{\frac{n}{2}-2} )$&$\frac{(p+1)}{2}(p^{n-1}-\frac{k}{p})$\\
				\hline
				$\frac{(p-1)}{2}(p^{n-1}-p^{n-2}-2p^{\frac{n}{2}-1})$&$\frac{(p-1)}{2}(\frac{k}{p}-p^{\frac{n}{2}-1})$\\
				\hline
				$\frac{(p-1)}{2}(p^{n-1}-p^{n-2}+2p^{\frac{n}{2}-2})$&$ \frac{(p-1)}{2}(p^{n-1}-\frac{k}{p})$\\
				\hline
			\end{tabular}
		\end{table}
		\begin{table}
			\caption{The weight distributions of $\mathcal{C}_{D_{f,SQ}}$ and $\mathcal{C}_{D_{f,NSQ}}$ in Corollary 17 when $n$ is even and $0\in B_-(f)$}
			\begin{tabular}{|l|l|}
				\hline
				Weight &Multiplicity\\\hline
				0&1\\\hline
				$\frac{(p-1)^2}{2}(p^{n-2}+2p^{\frac{n}{2}-2})$&$\frac{(p+1)}{2}\frac{k}{p}$\\
				\hline
				$\frac{(p-1)^2}{2}p^{n-2}$&$\frac{(p+1)}{2}(p^{n-1}-\frac{k}{p})-\frac{(p-1)}{2}p^{\frac{n}{2}-1}-1$\\
				\hline
				$\frac{(p-1)}{2}(p^{n-1}-p^{n-2}-2p^{\frac{n}{2}-2})$&$\frac{(p-1)}{2}\frac{k}{p}$\\
				\hline
				$\frac{(p-1)}{2}(p^{n-1}-p^{n-2}+2p^{\frac{n}{2}-1})$&$\frac{(p-1)}{2}(p^{n-1}-\frac{k}{p}+p^{\frac{n}{2}-1})$\\
				\hline
			\end{tabular}
		\end{table}
		\begin{table}
			\caption{The weight distributions of $\widetilde{\mathcal{C}}_{\widetilde{D}_{f,SQ}}$ and $\widetilde{\mathcal{C}}_{\widetilde{D}_{f,NSQ}}$ in Corollary 17  when $n$ is even and $0\in B_+(f)$}
			\begin{tabular}{|l|l|}
				\hline
				Weight &Multiplicity\\\hline
				0&1\\
				\hline
				$\frac{(p-1)}{2}p^{n-2}$&$\frac{(p+1)}{2}\frac{k}{p}+\frac{(p-1)}{2}p^{\frac{n}{2}-1}-1$\qquad\qquad\quad\hspace{2cm}\\
				\hline
				$\frac{(p-1)}{2}(p^{n-2}-2p^{\frac{n}{2}-2} )$&$\frac{(p+1)}{2}(p^{n-1}-\frac{k}{p})$\\
				\hline
				$\frac{p^{n-1}-p^{n-2}-2p^{\frac{n}{2}-1}}{2}$&$\frac{(p-1)}{2}(\frac{k}{p}-p^{\frac{n}{2}-1})$\\
				\hline
				$\frac{p^{n-1}-p^{n-2}+2p^{\frac{n}{2}-2}}{2}$&$ \frac{(p-1)}{2}(p^{n-1}-\frac{k}{p})$\\
				\hline
			\end{tabular}
		\end{table}
		\begin{table}
			\caption{The weight distributions of $\widetilde{\mathcal{C}}_{\widetilde{D}_{f,SQ}}$ and $\widetilde{\mathcal{C}}_{\widetilde{D}_{f,NSQ}}$  in Corollary 17 when $n$ is even and $0\in B_-(f)$}
			\begin{tabular}{|l|l|}
				\hline
				Weight &Multiplicity\\\hline
				0&1\\\hline
				$\frac{(p-1)}{2}(p^{n-2}+2p^{\frac{n}{2}-2})$&$\frac{(p+1)}{2}\frac{k}{p}$\\
				\hline
				$\frac{(p-1)}{2}p^{n-2}$&$\frac{(p+1)}{2}(p^{n-1}-\frac{k}{p})-\frac{(p-1)}{2}p^{\frac{n}{2}-1}-1$\\
				\hline
				$\frac{p^{n-1}-p^{n-2}-2p^{\frac{n}{2}-2}}{2}$&$\frac{(p-1)}{2}\frac{k}{p}$\\
				\hline
				$\frac{p^{n-1}-p^{n-2}+2p^{\frac{n}{2}-1}}{2}$&$\frac{(p-1)}{2}(p^{n-s-1}-\frac{k}{p}+p^{\frac{n}{2}-1})$\\
				\hline
			\end{tabular}
		\end{table}
		\begin{table}
			\caption{The weight distribution of $\mathcal{C}_{D_{f,SQ}}$ in Corollary 17 when $n$ is odd and $0\in B_+(f)$ and the weight distribution of $\mathcal{C}_{D_{f,NSQ}}$ in Corollary 17 when $n$ is odd and $0\in B_-(f)$}
			\begin{tabular}{|l|l|l|}
				\hline
				Weight & Multiplicity of $\mathcal{C}_{D_{f,SQ}}$ &Multiplicity of $\mathcal{C}_{D_{f,NSQ}}$ \\ 
				&($0\in B_+(f))$&($0\in B_-(f))$\\\hline
				0&1&1\\ \hline
				$\frac{(p-1)^2}{2}(p^{n-2}+p^{\frac{n-3}{2}})$&$\frac{(p-1)}{2}(p^{n-1}-p^{\frac{n-1}{2}})$&$\frac{(p-1)}{2}(p^{n-1}-p^{\frac{n-1}{2}})$
				\\\hline
				$\frac{(p-1)^2}{2}p^{n-2}$&$\frac{k}{p}-1$&$p^{n-1}-\frac{k}{p}-1$\\ \hline
				$\frac{(p-1)^2}{2}(p^{n-2}+2p^{\frac{n-3}{2}})$&$p^{n-1}-\frac{k}{p}$&$\frac{k}{p}$\\ \hline
				$\frac{(p-1)^2}{2}p^{n-2}+\frac{(p^2-1)}{2}p^{\frac{n-3}{2}}$&$\frac{(p-1)}{2}(\frac{k}{p}+p^{\frac{n-1}{2}})$&$\frac{(p-1)}{2}(p^{n-1}-\frac{k}{p}+p^{\frac{n-1}{2}})$\\\hline
				$\frac{(p-1)^2}{2}p^{n-2}+\frac{(p-1)(p-3)}{2}p^{\frac{n-3}{2}}$&$\frac{(p-1)}{2}(p^{n-1}-\frac{k}{p})$&$\frac{(p-1)}{2}\frac{k}{p}$\\\hline
			\end{tabular}
		\end{table}
		\begin{table}
			\caption{The weight distribution of $\mathcal{C}_{D_{f,SQ}}$ in Corollary 17 when $n$ is odd and $0\in B_-(f)$ and the weight distribution of $\mathcal{C}_{D_{f,NSQ}}$ in Corollary 17 when $n$ is odd and $0\in B_+(f)$}
			
			\begin{tabular}{|l|l|l|}
				\hline
				Weight &Multiplicity of  $\mathcal{C}_{D_{f,SQ}}$ &Multiplicity of $\mathcal{C}_{D_{f,NSQ}}$\\
				&($0\in B_-(f)$)&($0\in B_+(f)$)\\\hline
				0&1&1\\ \hline
				$\frac{(p-1)^2}{2}(p^{n-2}-p^{\frac{n-3}{2}})$&$\frac{(p-1)}{2}(p^{n-1}+p^{\frac{n-1}{2}})$&$\frac{(p-1)}{2}(p^{n-1}+p^{\frac{n-1}{2}})$\\
				\hline
				$\frac{(p-1)^2}{2}(p^{n-2}-2p^{\frac{n-3}{2}})$&$\frac{k}{p}$&$p^{n-1}-\frac{k}{p}$\\ \hline
				$\frac{(p-1)^2}{2}p^{n-2}$&$p^{n-1}-\frac{k}{p}-1$&$\frac{k}{p}-1$\\ \hline
				$\frac{(p-1)^2}{2}p^{n-2}-\frac{(p-1)(p-3)}{2}p^{\frac{n-3}{2}}$&$\frac{(p-1)k}{2p}$&$\frac{(p-1)}{2}(p^{n-1}-\frac{k}{p})$\\\hline
				$\frac{(p-1)^2}{2}p^{n-2}-\frac{(p^2-1)}{2}p^{\frac{n-3}{2}}$&$\frac{(p-1)}{2}(p^{n-1}-p^{\frac{n-1}{2}}-\frac{k}{p})$&$\frac{(p-1)}{2}(\frac{k}{p}-p^{\frac{n-1}{2}})$\\
				\hline
			\end{tabular}
		\end{table}
		
		\begin{table}
			\caption{The weight distribution of $\widetilde{\mathcal{C}}_{\widetilde{D}_{f,SQ}}$ in Corollary 17 when $n$ is odd and $0\in B_+(f)$ and the weight distribution of $\widetilde{\mathcal{C}}_{\widetilde{D}_{f,NSQ}}$ in Corollary 17 when $n$ is odd and $0\in B_-(f)$}
			
			\begin{tabular}{|l|l|l|}
				\hline
				Weight & Multiplicity of $\widetilde{\mathcal{C}}_{\widetilde{D}_{f,SQ}}$ &Multiplicity of $\widetilde{\mathcal{C}}_{\widetilde{D}_{f,NSQ}}$ \\ 
				&($0\in B_+(f)$)&($0\in B_-(f)$)\\\hline
				0&1&1\\ \hline
				$\frac{(p-1)}{2}(p^{n-2}+p^{\frac{n-3}{2}})$&$\frac{(p-1)}{2}(p^{n-1}-p^{\frac{n-1}{2}})$&$\frac{(p-1)}{2}(p^{n-1}-p^{\frac{n-1}{2}})$\\
				\hline
				$\frac{(p-1)}{2}p^{n-2}$&$\frac{k}{p}-1$&$p^{n-1}-\frac{k}{p}-1$\\ \hline
				$\frac{(p-1)}{2}(p^{n-2}+2p^{\frac{n-3}{2}})$&$p^{n-1}-\frac{k}{p}$&$\frac{k}{p}$\\ \hline
				$\frac{(p-1)}{2}p^{n-2}+\frac{(p+1)}{2}p^{\frac{n-3}{2}}$&$\frac{(p-1)}{2}(\frac{k}{p}+p^{\frac{n-1}{2}})$&$\frac{(p-1)}{2}(p^{n-1}-\frac{k}{p}+p^{\frac{n-1}{2}})$\\\hline
				$\frac{(p-1)}{2}p^{n-2}+\frac{(p-3)}{2}p^{\frac{n-3}{2}}$&$\frac{(p-1)}{2}(p^{n-1}-\frac{k}{p})$&$\frac{(p-1)}{2}\frac{k}{p}$\\\hline
			\end{tabular}
		\end{table}
		\begin{table}
			\caption{The weight distribution of $\widetilde{\mathcal{C}}_{\widetilde{D}_{f,SQ}}$ in Corollary 17 when $n$ is odd and $0\in B_-(f)$ and the weight distribution of $\widetilde{\mathcal{C}}_{\widetilde{D}_{f,NSQ}}$ in Corollary 17 when $n$ is odd and $0\in B_+(f)$}
			\begin{tabular}{|l|l|l|}
				\hline
				Weight &Multiplicity of  $\widetilde{\mathcal{C}}_{\widetilde{D}_{f,SQ}}$ &Multiplicity of $\widetilde{\mathcal{C}}_{\widetilde{D}_{f,NSQ}}$\\ 
				&($0\in B_-(f)$)&($0\in B_+(f)$)\\\hline
				0&1&1\\ \hline
				$\frac{(p-1)}{2}(p^{n-2}-p^{\frac{n-3}{2}})$&$\frac{(p-1)}{2}(p^{n-1}+p^{\frac{n-1}{2}})$&$\frac{(p-1)}{2}(p^{n-1}+p^{\frac{n-1}{2}})$\\ \hline
				$\frac{(p-1)}{2}(p^{n-2}-2p^{\frac{n-3}{2}})$&$\frac{k}{p}$&$p^{n-1}-\frac{k}{p}$\\ \hline
				$\frac{(p-1)}{2}p^{n-2}$&$p^{n-1}-\frac{k}{p}-1$&$\frac{k}{p}-1$\\ \hline
				$\frac{(p-1)}{2}p^{n-2}-\frac{(p-3)}{2}p^{\frac{n-3}{2}}$&$\frac{(p-1)k}{2p}$&$\frac{(p-1)}{2}(p^{n-1}-\frac{k}{p})$\\\hline
				$\frac{(p-1)}{2}p^{n-2}-\frac{(p+1)}{2}p^{\frac{n-3}{2}}$&$\frac{(p-1)}{2}(p^{n-1}-p^{\frac{n-1}{2}}-\frac{k}{p})$&$\frac{(p-1)}{2}(\frac{k}{p}-p^{\frac{n-1}{2}})$\\
				\hline
			\end{tabular}
		\end{table}
		\begin{example}
			Consider $f(x):\ \mathbb{F}_{3}^4\longrightarrow\mathbb{F}_3$, $f(x_1, x_2, x_3, x_4)=2x_1^2x_4^2+2x_1^2+x_2^2+x_3x_4$,  which is a non-weakly regular bent function belonging to $\mathcal{NWRF}$ with $0\in B_+(f)$, $k=27$, and the dual $f^*(x)$ of $f(x)$ is bent. Then, $\mathcal{C}_{D_{f,SQ}}$ constructed by (4) and $\mathcal{C}_{D_{f,NSQ}}$ constructed by (5) are four-weight linear codes with parameters $[24,\ 4,\ 12]$, weight enumerator $1+6z^{12}+36z^{14}+20z^{18}+18z^{20}$. The dual codes $\mathcal{C}_{D_{f,SQ}}^{\bot}$ and $\mathcal{C}_{D_{f,NSQ}}^{\bot}$ are $[24,\ 20,\ 2]$ linear codes which are almost optimal according to the Code Table at http://www.codetables.de/. Moreover, the punctured codes $\widetilde{\mathcal{C}}_{\widetilde{D}_{f,SQ}}$ constructed by (6) and   $\widetilde{\mathcal{C}}_{\widetilde{D}_{f,NSQ}}$ constructed by (7) are $[12,\ 4,\ 6]$ linear codes with weight enumerator $1+6z^{6}+36z^{7}+20z^{9}+18z^{10}$, which are optimal according to the Code Table at http://ww\\w.codetables.de/. The dual codes $\widetilde{\mathcal{C}}_{\widetilde{D}_{f,SQ}}^{\bot}$ and   $\widetilde{\mathcal{C}}_{\widetilde{D}_{f,NSQ}}^{\bot}$ are $[12,\ 8,\ 3]$ linear codes which are optimal according to the Code Table at http://www.codetables.de/.
			\end{example}
		\begin{example}
			Consider $f(x):\ \mathbb{F}_{3}^4\longrightarrow\mathbb{F}_3$, $f(x_1, x_2, x_3, x_4)=x_1^2x_4^2+x_1^2+x_2^2+x_3x_4$, which is a non-weakly regular bent function belonging to $\mathcal{NWRF}$ with $0\in B_-(f)$, $k=54$, and the dual $f^*(x)$ is bent. Then, $\mathcal{C}_{D_{f,SQ}}$ constructed by (4) and $\mathcal{C}_{D_{f,NSQ}}$ constructed by (5) are four-weight linear codes with parameters $[30,\ 4,\ 16]$ , weight enumerator $1+18z^{16}+14z^{18}+36z^{22}+12z^{24}$. The dual codes $\mathcal{C}_{D_{f,SQ}}^{\bot}$ and $\mathcal{C}_{D_{f,NSQ}}^{\bot}$ are $[30,\ 26,\ 2]$ linear codes which are almost optimal according to the Code Table at http://www.codetables.de/. Moreover, the punctured codes $\widetilde{\mathcal{C}}_{\widetilde{D}_{f,SQ}}$ constructed by (6) and   $\widetilde{\mathcal{C}}_{\widetilde{D}_{f,NSQ}}$ constructed by (7) are $[15,\ 4,\ 8]$ linear codes with weight enumerator $1+18z^{8}+14z^{9}+36z^{11}+12z^{12}$, which are almost optimal according to the Code Table at http://www.codetables.de/. The dual codes $\widetilde{\mathcal{C}}_{\widetilde{D}_{f,SQ}}^{\bot}$ and   $\widetilde{\mathcal{C}}_{\widetilde{D}_{f,NSQ}}^{\bot}$ are $[15,\ 11,\ 3]$ linear codes which are optimal according to the Code Table at http://www.codetables.de/.
		\end{example}
	\begin{example}
		Consider $f(x):\ \mathbb{F}_{5}^5\longrightarrow\mathbb{F}_5$, $f(x_1, x_2, x_3,x_4, x_5)=x_1^2x_5^4+x_1^2+x_2^2+x_3^2+x_4x_5$, which is a non-weakly regular bent function belonging to $\mathcal{NWRF}$ with $0\in B_+(f)$, $k=625$, and the dual $f^*(x)$ is bent. Then, $\mathcal{C}_{D_{f,SQ}}$ constructed by (4) is a five-weight linear code with parameters $[1300,\ 5,\ 1000]$, weight enumerator $1+124z^{1000}+1000z^{1020}+1200z^{1040}+300z^{1060}+500z^{1080}$ and the dual code $\mathcal{C}_{D_{f,SQ}}^{\bot}$ is a $[1300,\ 1295, 2]$ linear code. The punctured code $\widetilde{\mathcal{C}}_{\widetilde{D}_{f,SQ}}$ constructed by (6) is a $[325,\ 5,\ 250]$ linear code with weight enumerator $1+124z^{250}+1000z^{255}+1200z^{260}+300z^{265}+500z^{270}$ and the dual code $\widetilde{\mathcal{C}}_{\widetilde{D}_{f,SQ}}^{\bot}$ is a $[325, 320,\ 3]$ linear code.
		\end{example}
	\begin{example}
		Consider $f(x):\ \mathbb{F}_{5}^5\longrightarrow\mathbb{F}_5$, $f(x_1, x_2, x_3,x_4, x_5)=4x_1^2x_5^4+2x_1^2+x_2^2+x_3^2+x_4x_5$, which is a non-weakly regular bent function belonging to $\mathcal{NWRF}$ with $0\in B_-(f)$, $k=2500$, and the dual $f^*(x)$ is bent. Then, $\mathcal{C}_{D_{f,SQ}}$ constructed by (4) is a five-weight linear code with parameters $[1200,\ 5,\ 920]$, weight enumerator $1+500z^{920}+200z^{940}+1300z^{960}+1000z^{980}+124z^{1000}$ and the dual code $\mathcal{C}_{D_{f,SQ}}^{\bot}$ is a $[1200,\ 1195, 2]$ linear code. The punctured code $\widetilde{\mathcal{C}}_{\widetilde{D}_{f,SQ}}$ constructed by (6) is a $[300,\ 5,\ 230]$ linear code with weight enumerator $1+500z^{230}+200z^{235}+1300z^{240}+1000z^{245}+124z^{250}$and the dual code $\widetilde{\mathcal{C}}_{\widetilde{D}_{f,SQ}}^{\bot}$ is a $[300, 295,\ 3]$ linear code.
	\end{example}
	\begin{example}
		Consider $f(x):\ \mathbb{F}_{5}^5\longrightarrow\mathbb{F}_5$, $f(x_1, x_2, x_3,x_4, x_5)=x_1^2x_5^4+x_1^2+x_2^2+x_3^2+x_4x_5$, which is a non-weakly regular bent function belonging to $\mathcal{NWRF}$ with $0\in B_+(f)$, $k=625$, and the dual $f^*(x)$ is bent. Then, $\mathcal{C}_{D_{f,NSQ}}$ constructed by (5) is a five-weight linear code with parameters $[1200,\ 5,\ 920]$, weight enumerator $1+500z^{920}+200z^{940}+1300z^{960}+1000z^{980}+124z^{1000}$ and the dual code $\mathcal{C}_{D_{f,NSQ}}^{\bot}$ is a $[1200,\ 1195, 2]$ linear code. The punctured code $\widetilde{\mathcal{C}}_{\widetilde{D}_{f,NSQ}}$ constructed by (7) is a $[300,\ 5,\ 230]$ linear code with weight enumerator $1+500z^{230}+200z^{235}+1300z^{240}+1000z^{245}+124z^{250}$and the dual code $\widetilde{\mathcal{C}}_{\widetilde{D}_{f,NSQ}}^{\bot}$ is a $[300, 295,\ 3]$ linear code.
	\end{example}
\begin{example}
	Consider $f(x):\ \mathbb{F}_{5}^5\longrightarrow\mathbb{F}_5$, $f(x_1, x_2, x_3,x_4, x_5)=4x_1^2x_5^4+2x_1^2+x_2^2+x_3^2+x_4x_5$, which is a non-weakly regular bent function belonging to $\mathcal{NWRF}$ with $0\in B_-(f)$, $k=2500$, and the dual $f^*(x)$ is bent. Then, $\mathcal{C}_{D_{f,NSQ}}$ constructed by (5) is a five-weight linear code with parameters $[1300,\ 5,\ 1000]$, weight enumerator $1+124z^{1000}+1000z^{1020}+1200z^{1040}+300z^{1060}+500z^{1080}$ and the dual code $\mathcal{C}_{D_{f,NSQ}}^{\bot}$ is a $[1300,\ 1295, 2]$ linear code. The punctured code $\widetilde{\mathcal{C}}_{\widetilde{D}_{f,NSQ}}$ constructed by (7) is a $[325,\ 5,\ 250]$ linear code with weight enumerator $1+124z^{250}+1000z^{255}+1200z^{260}+300z^{265}+500z^{270}$ and the dual code $\widetilde{\mathcal{C}}_{\widetilde{D}_{f,NSQ}}^{\bot}$ is a $[325, 320,\ 3]$ linear code.
\end{example}
\begin{table}
	\caption{Known $[n,\ k,\ d]$ linear codes with four weights over $\mathbb{F}_{p^t}$}
	
	\begin{tabular}{|l|l|l|l|l|}
		\hline
		p &n&k&d&Remarks\\ \hline
		odd&$q^{2n}$&$3n+1$&$(q-1)q^{n-1}(q^n-1)$&$t\ge1$, $q=p^t$\cite{Carlet1}\\\hline
		odd&$q^n$&$2n+1$&$(q-1)q^{n-1}-q^{\frac{n-1}{2}}$&$n>1$, $n$ is odd, \\&&&&$t\ge1$, $q=p^t$\cite{Li}\\\hline
		odd&$p^{n-1}+p^{m-1}$&$n+1$&$(p-1)p^{n-2}$&$m\ge1,\ n\ge2,$\\&&&&$ t=1$\cite{Wang}\\\hline
		odd&$p^{n-1}(1-p^{-\frac{r}{2}})$&$n$&$p^{n-2}(p-1-$&$0\le r\le n$, $r$ is \\&-1&&$2p^{-\frac{r}{2}+1})$&even, $t=1$\cite{Ta}\\
		&&&&\\\hline
		odd&$p^{n-1}(1+p^{-\frac{r}{2}})$&$n$&$p^{n-2}(p-1)$&$0\le r\le n$, $r$ is \\&-1&&&even, $t=1$\cite{Ta}\\\hline
		odd&$p^{n-1}(1+(p-1)$&$n$&$p^{n-2}(p-1)$&$0\le r\le n$, $r$ is \\
		&$p^{-\frac{r}{2}})-1$&&&even, $t=1$\cite{Ta}\\	\hline
		odd&$p^{n-1}(1-(p-1)$&$n$&$p^{n-2}(p-1)$& $r$ is even, $t=1$\\
		&$p^{-\frac{r}{2}})-1$&&$(1-p^{1-\frac{r}{2}})$&\cite{Ta}\\	\hline
		
		$2$&$2^n$&$2n+1$&$2^{n-1}-2^{\frac{n-1}{2}}$&$n\ge5$, $n$ is odd,\\&&&&$t=1$\cite{Xiangcan}\\\hline
		odd&$p^2+p$&4&$p(p-1)$&$t=1$\cite{ZX}\\\hline
		odd&$p^6+p^5$&8&$(p-1)p^5$&$t=1$\cite{ZX}\\\hline
		odd, $p>3$&$p^{n-1}-(p-1)$&$n$&$(p-1)(p^{n-2}-p^{\frac{n}{2}-1})$&$n>2$, $n$ is even, \\
		&$p^{\frac{n}{2}-1}-1$&&&$t=1$\cite{XU}\\\hline
		$p\equiv3\ (\text{mod}\ 4)$&$\frac{p-1}{2}p^{n-2}$&$n$&$\frac{p-1}{2}(p^{n-2}-p^{n-3}
		$&$p|n$, $n\ge4$, $n$ \\
		&&&$-p^{\frac{n}{2}-2})$&is even,$t=1$\cite{Duan}\\\hline
		prime&$q^{2n}-q^n$&$3n$&$(q-1)(q^{2n-1}$&$n\ge1$,$q=p^t$,\\&&&$-2q^{n-1})$&$t\ge1$,$q\ne2$\cite{ZHU}\\\hline			
		
		prime&$(q^n-1)^2$&$3n$&$(q-1)(q^{2n-1}$&$n\ge1$,$q=p^t$,\\&&&$-3q^{n-1})$&$t\ge1$,$q\ne2$\cite{ZHU}\\\hline		
		prime&$\frac{(q^n-1)^2}{q-1}$&$3n$&$q^{2n-1}-q^n-q^{n-1}$&$n\ge1$,$q=p^t$,\\&&&&$t\ge1$,$q\ne2$\cite{ZHU}\\\hline
	\end{tabular}
\end{table}
\begin{table}
	\caption{Known $[n,\ k,\ d]$ linear codes with five weights over $\mathbb{F}_{p^t}$}
	
	\begin{tabular}{|l|l|l|l|l|}
		\hline
		p &n&k&d&Remarks\\ \hline
		odd &$q^n-1$&$2n$ &$(q-1)(q^{n-1}-q^{\frac{n-2}{2}})$&$n$ is even,$t\ge 1$, \\&&&& $q=p^t$ \cite{Li}\\\hline
		odd &$p^{2n-1}-(p-1)p^{n-1}$&$2n+1$&$(p-1)(p^{2n-2}$&$n\ge 2$, $t=1$ \\
		&$-1$&&$-p^{n-1}-1)$&\cite {Wang}\\\hline
		odd&$p^{2n-1}+p^{n+d-1}$&$2n+1$&$(p-1)p^{2n-2}$& $n/ d\equiv 0 (\mathrm{mod}\ 2)$,\\&&&& $t=1$ \cite {Wang}\\ \hline
		odd &$p^{n-1}(1+p^{-\frac{r-1}{2}})$&$n$& $(p-1)p^{n-2}$& $0\le r \le n$ , $r$ \\&$-1$&&&is odd, $t=1$ \cite{Ta}\\\hline
		odd& $p^{n-1}(1-p^{-\frac{r-1}{2}})$&$n$&$p^{n-2}((p-1)$& $0\le r \le n$, $r$ \\&$-1$&&$-(p+1)p^{-\frac{r-1}{2}})$&is odd, $t=1$ \cite{Ta}\\\hline
		odd & $p^{n-1}-1$&$n$&$p^{n-2}(p-1)(1-p^{-\frac{r-1}{2}})$& $0\le r \le n$, $r$ \\&&&&is odd, $t=1$ \cite{Ta}\\\hline
		odd &$p^{n-1}-1$& $n$& $(p-1)(p^{n-2}-p^{\frac{n-3}{2}})$& $n$ is odd, \\&&&&$p\mid n$, $t=1$ \cite{li}\\\hline
		odd & $p^{n-1}+p^{\frac{n+1}{2}-1}-1$&$n$& $(p-1)p^{n-2}$&  $n$ is odd, \\&&&&$p\nmid n$, $t=1$ \cite{li}\\\hline
		odd  & $p^{n-1}-p^{\frac{n+1}{2}-1}-1$&$n$& $(p-1)p^{n-2}$&  $n$ is odd, \\&&&$-(p+1)p^{\frac{n+1}{2}-2}$&$p\nmid n$, $t=1$ \cite{li} \\\hline
		odd &$p^{n-1}(p^n-p-1)+1$&$2n-1$&$p^{n-1}(p^n-p^{n-1}-p+1)$&$n\ge 2$ is even,\\
		& & & $-(p-1)p^{\frac{n}{2}-1}(p^{n-1}-1)$& $t=1$ \cite{Xiang}\\\hline
		2&$2^n-1$&$2n$&$2^{n-1}-2^{\frac{n}{2}}$&$n\ge 4$ is even,\\&&&& $t=1$ \cite{Xiangcan}\\\hline
		2&$2^n-2$&$2n-1$&$2^{n-1}-2^{\frac{n}{2}}$&$n\ge 4$ is even,\\&&&& $t=1$ \cite{Xiangcan}\\\hline
		2&$2^n-3$&$2n-2$&$2^{n-1}-2^{\frac{n}{2}}$&$n\ge 4$ is even,\\&&&& $t=1$ \cite{Xiangcan}\\\hline
		2&$2^n-4$&$2n-3$&$2^{n-1}-2^{\frac{n}{2}}$&$n\ge 4$ is even,\\&&&& $t=1$ \cite{Xiangcan}\\\hline
		odd &$p^n-2$&$2n-1$&$(p-1)
		(p^{n-1}-p^{\frac{n-2}{2}})$&$n\ge 2$ is even, \\
		&&&&$t=1$ \cite{Xiangcan}\\\hline
		odd &$p^n-1$&$2n$&$(p-1)
		(p^{n-1}-p^{\frac{n-2}{2}})$&$n\ge 2$ is even, \\
		&&&&$t=1$ \cite{Xiangcan}\\\hline
		odd &$p^n-p$&$2n-2$&$(p-1)
		(p^{n-1}-p^{\frac{n-2}{2}})$&$n\ge 2$ is even,\\&&&& $t=1$ \cite{Xiangcan}\\\hline
		odd &$\frac{p-1}{2}(p^2+p)$&4&$\frac{(p-1)^2}{2}p$&$p\equiv 3(\mathrm{mod}\ 4)$, \\&&&&$t=1$ \cite{ZhangX}\\\hline
		odd &$\frac{p-1}{2}(p^6+p^5)$&8&$\frac{(p-1)^2}{2}p^5$&$p\equiv 3(\mathrm{mod}\ 4)$, \\&&&&$t=1$ \cite{ZhangX}\\\hline
		odd &$p^{4n-2}+p^{3n-2}$&$4n$&$p^{4n-3}(p-1)$&$n>1$,\\&&&& $t=1$ \cite{ZX}\\\hline
		odd&$p^{4n-2}+p^{3n+d-2}$&$4n$&$p^{4n-3}(p-1)$&$n> d+1$, \\
		&&&&$n/d\equiv 0(\mathrm{mod}\ 2)$, \\&&&&$t=1$\cite{ZX}\\\hline
		odd&$p^{2n-1}$&$2n$&$(p-1)(p^{2n-2}$&$0\le d,l\le n$ \\
		$p>3$&&&$-p^{\frac{2n+d+l-3}{2}})$&$d+l$ is odd\\&&&&$t=1$ \cite{Yang}\\\hline
	\end{tabular}
\end{table}
\section{Conclusion}
			\qquad
			In this paper, for the first time, we constructed some three-weight and at most five-weight linear codes from non-weakly regular plateaued functions based on the first and the second generic constructions. Specially, using non-weakly regular bent functions, i.e., non-weakly regular $0$-plateaued functions, we obtained some three-weight, at most four-weight and at most five-weight linear codes. The parameters and weight distributions of the codes constructed from two special classes of non-weakly regular plateaued functions were completely determined. In Remark 5 and Remark 10, we showed that the three-weight linear codes constructed in this paper have the same parameters with the codes in \cite{Mesnager2,Mesnager3}. In Table 33, we list the known four-weight linear codes over $\mathbb{F}_{p^t}$, where $p$ is a prime and $t$ is a positive integer. By comparing the four-weight linear codes in Table 33 with ours, we have that when $n$ is even and $r=n$ the $[p^{n-1}(1+(p-1)p^{-\frac{r}{2}})-1,\ n,\ p^{n-2}(p-1)]$ linear code in \cite{Ta} have the same parameters with $\mathcal{C}_{D_f}$ in Corollary 10. In Table 34, we list the known five-weight linear codes over $\mathbb{F}_{p^t}$, where $p$ is a prime and $t$ is a positive integer. By comparing the five-weight linear codes in Table 34 with ours, we have that when $s=0$ and $r=n$, the code in Theorem 1 have the same parameters with $[p^{n-1}-1,\ n, \ p^{n-2}(p-1)(1-p^{-\frac{r-1}{2}})]$ linear code in \cite{Ta} and $[p^{n-1},\ n,\ (p-1)(p^{n-2}-p^{\frac{n-3}{2}})]$ linear code in \cite{li}. For other codes in Table 33 and Table 34, our codes are different from them.
			We also obtained some optimal and almost optimal linear codes which are listed below:
			\\
			(\romannumeral 1) A family of AMDS codes with parameters $[p^2-1,\ p^2-4,\ 3]$.\\
			(\romannumeral 2) Some MDS codes with parameters $[p-1,\ 2,\ p-2]$, $[p-1,\ p-3,\ 3]$, $[p+1,\ 3,\ p-1]$ and $[p+1,\ p-2,\ 4]$.\\
			(\romannumeral 3) Some almost optimal codes, according to the sphere packing bound, with parameters $[\frac{p^{n-1}\pm(p-1)p^{\frac{n+s}{2}-1}-1}{p-1},\ \frac{p^{n-1}\pm(p-1)p^{\frac{n+s}{2}-1}-1}{p-1}-n,\ 3]$, $[\frac{p^{n-1}-1}{p-1},\frac{p^{n-1}-1}{p-1}\\-n,\ 3]$, $[\frac{p^{n-1} \pm p^{\frac{n+s}{2}-1}}{2}, \frac{p^{n-1}\pm p^{\frac{n+s}{2}-1}}{2}-n,\ 3]$, and  $[\frac{p^{n-1}\pm p^{\frac{n+s-1}{2}} }{2},\ \frac{p^{n-1}\pm p^{\frac{n+s-1}{2}} }{2}\\-n,\ 3]$ .\\
			Finally, based on the dual codes of the constructed linear codes, we presented some secret sharing schemes with nice access structures.

			\section*{Acknowledgements}               \quad This research is supported by the National Key Research and Development Program of China (Grant Nos. 2022YFA1005000
			and 2018YFA0704703), the National Natural Science Foundation of China (Grant Nos. 12141108, 62371259, 12226336), the Fundamental Research Funds for the Central Universities of China (Nankai University), the Nankai Zhide Foundation.

			% BibTeX users please use one of
			%\bibliographystyle{spbasic}      % basic style, author-year citations
			%\bibliographystyle{spmpsci}      % mathematics and physical sciences
			%\bibliographystyle{spphys}       % APS-like style for physics
			%\bibliography{}   % name your BibTeX data base
			
			% Non-BibTeX users please use
			
		\end{document}